\documentclass[11pt]{article}

\usepackage{mathpazo}

\usepackage[english]{babel}

\usepackage[dvipsnames,usenames]{xcolor}
\usepackage[colorlinks=true,pdfpagemode=UseNone,urlcolor=RoyalBlue,linkcolor=RoyalBlue,citecolor=OliveGreen,pdfstartview=FitH]{hyperref}


\usepackage{tikz}
\usetikzlibrary{patterns}
\usetikzlibrary{shapes.geometric}

\usepackage{subcaption}
\usepackage{mathtools}

\usepackage{amsfonts,amsmath,amsthm,thmtools}
\usepackage{nicefrac}
\usepackage{thm-restate}


\usepackage{todonotes}

\usepackage{booktabs}
\usepackage{caption}
\captionsetup[table]{skip=8pt}

\usepackage[numbers, sort]{natbib}

\usepackage{tcolorbox}

\usepackage{enumitem}
\usepackage[capitalise]{cleveref}

\usepackage[multiple]{footmisc}

\usepackage[letterpaper,top=2cm,bottom=2cm,left=3cm,right=3cm,marginparwidth=1.75cm]{geometry}

\usepackage{amsmath}
\usepackage{comment}
\usepackage{mdframed}
\usepackage{xspace}
\usepackage{algorithmic}
\usepackage{algorithm}
\usepackage{enumitem}
\usepackage{graphicx}
\usepackage{soul}


\newtheorem{theorem}{Theorem}
\newtheorem{lemma}[theorem]{Lemma}
\newtheorem{definition}[theorem]{Definition}
\newtheorem{claim}[theorem]{Claim}
\newtheorem{coro}[theorem]{Corollary}

\newcommand{\eps}{\varepsilon}
\newcommand{\covereps}{\gamma}
\newcommand{\esteps}{\beta}

\newcommand{\calC}{\mathcal{C}}

\newcommand{\clcost}{\mathrm{cost}}

\newcommand{\optc}{\clcost(\opt)}
\newcommand{\genclus}{\text{GenerateCluster}}
\newcommand{\genclusbysam}{\text{GenerateClusterBySampling}}

\newcommand{\adm}{\text{adm}}
\newcommand{\calK}{{\mathcal{K}}}

\newcommand{\rmEstCost}{\mathrm{EstMarg}}
\newcommand{\cost}{\mathrm{cost}}
\newcommand{\obj}{\mathrm{obj}}
\newcommand{\EX}{{\mathbb{E}}}

\newcommand{\opt}{\mathrm{OPT}}

\newcommand{\zmwu}{z^{\star}}

\newcommand{\poly}{{\mathrm{poly}}}

\newcommand{\cc}{{Correlation Clustering}\xspace}

\newcommand{\pedge}{{$+$edge}\xspace}

\newcommand{\pedges}{{$+$edges}\xspace}
\newcommand{\medge}{{$-$edge}\xspace}
\newcommand{\medges}{{$-$edges}\xspace}

\newcommand{\costs}{\mathrm{cost}}
\newcommand{\hatcosts}{\widehat{\mathrm{cost}}}
\newcommand{\covers}{{\mathrm{cover}}}
\newcommand{\clusterlp}{\textbf{cluster LP}\xspace}

\newcommand{\clusterLP}{{cluster LP}\xspace}
\newcommand{\coverClusterLP}{{covering cluster LP}\xspace}

\newcommand{\dadm}{d_{\mathrm{adm}}}
\newcommand{\Nadm}{N_{\mathrm{adm}}}
\newcommand{\Ncand}{N_{\mathrm{cand}}}
\newcommand{\eadm}{E_{\mathrm{adm}}}
\newcommand{\dc}{d_{\mathrm{cross}}}
\newcommand{\hatdc}{\hat{d}_{\mathrm{cross}}}
\newcommand{\hatNcand}{\hat{N}_{\mathrm{cand}}}
\newcommand{\Ec}{E_{\mathrm{cross}}}
\newcommand{\supp}{\mathrm{supp}}
\newcommand{\hatt}{\hat{T}}
\newcommand{\hatp}{\hat{p}}
\newcommand{\zt}{z^{(t)}}
\newcommand{\pt}{p^{(t)}}
\newcommand{\marginal}{\mathrm{Marginal}}

\newcommand{\dpos}{d^{+}}

\newcommand{\pureclusterlpratio}{1.485}

\newcommand{\tmwu}{T_{\mathrm{MW}}}
\newcommand{\tmpc}{T_{\mathrm{MPC}}}
\newcommand{\tilt}{\tilde{t}}

\newcommand{\epslarge}{$\eps$-large~}
\newcommand{\epssimilar}{$\eps$-similar~}

\newenvironment{cproof}
{\begin{proof}
 [Proof.]
 \vspace{-1.5\parsep}
}
{ \end{proof}}


\newif\ifshowcomments
\showcommentsfalse  


\newcommand{\lukas}[1]{\ifshowcomments \todo{\color{white} Lukas: #1} \fi}

\newcommand{\nairen}[1]{\ifshowcomments {\color{orange} Nairen: #1} \fi}


\title{Solving the Correlation Cluster LP in Sublinear Time}
\author{}

\renewcommand{\epsilon}{\eps}


\newcounter{cphfn}
\newcommand{\cphthanks}[1]{\thanks{#1}\setcounter{cphfn}{\value{footnote}}}
\newcommand{\cphmark}{\footnotemark[\value{cphfn}]}

\author{
Nairen Cao\thanks{New York University, Email: \texttt{nc1827@nyu.edu}}
\and
Vincent Cohen-Addad\thanks{Google Research. Email: \texttt{cohenaddad@google.com}.}
\and
Euiwoong Lee\thanks{University of Michigan. Email: \texttt{euiwoong@umich.edu}. Supported in part by NSF grant CCF-2236669 and Google.}
\and
Shi Li\thanks{Nanjing University. Email: \texttt{shili@nju.edu.cn}. Affiliated with the School of Computer Science in Nanjing University, and supported by the State Key Laboratory for Novel Software Technology and the New Cornerstone Science Laboratory.}
\and
David Rasmussen Lolck\cphthanks{University of Copenhagen (BARC). Emails: \texttt{dalo@di.ku.dk, mthorup@di.ku.dk, shya@di.ku.dk}. Supported by VILLUM Foundation Grant 54451 and Basic Algorithms Research Copenhagen (BARC).}
\and
Alantha Newman\thanks{Université Grenoble Alpes. Email: \texttt{alantha.newman@grenoble-inp.fr}.}
\and
Mikkel Thorup\cphmark
\and
Lukas Vogl\thanks{EPFL. Email: \texttt{lukas.vogl@epfl.ch}. Supported by the Swiss National Science Foundation project 200021-184656 ``Randomness in Problem Instances and Randomized Algorithms''.}
\and
Shuyi Yan\cphmark
\and
Hanwen Zhang\thanks{University of Copenhagen. Email: \texttt{hazh@di.ku.dk}.
Supported by VILLUM Foundation Grant 54451, Basic Algorithms Research Copenhagen (BARC),
and Starting Grant 1054-00032B from the Independent Research Fund Denmark (Sapere Aude).}
}

\begin{document}
\maketitle

\begin{abstract}

\cc is a fundamental and widely-studied problem in unsupervised learning and data mining.
The input is a graph and the goal is to construct a clustering minimizing the number of inter-cluster edges plus the number of missing intra-cluster edges.

\cite{cao2024understanding} introduced the {\em cluster LP} for \cc, which they argued 
captures the problem much more succinctly than previous linear
programming formulations.  However, the \clusterLP has exponential
size, with a variable for every possible set of vertices in the input
graph.  Nevertheless, \cite{cao2024understanding} showed how to
find a feasible solution for the \clusterLP in time
$O(n^{\text{poly}(1/\eps)})$ with objective value 
at most $(1+\epsilon)$ times the value of an optimal
solution for the respective \cc instance.  Furthermore, they showed
how to round a solution to the \clusterLP, yielding a
$(\pureclusterlpratio+\eps)$-approximation algorithm for the \cc problem.\footnote{\cite{cao2024understanding} claimed that there is a rounding for the \clusterLP yielding a 1.437-approximation algorithm for \cc.  However, the proof has a bug, and the best bound we can currently prove for the \clusterLP is 1.485.  Proof details will appear shortly in an updated arxiv version of \cite{cao2024understanding}.}

The main technical result of this paper is a new approach to find a
feasible solution for the \clusterLP with objective value at most
$(1+\epsilon)$ of the optimum in time $\widetilde
O(2^{\text{poly}(1/\eps)} n)$, where $n$ is the number of vertices in
the graph.  We also show how to implement the rounding within the same
time bounds, thus achieving a fast $(\pureclusterlpratio+\eps)$-approximation
algorithm for the \cc problem.  This bridges the gap between
state-of-the-art methods for approximating \cc and the recent focus on
fast algorithms.


\end{abstract}

\clearpage 

\tableofcontents 
\clearpage

\section{Introduction}
\cc, introduced by Bansal, Blum, and Chawla~\cite{BBC04}, is a fundamental problem in unsupervised machine learning that neatly captures the inherent tension between grouping similar data elements and separating dissimilar ones.  Given a complete graph where each
edge receives either a positive or a negative label, representing the
pairwise relationship of its endpoints, the objective is to partition
the vertices into clusters that minimize the number of ``unsatisfied''
edges: positive edges across clusters and negative edges within
clusters. This framework naturally arises in diverse applications,
including clustering ensembles~\cite{bonchi2013overlapping}, duplicate
detection~
\cite{arasu2009large}, community mining~ \cite{chen2012clustering}, link prediction~\cite{yaroslavtsev2018massively}
disambiguation tasks~\cite{kalashnikov2008web}, image segmentation~\cite{kim2014image}, and automated labeling~\cite{agrawal2009generating, chakrabarti2008graph}.

Despite its widespread applicability, Correlation Clustering is
APX-hard~\cite{CGW05}, motivating a rich line of research focused on
approximation algorithms.  Early work by~\cite{BBC04} provided an
$O(1)$-approximation, which was subsequently improved to a
4-approximation by~\cite{CGW05}. The influential Pivot
algorithm~\cite{ACN08} achieved a 3-approximation, and further
refinements using LP-based techniques culminated in a
2.06-approximation~\cite{CMSY15}, nearly matching the integrality gap
of 2.

Recent work has shown how to surpass this barrier.  Cohen-Addad, Lee and Newman employed the Sherali-Adams hierarchy to obtain a $(1.994 + \epsilon)$-approximation~\cite{CLN22}, which was later improved to 
$(1.73 + \epsilon)$ by Cohen-Addad, Lee, Li and Newman, using a new so-called \emph{preclustering} technique to preprocess the instance~\cite{CLLN23}. Most recently, Cao, Cohen-Addad, Lee, Li, Newman and Vogl introduced the \clusterLP framework which generalizes all previously known formulations, and showed how to round it to obtain a $(\pureclusterlpratio + \epsilon)$-approximation in polynomial time~\cite{cao2024understanding}. \footnote{See the footnote in the abstract for a remark regarding the \pureclusterlpratio\ approximation ratio for \clusterLP.}
However, the \clusterLP (see Section~\ref{sec:results} for details) has exponential size: It contains a variable for each subset of the vertices indicating whether this subset is a cluster or not. The recent work of \cite{cao2024understanding} leverages the preclustering method to compute a solution to the \clusterLP in time $O(n^{\text{poly}(1/\eps)})$ whose value is within a $(1+\epsilon)$-factor of the cost an optimal clustering, hence leading to a polynomial-time approximation algorithm.

\cc is a versatile, fundamental model for clustering and has thus received a lot of attention from practitioners. Therefore, a large body of work studying \cc in popular practical computation models has emerged. Since 2018, researchers have shown how to obtain a
$(3+\eps)$-approximation to \cc in
streaming~\cite{DBLP:conf/soda/BehnezhadCMT23,makarychev2024single},
sublinear time~\cite{DBLP:conf/innovations/Assadi022}, the Massively
Parallel Computation (MPC)
model~\cite{DBLP:conf/focs/BehnezhadCMT22,cao2023breaking,prunpivot},
or vertex or edge fully dynamic
setting~\cite{behnezhad2024fullydynamiccorrelationclustering,prunpivot}.
For all but the dynamic setting, a very recent work by Cohen-Addad,
Lolck, Pilipczuk, Thorup, Yan and Zhang~\cite{cohen2024combinatorial}
has provided a unified approach achieving a
$(1.847+\eps)$-approximation to \cc in all the above models ($1.875$
for the MPC model) using a new local search algorithm.

In general, the approximation guarantees obtained in these various
restricted settings have remained significantly higher than the best
known polynomial time approximation of $(\pureclusterlpratio + \epsilon)$, which
employed computationally expensive solutions to Sherali-Adams
relaxations of \cc to solve the \clusterLP.

Thus, it is natural to consider what we can achieve when we are allowed limited computation time.  For instance, \textbf{How well can Correlation Clustering be approximated in (sub)linear time?}  Here, by linear time, we mean time linear in the number of \pedges, and by sublinear time, we mean time linear in the number of vertices.

In this work, we answer this question by showing how to solve and
round the \clusterLP in sublinear time, matching the state-of-the-art
approximation achieved in~\cite{cao2024understanding}.  This opens up
a new path to study \cc in other computational models.

\paragraph{Notation.}
Before we explain our results in more detail, we introduce some basic
notation.  The input to the correlation clustering problem is a
complete graph where each edge is labeled either as a $+$edge or a
$-$edge.  For a graph \( G \), we often denote its vertex set by \(
V(G) \) and its edge set by \( E(G) \).  We let \( G = (V, E) \)
represent the subgraph induced by the \pedges (i.e., $E^+ = E(G)$),
while the set of \medges is given by \( E^{-}
= \binom{V(G)}{2} \setminus E(G) \). Let \( n = |V| \) and \( m =
|E| \) represent the number of vertices and \pedges in \( G \),
respectively. Given a graph \( G \) and a subset \( V' \subseteq
V(G) \), \( G[V'] \) denotes the subgraph of \( G \) induced by \(
V' \).  For technical reasons, we assume \( E \) contains all \( n \)
self-loops \( \{uu : u \in V\} \).

For two sets \( A \) and \( B \), we use \( A \bigoplus B = (A \setminus B) \cup (B \setminus A) \) to denote their symmetric difference.
For simplicity, we use the following shorthand. Given a function or vector $f$ and a set $S$, if $f$ outputs reals, then $f(S) = \sum_{e \in S} f(e)$ or $f(S) = \sum_{e \in S}f_e$.

Finally, we emphasize that we always use $\opt$ to denote an optimal clustering for a given Correlation Clustering instance, and we use $\cost(\opt)$ to denote its cost, which we will formally define shortly in Section \ref{sec:tech-overview}.

\vspace{2mm}

\subsection{Our Results}
\label{sec:results}

The \textbf{cluster LP} was introduced in~\cite{cao2024understanding}.
In this formulation of \cc, we have a variable \( z_S \) for every
non-empty subset \( S \subseteq V \), where \( z_S \) indicates
whether \( S \) forms a cluster in the output clustering.
Additionally, for every pair of vertices \( uv \in {V \choose 2} \),
the variable \( x_{uv} \) indicates whether \( u \) and \( v \) are
separated in the clustering.

\begin{equation}
	\min \qquad \obj(x) := \sum_{uv \in E^+} x_{uv} + \sum_{uv \in E^-} (1-x_{uv})  \qquad \text{s.t.} \tag{cluster LP} \label{LP:clusterlp}
\end{equation} \vspace{-15pt}
\begin{align*}
	\sum_{S \ni u} z_S &= 1 \quad \forall u \in V \\
	\sum_{S \supseteq \{u, v\}} z_S &= 1 - x_{uv} \quad \forall uv \in {V \choose 2}.
\end{align*}

With this formal definition of the cluster LP, our main results are stated in the following
theorems. 

\begin{restatable}[Efficient~\ref{LP:clusterlp}]{theorem}{thmsolveclusterLPsequential} \label{thm:solving-cluster-LP-sequential}
Let $\eps, \delta > 0$ be small enough constants and let $\opt$ be the cost of the optimum solution to the given \cc instance. Then there is a small $\Delta = \poly(\epsilon)$ such that the following statement holds. One can output a solution $(z_S)_{S \subseteq V}$ to the cluster LP with $\obj(x) \leq (1+\eps)\opt$ in expectation, described using a list of non-zero coordinates such that each coordinate of $z$ is either $0$ or at least $\Delta$.   In the various models, the respective procedure has
the following attributes. 
\begin{itemize}    
    \item \label{itm:sublinearmain}(Sublinear model) The running time to compute $z$ is $\widetilde{O}(2^{\poly (1/\eps )}n)$.
\item \label{itm:mpcmain}(MPC model) It takes $2^{\poly (1/ \eps )}$ rounds with $O(n^{\delta})$ memory per machine and total memory $\tilde O(\poly (\frac{1}{\eps}) m)$, or takes $\poly (\frac{1}{\eps})$ rounds with $O(n^{\delta})$ memory per machine and total memory $\tilde O(2^{\poly (1/ \eps )} m)$.

\end{itemize}
\end{restatable}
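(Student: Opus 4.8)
The plan is to never materialize the exponentially many variables of the \clusterLP, and instead to produce a near-optimal feasible solution as a short convex combination of explicitly built clusterings, generated by a multiplicative-weights (MW) process whose per-iteration oracle is a purely local, sampling-based cluster-growing routine. Concretely, \textbf{Step 1 (preclustering and a covering reformulation):} first I would run (an adaptation to the query / MPC model of) a known preclustering routine within our budget. It produces a family of \emph{atoms} $\calA$ and a set $\eadm$ of admissible inter-atom pairs, with the guarantees that (a) some clustering of cost at most $(1+\eps)\opt$ uses only \emph{admissible clusters} --- unions of atoms inside a common admissible neighborhood --- and (b) the local complexity around each atom (its admissible neighborhood, its candidate clusters, its cross edges) is controlled, in terms of $\opt$ and $1/\eps$, well enough for the sampling below to be efficient; in particular each atom's candidate clusters can be listed or sampled from its admissible neighborhood. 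I then restrict the \clusterLP to admissible clusters and, after guessing $\widehat\opt$ within a $(1+\eps)$ factor of $\opt$ (via $\poly(1/\eps)$ guesses or binary search off a constant-factor estimate), turn it into a \coverClusterLP: find $z\ge 0$ with $\sum_{S\ni u}z_S=1$ for all $u$ and $\obj(x(z))\le(1+\eps)\widehat\opt$ (after an appropriate normalization of the budget constraint). This is a covering/packing LP over admissible clusters, and MW applies.

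\textbf{Step 2 (the MW loop and its oracle).} I would run MW maintaining one weight per covering constraint (one per vertex) plus one weight for the budget constraint. In iteration $t$ the oracle $\genclus$ receives the weights and must return a clustering --- equivalently, for every atom, the admissible cluster containing it --- that is a $(1+\eps)$-approximate best response against the weights subject to the budget. The oracle is local and greedy: around each seed atom it grows a cluster inside that atom's admissible neighborhood, admitting each further atom according to the sign of its weighted marginal contribution, i.e.\ the difference between what it pays if it \emph{stays in} ($\rmCostStays$, essentially its \medges to the cluster) versus \emph{moves out} ($\rmCostMoves$, essentially its \pedges cut); the preclustering structure is exactly what makes this greedy choice near-optimal. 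Averaging the returned clusterings over $\tmwu$ iterations gives $\zmwu$, and the standard MW analysis yields $\obj(x(\zmwu))\le(1+\eps)\widehat\opt$ once $\tmwu=\widetilde O(\rho/\eps^{2})$, where the \emph{width} $\rho$ --- the most any single returned cluster charges to one constraint --- is kept at most $2^{\poly(1/\eps)}$ via the preclustering bounds on $\dc$, $\Nadm$ and $\Ncand$ (this is where that factor enters).

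\textbf{Step 3 (making it sublinear).} A naive iteration of $\genclus$ reads $\Theta(m)$ \pedges; instead $\genclusbysam$ replaces every quantity it uses by a sampled estimate computed from $O(\poly(1/\eps)\log n)$ adjacency queries per atom: $\hatdc$, $\hatNcand$, and the marginal costs $\rmEstCost$ built from $\rmEstCostStays$ and $\rmEstCostMoves$. A Chernoff/Hoeffding bound together with a union bound over the polynomially many estimates across all iterations makes every estimate accurate to within a $(1\pm\eps)$ factor, or within an additive $\eps\,\widehat\opt/\poly(1/\eps)$ where multiplicative accuracy is impossible; since MW tolerates a $(1+\eps)$-approximate oracle, the guarantee of Step 2 survives, now in expectation, with each iteration costing $\widetilde O(2^{\poly(1/\eps)}n)$ (summed over the $2^{\poly(1/\eps)}$ iterations this is absorbed after re-parametrizing $\poly$). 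Finally, to force every nonzero coordinate of the output to be either $0$ or at least $\Delta=\poly(\eps)$, I subsample $1/\Delta$ of the $\tmwu$ clusterings uniformly at random and output their average: it is still a convex combination of partitions, hence feasible; every surviving $z_S$ is a positive multiple of $\Delta$, so there are $O(n/\Delta)$ nonzeros; and the expected objective is unchanged because each sampled clustering is uniform among the $\tmwu$.

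\textbf{Step 4 (MPC), and the crux.} Every ingredient is data-parallel: preclustering has known low-round MPC implementations, and each MW iteration's work splits over atoms, each atom's local (sampled) oracle fitting in $O(n^{\delta})$ memory, with the $O(n)$ weights broadcast between rounds; scheduling this one way --- solving larger local subproblems so that $\poly(1/\eps)$ MW phases suffice --- gives the few-round, $\widetilde O(2^{\poly(1/\eps)}m)$-total-memory point, and scheduling it the other way --- lighter local work, $2^{\poly(1/\eps)}$ MW phases --- gives the many-round, $\widetilde O(\poly(1/\eps)m)$-total-memory point. The main obstacle, and the technical heart of the argument, is designing $\genclus$ so that it is simultaneously (i) a provably $(1+\eps)$-approximate best response to the MW weights --- itself a correlation-clustering-type optimization, tractable only because of the preclustering structure and because an approximate response suffices --- and (ii) realizable with $\widetilde O(n)$ queries, which forces all cluster costs to be \emph{estimated}; one must then show that these estimation errors neither violate feasibility nor compound across the $\tmwu$ iterations, and in particular that the MW width stays bounded when computed from $\hatdc,\hatNcand$ rather than their true values, and that very negative-cost (large) clusters do not blow up that width.
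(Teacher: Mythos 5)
Your overall architecture matches the paper's (preclustering, a covering reformulation, an MWU loop whose oracle grows clusters locally from sampled marginal estimates, then sublinear/MPC implementations of that oracle), but there is a genuine gap at the heart of your Step~2. You ask the oracle to return, in each iteration, a \emph{full clustering} (``for every atom, the admissible cluster containing it'') that is a $(1+\eps)$-approximate best response subject to the budget. If the iterate is a genuine partition, every covering constraint $\sum_{S\ni u}z_S=1$ is satisfied exactly, so the vertex weights play no role and the only binding requirement is $\cost(z)\le(1+\eps)\widehat\opt$ --- i.e.\ the oracle must solve (a $(1+\eps)$-approximate version of) the original problem, and the MW loop is circular. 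If instead the oracle returns what your greedy description actually produces --- clusters grown independently around seed atoms, which overlap and do not cover everything --- then the iterates are not partitions, and your Step~3 fix for the $\Delta$ lower bound (``subsample $1/\Delta$ of the iterates; the average is a convex combination of partitions, hence feasible'') breaks down, as does exact feasibility of the average. The paper resolves exactly this tension differently: the per-iteration oracle is only required to output a \emph{partial} clustering of disjoint clusters with cost-to-weight ratio at most $(1+O(\eps^{13}))\covers(\opt)$ covering a constant fraction of the probability mass (built cluster-by-cluster, with explicit conflict handling when clusters lose covered vertices), the returned point has entries capped at $1/\covereps$ so the MW width is $\poly(1/\eps)$ and only $\poly(1/\eps)$ rounds are needed, and feasibility plus the lower bound $\Delta=\poly(\eps)$ on nonzero coordinates are recovered afterwards by covering the few under-covered atoms, rescaling, and rounding coordinates up --- steps that rely on adding $\dc(v)$ to the cost precisely to make the augmented objective monotone (the paper's Lemmas on $\covers$ monotonicity and cover-to-cluster conversion), a role your Step~1 does not account for.

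A secondary but concrete discrepancy: you attribute the $2^{\poly(1/\eps)}$ factor to the MW width, claiming $\rho\le 2^{\poly(1/\eps)}$ and $\tmwu=\widetilde O(\rho/\eps^2)$. In the paper the width is $O(1/\covereps)=\poly(1/\eps)$ and the number of MW rounds is $\poly(1/\eps)$; the exponential factor enters only inside the per-cluster oracle, from enumerating all small subsets of the $\Theta(\poly(1/\eps))$ sampled vertices (to guess which samples lie in the unknown target cluster $C^*$) together with the size guesses. Keeping the width polynomial is not cosmetic: with $\tmwu=2^{\poly(1/\eps)}$ iterations your coordinates would be as small as $2^{-\poly(1/\eps)}$, which is why you need the subsampling patch whose validity hinges on the partition issue above. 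So the plan as written does not go through without importing the paper's partial-clustering oracle and its feasibility-repair machinery.
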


We call the non-zero entries of a solution $(z_S)_{S \subseteq V}$ its {\em support} and refer to it as $\supp(z)$.

\paragraph{Rounding Algorithms.}
Given a solution $z$ to the \clusterLP, we can round it to a solution for \cc efficiently as well. 
Concretely, we prove the following theorem.
\begin{restatable}[Efficient Rounding Algorithm]{theorem}{thmroundclusterLPsequential} \label{thm:rounding-cluster-LP-main}
    Let $\eps> 0$ be a small enough constant and $\Delta = \poly(1/\epsilon)$.
    Given a solution to the cluster LP $(z_S)_{S \subseteq V}$ where each coordinate of $z$ is either $0$ or at least $\Delta$, one can output a clustering with expected cost at most $(\pureclusterlpratio+\eps) \obj(x)$ in time $\tilde O(n / \Delta^2 )$.
\end{restatable}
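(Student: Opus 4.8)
The plan is to re-implement the polynomial-time rounding of \cite{cao2024understanding} essentially verbatim, exploiting the promise $z_S \in \{0\} \cup [\Delta, 1]$ to make every step local. The starting point is that this promise already forces the representation to be small: since $\sum_{S \ni u} z_S = 1$ for each $u$ and every support set carries $z$-mass at least $\Delta$, each vertex lies in at most $1/\Delta$ support sets, so $\sum_{S \in \supp(z)} |S| = \sum_u |\{S \in \supp(z) : S \ni u\}| \le n/\Delta$ and in particular $|\supp(z)| \le n/\Delta$. In a single pass of time $\tilde O(n/\Delta)$ I would build, for every vertex $u$, the list $\mathcal S_u = \{S \in \supp(z) : u \in S\}$ (of length $\le 1/\Delta$), stored with hashing. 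This provides the only oracle access the rounding requires: membership ``$v \in S$?'' in $O(1)$ time, and $x_{uv} = 1 - \sum_{S \supseteq \{u,v\}} z_S$ in $O(1/\Delta)$ time by intersecting $\mathcal S_u$ with $\mathcal S_v$.

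The rounding of \cite{cao2024understanding} outputs a clustering by a randomized procedure built from a constant number of sub-routines --- at a high level, a correlated set-sampling step (for random centers, pick a support set containing the center with probability proportional to its $z$-mass, then resolve overlaps) combined through an outer random choice with a pivot-type step on the residual graph of LP-close pairs; its analysis charges all remaining, LP-far pairs (near-)integrally against $\obj(x)$. Since the theorem asks only to \emph{output} a clustering, not to certify its cost, it suffices to sample from a distribution with the same expected cost --- so the efficient algorithm never has to touch the integrally-charged pairs, and in particular never enumerates the up to $|S|^2$ pairs inside a large support set.

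Concretely I would: (a) reconstruct from $\supp(z)$ the preclustering bookkeeping the original rounding uses --- each vertex's atom and, per atom, its admissible neighbours --- noting that this only concerns pairs co-occurring in some support set and is therefore computable in $\tilde O(n/\Delta)$ from the lists $\mathcal S_u$ (or simply reuse the preclustering already computed while solving the LP); (b) run the correlated set-sampling sub-routine, resolving each center's cluster with $O(1/\Delta)$ queries into the lists; and (c) run the residual pivot step, which again visits only the $O(1/\Delta)$ sets reachable through each vertex. Charging $O(1/\Delta)$ work to each of the $O(n/\Delta)$ support sets and vertex--set incidences yields the running time $\tilde O(n/\Delta^2)$, and the $(1.437+\eps)$ guarantee is inherited verbatim from the analysis of \cite{cao2024understanding}, since the output distribution (equivalently, its expected cost) is unchanged.

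The main obstacle is re-expressing the parts of \cite{cao2024understanding}'s rounding that are phrased as global operations --- any step iterating over all pairs within a cluster or atom, or solving an auxiliary optimization on a large atom --- as local, per-set or per-vertex computations of $\poly(1/\Delta)$ amortized cost. The key is that large support sets and large atoms are morally committed by the LP: if a set $S$ carries almost all the $z$-mass of its vertices, its internal $x$-values are pinned near $0$ (each is at most $1-z_S$) and the rounding's decision on $S$ reduces to an $O(1)$-outcome choice; otherwise its internal pairs are LP-far and already paid for, so they can be ignored --- in neither case is an $|S|^2$ scan needed. Making this precise for every sub-routine, and checking that discarding the integrally-charged pairs costs only an additive $\eps \cdot \obj(x)$, is where the real work lies; the running-time bookkeeping is then routine given the sparsity bounds above.
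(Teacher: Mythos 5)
Your sparsity bookkeeping (each vertex lies in at most $1/\Delta$ support sets, $\sum_{S\in\supp(z)}|S|\le n/\Delta$, and $x_{uv}$ computable in $O(1/\Delta)$ time) matches the paper's starting point, but the proposal stops short of the two ingredients that actually carry the paper's proof. First, the running-time claim for the pivot-type step does not follow from charging $O(1/\Delta)$ work to vertex--set incidences: a vertex $v$ is re-examined at every pivot $r$ for which $v\in U_r=\bigcup_{S\ni r, z_S>0}S$, and nothing in your accounting bounds how many such pivots occur before $v$ is clustered, so the naive bound is closer to (number of pivot rounds)$\times|U_r|/\Delta$, which can be $\Omega(n^2/\Delta)$. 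The paper closes this with an amortization that is the heart of its sublinear analysis: whenever $v\in U_r$ is examined it shares a support set with $r$ of mass at least $\Delta$, hence joins the pivot's cluster (and leaves the graph) with probability $1-x_{rv}\ge\Delta$, so in expectation each vertex is examined only $O(1/\Delta)$ times, giving $\tilde O(n/\Delta^2)$ overall. This removal-probability argument is missing from your plan and is not ``routine bookkeeping.'' Second, for the cluster-based half the paper does not repeatedly sample sets until the vertex set empties (which can waste time re-scanning already-exhausted large sets); it simulates that process in one pass via exponential clocks, assigning each support set $S$ a value $k_S=\lfloor\frac{n^c}{z_S}\log\frac1{p_S}\rfloor$ and letting each vertex take the minimum over its $\le 1/\Delta$ sets, which is what yields $\tilde O(n/\Delta)$ for that branch.

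There is also a structural mismatch in how you import the $(1.437+\eps)$ guarantee. The paper runs the two rounding algorithms of \cite{cao2024understanding} (cluster-based with probability $p$, pivot-based with probability $1-p$) as an \emph{exact} simulation --- the output distribution is identical, so the approximation theorem applies as a black box and no pairs are ever ``discarded.'' Your plan instead proposes to drop LP-far (``integrally-charged'') pairs and to re-derive an additive $\eps\cdot\obj(x)$ loss, and it injects preclustering bookkeeping (atoms, admissible neighbours) that the rounding algorithms do not use; both are modifications of the algorithm whose analysis you explicitly leave open (``where the real work lies''). Since that deferred work, together with the missing amortization lemma, is precisely the content of the theorem's proof, the proposal as written has a genuine gap rather than being a complete alternative argument.
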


Combining Theorem~\ref{thm:solving-cluster-LP-sequential} and Theorem~\ref{thm:rounding-cluster-LP-main} gives us our main conclusion for \cc.

\begin{coro}
\label{coro:mainalgorithm}
There exists a $(\pureclusterlpratio+\eps)$-approximation algorithm for \cc that runs in time $\Tilde{O}(2^{\poly(1/\eps)}n)$.
\end{coro}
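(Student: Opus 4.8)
The plan is to simply chain the two preceding theorems. First I would invoke Theorem~\ref{thm:solving-cluster-LP-sequential} in the sublinear model with an accuracy parameter $\eps'$ to be fixed later, obtaining in time $\widetilde O(2^{\poly(1/\eps')}n)$ a (random) feasible solution $(z_S)_{S\subseteq V}$ to the \ref{LP:clusterlp} that is described by its list of non-zero coordinates, has every coordinate either $0$ or at least some $\Delta = \poly(\eps')$, and satisfies $\EX[\obj(x)]\le(1+\eps')\opt$. Because the output is already presented in exactly the format (explicit support, each coordinate $0$ or $\ge\Delta$) that Theorem~\ref{thm:rounding-cluster-LP-main} takes as input, I can feed $z$ directly into the rounding procedure, instantiating its parameter $\Delta$ with the same threshold $\poly(\eps')$ and its accuracy with a parameter $\eps''$.

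Running the rounding procedure of Theorem~\ref{thm:rounding-cluster-LP-main} on this $z$ then produces a clustering in time $\widetilde O(n/\Delta^2)=\widetilde O(\poly(1/\eps')\cdot n)$, which is dominated by the $\widetilde O(2^{\poly(1/\eps')}n)$ spent in the first stage, so the total running time is $\widetilde O(2^{\poly(1/\eps')}n)$. For the cost: conditioned on any fixed $z$, Theorem~\ref{thm:rounding-cluster-LP-main} guarantees that the output clustering has expected cost at most $(1.437+\eps'')\obj(x)$; taking expectation over the randomness of $z$ as well and using the law of total expectation (the rounding uses fresh independent randomness given $z$), the overall expected cost is at most $(1.437+\eps'')\,\EX[\obj(x)]\le(1.437+\eps'')(1+\eps')\opt$.

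Finally I would do the $\eps$-bookkeeping: choosing $\eps'$ and $\eps''$ to be sufficiently small constants — e.g. $\eps'=\eps''=\eps/5$, so that $(1.437+\eps/5)(1+\eps/5)\le 1.437+\eps$ for all small enough $\eps$ — yields the claimed approximation factor, while $2^{\poly(1/\eps')}=2^{\poly(1/\eps)}$ preserves the stated running time. I do not expect any genuine obstacle here; the only points that require a moment of care are (i) confirming that the threshold $\Delta$ produced by Theorem~\ref{thm:solving-cluster-LP-sequential} can serve as the input threshold of Theorem~\ref{thm:rounding-cluster-LP-main} and that both theorems can be instantiated with a common small constant, and (ii) composing the two sources of randomness correctly via the tower property rather than naively multiplying expectations, since both the LP-solving and the rounding stages are randomized.
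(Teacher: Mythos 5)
Your proposal is correct and matches the paper's argument, which is simply the composition of Theorem~\ref{thm:solving-cluster-LP-sequential} (sublinear model) with Theorem~\ref{thm:rounding-cluster-LP-main}, the rounding time $\widetilde O(n/\Delta^2)=\widetilde O(\poly(1/\eps)\,n)$ being dominated by the LP-solving time. The extra care you flag (matching the threshold $\Delta$, composing the two expectations, and the $\eps$-bookkeeping) is exactly what is implicitly needed, so there is no gap.
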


\paragraph{Remark} 
We emphasize that the sublinear-time algorithm outputs a clustering. However, we do not know the number of disagreements for the clustering.

\subsection{Technical Overview}\label{sec:tech-overview}

We first discuss the cluster LP in more depth.

\paragraph{Cluster LP.}
Given a set $S$, let $E^+(S, V \setminus S) = \{ uv \in E^{+} \mid u \in S, v \not\in S \textmd{ or } u \not\in S, v \in S\}$ denote the set of \pedges with exactly one endpoint in $S$, and let $E^-(S) = \{ uv \in E^{-} \mid u, v \in S \}$ denote the set of \medges with both endpoints in $S$. We can rewrite the objective value of cluster LP using $z$ values instead of $x$ values as
\[
\cost(z) := \sum_{S \subseteq V} \costs (S) \cdot z_S, 
\]
where $\costs(S)$ is the cost contribution of the cluster $S$ and 
\begin{equation}
\costs(S) := \frac{1}{2} \cdot |E^+(S, V \setminus S)| + |E^-(S)|. \label{def:cc_cost}
\end{equation}
The~\ref{LP:clusterlp} is equivalent to minimizing $\cost(z)$, subject to the following constraints.
\begin{equation}
	\min \qquad \cost(z)  \qquad \text{s.t.} \tag{cluster LP} \label{LP:clusterlp2}
\end{equation} \vspace{-15pt}
\begin{align*}
	\sum_{S \ni u} z_S &= 1 \quad \forall u \in V\\
	z_S &\geq 0 \quad \forall S \subseteq V, S\neq \emptyset. 
\end{align*}

\paragraph{Difficulty of Solving~\ref{LP:clusterlp}.}

A natural approach to solving a linear program with an exponential number of variables is to solve the dual, which has an exponentially many constraints, but polynomially many variables and could potentially be solved via a polynomial-time separation oracle.  In the case of the \clusterLP, we have the following dual linear program containing a variable for each vertex, which can take on a possibly non-positive value.
\begin{align*}
  \max  & \sum_{u \in V} q_u   ~\tag{dual cluster LP}\label{lp:cluster-dual}\\
  \sum_{u \in S} q_u & \leq \cost(S) \quad \forall S. 
  \end{align*}
 In this case, it is not obvious how to find an efficient separation oracle. However, a subroutine used in the multiplicative weights algorithm (discussed next) does yield an approximate separation oracle.
Previous approaches to finding a solution for the \clusterLP 
with objective value at most $(1+\eps)\cost(\opt)$
involved solving a Sherali-Adams relaxation containing $n^{\poly{(1/\eps)}}$ constraints~\cite{cao2024understanding}. 

\paragraph{Multiplicative Weights Update Framework.}
We will use the multiplicative weights update (MWU) framework by Plotkin, Shmoys, and Tardos \cite{plotkin1995fast} to solve the \clusterLP approximately. Actually, we will solve a covering linear program that agrees with the \clusterLP on its optimal solutions, but is better adapted for the MWU framework. The framework maintains a weight $w_v$ for each vertex constraint $\sum_{S \ni v} z_S \geq 1$. 
During each step $t$, the vertex constraints are collapsed into a single constraint, each scaled according to the normalized vertex weight. The resulting optimization problem is the following.
\begin{equation}
	\min \qquad \cost(z)  \qquad \text{s.t.}
\end{equation} \vspace{-15pt}
\begin{align}
	\sum_{S} p(S) z_S & \geq 1 \\
	z_S &\geq 0 \qquad \forall S \subseteq V, S\neq \emptyset. 
\end{align}
Here, $p(S)$ is the normalized weight of vertices in $S$.
The optimal solution to this problem is the set $S \subseteq V$ that has the smallest cost to vertex weight ratio. However, instead of solving the problem optimally we will find a family of several sets that have a small cost to vertex weight ratio compared to the cost of the optimal clustering. To be more precise, we constructing a partial clustering $\mathcal{F}$ that covers at least a constant fraction $P$ of the vertex weight. We set $z_{C} = \frac{1}{P}$ for each cluster $C \in \mathcal{F}$ as the solution for step $t$. 
Then, each vertex weight is scaled depending on the margin by which $z$ violates the covering constraint of that vertex. We can show that a constant number of rounds of this framework is enough to produce an approximately optimal solution to the \clusterLP.

\paragraph{Finding the Partial Clustering.}
We will find a family $\mathcal{F}$ of disjoint clusters where each cluster $C \in \mathcal{F}$ has a small cost to vertex weight ratio. Moreover, this family will cover a constant fraction of the total vertex weight.
Given a vertex $r$ and a guess $R$ for the cost of the  optimal solution, we can efficiently find a single cluster $C_r \ni r$ with a cost to vertex weight ratio at most $R$ if such a cluster exists.  In particular, if the cluster $C \in \opt$ that contains the vertex $r$ achieves the ratio $R$, then we will find a cluster $C_r$ that achieves a ratio close to $R$.  
There has to exist a cluster $C \in \opt$ that achieves the ratio $R$ since the normalized vertex weight gives a probability distribution. Thus, we can find a cluster $C_r$ for each vertex $r \in V$ and add the cluster with the best ratio (which is at most $R$) to the partial clustering $\mathcal{F}$. We then remove the vertices already covered by $\mathcal{F}$ and update all clusters $C_r$ that contain covered/lost vertices.  We repeat the process until $\mathcal{F}$ covers a constant fraction of the total vertex weight. 

\paragraph{Finding a Small Ratio Cluster.} The problem we aim to solve is as follows: given a weight function \( w \) for each vertex, we seek a cluster \( C^* \) that minimizes the expression \( \cost(C^*) - \sum_{v \in C^*} w(v) \), where \( \cost(C^*) \) represents the \cc cost associated with selecting \( C^* \) as a cluster (see \eqref{def:cc_cost}).  A small ratio cluster will yield a negative value for this expression.

Inspired by the local search algorithm in \cite{cohen2024combinatorial}, we use a similar approach to find such a \( C^* \). Let’s focus on how to find \( C^* \) to minimize this difference. We start by assuming we know \( C^* \), but we cannot directly query whether a vertex \( v \in C^* \) or not. In this case, our key observation is that if \( C^* \) is the cluster that minimizes \( \cost(C^*) - \sum_{v \in C^*} w(v) \), then  
according to the optimality of \( C^* \), we have:
\begin{enumerate}
    \item \( \marginal(C^*, v) - w(v) \leq 0 \) for all \( v \in C^* \), and
    \item \( \marginal(C^*, v) - w(v) \geq 0 \) for all \( v \notin C^* \),
\end{enumerate}
where \( \marginal(C^*, v) = \cost(C^* \cup \{v\}) - \cost(C^* \setminus \{v\}) \) is the marginal cost of adding \( v \) to \( C^* \) or removing \( v \) from \( C^* \). 
At first glance, it seems we don’t make progress, as we don’t know how to compute \( \marginal \). The key idea here is that if we can sample some vertices from \( C^* \), we can obtain a good estimate for \( \marginal \). However, if \( C^* \) has no structure, we have no way to sample nodes from \( C^* \).

Thanks to the preclustering step \cite{CLLN23}, which roughly identifies almost-cliques that should be clustered together and prohibits pairs that are impossible to place in the same cluster, we can make a good guess about the size of \( C^* \). After preclustering, we know a candidate set \( \Ncand \), and \( C^* \) is a subset of \( \Ncand \) with \( |C^*| = \Theta(|\Ncand|) \). Now we can sample nodes from \( C^* \): if we uniformly sample \( \Theta(1) \) nodes from \( \Ncand \), each sample hits \( C^* \) with constant probability, and our sample set has size \( \Theta(1) \), hitting \( \Theta(1) \) nodes from \( C^* \). By enumerating all possible subsets of the sampled set, we obtain a sample for \( C^* \), and by a standard concentration bound, we have a good estimation for $\marginal(C^*, v)$.

One last problem remains. Since we no longer obtain \( \marginal(C^*, v) \) exactly, we rely on an estimated \( \marginal(C^*, v) \), meaning we may make some errors in selecting \( C^* \): we might miss some vertices in \( C^* \) and add others incorrectly. If we make too many such errors, our final set will differ significantly from \( C^* \). The key observation is that as long as \( C^* \) and \( \tilde{C}^* \) (the approximated cluster) do not differ too much, our estimate for \( C^* \) remains reasonable. To leverage this, we split \( \Ncand \) into many small chunks. The estimation error in each round is bounded by the size of each chunk. After processing one chunk, we update our sample vertices to match the nodes we already found. One might argue that we no longer obtain \( C^* \) exactly, which is true due to errors in each chunk. To bound the error affecting \( \cost(C^*) - \sum_{v \in C^*} w(v) \), we will use new sample sets based on previous choices. As long as there is some slackness in \( \cost(C^*) - \sum_{v \in C^*} w(v) \), we can still obtain a final cluster with a reasonably small ratio.

\paragraph{Achieving Nearly Linear Time.}
We have to address two problems in order to find the partial clustering $\mathcal{F}$ in nearly linear time. First, we cannot afford to compute all the clusters $C_r$, one for each vertex $r \in V$. Second, we cannot afford to update each cluster $C_r$ as soon as it loses one vertex $v \in C_r$. To solve the first problem, we will sample a subset of vertices $U \subseteq V$ such that $U$ is not too large and we can compute a cluster $C_r$ for each vertex $r \in U$. Moreover, with high probability, we will hit every cluster $C \in \opt$ with a vertex, $U \cap C \neq \emptyset$. Because of this, there will still be a cluster among $\{C_r\}_{r \in U}$ that achieves the ratio $R$. 
To address the second problem, we will update a cluster $C_r$ only if it lost a constant fraction of its weight. If a cluster $C_r$ did not lose this constant fraction, the ratio did not change by too much. We show that a cluster $C_r$ has to lose at least a constant fraction of vertices in order to lose a constant fraction of weight. We can charge the cost of updating the cluster $C_r$ to the number of vertices the cluster $C_r$ loses.

\paragraph{Achieving Sublinear Time.}
To achieve a sublinear-time algorithm, we must accelerate the process of finding a small-ratio cluster. The bottleneck is the estimation of the cost for a given cluster. In general, this estimation is impossible in the sublinear-time model. Our key insight is that if the cost is small, the cluster should have been split at the beginning. Thus, in the remaining graph, only clusters with large costs persist, allowing us to estimate the cost efficiently.

\paragraph{MPC Algorithm.}
To parallelize the algorithm, the main bottleneck is that we need to find a set of small-ratio clusters instead of identifying them one by one. We can select multiple nodes and run the small-ratio cluster-finding algorithm simultaneously. However, their candidate sets may overlap, so we remove nodes that appear in multiple candidate sets. As long as we use a sufficiently small constant probability to select nodes, we ensure a large enough candidate set, allowing each round to produce a sufficiently large set of small-ratio clusters.

\section{Preclustering}

We first introduce some more necessary notation.
Define \( d^{+}(v) \) as the degree of vertex \( v \) with respect to $+$edges in \( G \). For any subset \( C \subseteq V \), let \( d^+(v, C) \) and \( d^{-}(v, C) \) denote the number of $+$edges and $-$edges from $v$ to $C$, respectively, connecting \( v \) to vertices in \( C \). When the context is clear, we omit the \( + \) symbol, so \( d(v) \) and \( d(v, C) \) mean \( d^{+}(v) \) and \( d^{+}(v, C) \), respectively.


\subsection{Computational Models.}
We consider two computational models in this paper: the sublinear model and the MPC model.

\paragraph{The Sublinear model.}
In the sublinear model, the algorithm can query the following information in \( O(1) \) time:
\begin{itemize}
    \item \textit{Degree queries}: What is the degree of vertex \( v \)?
    \item \textit{Neighbor queries}: What is the \( i \)-th neighbor of \( v \in V \) for \( i \leq d^+(v) \)?
    \item \textit{Edge queries}: Is $uv \in E^+$?
\end{itemize}
One can think of this model as storing an adjacency list and an adjacency matrix of \pedges for \( G \).  The matrix allows us to check whether \( uv \in E^+ \) in \( O(1) \) time.

\paragraph{The MPC model.} In the MPC model, the set of edges is distributed across a set of machines, and computation proceeds in synchronous rounds. During each round, each machine first receives messages from other machines, then performs computations based on this information and its own allocated memory, and finally sends messages to other machines to be received at the start of the next round. Each machine has limited local memory, restricting the total number of messages it can receive or send in a round. 
The efficiency of the algorithm is measured by the number of rounds,
the memory used by each machine, the total memory used by all
machines.

In this paper, we consider the MPC model in the \emph{strictly sublinear regime}: Each machine has $O(n^\delta)$ local memory, where $n$ is the input size and $\delta > 0$ is a constant that can be made arbitrarily small. Under this model, we assume the input received by each machine has size $O(n^\delta)$. 

\subsection{Preclustering}
Our results crucially depend on the usage of the 
following {\em preclustering} subroutine.  Intuitively, preclustering is a preprocessing step that will identify almost-cliques that should be together and prohibit all pairs that are impossible to place in the same cluster. There will be some remaining pairs for which we cannot decide whether or not they should be clustered together or split apart. A major advantage of this procedure is that the number of uncertain pairs can be bounded by the optimal solution. More precisely, we use the definition of a preclustered instance from \cite{CLLN23}, which is also applied in \cite{cohen2024combinatorial}.

\begin{definition}[Preclustering]\label{definition:prepro}
Given a \cc instance $(V, E^+ \uplus E^-)$, a {\em preclustered
instance} is defined by a pair $(\calK, E_{\adm})$.  Here, $\calK$ is
a family of disjoint subsets of V (not necessarily a partition) Each
set $K \in \calK$ has $|K| \geq 2$ and is called a {\em
(non-singleton) atom}. We use $V_{\calK} := \bigcup_{K \in \calK} K$ to denote the set of all vertices in non-singleton atoms.  A vertex in $V \setminus \calK$ is called a {\em singleton atom}.

$\eadm \subseteq \binom{V}{2}$ is the set of {\em admissible} edges (sometimes called {\em admissible pairs}), which are defined to be pairs of vertices $(u,v) \in \eadm$ with at least one of $u$ and $v$ not in $V_{\calK}$. A pair of vertices in a same $K \in \calK$ is called an {\em atomic} edge. A pair that is neither an atomic nor an admissible edge is called a \emph{non-admissible} edge.
\end{definition}

For a vertex \( u \), let \( \dadm(u) \) denote the number of vertices \( v \in V \) such that \( (u, v) \) is an admissible pair. Note that admissible, atomic, and non-admissible edges can each be either \pedges or \medges. 
Let \( \Nadm(v) \) to be the set of vertices \( u \) such that \( (u, v) \in \eadm \). We use \( K(v) \) to represent the set \( K \in \calK \) that contains \( v \); to make the notation more general, when \( v \) is a singleton atom, \( K(v) = \{ v \} \).

We also need the definition of \epssimilar preclustered instance and \epssimilar cluster to bound the value of $\dadm(v)$ for each node $v \in V$. 

\begin{definition}[\epssimilar Preclustering]
\label{def:epsgoodpreclustering}
Given a \cc instance $(V, E^+ \uplus E^-)$, let $(\calK, \eadm)$ be a {\em preclustered instance} for $G$, and let $\eps > 0$ be some parameter. We say $(\calK, \eadm)$ is an {\em \epssimilar preclustering} if 
\begin{itemize}
    \item for any $v \in V$, we have $ \dadm(v) \leq 2\eps^{-3} d(v)$,
    \item for any $uv \in E_{\adm}$, we have $d(u) \leq 2\eps^{-1} d(v)$ and the number of common neighbors that are degree similar to $u$ and $v$ is at least $\epsilon \cdot \min\{d(u), d(v)\}$, where a pair $w\Tilde{w}$ is degree similar if $\eps d(w) \leq d(\Tilde{w}) \leq d(w) / \eps$,
    \item  for every atom $K \in \calK$, for every vertex $v \in K$, we have that $v$ is adjacent to at least a $(1 - O(\eps))$-fraction of the vertices in $K$ and has at most $O(\eps |K|)$ neighbors not in $K$.
\end{itemize}
\end{definition}
\begin{definition}[\epslarge Cluster]
Given a preclustered instance $(\calK, \eadm)$ for some \cc instance $(V, E^+ \uplus E^-)$,  a set $C \subset V$ is called {\em \epslarge} with respect to $(\calK, \eadm)$ if 
	\begin{itemize}
	    \item $C$ does not break any atomic edge,
	    \item $C$ does not contain any non-admissible edge, and
        \item for any vertex $v$ in $C$, if $|C| > 1$, then $|C| \geq \eps d(v)$.
	\end{itemize}

Moreover, a clustering scheme $\calC$ is called \epslarge with respect to $(\calK, E_\adm)$ if all clusters
$C \in \calC$ are \epslarge clusters with respect to $(\calK, \eadm)$.
\end{definition}

The next theorem is stated in \cite{cohen2024combinatorial}. 
The construction follows the framework of Assadi and Wang \cite{DBLP:conf/innovations/Assadi022, CLMNP21, CLLN23}.

\begin{theorem}[Preclustering Procedures \cite{cohen2024combinatorial}]
\label{thm:preclustering-proc}
For a \cc instance $(V, E^+ \uplus E^-)$ with optimal value $\optc$ (which is not known to us), and for any sufficiently small $\eps > 0$, there are algorithms that produce an \epssimilar preclustered instance $(\calK, E_{\adm})$ that admits an \epslarge clustering scheme $\calC^{*}_{(\calK, E_\adm)}$ such that
\begin{enumerate}
    \item $\cost(\calC^{*}_{(\calK, E_\adm)}) \leq (1 + \eps) \optc$,
    \item and $|E_{\adm}| \leq O\big(\eps^{-12} \optc \big)$.
\end{enumerate}
n the various models, the respective procedure has the following attributes.
\begin{itemize}
    \item (MPC model) The algorithm takes $O(1)$ rounds succeeds with probability at least $1 - 1/n$ and requires $O(n^{\delta})$ memory per machine. Moreover, the algorithm uses a total memory of $O(m \log n)$, where $m$ is the number of the $+$ edges.
    \item (Sublinear model)In time $O(n \log{n})$ one can compute the partition $\calK$ and a data structure such that with success probability at least $1-1/n^2$, for any pair of vertices, the data structure can answer in $O(\log n)$ time whether the pair is in $E_{\adm}$ or not. Moreover, the data structure can list all vertices admissible to $v$ in $O(d(v) \log^2 n)$ time.
\end{itemize}

\end{theorem}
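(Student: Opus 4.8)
The plan is to follow the \textbf{agreement-based sparsification framework} of Assadi and Wang~\cite{DBLP:conf/innovations/Assadi022} as refined in~\cite{CLMNP21,CLLN23,cohen2024combinatorial}; indeed the statement is essentially that of~\cite{cohen2024combinatorial}, so the core combinatorial analysis can be imported and the main work is to package it in the two computational models. The central notion is \emph{agreement}: call two vertices $u,v$ agreeing if their positive neighborhoods are close, say $|N^+(u)\bigoplus N^+(v)| \le \eps^{O(1)}\max\{d(u),d(v)\}$, and call them \emph{degree-similar} if $\eps d(u)\le d(v)\le d(u)/\eps$. Form the \emph{agreement graph} $H$ whose edges are the agreeing, degree-similar pairs; the non-singleton atoms $\calK$ are obtained as the almost-clique components of $H$, and one declares $E_{\adm}$ to consist of the pairs $uv$ having at least one endpoint outside $V_{\calK}$, with $d(u)\le 2\eps^{-1}d(v)$ (wlog), and with at least $\eps\min\{d(u),d(v)\}$ common degree-similar neighbors; every other pair with an endpoint outside $V_{\calK}$ is non-admissible, and pairs inside a common atom are atomic. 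The first and second bullets of Definition~\ref{def:epsgoodpreclustering} are then immediate from these definitions, and the third bullet (atoms are $(1-O(\eps))$-dense with $O(\eps|K|)$ outgoing edges) follows from a triangle-inequality argument on neighborhoods applied to the agreement condition, exactly as in~\cite{CLLN23}.

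Next I would establish the two combinatorial guarantees. For the existence of the $\eps$-large clustering scheme $\calC^{*}$ with $\cost(\calC^{*})\le(1+\eps)\optc$, start from $\opt$ and perform three local repairs: (a) whenever $\opt$ splits an atom $K$, re-merge all fragments of $K$ into one cluster; (b) whenever a cluster of $\opt$ contains a non-admissible pair, separate it; (c) whenever a cluster $C\in\opt$ has $|C|<\eps d(v)$ for some $v\in C$, pull $v$ out as a singleton. Each repair is charged to disagreements already paid by $\opt$: an atom is an almost-clique, so $\opt$ pays $\Omega(\eps)$ per boundary edge to split it; a non-admissible $+$pair has few common degree-similar neighbors, which forces $\opt$ to pay in its local neighborhood; and a too-small cluster leaves $(1-\eps)d(v)$ positive edges of $v$ unsatisfied. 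Summing with the right $\eps^{-O(1)}$ multiplicities gives the $(1+\eps)$ bound. For $|E_{\adm}|\le O(\eps^{-12}\optc)$, the point is that an admissible pair is either directly paid for by $\opt$ (a separated $+$edge or an included $-$edge) or, if not, its degree-similarity and common-neighbor structure force $\opt$ to pay in a bounded-size witness set, so a charging argument with multiplicity $\poly(1/\eps)$ yields the claimed bound; this is the most delicate step, and it is precisely where the $\eps^{-12}$ factor arises in~\cite{CLLN23,cohen2024combinatorial}.

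Finally I would implement the construction in the two models. The only global information needed about each vertex $v$ is a uniform sample of $\Theta(\eps^{-O(1)}\log n)$ of its positive neighbors, from which agreement, degree-similarity, and common-neighbor counts can all be estimated up to $\eps$-error with high probability by a Chernoff bound; this is the Assadi--Wang sparsifier. In the MPC model, each vertex draws its sample in $O(1)$ rounds, the samples are routed to the relevant machines (total memory $O(m\log n)$, since storing the graph dominates the $O(n\log n)$ of sketches), and then atom formation, agreement decisions, and admissibility labels are computed by local set operations on pieces of size $O(n^{\delta})$ in $O(1)$ further rounds; a union bound over the $n$ vertices gives the $1-1/n$ success guarantee. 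In the sublinear model, drawing all samples and building a hash table takes $O(n\log n)$ time; to answer whether $uv\in E_{\adm}$ one checks atom membership and degree-similarity in $O(1)$ and estimates the common-neighbor count in $O(\log n)$ using the sketches and edge queries; and listing all vertices admissible to $v$ amounts to scanning the distance-$\le 2$ candidates reachable through the $\le d(v)$ neighbors of $v$, costing $O(d(v)\log^{2} n)$.

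The main obstacle is the unified charging scheme behind the two combinatorial guarantees: the preclustering must be simultaneously coarse enough that $|E_{\adm}|=O(\eps^{-12}\optc)$ and fine enough that no near-optimal clustering is destroyed, and the two charging arguments must avoid double-counting the same $\opt$-disagreements. This balance, together with tracking the $\poly(1/\eps)$ exponents, is exactly what~\cite{CLLN23,cohen2024combinatorial} carry out, and I would reuse their analysis verbatim; the genuinely new (but routine) content is the sublinear-time query data structure and the $O(1)$-round MPC realization, both of which reduce to neighborhood sampling.
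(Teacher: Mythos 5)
Your proposal follows exactly the route the paper relies on: the paper does not prove this theorem itself but imports it from \cite{cohen2024combinatorial}, whose construction is the Assadi--Wang agreement/almost-clique framework refined in \cite{CLMNP21,CLLN23} — agreement graph, atoms as almost-clique components, admissibility via degree-similarity and common degree-similar neighbors, repair-and-charge arguments for the $(1+\eps)$ cost bound and the $O(\eps^{-12})\optc$ bound on $|E_{\adm}|$, and neighborhood-sampling sketches for the MPC and sublinear implementations. Since you defer the delicate charging analysis to those cited works (as the paper itself does), your sketch is consistent with the paper's treatment and contains no substantive deviation.
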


\paragraph{Assumptions.} We will use \( \opt \) to denote an optimal clustering for the respective \cc input instance. We assume that \( \opt \) is an \epslarge clustering and satisfies \( |E_{\adm}| \leq O\left(\epsilon^{-12} \clcost(\opt)\right) \) because of Theorem~\ref{thm:preclustering-proc}.

\section{Multiplicative Weights Update Framework}
In this section, we prove the following: there exists an MWU algorithm that, in a constant number of rounds, finds a solution to the \clusterLP that is feasible and has a cost of at most \( (1 + O(\epsilon)) \optc \).

The framework is provided in Algorithm~\ref{alg:solveclusterlpframework}. To solve~\ref{LP:clusterlp2}, we first run the preclustering subroutine, which provides some structure to the graph. Then, for each non-singleton atom of the preclustering, if the cost of isolating this atom 
is zero, it means that this atom is a clique with no outgoing edges, and we will simply make it a cluster.
Next, we preprocess~\ref{LP:clusterlp2} by adding a fixed value to the objective and reformulating it as a covering LP, so that our final LP has certain desirable properties. We will discuss the details of~\ref{LP:coverclusterlp} (Line 3) and how to convert our solution back (Line 5) in Subsection~\ref{sec:covertingclptocclp}. We use the MWU algorithm to solve the \coverClusterLP. The description of the MWU algorithm (Lines 3-4) is given in Subsection~\ref{sec:mwualgorithm}.

\subsection{Converting \clusterlp to \coverClusterLP}
\label{sec:covertingclptocclp}
In order to run a multiplicative weights algorithm to solve the cluster LP, we first transform it into a covering LP. We make two key changes to the~\ref{LP:clusterlp2}. 
First, for each set $S \subseteq V$, we will add $\sum_{v \in S} \dc(v)$ to the cost function.
\[
\covers(S) := \costs(S) + \dc(S),
\]
where \( \dc(v) \) is twice the cost attributable to vertex $v$ if we choose to make \( K(v) \) a cluster.  If $v$ is a singleton atom, then $\dc(v) = d(v) - 1$ (because of the self-loop $vv$). If $v$ is an atom, then
\begin{align}
    \dc(v) = \frac{2 \cdot \cost( K(v))}{|K(v)|} = \frac{1}{|K(v)|} \sum_{u \in K(v)} \left( d(u) + |K(v)| - 2 d(u, K(v)) \right ).
\end{align}
The new objective function will be  
\[
\covers(z) := \sum_{S \subseteq V} \covers(S) \cdot z_S.
\]
Secondly, instead of an equality constraint, we relax the constraint to an inequality. Using the new objective function, we can rewrite the~\ref{LP:clusterlp2} as a~\ref{LP:coverclusterlp}.

\begin{equation}
	\min \qquad \covers(z) \qquad \text{s.t.} \tag{covering cluster LP} \label{LP:coverclusterlp}
\end{equation} \vspace{-15pt}
\begin{align}
	\sum_{S \ni u} z_S &\geq 1   &\quad &\forall u \in V \label{LPC:set-covered}\\
	z_S &\geq 0, &\qquad &\forall S \subseteq V S\neq \emptyset.
\end{align}

\begin{algorithm}[t]
	\caption{Solving~\ref{LP:clusterlp2}}
	\label{alg:solveclusterlpframework}
	\begin{algorithmic}[1]
 \STATE \textbf{Input:} Graph $G$ 
\STATE \textbf{Output:} a feasible solution to~\ref{LP:clusterlp} satisfying Theorem~\ref{thm:solving-cluster-LP-sequential}
        \STATE Find a preclustering $G$ via Theorem~\ref{thm:preclustering-proc}; let $(\calK, \eadm)$ be the preclustered instance.
        \STATE $V' \gets V$
        \FOR{ all atoms $K \in \calK$ such that if $\dc(K) = 0$}
        \STATE make $K$ a cluster and $V' \gets V' \setminus K$
        \ENDFOR
        \STATE Solve \ref{LP:coverclusterlp} using MWU Algorithm~\ref{alg:mw} on $G[V']$ to within $(1 + O(\eps^{13}))$ of the optimum solution.
        \STATE Let $\{z_S \}_{S}$ be the solution. 
        \STATE Compute a solution $\{\tilde{z}_S \}_{S}$ to the \ref{LP:clusterlp} using Lemma~\ref{lem:cover-to-cluster}. 
        \RETURN $\{\tilde{z}_S \}_{S}$.
	\end{algorithmic}
\end{algorithm}

Note that the cross degrees \( \dc(v) \) can be computed in time \( \widetilde{O}(m) \) by accessing each edge and checking if it belongs to the same atom. 
This is sufficient if we aim for nearly linear time.

One may ask whether this modification changes the solution in terms of the approximation algorithm. Let \( \Ec \) be the set of disagreement edges if we isolate all singleton and non-singleton atoms as the final clusters (i.e., \( 2|\Ec| = \sum_{v \in V} \dc(v) \)).  
If \( \Ec \) is unbounded, we are unable to obtain any guarantee for~\ref{LP:clusterlp2} when solving~\ref{LP:coverclusterlp} approximately. Fortunately, we have the following lemma to help us bound the increase in the objective when adding \( \dc \) to the cost.

\begin{lemma} 
\label{lem:dc}
For any non-singleton atom $K \in \calK$,
\[
\dc(K) = 2 \cdot \cost(K) = \Omega( \eps^{3} \dadm(K) ).
\]
Moreover, 
\[
    \dc(V) = \sum_{v \in V} \dc(v) = O\left(\frac{1}{\eps^{12}}\right)\clcost(\opt).
\]
\end{lemma}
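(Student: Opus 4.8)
The plan is to prove the two claims of Lemma~\ref{lem:dc} in sequence, using the structural properties of an \epssimilar preclustering (Definition~\ref{def:epsgoodpreclustering}) together with the fact that $\opt$ is \epslarge.

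\textbf{Step 1: $\dc(K) = \Omega(\eps^3 \dadm(K))$ for each non-singleton atom $K$.}
First unfold the definitions: $\dc(K) = \sum_{v \in K} \dc(v) = 2\cost(K)$ by the formula $\dc(v) = 2\cost(K(v))/|K(v)|$. So it suffices to lower-bound $\cost(K) = \half|E^+(K, V\setminus K)| + |E^-(K)|$ in terms of $\dadm(K) = \sum_{v \in K}\dadm(v)$. By the third bullet of Definition~\ref{def:epsgoodpreclustering}, each $v \in K$ has at most $O(\eps|K|)$ neighbors outside $K$, hence at least $d(v) - O(\eps|K|)$ neighbors inside $K$; but since $v$ is adjacent to at least a $(1-O(\eps))$-fraction of $K$, we also have $|K| \ge (1-O(\eps))^{-1}\cdot(\text{something})$... more usefully, $d(v) \ge d(v,K) \ge (1-O(\eps))|K| - 1$, so $d(v) = \Theta(|K|)$ for every $v\in K$. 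Now $\dadm(v) \le 2\eps^{-3}d(v) = O(\eps^{-3}|K|)$, so $\dadm(K) = O(\eps^{-3}|K|^2)$. On the other hand I want to show $\cost(K) = \Omega(|K|^2)$ whenever $K$ is a non-singleton atom with $\cost(K) > 0$; the subtlety is that $\cost(K)$ could in principle be much smaller than $|K|^2$. But the atoms we care about in the lemma statement are exactly those; and actually for the asymptotic claim we only need $\dc(K) \ge c\,\eps^3\dadm(K)$, i.e., $\cost(K) \ge c'\eps^3\cdot\eps^{-3}|K|^2 \cdot(\dots)$. Hmm — a cleaner route: if $K$ is a non-singleton atom then either it has a missing internal edge (so $|E^-(K)|\ge 1$) or an outgoing edge; combined with $d(v)=\Theta(|K|)$ one shows $\cost(K) = \Omega(|K|)$ at minimum, and then I charge $\dadm(K) = O(\eps^{-3}|K|^2)$ against... this does not immediately close. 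The honest fix is: use that each $v\in K$ contributes $\Theta(d(v)) = \Theta(|K|)$ to $\dadm$, but only atoms with $\cost(K)>0$ survive to this lemma (those with $\dc(K)=0$ are removed in Line 5 of Algorithm~\ref{alg:solveclusterlpframework}); for a surviving atom, the $O(\eps)$-slack in the third bullet forces $\cost(K) = \Omega(\eps|K|^2)$ — either $\Omega(\eps|K|^2)$ non-edges inside or $\Omega(\eps|K|)$ vertices each with $\Omega(\eps|K|)$... I will need to be careful here; see the obstacle paragraph.

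\textbf{Step 2: $\dc(V) = O(\eps^{-12})\optc$.}
Split $\dc(V) = \sum_{K\in\calK}\dc(K) + \sum_{v\text{ singleton}}(d(v)-1)$. For the singleton part, $\sum_v (d(v)-1)/2 \le$ the number of \pedges incident to singletons that... actually I'll bound it via admissible edges: every \pedge incident to a singleton vertex is an admissible pair (since one endpoint is not in $V_\calK$), so this sum is $O(|E_\adm|) = O(\eps^{-12})\optc$ by Theorem~\ref{thm:preclustering-proc}(2). Wait — that over-counts, since a \pedge incident to a singleton need not be admissible if both its endpoints have the pair marked non-admissible; but a pair with an endpoint outside $V_\calK$ is by Definition~\ref{definition:prepro} either admissible or non-admissible, and non-admissible \pedges are exactly disagreements of the \epslarge clustering $\opt$ (which never puts a non-admissible pair together), so their number is $\le \cost(\opt)$. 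Hence the singleton contribution is $O(\eps^{-12})\optc + O(\optc) = O(\eps^{-12})\optc$. For the atom part, by Step 1, $\sum_K \dc(K) = O(\eps^{-3})\sum_K\dadm(K) = O(\eps^{-3})\dadm(V_\calK) \le O(\eps^{-3})\cdot O(\eps^{-12})\cdot(\text{?})$ — here I use $\dadm(V) \le 2\eps^{-3}d(V) = 2\eps^{-3}\cdot 2m$, which is $O(\eps^{-3}m)$, not $O(\optc)$. So that bound is not good enough directly. The correct charging is: $\sum_K\dadm(K) \le \sum_K 2\eps^{-3}d(K)$, and $d(K) = \sum_{v\in K}d(v)$; but $d(v)$ for $v\in K$ equals $d(v,K) + d(v,V\setminus K) = d(v,K) + O(\eps|K|)$, and $\sum_{v\in K}d(v,K) = 2|E^+(K)| \le |K|^2$, so $d(K) = O(|K|^2)$, giving $\dadm(K) = O(\eps^{-3}|K|^2)$ again. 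Combined with $\cost(K) = \Omega(\eps|K|^2)$ from Step 1, we get $\dc(K) = 2\cost(K) = \Omega(\eps|K|^2) \ge \Omega(\eps^4)\dadm(K)$, i.e. $\dadm(K) = O(\eps^{-4})\dc(K) = O(\eps^{-4})\cost(K)$. Summing, $\sum_K\dadm(K) = O(\eps^{-4})\sum_K\cost(K) \le O(\eps^{-4})\optc'$ where $\optc'$ is the cost restricted to atom-internal/outgoing edges — fine, $\le O(\eps^{-4})\optc$... wait, but $\cost(K)$ counts outgoing \pedges with weight $\half$, and these are exactly non-admissible or admissible edges, not necessarily disagreements of $\opt$; however $\opt$ is \epslarge so does not break atomic edges, meaning every outgoing \pedge of $K$ is a disagreement of $\opt$, so $\sum_K\cost(K) \le \cost(\opt) + (\text{atomic non-edges, also disagreements}) \le 2\optc$. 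Hence $\sum_K\dc(K) = O(\optc)$ and we are done — actually this gives a much better constant than claimed, which is fine.

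\textbf{Main obstacle.}
The crux is Step 1: proving $\cost(K) = \Omega(\eps|K|^2)$ (or whatever power of $\eps$ makes the final $\eps^{-12}$ bound go through) for every non-singleton atom $K$ that survives the "$\dc(K)=0$" filter, using only the $(1-O(\eps))$-density guarantee of Definition~\ref{def:epsgoodpreclustering}. A non-singleton atom could have $\cost(K) = o(|K|^2)$ — e.g.\ a near-clique missing only $\Theta(\eps|K|)$ edges total has $\cost(K) = \Theta(\eps|K|)$, not $\Theta(\eps|K|^2)$. So the clean inequality $\dadm(K) = O(\eps^{-c})\cost(K)$ may simply be false as stated, and the actual proof likely avoids going through $\cost(K)$ for individual small-cost atoms: instead one bounds $\sum_K\dadm(K)$ globally, observing that admissible edges touching atom vertices are themselves bounded by $|E_\adm|$, and the purely-internal slack is absorbed by $\cost(\opt)$ via $\opt$ being \epslarge. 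I would therefore reorganize Step 2 to charge $\dadm(K)$ for each $v\in K$ as $\dadm(v) \le 2\eps^{-3}d(v)$ and then bound $\sum_{v\in V_\calK}d(v)$ by $O(\eps^{-1})\cdot(\text{disagreements of }\opt\text{ plus }|E_\adm|)$ using the second bullet of Definition~\ref{def:epsgoodpreclustering} and the \epslarge property of $\opt$ — this is where I expect the real work, and the $\eps^{-12}$ in the statement strongly suggests the dominant term is the $|E_\adm| = O(\eps^{-12})\optc$ bound fed through a single extra $\eps^{-3}$ or so.
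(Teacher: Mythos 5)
Your Step 2 (the bound $\dc(V) = O(\eps^{-12})\optc$) is essentially the paper's argument: every edge counted by $\dc$ is either admissible or a disagreement of the \epslarge clustering $\opt$ (your intermediate claim that every outgoing \pedge of an atom is a disagreement is slightly off — it could instead be admissible — but your own accounting already absorbs admissible edges via $|E_\adm| = O(\eps^{-12})\optc$, so this is harmless). The problem is Step 1. The per-atom bound $\dc(K) = \Omega(\eps^3\dadm(K))$ is part of the lemma statement (and is used later, e.g.\ in the sublinear section and in the $p$-value bounds), so it cannot be bypassed by reorganizing the global charging; and your suggestion that it "may simply be false as stated" is wrong — it is true, and its proof is the actual content of the lemma. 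Your route through $\cost(K) = \Omega(\eps|K|^2)$ indeed fails (your near-clique example is correct), but the right argument never needs such a volume lower bound on $\cost(K)$.

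The missing idea is a charging argument that uses the second bullet of Definition~\ref{def:epsgoodpreclustering}, which you never invoke: any admissible pair $(v,u)$ with $v\in K$ has at least $\eps\cdot\min\{d(u),d(v)\}\ge\eps^2 d(v)$ common \emph{degree-similar} \pedge-neighbors, and $u\notin K$ because an admissible pair has an endpoint outside $V_\calK$. Admissible \pedges incident to $K$ are then outgoing \pedges, directly counted by $2\cost(K)=\dc(K)$. For admissible \medges, split the admissible \medge-neighbors $u$ of each $v\in K$ by where the common degree-similar neighbors lie: if at least half lie outside $K$, then $v$ has at least $\eps^2 d(v)/2$ degree-similar neighbors outside $K$, but it has only $d(v)-d(v,K)$ such neighbors in total, and degree-similarity caps each outside neighbor's degree by $d(v)/\eps$, giving at most $2\eps^{-3}(d(v)-d(v,K))$ such $u$ per $v$, which sums to $O(\eps^{-3}\cost(K))$; if at least half lie inside $K$, then $u$ itself sends at least $\eps^2 d(v)/2 = \Omega(\eps^2|K|)$ \pedges into $K$, all outgoing edges of $K$, so there are at most $O(\dc(K)/(\eps^2|K|))$ such $u$ per $v$ and $O(\eps^{-2}\dc(K))$ in total. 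Together $\dadm(K) = O(\eps^{-3}\dc(K))$, exactly as claimed — and note this also defuses your counterexample worry: an atom with tiny $\cost(K)$ automatically has tiny $\dadm(K)$, because each admissible neighbor must be witnessed by many edges crossing the boundary of $K$.
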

\begin{proof}

We first show the upper bound on $\sum_{v \in V} \dc(v)$. Each edge counted by $\sum_{v \in V} \dc(v)$ is either admissible or contributes to the cost of $\opt$. To finish the upper bound, remember that $|E_{\adm}| = O(\epsilon^{-12}\clcost(\opt))$ by \cref{thm:preclustering-proc}.

Then, we will prove the lower bound on \( \dc(K) \).
Consider a vertex \( v \in K \) and an admissible edge \( (v, u) \). If $uv$ is a \pedge, then \( \dc(K) \) already covers it, so we only need to consider \medges.

Let $A(v)$ be the set of all vertices that are connected to $v$ by an admissible \medge.
Let \( u \) be one such neighbor in \( A(v) \). By Definition~\ref{def:epsgoodpreclustering}, \( v \) and \( u \) are degree similar and must share at least \( \epsilon \cdot \min\{d(v), d(u)\} \geq \epsilon^2 d(v) \) \( + \)neighbors, which are degree similar to both \( u \) and \( v \). Recall that a pair \( w \Tilde{w} \) is degree similar if \( \epsilon d(w) \leq d(\Tilde{w}) \leq d(w) / \epsilon \). We will distinguish between two cases.

Let $A_1(v) \subseteq A(v)$ be the set of vertices $u \in A(v)$ such that
at least half of the  degree-similar \( + \)neighbors of \( u \) and \( v \) are not in \( K \). Note that  \( v \in K \),  has at most \( d(v) - d(v, K)\) degree similar \( + \)neighbors outside \( K \). Each \( u\in A_1(v) \) has to be adjacent to at least \( \frac{\epsilon^2}{2} d(v) \) of these vertices outside of \( K \) which have degree at most $d(v)/\epsilon$. Therefore, we have at most
\begin{align*}
    \frac{(d(v) - d(v, K)) \cdot d(v) / \epsilon}{\epsilon^2 d(v) / 2} \leq 2\epsilon^{-3} (d(v) - d(v, K))
\end{align*}
vertices in $A_1(v)$. Thus, we can bound $\sum_{v \in K} |A_1(v)|$ by \( 2\epsilon^{-3} \sum_{v \in K} (d(v) - d(v,K) ) \leq 4\epsilon^{-3} \cost(K) \leq 4\epsilon^{-3} \dc(K) \).

Let $A_2(v) \subseteq A(v)$ be the set of vertices $u \in A(v)$ such that that at least half of the degree-similar \( + \)neighbors of \( u \) and \( v \) are in \( K \). Each \( u \in A_2(v) \) must connect to at least \( \frac{\epsilon^2}{2} d(v) \) vertices in \( K \), so we have at most
\begin{align*}
    \frac{\dc(K)}{\epsilon^2 d(v) / 2} \leq \frac{4\dc(K)}{\epsilon^2 |K|}
\end{align*}
vertices that are in $A_2(v)$. Thus, we can bound $\sum_{v \in K}
|A_1(v)|$ by \( |K| \cdot \frac{4\dc(K)}{\epsilon^2 |K|} =
4\epsilon^{-2} \dc(K) \).

Combining the two cases, we find that the \( - \)admissible neighbors of \( K \) are at most \( \sum_{v \in K} |A(v)| = O(\epsilon^{-3} \dc(K)) \).
\end{proof}

We say that a set of vertices \( S \subseteq V \) does not split atoms if \( K(v) \subseteq S \) for all \( v \in S \). Adding \( \dc \) is useful for us because the new objective function \( \covers(\cdot) \) remains monotone as long as the involved sets do not split atoms.

\begin{lemma}
    \label{lem:monotone}
    Let $U,W \subseteq V$ such that neither $U$ nor $W$ splits an atom. If $U \subset W$ then, $\covers(U) < \covers(W)$. 
\end{lemma}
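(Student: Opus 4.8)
\emph{Proof plan.}
The plan is to evaluate $\covers(W)-\covers(U)$ by directly accounting for which edges change status when the vertex block $D := W \setminus U$ is merged into the cluster, and then to check that the resulting expression is nonnegative term by term, with at least one term strictly positive. First I would record that $D$ is a union of atoms: for $v \in D \subseteq W$ we have $K(v) \subseteq W$ since $W$ does not split atoms, and $K(v) \cap U = \emptyset$, since otherwise some $u \in K(v) \cap U$ would force $K(u) = K(v) \subseteq U$ (as $U$ does not split atoms) and hence $v \in U$, a contradiction; thus $K(v) \subseteq D$. Consequently, $\dc$ being additive over vertices, $\dc(W) - \dc(U) = \dc(D) = \sum_{K \in \calK,\, K \subseteq D} \dc(K) + \sum_{\text{singletons } v \in D} \dc(v)$.

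Next comes the edge accounting. Write $|E^+(U,D)|$ and $|E^-(U,D)|$ for the number of $\pedges$, resp. $\medges$, with one endpoint in $U$ and the other in $D$. Merging $D$ into $U$ turns every $\pedge$ counted in $E^+(U,D)$ from a cut edge (cost $\tfrac12$) into an internal edge (cost $0$), turns every $\medge$ of $E^-(U,D)$ and of $E^-(D)$ into a paid internal $\medge$ (cost $1$), and creates the new cut $\pedges$ of $E^+(D,V\setminus W)$ (cost $\tfrac12$ each). Keeping track of these $\tfrac12$ weights, a short computation gives the identity
\[
\costs(W) - \costs(U) \;=\; \costs(D) \;+\; |E^-(U,D)| \;-\; |E^+(U,D)|,
\]
so that $\covers(W) - \covers(U) = \costs(D) + |E^-(U,D)| + \bigl(\dc(D) - |E^+(U,D)|\bigr)$. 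The first two summands are nonnegative, so it remains to control the last one. Here I would use that $\dc(K) = 2\costs(K) = |E^+(K, V\setminus K)| + 2|E^-(K)|$ for every atom $K$ (this is \cref{lem:dc} for non-singleton atoms, and follows from the definition of $\dc(v)$ together with the self-loop convention for singletons). Summing over the atoms of $D$ and dropping the $|E^-|$ terms yields $\dc(D) \ge \sum_{K \subseteq D} |E^+(K, V\setminus K)| \ge |E^+(D, V\setminus D)| \ge |E^+(U,D)|$, where the middle inequality over-counts $E^+(D,V\setminus D)$ and the last uses $U \subseteq V \setminus D$. Hence $\covers(W) - \covers(U) \ge 0$.

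For strictness: if $\covers(W) = \covers(U)$, all three summands vanish; $\costs(D) = 0$ forces $D$ to be a $\pedge$-clique with no $\pedges$ leaving it, whence $|E^+(U,D)| = 0$ and then $\dc(D) = 0$, which — since a $\pedge$-clique consisting of two or more atoms has $\pedges$ between those atoms, contributing positively to $\dc$ — forces $D$ to be a single atom $K$ with $\dc(K) = 0$; moreover $|E^+(U,D)| = |E^-(U,D)| = 0$ together with completeness of $G$ forces $U = \emptyset$. Such a configuration is excluded for the sets we deal with: atoms with zero cross-degree are split off as clusters in the preprocessing step of \cref{alg:solveclusterlpframework} (and a singleton atom with $\dc(v) = 0$ is an isolated vertex), so it does not arise; therefore $\covers(U) < \covers(W)$. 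The one place that needs care is the edge-accounting identity — tracking each edge's contribution with the correct multiplicity before and after the merge — together with the uniform formula $\dc(K) = 2\costs(K)$; the rest is a routine nonnegativity argument.
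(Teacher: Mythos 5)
Your proof is correct and follows essentially the same route as the paper's: both reduce the claim to showing that the only thing $\cost(U)$ can gain over $\cost(W)$ is the $\tfrac12|E^+(U,W\setminus U)|$ from \pedges into $W\setminus U$, and both absorb this via $\dc(W\setminus U)\geq|E^+(U,W\setminus U)|$, which comes from $\dc(K)=2\costs(K)$ for each atom $K\subseteq W\setminus U$ (using, as you spell out, that $W\setminus U$ is a union of atoms); your version just does the accounting as an exact identity. Your equality-case discussion is a welcome addition: strictness genuinely fails when $U=\emptyset$ and $W\setminus U$ is a single atom with $\dc=0$, a degenerate configuration the paper's inequality $|E^+(U,W\setminus U)|/2<\dc(W\setminus U)$ silently assumes away and which, as you note, is ruled out in all uses of the lemma by the preprocessing in Algorithm~\ref{alg:solveclusterlpframework}.
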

\begin{proof}
We need to show:
\begin{equation}
\cost(U) + \dc(U) < \cost(W) + \dc(W) = \cost(W) + \dc(U) + \dc(W\setminus{U}).
\end{equation}
So we want to show:
\begin{equation}
\cost(U) < \cost(W) + \dc(W\setminus{U}),
\end{equation}
which is true since the \medges in $U$ also belong to $W$ and \pedges contributing to $\cost(U)$ but not $\cost(W)$ are at most those in $E^+(U,W\setminus{U})$, whose contribution to $\cost(U)$ is $|E^+(U,W\setminus{U})|/2 < \dc(W\setminus{U})$.  
\end{proof}

Using the above observation, we can show that an optimal solution of the \ref{LP:coverclusterlp} is feasible for the \ref{LP:clusterlp} as long as it does not split atoms.
\begin{lemma}
\label{lem:cover-cluster}
    Let $z$ be an optimal solution to \ref{LP:coverclusterlp} where each set $S$ in the support of $z$ does not split atoms. Then $z$ satisfies all constraints with equality.
\end{lemma}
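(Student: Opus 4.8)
The plan is to prove the contrapositive-style statement directly: show that if $z$ is optimal and its support consists only of atom-respecting sets, then every covering constraint $\sum_{S \ni u} z_S \geq 1$ must in fact hold with equality. Suppose for contradiction that some vertex $u$ is \emph{over-covered}, i.e. $\sum_{S \ni u} z_S = 1 + \eta$ for some $\eta > 0$. The idea is to locally modify $z$ to strictly decrease the objective $\covers(z)$ while staying feasible, contradicting optimality. The key tool is the monotonicity of $\covers(\cdot)$ over atom-respecting sets (Lemma~\ref{lem:monotone}): removing a vertex from a set that does not split an atom strictly decreases its $\covers$-value.

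First I would pick a set $S$ in the support of $z$ with $u \in S$ and $z_S > 0$. I want to ``shrink'' $S$ by removing $u$ (and, to keep sets atom-respecting, the whole atom $K(u)$ if $u \in V_\calK$). Let $S' = S \setminus K(u)$. Since $S$ does not split atoms, neither does $S'$, and $S' \subsetneq S$, so by Lemma~\ref{lem:monotone} we have $\covers(S') < \covers(S)$. Now decrease $z_S$ by a small amount $\theta \in (0, \min\{z_S, \eta\}]$ and increase $z_{S'}$ by the same $\theta$. The change in the objective is $\theta(\covers(S') - \covers(S)) < 0$, a strict decrease. The one subtlety is feasibility: removing $K(u)$ from $S$ reduces the coverage of every vertex in $K(u)$, not just $u$. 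But every vertex $v \in K(u)$ satisfies $K(v) = K(u)$, hence is covered by exactly the same collection of support sets as $u$ is (any atom-respecting set containing $u$ contains all of $K(u)$ and vice versa) — so every vertex of $K(u)$ is over-covered by the \emph{same} slack $\eta$, and choosing $\theta \le \eta$ keeps all of them at coverage $\ge 1$. All other vertices' coverage is unchanged. Thus the modified $z$ is feasible with strictly smaller objective, contradicting optimality; hence no vertex can be over-covered, and all constraints hold with equality.

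The main obstacle — and the only place requiring care — is precisely this feasibility bookkeeping for the atom: one must verify that the vertices sharing an atom with $u$ are covered by exactly the same support sets (which follows because every support set is atom-respecting), so that a single slack parameter $\theta$ works for all of them simultaneously. A secondary point is ensuring $S' \neq \emptyset$; if $S = K(u)$ exactly, then $S' = \emptyset$ and one cannot move weight there, but in that case one instead simply decreases $z_S$ by $\theta$ without compensating — this only helps the objective (since $\covers(S) > 0$ whenever $S$ is a valid nonempty atom-respecting set, as $\dc$ is strictly positive on non-singleton atoms after the $\dc(K)=0$ atoms were removed in Line 5--6 of Algorithm~\ref{alg:solveclusterlpframework}, and $\dc(v) = d(v)-1 \ge 0$ for singletons, with strict positivity unless $d(v)=1$) and preserves feasibility since we assumed $u$ was over-covered. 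Either way we reach a contradiction. I would write the argument once for the generic case $S' \neq \emptyset$ and dispatch the degenerate case in a sentence.
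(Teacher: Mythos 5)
Your proposal is correct and takes essentially the same route as the paper's proof: assume some vertex (equivalently, its atom) is over-covered, shift an amount $\min\{z_S,\text{slack}\}$ of weight from a support set $S \supseteq K(u)$ to $S \setminus K(u)$, verify feasibility using that every support set respects atoms (so all of $K(u)$ shares the same slack), and contradict optimality via the monotonicity of $\covers$ from Lemma~\ref{lem:monotone}. Your explicit handling of the degenerate case $S = K(u)$, where the shrunken set would be empty, is a detail the paper's proof silently glosses over, and your caveat about isolated singletons with $d(v)=1$ correctly identifies the only spot where strict decrease of the objective could fail.
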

\begin{proof}
    Assume for contradiction that there exists an atom $K(v)$ that is not tight, $\sum_{S \supseteq K(v)} z_S > 1$. Let $W \subseteq V$ be such that $z_W > 0$.
    Let $U = W \setminus K(v)$.  We modify $z$ by decreasing $z_W$ and increasing $z_U$. 
    Let $a = (\sum_{S: K(v) \subseteq S} z_S) - 1$, and let $b = \min\{z_W, a\}$.  We define a new vector $\widetilde z$.  Let $\widetilde z_U = z_U + b$ and let $\widetilde z_W = z_W - b$.  For all other $S$ such that $S \neq W,U$, let $\widetilde z_S = z_S$.
   
    We obtain a contradiction by showing that $\widetilde z$ remains feasible but has a decreased objective value. The only constraints affected by the change are those corresponding to vertices in $W$. 
    For a vertex $u \in U= W\setminus K(v) $,
    $$\sum _{S \ni u} \widetilde z_S = \widetilde z_W + \widetilde z_U + \sum_{\substack{S \ni u \\ S \neq W, U}} z_S = 
        (z_W - b) + (b + z_U) + \sum _{\substack{S \ni u \\ S\neq W,U}} z_S = \sum _{S \ni u} z_S \geq 1.$$ 
    For the atom $K(v)$, we have
    $$\sum _{S \supseteq K(v)} \widetilde z_S = \widetilde z_W + \sum _{\substack{S \supseteq K(v)\\ S\neq W}} z_S  = z_W - b + \sum _{\substack{S \ni v\\ S\neq W}} z_S
    \geq 1.$$ The last inequality holds since $b \leq a$.
    To finish the proof, we have to show that the objective value has decreased.  In other words, we want to show the following quantity is positive.
     $$\covers(W) \cdot (z_W - \widetilde z_W) + \covers(U) \cdot (z_{U} - \widetilde z_{U}) = 
    \covers(W) \cdot b + \covers(U) \cdot (-b).$$
    To finish the proof, remember that $\covers(\cdot)$ is monotone and $U \subseteq W$ does not split atoms. The lemma follows from Lemma \ref{lem:monotone}.\end{proof}
We can show a more general version of Lemma \ref{lem:cover-cluster}. In particular, we can transform a suitable solution of the~\ref{LP:coverclusterlp}
into a solution to the~\ref{LP:clusterlp2} that satisfies the additional condition in Theorem \ref{thm:solving-cluster-LP-sequential}, which stipulates that all values in the support are lower bounded by a small constant.
\begin{lemma}
\label{lem:cover-to-cluster}
Let $\epsilon > 0$ be a small enough constant, $\covereps = O(\eps^{13})$
    and $c = \lceil \frac{1}{\covereps} \rceil$. Given a constant $\tmwu \in \mathbb{N}$ and a solution $z$ to the~\ref{LP:coverclusterlp} where,
    \begin{itemize}
        \item $z_S \geq \frac{1}{\tmwu}$ for each $S \in \supp(z)$,
        \item all $S \in \supp(z)$ do not split atoms,
        \item for each vertex $v$ the number of sets $S \in \supp(z)$ with $v \in S$ is at most $\tmwu$.
     \end{itemize}
    We can find a solution $\widetilde z$ to the~\ref{LP:clusterlp2} where $\widetilde z_S \geq \frac{1}{c \tmwu}$ for all $S \in \supp(z)$ and $\covers(\widetilde z) \leq (1 + \covereps) \covers(z)$ and the solution $\widetilde z$ can be found in time $O(n)$.
\end{lemma}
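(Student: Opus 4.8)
The plan is to convert the \coverClusterLP solution $z$ into a \clusterlp solution $\widetilde z$ in two stages: first, ``round up'' the overlap so that the vertex constraints become tight (turning $\geq$ into $=$), exactly as in Lemma~\ref{lem:cover-cluster}; second, remove the extra $\dc$ term from the objective by undoing the preclustering-isolation cost, which is where the $(1+\covereps)$ blow-up comes from. For the first stage I would not argue by contradiction (as in Lemma~\ref{lem:cover-cluster}) but constructively: process vertices/atoms one at a time, and whenever $\sum_{S \supseteq K(v)} z_S > 1$, shift weight $b = (\sum_{S \supseteq K(v)} z_S) - 1$ off some set $W \supseteq K(v)$ onto $W \setminus K(v)$, repeating until all constraints are tight. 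By Lemma~\ref{lem:monotone} each such shift does not increase $\covers(z)$ and preserves feasibility (and the ``does not split atoms'' property is preserved since $W\setminus K(v)$ still does not split atoms). The termination / running-time bound uses the hypotheses: each vertex lies in at most $\tmwu$ support sets, and all support values are $\geq 1/\tmwu$, so $|\supp(z)| \le \tmwu n$ and there are only polynomially (in $n$, with constants depending on $\tmwu$) many shift operations, each doable in $O(1)$ amortized time with appropriate bookkeeping — giving $O(n)$ overall for constant $\tmwu$.

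The second stage addresses the gap between $\covers(\cdot)$ and $\cost(\cdot)$. Once the vertex constraints are tight, $z$ is a genuine fractional clustering, and for such a $z$ we have $\covers(z) = \cost(z) + \sum_{v} \dc(v)$ because every $v$ is covered with total weight exactly $1$. So $z$ is already feasible for the \clusterlp; what remains is to make the support values at least $1/(c\tmwu)$. I would do this by a ``bucketing/merging'' argument: sets $S$ with $z_S$ very small contribute little, and I can merge their mass into the globally-isolating solution (the clustering $\calK$ together with singletons, whose cost is $\Ec = \half\dc(V) = O(\eps^{-12})\optc$) at the cost of a $\covereps = O(\eps^{13})$ multiplicative increase relative to $\covers(z)$; the point is that $\covers(z) \geq \cost(\calC^*_{(\calK,\eadm)})/(1+\eps) \geq \optc/(1+\eps)^2 = \Omega(\optc)$, so adding an $O(\covereps \optc) = O(\eps \cdot \optc)$ term is an $(1+O(\eps))$-factor, and choosing $\covereps = O(\eps^{13})$ absorbs the $O(\eps^{-12})$ loss in $\Ec$. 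Concretely: scale $z$ down slightly, put the freed-up weight $\covereps$ on the atoms-and-singletons clustering, and then discard any $S$ whose resulting weight is still below $1/(c\tmwu)$, redistributing that negligible residual the same way; with $c = \lceil 1/\covereps\rceil$ the arithmetic works out.

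The main obstacle is the second stage: one has to simultaneously (i) guarantee the lower bound $1/(c\tmwu)$ on every surviving support coordinate, (ii) keep feasibility of the vertex constraints as equalities for the \clusterlp, and (iii) control the objective increase by exactly $\covereps$, all while the ``isolate every atom'' fallback clustering has cost as large as $\Theta(\eps^{-12}\optc)$ — so the amount of weight one is allowed to divert to it is tightly constrained, and the bookkeeping for which small sets to kill versus keep must be done carefully so the total diverted mass stays $\le \covereps$. The first stage, by contrast, is essentially a streamlined, iterated version of Lemma~\ref{lem:cover-cluster} and I expect it to go through routinely; the $O(n)$ running time there just needs the observation that the number of distinct sets in the support is $O(\tmwu n) = O(n)$ and each is touched $O(1)$ times.
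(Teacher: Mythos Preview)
Your proposal misses the key idea and, as a consequence, your second stage is both unnecessary and unjustified.

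First, a misreading: the lemma's conclusion is stated entirely for $\covers(\cdot)$, not $\cost(\cdot)$. Nothing about ``removing the extra $\dc$ term'' is required here; that conversion happens outside this lemma. So your whole second stage is aimed at the wrong target.

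The real work is to simultaneously (a) make the covering constraints tight and (b) end up with every surviving coordinate at least $1/(c\tmwu)$. Your plan does (a) first via the peeling from Lemma~\ref{lem:cover-cluster}, but peeling creates new sets $U=W\setminus K(v)$ whose weight can be an arbitrary small fraction $b$, so after your first stage the support no longer satisfies any lower bound. You then try to repair this with a ``scale down, dump mass on the atoms-and-singletons clustering, discard tiny sets'' maneuver; but (i) discarding sets or adding the fallback clustering destroys the equality constraints you just established, and (ii) your cost analysis relies on $\covers(z)\ge \Omega(\optc)$, which is not a hypothesis of the lemma.

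The paper's trick is to reverse the order: \emph{first} round every $z_S$ \emph{up} to the nearest multiple of $\tfrac{1}{c\tmwu}$. Because $z_S\ge 1/\tmwu$, this costs at most a $(1+\covereps)$ factor --- and this rounding is precisely where the $(1+\covereps)$ in the conclusion comes from, not anything involving $\dc$. Now scale by $c\tmwu$ so all coordinates are integers, and \emph{then} run the peeling from Lemma~\ref{lem:cover-cluster}. The point is that the shift amount $b$ is always a positive integer, so every coordinate stays on the grid $\frac{1}{c\tmwu}\mathbb Z_{\ge 0}$ throughout; after peeling, every nonzero coordinate is automatically $\ge 1/(c\tmwu)$, and Lemma~\ref{lem:monotone} guarantees the objective only decreases in this phase.

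This integer structure is also what drives the $O(n)$ running time: the potential $\sum_S |S|\,k_S = \sum_v \sum_{S\ni v} k_S$ is initially at most $n\cdot c\tmwu^2 = O(n)$ (using the third hypothesis and $z_S\le 1$), and drops by at least $1$ per peel. Your version, peeling with fractional $b$, has no such integral potential and the ``each set touched $O(1)$ times'' claim is unsupported.
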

\begin{proof}
    If $z$ is a solution to the \ref{LP:coverclusterlp}, we can assume that $z_S \leq 1$ for all $S \in V$. Otherwise, we can set $z_S = 1$ only decreasing the objective.
    We start by rounding each coordinate $z_S$ to a multiple of $\frac{1}{c \tmwu}$. In particular, set $z^{(1)}_S$ to $\frac{k}{c \tmwu}$ where $k \in \mathbb{N}$ such that $\frac{k-1}{c \tmwu} \leq z_S \leq \frac{k}{c \tmwu}$.
    Since we do not decrease the value of any coordinate, $z^{(1)}$ remains feasible. Furthermore, the objective increases by at most a $\covereps$ factor. Indeed, for any set $S\subseteq V$,
    \[
        z^{(1)}_S - z_S \leq \frac{1}{c \tmwu} \leq \frac{\covereps}{\tmwu} \leq \covereps z_S.
    \]
    The second inequality holds since $c \geq \frac{1}{\covereps}$ and the last inequality follows from $z_S \geq \frac{1}{\tmwu}$. 
    Next, scale the variable $z^{(1)}$ and the constraints by $c \tmwu$. Note that by the above, we have that $z^{(1)}_S = \frac{k^{(1)}_S}{c \tmwu}$ for some $k^{(1)}_S \in \mathbb{N}$. After scaling we have variables $\{k^{(1)}_S\}_{S \subseteq V}$ and constraints, $\sum_{S \ni v} k^{(1)}_S \geq c \tmwu$.
    We can transform $k^{(1)}$ to a scaled up solution of the cluster LP similar to the proof of Lemma \ref{lem:cover-cluster}.  Pick an atom $K(v)$ that is not tight, $\sum_{S \supseteq K(v)} k^{(1)}_S > c \tmwu$. Let $W$ be a set with $k^{(1)}_W > 0$ and $K(v) \subseteq W$. Let $U = W \setminus \{K(v)\}$.  We modify $k^{(1)}$ by decreasing $k^{(1)}_W$ and increasing $k^{(1)}_U$. 
    Let $a = (\sum_{S: K(v) \subseteq S} k^{(1)}_S) - c \tmwu$, and let $b = \min\{k^{(1)}_W, a\}$.  We define a new vector $k^{(2)}$.  Let $k^{(2)}_U = b + k^{(1)}_U$ and let $k^{(2)}_W = k^{(1)}_W - b$.  For all other $S$ such that $S \neq W,U$, let $k^{(2)}_S = k^{(1)}_S$. We will show that $k^{(2)}$ remains feasible while the objective value decreases. The only constraints affected by the change are those corresponding to sets containing vertices in $W$.
    For a vertex $u \in U= W\setminus \{K(v)\}$,
    $$\sum _{S \ni u} k^{(2)}_S = k^{(2)}_W + k^{(2)}_U + \sum_{\substack{S \ni u \\ S \neq W, U}} k^{(2)}_S = 
        (k^{(1)}_W - b) + (b + k^{(1)}_U) + \sum _{\substack{S \ni u \\ S\neq W,U}} k^{(1)}_S = \sum _{S \ni u} k^{(1)}_S \geq c \tmwu.$$
    For the atom $K(v)$, we have
    $$\sum _{S \supseteq K(v)} k^{(2)}_S = k^{(2)}_W + \sum _{\substack{S \supseteq K(v)\\ S\neq W}} k^{(2)}_S  = k^{(1)}_W - b + \sum _{\substack{S \ni K(v)\\ S\neq W}} k^{(1)}_S
    \geq c \tmwu.$$ The last inequality holds since $b \leq a$.
    To finish the proof, we have to show that the objective value has decreased.  In other words, we want to show the following quantity is positive.
     $$\covers(W) \cdot (k^{(1)}_W - k^{(2)}_W) + \covers(U) \cdot (k^{(1)}_{U} - k^{(2)}_{U}) = \covers(W) \cdot b + \covers(U) \cdot (-b).$$
    Observe that $U \subset W$ does not split atoms. Thus, the objective decreases by Lemma \ref{lem:monotone}.
    We can repeat this process until $z^{(2)} = k^{(2)}/(c \tmwu)$ is feasible for the \clusterLP. This process terminates after $O(n)$ iterations, which follows from the fact that in each iteration $\sum_{v \in V} \sum_{S \ni v} k^{(2)}_S$ decreases by at least one. 
  To see this, observe that
    $$ \sum_{v \in V} \sum_{S \ni v} k^{(2)}_S = \sum_{S} \sum_{v \in S} k^{(2)}_S = \sum_{S} |S| ~k^{(2)}_S.$$
This last sum decreases by at least one, since we have ``shifted'' weight from $W$ to a smaller set $U$.  Moreover, by the third property of $z$ and the assumption that $z_S \leq 1$ for all $S \subseteq V$, we have that $\sum_{S \ni v} k^{(2)}_S \leq c \tmwu^2 = O(1)$.
\end{proof}

Throughout the following sections, we will fix the parameter \( \covereps = O(\epsilon^{13}) \) since we want to obtain a \( (1 + \covereps) \)-approximate algorithm for~\ref{LP:coverclusterlp}. Assume \( z \) is a \( (1 + \covereps) \)-approximately feasible solution for~\ref{LP:coverclusterlp}, let $\tilde{z}$ be the solution we apply Lemma~\ref{lem:cover-to-cluster} to obtain a feasible solution for~\ref{LP:clusterlp2}.
Observe that we have $\covers(\widetilde z) = \cost(\widetilde z) + \dc(V)$.
Thus,
\begin{align*}
    \cost(\widetilde z) &= \covers(\widetilde z) - \dc(V) \leq \covers(z) - \dc(V) \\
    &\leq (1 + \covereps) \covers(\opt) - \dc(V) \\
    & \leq (1 + \covereps)\cost(\opt) + (1 + \covereps) \dc(V) - \dc(V) \\
    & \leq (1 + 2\epsilon)\cost(\opt). 
\end{align*}

The last inequality holds because \( \dc(V) = O(\epsilon^{-12} \cost(\opt)) \). In the following section, we present the MWU algorithm to solve~\ref{LP:coverclusterlp} within a \( (1 + O(\covereps)) \)-approximate ratio.

\subsection{The MWU Algorithm}
\label{sec:mwualgorithm}

In this section, we show that a constant number of rounds of the MWU Algorithm suffices to compute a $(1 + O(\covereps))$-approximate solution to the~\ref{LP:coverclusterlp}. The MWU algorithm is given in Algorithm~\ref{alg:mw}.

\paragraph{Overview of Algorithm~\ref{alg:mw}.} Algorithm~\ref{alg:mw} solves the~\ref{LP:coverclusterlp} within a $(1+\covereps)$ factor of the optimum. Algorithm~\ref{alg:mw} maintains a weight $w_v$ for each vertex.  During each step $t=1,\dots \tmwu$, we scale the vertex constraints by the normalized vertex weights and aggregate the scaled constraints into a single constraint. We can find a solution to the now simplified optimization problem by Lemma \ref{lemma:update-point}, which we will prove in Section \ref{sec:family-poly}. 
The weights are then updated depending on the margin by which we violate or satisfy the corresponding vertex constraint. The final solution for the covering \clusterLP is obtained by taking the average of the solutions to the simplified optimization problems solved at each round.  In order to ensure feasibility, we need to increase the values of sets corresponding to atoms containing insufficiently covered vertices.  Finally, when all elements are almost covered (to an extent $1-2\gamma$), we can scale up the value by a small amount to ensure sufficient coverage of all vertices.  The key step in the analysis is showing that there were few insufficiently covered vertices and we do not need to increase the values of too many sets corresponding to their atoms.  
   
\begin{algorithm}[t]
	\caption{MWU algorithm for the \coverClusterLP}
	\label{alg:mw}
	\begin{algorithmic}[1]
    \STATE Initialize the weights $w_v^{(1)} = \dc(v)$ for each vertex $v \in V$.    \FOR{$t=1,\dots,\tmwu$}
        \STATE Normalize the weights $p^{(t)} = \frac{w^{(t)}}{\sum_v w_v^{(t)}}$. 
        \STATE Aggregate all constraints into single constraint: $\sum_{S} p^{(t)}(S) \cdot z_S = \sum_{v} p^{(t)}_v \left (\sum_{S: v \in S} z_S \right) \geq 1$. 
        \STATE Find the point $z^{(t)} \in \left[0, 1/{\covereps}\right]^{2^V}$ from Lemma \ref{lemma:update-point} that, \label{line:findPoint}
        \begin{itemize}
            \setlength\itemsep{0.0em}
            \item satisfies the single constraint $\sum_{S} \pt(S) \cdot z^{(t)}_S \geq 1$,
            \item has objective value $\covers(z^{(t)}) \leq (1+5\covereps)~\covers(\opt)$,
            \item does not split atoms (i.e., if $\zt_S > 0$ then $K(v) \subseteq S$ for all vertices $v \in S$), and
            \item has disjoint support (i.e., if $\zt_S, \zt_{T} > 0$ for two distinct $S,T \subseteq V$, then $S \cap T = \emptyset$).
        \end{itemize}
        \STATE The cost of a constraint corresponds to the margin by which it is satisfied or violated, $m_v^{(t)} = \sum_{S: v \in S} \zt_{S} - 1$.
        \STATE Update the weights $w^{(t+1)}_v =  w^{(t)}_v e^{-\covereps^3 m^{(t)}_v}$.\label{line:update-rule}
    \ENDFOR
    \STATE Let $\hat z$ be the average $\frac{1}{\tmwu} \sum_{t=1}^{\tmwu} \zt$.
    \FOR{each $v$ with $\sum_{S \supseteq K(v)} \hat z_S \leq 1 - 2\covereps$} \label{line:loop_feasible}
    \STATE Set the atom entry $\hat z_{K(v)}$ to 1.
    \ENDFOR
       \STATE $\zmwu \gets \frac{\hat z}{1-2\covereps}$.
 \RETURN $\zmwu$.
	\end{algorithmic}
\end{algorithm}
\begin{lemma}
\label{lem:mw}
    Let $\eps > 0$ be a small enough constant and $\covereps = O(\eps^{13})$. After $\tmwu=\frac{\log(1/\covereps)}{\covereps^4}$ rounds, Algorithm \ref{alg:mw} returns a solution $z$ to the \coverClusterLP with objective $\covers(z) \leq (1+O(\covereps))~\covers(\opt)$. 
\end{lemma}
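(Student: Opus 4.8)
The plan is the standard multiplicative-weights potential argument, with the non-standard feature that each covering constraint is only guaranteed to be $(1-2\covereps)$-satisfied by the average iterate and the few badly-covered atoms are repaired afterward, the repair cost being charged to $\dc$. Write $\eta=\covereps^3$ for the step size, $T=\tmwu=\log(1/\covereps)/\covereps^4$, $m_v^{(t)}=\sum_{S\ni v}\zt_S-1$ for the round-$t$ margin at $v$, $M_v=\sum_{t=1}^T m_v^{(t)}$, and $\Phi^{(t)}=\sum_v w_v^{(t)}$, so $\Phi^{(1)}=\dc(V)$. Two cheap structural facts about the point $\zt$ produced in Line~\ref{line:findPoint} via Lemma~\ref{lemma:update-point} are used throughout: since $\zt$ has disjoint support and lies in $[0,1/\covereps]^{2^V}$, every vertex lies in at most one support set, so $m_v^{(t)}\in[-1,\,1/\covereps-1]$ and in particular $|m_v^{(t)}|\le 1/\covereps$ (the \emph{width}); and since $\zt$ satisfies the aggregated constraint, $\sum_v \pt_v m_v^{(t)}=\sum_S \pt(S)\zt_S-1\ge 0$.

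The first step shows the potential barely grows. Using $e^{-x}\le 1-x+x^2$ for $x\in[-1,1]$ — applicable since $|\eta m_v^{(t)}|\le\covereps^2<1$ — together with the update rule $w_v^{(t+1)}=w_v^{(t)}e^{-\eta m_v^{(t)}}$,
\[
\Phi^{(t+1)}=\Phi^{(t)}\sum_v\pt_v e^{-\eta m_v^{(t)}}\le\Phi^{(t)}\Bigl(1-\eta\sum_v\pt_v m_v^{(t)}+\eta^2\sum_v\pt_v(m_v^{(t)})^2\Bigr)\le\Phi^{(t)}\bigl(1+\eta^2/\covereps^2\bigr)=\Phi^{(t)}(1+\covereps^4),
\]
where the linear term was dropped (it is nonpositive) and $(m_v^{(t)})^2\le 1/\covereps^2$ was used. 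Iterating over the $T$ rounds and using $\covereps^4 T=\log(1/\covereps)$,
\[
\Phi^{(T+1)}\le\dc(V)(1+\covereps^4)^T\le\dc(V)\,e^{\covereps^4 T}=\dc(V)/\covereps,
\]
taking $\log$ to be the natural logarithm (any fixed base only changes the exponent by a constant, which is harmless below).

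The second step handles the fix-up loop, feasibility, and the cost bound. For the average $\hat z$ we have $\sum_{S\ni v}\hat z_S=1+M_v/T$, and since no $\zt$ splits an atom this value is common to all vertices of a fixed atom; hence the loop of Line~\ref{line:loop_feasible} touches exactly the atoms $K$ with $M_K\le -2\covereps T$, and for any $v$ in such an atom,
\[
w_v^{(T+1)}=\dc(v)\,e^{-\eta M_v}\ge\dc(v)\,e^{2\eta\covereps T}=\dc(v)\,e^{2\log(1/\covereps)}=\dc(v)\,\covereps^{-2}.
\]
Summing over all $v$ in touched atoms and using $\sum_v w_v^{(T+1)}=\Phi^{(T+1)}$ gives $\covereps^{-2}\sum_{K \text{ touched}}\dc(K)\le\Phi^{(T+1)}\le\covereps^{-1}\dc(V)$, i.e. $\sum_{K \text{ touched}}\dc(K)\le\covereps\,\dc(V)\le\covereps\,\covers(\opt)$, the last step using $\covers(\opt)=\cost(\opt)+\dc(V)\ge\dc(V)$. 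Since $\covers(K)=\tfrac32\dc(K)$ for every atom and every singleton, setting $\hat z_K=1$ for each touched atom raises the objective by at most $\tfrac32\covereps\,\covers(\opt)$; this step only increases coordinates, so afterwards $\sum_{S\ni v}\hat z_S>1-2\covereps$ for every $v$, and therefore $\zmwu=\hat z/(1-2\covereps)$ is feasible for~\ref{LP:coverclusterlp} and still does not split atoms. By linearity of $\covers$ and $\covers(\zt)\le(1+5\covereps)\covers(\opt)$,
\[
\covers(\zmwu)\le\frac{1}{1-2\covereps}\Bigl(\tfrac1T\sum_{t=1}^T\covers(\zt)+\tfrac32\covereps\,\covers(\opt)\Bigr)\le\frac{1+5\covereps+\tfrac32\covereps}{1-2\covereps}\,\covers(\opt)=(1+O(\covereps))\,\covers(\opt),
\]
which is the claim.

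The genuinely delicate part is the joint calibration of the constants. The width $1/\covereps$ forces the step size down to $\eta=\covereps^3$ so that the quadratic Taylor estimate is legitimate; with $T=\log(1/\covereps)/\covereps^4$ rounds the potential then inflates by only the single factor $1/\covereps$, whereas a vertex that ends up $2\covereps$-undercovered sees its weight inflated by $\covereps^{-2}$ — two powers against one — leaving precisely one power of $\covereps$ of slack, which is exactly what forces the total $\dc$-mass of the repaired atoms down to $O(\covereps)\,\optc$ and hence makes the repair negligible. Everything else is routine bookkeeping, apart from the fact that we freely invoke Lemma~\ref{lemma:update-point} for the per-round point $\zt$ (low cost, atom-respecting, disjoint support, aggregated constraint satisfied); that lemma, proved in Section~\ref{sec:family-poly}, is where the substantive combinatorial work lies.
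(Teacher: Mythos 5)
Your proof is correct and follows essentially the same route as the paper: the same MWU potential argument with step size $\covereps^3$, the width bound $|m_v^{(t)}|\le 1/\covereps$ from the disjoint support and $z_S\le 1/\covereps$, the weight blow-up argument showing the $2\covereps$-undercovered atoms carry only $O(\covereps)\,\dc(V)$ mass, and the repair-then-rescale accounting for the final objective. The only differences are cosmetic (you drop the nonpositive linear term inside the recursion rather than carrying $\langle p^{(t)},m^{(t)}\rangle$ into the exponent, and you make the $\tfrac32\dc(K)$ repair cost explicit), so no further changes are needed.
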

\begin{proof}
    First, we will prove that $z$ is feasible for the~\ref{LP:coverclusterlp}. Note that at the end of the {\bf for} loop on Line \ref{line:loop_feasible}, we have that $\sum_{S \ni v} \hat z_S \geq 1-2\covereps$ for each vertex $v$. Feasibility follows since we scale by $1/(1-2\covereps)$. 
    Next, we will prove the bound on the objective of $z$.
    Consider the potential
    \[
    \Phi^{(t)} = \sum_{v \in V} w_v^{(t)}.
    \]
    The starting potential is 
    \[ 
    \Phi^{(1)} = \sum_{v \in V} \dc(v) = \dc(V).
    \]
    Because of the way we update the weights, the potential at the end of the execution is
    \[
    \Phi^{(\tmwu+1)} = \sum_{v \in V} w^{(1)}_v \cdot \exp\left(-\covereps^3 \sum_{t \leq \tmwu} m^{(t)}_v\right).
    \]
    \begin{claim} \cite{AHK}
    \label{clm:potential}
    We can relate the potential at the start of the execution and at the end as follows.
        \[
        \Phi^{(\tmwu +1)} \leq \Phi^{(1)} \cdot \exp \left (\covereps^4 \tmwu - \covereps^3 \sum_{t \leq \tmwu} \langle p^{(t)}, m^{(t)} \rangle \right).
        \]
    \end{claim}
    \begin{proof}
    Again, by our update rule in Line \ref{line:update-rule},
    \[
    \Phi^{(t+1)} = \sum_{v \in V} w^{(t+1)}_v = \sum_{v \in V} w^{(t)}_v \cdot \exp(-\covereps^3 m^{(t)}_v).
    \]
    Remember that $0 \leq z^{(t)}_S \leq \frac{1}{\covereps}$ for all $S \subseteq V$. Moreover, the support of $z^{(t)}$ consists of disjoint sets. This implies that $m^{(t)}_v \in [-1,\frac{1}{\gamma}-1]$ for all vertices $v$ and steps $t$.
    \begin{align*}
        \sum_{v \in V} w^{(t)}_v \cdot \exp(-\covereps^3 m^{(t)}_v) 
        & \leq \sum_{v \in V} w^{(t)}_v (1 - \covereps^3 m^{(t)}_v + (\covereps^3 m^{(t)}_v)^2) \\
        & \leq \sum_{v \in V} w^{(t)}_v (1 - \covereps^3 m^{(t)}_v + \covereps^4) \\
        & = \sum_{v \in V} w^{(t)}_v (1 + \covereps^4) - \sum_{v \in V} w^{(t)}_v \covereps^3 m^{(t)}_v \\
        & = \Phi^{(t)} (1 + \covereps^4) - \sum_{v \in V} \Phi^{(t)} p^{(t)}_v \covereps^3 m^{(t)}_v \\
        & = \Phi^{(t)} \left (1 + \covereps^4 - \covereps^3 \langle p^{(t)}, m^{(t)} \rangle \right)\\
        & \leq \Phi^{(t)} \exp\left(\covereps^4 - \covereps^3 \langle p^{(t)}, m^{(t)} \rangle \right).
    \end{align*} 
    The first inequality follows from the fact that $e^x \leq (1 + x + x^2)$ for $x \in [-1,1]$.
    \end{proof}

\begin{claim}
\label{claim:uncovered}
    Let $U$ be the set of vertices that are uncovered by $\hat z$ (i.e., $U=\{v \in V \mid \sum_{S \ni v} \hat z_S \leq 1 - 2\covereps \}$). Then, $$\sum_{v \in U} \dc(v) \leq 2\covereps~\dc(V).$$
\end{claim}
\begin{proof}
    Remember that 
    \[
    \Phi^{(\tmwu+1)} = \sum_{v \in V} w^{(1)}_v \cdot \exp(-\covereps^3 \sum_{t \leq \tmwu} m^{(t)}_v).
    \]
   Fix a vertex $u \in U$. Since $u$ is uncovered by $\hat z$, we have 
   \[
        \tmwu \cdot 2\covereps \leq \tmwu \cdot \left (1 -  \sum_{S \ni u} \hat z_S \right) = \sum_{t \leq \tmwu} \left (1 - \sum_{S \ni u} z^{(t)}_S \right ) =  -\sum_{t \leq \tmwu} m^{(t)}_u.
   \]
   Using this, we can lower bound the potential,
       \[
    \Phi^{(\tmwu+1)} \geq \sum_{v \in U} w^{(1)}_v \cdot \exp(2\covereps^4 \tmwu).
    \]
    Together with the upper bound from Claim \ref{clm:potential},
    \[
         \sum_{v \in U} w^{(1)}_v \cdot \exp(2\covereps^4 \tmwu) \leq \Phi^{(\tmwu+1)} \leq \Phi^{(1)} \cdot \exp \left (\covereps^4 \tmwu - \covereps^3 \sum_{t \leq \tmwu} \langle p^{(t)}, m^{(t)} \rangle \right).
    \]
    Observe that that $\langle p^{(t)}, m^{(t)} \rangle$ will be non-negative since the points $z^{(t)}$ satisfy the single constraint $\sum_{S} p^{(t)}(S) \cdot z_S \geq 1$.
    Indeed,
    \[
        \langle p^{(t)}, m^{(t)} \rangle = \sum_{v \in V} p^{(t)}_v m^{(t)}_v = \sum_{v \in V} p^{(t)}_v \left ( \sum_{S \ni v} z_S - 1 \right ) = \sum_{S} p^{(t)}(S) \cdot z_S - 1.
    \]
    We can conclude that,    
    \[
         \sum_{v \in U} w^{(1)}_v \cdot \exp(2\covereps^4 \tmwu)\leq \Phi^{(1)} \cdot \exp \left (\covereps^4 \tmwu - \covereps^3 \sum_{t \leq \tmwu} \langle p^{(t)}, m^{(t)} \rangle \right) \leq \sum_{v \in V} w^{(1)}_v \cdot \exp(\covereps^4 \tmwu).  
    \]
    Rearranging,
    \[
        \sum_{v \in U} \dc(v) = \sum_{v \in U} w^{(1)}_v \leq \sum_{v \in V} w_v^{(1)} \cdot \exp((\covereps^4 - 2\covereps^4) \tmwu) = \covereps \cdot \sum_{v \in V} w^{(1)}_v = 2 \covereps \cdot \dc(V).
    \]
The second equality holds by the choice of $\tmwu = \frac{\log{(1/\covereps)}}{\covereps^4}$.
    
\end{proof}
Next, we will bound the objective value of $\zmwu$. 
By the second property of Lemma \ref{lemma:update-point},  we have $\covers(z^{(t)}) \leq (1+5\covereps)~\covers(\opt)$ for each $t \leq \tmwu$. 
Since $\covers$ is a linear function, this upper bound also holds for the average $\frac{1}{\tmwu} \sum_{t=1}^{\tmwu} z^{(t)}$. 
At the end of Algorithm \ref{alg:mw}, we cover all uncovered atoms $U = \{v \in V \mid \sum_{S \ni v} \hat z_S \leq 1 - 2\covereps \}$.
When we set a coordinate \( \hat{z}_{K} \) to \( 1 \), the \( \covers \) objective value increases by at most \( 3/2 \cdot \dc(K) \). We have $\cost(K) = 1/2 \cdot \dc(K)$. The additional \( \dc(K) \) is the term added to \( \cost(K)\) for the~\ref{LP:coverclusterlp}.
Recall that each $z^{(t)}$ does not split atoms. Hence, if $v \in U$ then $K(v) \subseteq U$. By Claim \ref{claim:uncovered}, the cost increases by at most,
\[
    \sum_{v \in U} \dc(v) \leq 2 \covereps \cdot \dc(V) \leq 2\covereps \cdot \covers(\opt).
    \]\
Thus, the objective of $z$ can be bounded by,
\[
    \covers(z) \leq \frac{1 + 7 \covereps}{1-2\covereps} \covers(\opt) \leq (1+10\covereps)\covers(\opt). 
\]
\end{proof}

\section{Finding a Partial Clustering with Small Ratio in Polynomial Time}
\label{sec:family-poly}
In this section we will show how to find the point $z^{(t)}$ in Line \ref{line:findPoint} of
the MWU Algorithm \ref{alg:mw}.

\begin{restatable}{lemma}{updatePoint}
\label{lemma:update-point}
Given vertex weights $p_v > 0$ for all $v \in V$, we can construct a point $z \in \left[0, 1/{\covereps}\right]^{2^{|V|}}$ that,
\begin{enumerate}
    \item satisfies the single constraint $\sum_{S} p(S) \cdot z_S \geq 1$,
    \item has objective $\covers(z) \leq (1+5\covereps)~\covers(\opt)$,
    \item does not split atoms (i.e., if $z_S > 0$ then $K(v) \subseteq S$ for all vertices $v \in S$),
    \item has disjoint support (i.e., if $z_S, z_T > 0$ for two distinct $S,T \subseteq V$, then $S \cap T = \emptyset$).
\end{enumerate}
\end{restatable}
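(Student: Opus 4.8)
The plan is to reduce Lemma~\ref{lemma:update-point} to the construction of a \emph{partial clustering}: a family $\mathcal F = \{C_1,\dots,C_k\}$ of pairwise disjoint $\eps$-large clusters, none of which splits an atom, such that $\covers(C_j) \le R\cdot p(C_j)$ for every $j$, where $R := (1+5\covereps)\covers(\opt)$, and such that the covered weight $q := p\big(\bigcup_j C_j\big)$ is at least $\covereps$. (We assume $p$ is normalized, $\sum_v p_v = 1$, as it is when the lemma is invoked in Algorithm~\ref{alg:mw}.) Given such an $\mathcal F$, set $z_{C_j} := 1/q$ for all $j$ and $z_S := 0$ otherwise. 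Then $z \in [0,1/\covereps]^{2^{|V|}}$ since $\covereps \le q \le 1$; the single constraint holds with equality, $\sum_S p(S) z_S = \tfrac1q \sum_j p(C_j) = 1$; properties~3 and~4 are inherited directly from $\mathcal F$; and $\covers(z) = \tfrac1q\sum_j \covers(C_j) \le \tfrac{R}{q}\sum_j p(C_j) = R = (1+5\covereps)\covers(\opt)$. The slack $5\covereps$ in $R$ leaves room for the approximation error of the single-cluster subroutine below and for not knowing $\covers(\opt)$ exactly (one may guess $\covers(\opt)$ along a geometric grid, or simply always append the minimum-ratio cluster found).

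To build $\mathcal F$ I would run a greedy loop maintaining the set $V'$ of still-uncovered vertices, initially $V' = V$. At each step, invoke a \emph{single-cluster subroutine} (analyzed below) that, from each candidate starting vertex $r \in V'$, tries to return an $\eps$-large cluster $C \subseteq V'$ with $r \in C$, not splitting atoms, and $\covers(C) \le R\cdot p(C)$; append the best cluster found to $\mathcal F$, remove it from $V'$, and stop once $V' = \emptyset$ or no such cluster is found. Because each $C \in \mathcal F$ does not split an atom, $\bigcup\mathcal F$ is a union of whole atoms and hence so is $V'$; consequently, for every $C^\ast \in \opt$ — recall $\opt$ is $\eps$-large, hence does not split atoms — the set $C^\ast \cap V'$ does not split an atom either, so Lemma~\ref{lem:monotone} gives $\covers(C^\ast \cap V') \le \covers(C^\ast)$.

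The existence and termination of the loop follow from an averaging argument. Suppose $V' \ne \emptyset$ and $q < \tfrac{5\covereps}{1+5\covereps}$. The sets $\{C^\ast \cap V' : C^\ast \in \opt,\ C^\ast \cap V' \ne \emptyset\}$ partition $V'$, so $\sum_{C^\ast} p(C^\ast \cap V') = 1 - q > \tfrac1{1+5\covereps}$ while $\sum_{C^\ast}\covers(C^\ast \cap V') \le \sum_{C^\ast}\covers(C^\ast) = \covers(\opt)$; hence some restricted cluster $C^\ast \cap V'$ has ratio at most $\covers(\opt)/(1-q) \le (1+5\covereps)\covers(\opt) = R$, so the subroutine (started from a vertex of this cluster) succeeds and $q$ strictly increases. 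As $V$ is finite, the loop terminates with $q \ge \tfrac{5\covereps}{1+5\covereps} > \covereps$, or with $V' = \emptyset$ (so $q = 1$); either way $\mathcal F$ has the required properties. One delicate point is that $C^\ast \cap V'$ need not itself be $\eps$-large, so the guarantee demanded of the subroutine must read ``return a valid ($\eps$-large, atom-respecting, contained in $V'$) cluster of ratio $\le R$ whenever \emph{some} subset of $V'$ containing the start vertex witnesses ratio at most $R/(1+O(\covereps))$'', rather than asking the witness to be $\eps$-large; running the subroutine from every $r \in V'$ and keeping the best output is what makes it meet this witness.

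The main obstacle is the single-cluster subroutine itself, i.e.\ (approximately) minimizing $\covers(C) - R\cdot p(C)$ over atom-respecting $\eps$-large $C \subseteq V'$ that contain a given vertex $r$. I would follow the local-search idea of~\cite{cohen2024combinatorial}: at an optimal $C^\ast$ the marginal test $\marginal(C^\ast, v) - R\,p_v$ is nonpositive on $C^\ast$ and nonnegative off it (with the marginal adapted to the $\covers$ cost), one starts from $C = K(r)$, exploits the preclustering structure — $C^\ast$ lies inside a candidate region whose size is $\Theta(|C^\ast|)$ by $\eps$-largeness — and moves vertices in and out of $C$ in small chunks so that the error accumulated across chunks stays below the slack in $\covers(C^\ast) - R\,p(C^\ast)$. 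In the polynomial-time setting of this section the relevant marginals can be computed exactly, so the remaining work is to package this as a clean sub-lemma, to argue convergence and error control, and to check that the returned cluster is indeed $\eps$-large and atom-respecting; this is where essentially all of the effort lies.
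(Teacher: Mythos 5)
Your proposal follows essentially the same route as the paper: reduce the lemma to constructing a partial clustering $\mathcal F$ of disjoint, atom-respecting clusters of ratio at most $(1+O(\covereps))\covers(\opt)$ covering probability mass at least $\covereps$, set $z_S = 1/p(\mathcal F)$ on $\mathcal F$, and build $\mathcal F$ greedily via a single-cluster subroutine plus an averaging argument over $\opt$ (this is exactly Lemma~\ref{lem:family-poly} and its use of Lemma~\ref{lem:goodratio}); like the paper, you defer the single-cluster subroutine to a local-search/sampling sub-lemma in the spirit of~\cite{cohen2024combinatorial}.

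The one place where you genuinely diverge is the handling of already-covered vertices, and the paper's choice is the easier one to execute. You take the witness to be $C^\ast \cap V'$, observe correctly that it need not be \epslarge, and therefore weaken the specification of the subroutine to accept non-\epslarge witnesses. The paper instead keeps the \emph{whole} optimal cluster $C^\star \in \opt$ as the witness (so it is \epslarge and fits the subroutine's hypothesis as stated) and accounts for covered vertices by zeroing their weights $\hat p_v$; the subroutine may then return a cluster containing zero-weight vertices, which are pruned afterwards, with monotonicity of $\covers$ (Lemma~\ref{lem:monotone}) guaranteeing the ratio only improves and disjointness of $\mathcal F$ is preserved. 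This matters because the sampling-based subroutine really uses \epslarge{}ness of the witness (for the size lower bound on the evolving optimum $C^*_i$ and for the two-hop bound Lemma~\ref{lem:boundtwohopsneighbor}), so your weakened specification would push nontrivial extra work into the part you defer. Two smaller points in the same vein: the slack the subroutine actually needs is the additive term $\covereps^2\dadm(C^\star)$ in the witness's ratio, which the paper extracts in the averaging step via $\covereps\,\covers(\opt) \ge \covereps^2|E_{\adm}|$ from the preclustering guarantee, whereas your purely multiplicative slack $R/(1+O(\covereps))$ would still have to be converted into that form; and the subroutine also assumes no single vertex of the witness already has ratio at most $R$ (the paper adds such singletons/atoms to $\mathcal F$ separately), which your sketch does not address. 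None of these affects the top-level argument for the lemma itself, which is sound and matches the paper's.
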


Throughout this section, we assume we are given a distribution $p$ over vertices. We have $p_v > 0$ for each vertex $v \in V$ and $\sum_{v \in V} p_v = 1$. For a set $S \subseteq V$ of vertices, we write $p(S) = \sum_{v \in S} p_v$. 
Similarly, we will abuse notation and write $p(\mathcal{F}) = \sum_{S \in \mathcal{F}} p(S)$ for a family $\mathcal{F} \subseteq 2^V$ of subsets of vertices.

In order to construct the point $z$, we will find a partial clustering $\mathcal{F} = \{S_1, S_2,..., S_l \mid S_i \subseteq V \}$ that achieves a small ratio $\frac{\covers(\mathcal{F})}{p(\mathcal{F})}$. 
Each vertex $v$ is contained in at most one set $S \in \mathcal{F}$. Moreover, $\mathcal{F}$ is a partial clustering, meaning that $\mathcal{F}$ might not cover all the vertices in $V$. However, we require that $\mathcal{F}$ covers at least a constant fraction of the probability mass of the vertices (i.e., $p(\mathcal{F}) \geq \covereps$). We show that Algorithm \ref{alg:disjointfamily-poly} will find such a partial clustering.

\paragraph{Description of Algorithm \ref{alg:disjointfamily-poly}.} Algorithm \ref{alg:disjointfamily-poly} relies on Lemma \ref{lem:goodratio}, which we will prove in Section \ref{sec:goodratio}. Lemma \ref{lem:goodratio} states that given a vertex $r$ and the optimal value $R \approx \covers(\opt)$, we can efficiently find a single cluster $C_r \ni r$ with a small ratio $\frac{\covers(C_r)}{p(C_r)} \leq R$ if there exists such a cluster and in particular, if the cluster $C \in \opt$ that contains $r$ achieves the ratio $R$. There has to exist a cluster $C \in \opt$ that achieves the ratio $R$ since $p$ is a probability distribution. Thus, we can find a cluster $C_r$ for each vertex $r \in V$ and add the cluster with the smallest ratio to the partial clustering $\mathcal{F}$; this ratio will be at most $R$. If the cluster $S$ we find that way has small vertex probability mass $p(S) < \covereps$, we have to repeat this process. Note that when we remove the vertices in $S$, we only remove $p(S) < \covereps$ of probability mass. We argue that in this case, there will exist a cluster in $\opt$ with ratio at most $(1+2\covereps)\covers(\opt)$ and we can find a cluster disjoint from $S$ that achieves ratio $(1+2\covereps)\covers(\opt)$. Hence, we can repeat until we cover constant probability mass with $\mathcal{F}$.    
For now, the algorithm will have a runtime of $\widetilde O(n^4)$. In Section \ref{sec:nearlylinear}, we will show how we can modify Algorithm \ref{alg:disjointfamily-poly} to achieve a nearly linear runtime.  

\begin{algorithm}[t]
	\caption{Algorithm to find the family $\mathcal{F}$}
	\label{alg:disjointfamily-poly}
	\begin{algorithmic}[1]
\STATE Let $R$ be the guess for $\covers(\opt)$ such that $R \in [\covers(\opt) , (1 +\covereps) \covers(\opt))$.
    \STATE $\hat p \gets p, \mathcal{F} \gets \emptyset, \hat{V} \gets V$
    \FORALL{$v \in V$}
        \STATE Find a small ratio cluster $C_v$ where $K(v) \subseteq C_v \subseteq \Nadm(v)$ with vertex weights $\hat p > 0$ and target ratio $(1+3\covereps) R$ (Lemma \ref{lem:goodratio}).
    \ENDFOR
    \WHILE{$p(\mathcal{F}) \leq \covereps$} \label{line:loop_family-poly}
        \STATE Choose $C$ with the smallest ratio $\frac{\covers(C)}{\hat p(C)}$ among clusters $\{C_v\}_{v \in \hat{V}}$. 
        \STATE Add $C$ to $\mathcal{F}$, set $\hat p_v$ to $0$ for all $v \in C$ and remove vertices in $C$ from $\hat{V}$.
        \FORALL{$v \in \hat{V}$} 
            \STATE \COMMENT{Update $C_v$ if some node in $C_v$ is added to $\mathcal{F}$}
            \IF{$C_v \cap C \neq \emptyset$} 
            \STATE Find a new small ratio cluster $C_v$  with vertex weights $\hat p > 0$ and target ratio $(1+3\covereps)R$ (Lemma \ref{lem:goodratio}).
            \ENDIF
        \ENDFOR
    \ENDWHILE
    \RETURN $\mathcal{F}$
	\end{algorithmic}
\end{algorithm}

\begin{lemma}
\label{lem:family-poly}
Given vertex weights $p_v > 0$ for all $v \in V$, Algorithm \ref{alg:disjointfamily-poly} finds a family $\mathcal{F} = \{S_1, S_2,..., S_l \mid S_i \subseteq V \}$ such that,
\begin{enumerate}
    \item for any distinct $S, T \in \mathcal{F}$, $S \cap T = \emptyset$,
    \item $\frac{\covers(\mathcal{F})}{p(\mathcal{F})} \leq (1+5\covereps)\covers(\opt)$,
    \item $p(\mathcal{F})$ is at least $\covereps$,
    \item no $S \in \mathcal{F}$ splits an atom (i.e., $K(v) \subseteq S$, for all vertices $v \in S$).
\end{enumerate}
\end{lemma}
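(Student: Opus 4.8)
The plan is to analyze Algorithm~\ref{alg:disjointfamily-poly} directly, establishing the four properties of $\mathcal{F}$ in turn. Property~1 (disjointness) and Property~4 (no atom is split) are immediate from the structure of the algorithm: whenever we add a cluster $C$ to $\mathcal{F}$, we zero out $\hat p$ on $C$ and delete those vertices from $\hat V$, so every later cluster $C_v$ is found with $K(v)\subseteq C_v\subseteq \Nadm(v)$ among vertices still in $\hat V$, hence is disjoint from all previously added clusters; and Lemma~\ref{lem:goodratio} is invoked with the constraint $K(v)\subseteq C_v$, so no cluster it returns splits an atom. The real content is in Properties~2 and~3.

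For Property~3, I would argue that the {\bf while} loop terminates with $p(\mathcal{F})\geq\covereps$. The loop only runs while $p(\mathcal{F})\leq\covereps$, so the key point is that each iteration genuinely adds a cluster (equivalently, that the call to Lemma~\ref{lem:goodratio} succeeds in finding a small-ratio cluster). This is where I would use the averaging argument: at the start of any iteration, at most $\covereps$ of the probability mass has been removed, so $\hat p$ still has total mass $\geq 1-\covereps$; restricting $\opt$ to $\hat V$, the clusters of $\opt$ (intersected with $\hat V$) still cover mass $\geq 1-\covereps$, and since $\covers$ is subadditive in the relevant sense while $\sum_{C\in\opt}\covers(C)\leq\covers(\opt)$ (roughly; one has to be careful that trimming $\opt$ to $\hat V$ doesn't increase cost, which follows from monotonicity, Lemma~\ref{lem:monotone}, once we know $\opt$ doesn't split atoms), there exists a cluster $C\in\opt$ with $\frac{\covers(C\cap\hat V)}{\hat p(C\cap\hat V)}\leq \frac{\covers(\opt)}{1-\covereps}\leq(1+2\covereps)\covers(\opt)$. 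Picking a vertex $r$ in that cluster with $\hat p_r>0$, Lemma~\ref{lem:goodratio} applied at $r$ with target ratio $(1+3\covereps)R\geq(1+3\covereps)\covers(\opt)$ then returns some $C_r$ with ratio at most $(1+3\covereps)R$. So the loop always makes progress and, since mass removed per step is positive and bounded, it halts with $p(\mathcal{F})\in[\covereps, \covereps+\covereps]$ or so — in any case $p(\mathcal{F})\geq\covereps$. I also need to confirm the {\bf for}-loop updates keep every surviving $C_v$ a valid small-ratio cluster with respect to the current $\hat p$, which is exactly what the \IF block does (re-run Lemma~\ref{lem:goodratio} whenever $C_v$ was hit).

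For Property~2, each cluster that enters $\mathcal{F}$ is chosen as the minimum-ratio cluster among $\{C_v\}_{v\in\hat V}$, and by the argument above this minimum is at most $(1+3\covereps)R\leq(1+3\covereps)(1+\covereps)\covers(\opt)\leq(1+5\covereps)\covers(\opt)$ for $\covereps$ small. Hence $\covers(C)\leq (1+5\covereps)\covers(\opt)\cdot\hat p(C)$ for every $C\in\mathcal{F}$; summing over $\mathcal{F}$ and using that $\hat p(C)=p(C)$ at the moment $C$ is added (vertices of $C$ had not been removed yet) gives $\covers(\mathcal{F})\leq(1+5\covereps)\covers(\opt)\cdot p(\mathcal{F})$, which rearranges to Property~2. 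The main obstacle I anticipate is the bookkeeping around restricting $\opt$ to the shrinking set $\hat V$: one must verify both that $\covers(C\cap\hat V)\leq\covers(C)$ (monotonicity plus not-splitting-atoms, hence Lemma~\ref{lem:monotone}, but one should check atoms are never partially removed — they aren't, since clusters added to $\mathcal{F}$ don't split atoms, so $\hat V$ always is a union of atoms) and that some trimmed cluster still achieves the target ratio against $\hat p$ rather than $p$. Everything else is routine aggregation of the per-iteration guarantees.
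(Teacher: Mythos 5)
There is a genuine gap in how you invoke Lemma~\ref{lem:goodratio}, which is the crux of Properties~2 and~3. Assumption~(i) of that lemma does not merely ask for an optimal cluster whose $\covers$-to-$\hat p$ ratio is at most the target; it asks for an \epslarge cluster $C$ with $K(r)\subseteq C\subseteq \Ncand(r)$ satisfying the \emph{strengthened} inequality $\covers(C)+\covereps^2\dadm(C)\leq R'\cdot \hat p(C)$. The extra $\covereps^2\dadm(C)$ slack is not cosmetic: the sampling-based subroutine behind Lemma~\ref{lem:goodratio} only finds a cluster up to an additive error of order $\eta^{-1}\eps^{-8}\dadm(C)$ per chunk, so without this slack the lemma simply does not apply. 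Your averaging argument only produces a cluster with $\covers(\cdot)\leq(1+2\covereps)\covers(\opt)\cdot\hat p(\cdot)$, and the gap between $(1+2\covereps)\covers(\opt)$ and the target $(1+3\covereps)R$ contributes slack of order $\covereps\,\covers(\opt)\hat p(C)$, which for an individual cluster can be far smaller than $\covereps^2\dadm(C)$ (since $\hat p(C)$ may be tiny while $\dadm(C)$ is large). The paper closes this by averaging the quantities $\covers(C)+\covereps^2\dadm(C)$ over the clusters of $\opt$ against $\sum_{C\in\opt}\hat p(C)\geq 1-\covereps$, absorbing the total via $\covereps^2|\eadm|\leq\covereps\,\clcost(\opt)\leq\covereps\,\covers(\opt)$ (Theorem~\ref{thm:preclustering-proc}), so that the minimizing cluster $C^\star$ satisfies assumption~(i) verbatim.

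A second, related problem is your trimming of $\opt$ to $\hat V$. The cluster you exhibit is $C\cap\hat V$, but Lemma~\ref{lem:goodratio} needs an \epslarge cluster: $C\cap\hat V$ can fail the size condition $|C\cap\hat V|\geq\eps\, d(v)$, and monotonicity only tells you $\covers(C\cap\hat V)\leq\covers(C)$, which is the wrong direction if you instead try to feed the untrimmed $C$ into the lemma. The paper's proof avoids trimming altogether: it keeps the full numerator $\covers(C^\star)+\covereps^2\dadm(C^\star)$ and only shrinks the denominator to $\hat p(C^\star)$, so the witness cluster handed to Lemma~\ref{lem:goodratio} is a genuine cluster of $\opt$ and hence \epslarge. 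Your treatment of Properties~1, 3 and~4 and the final aggregation for Property~2 (each added cluster has ratio at most $(1+3\covereps)R\leq(1+5\covereps)\covers(\opt)$ against $\hat p$, and $\hat p\leq p$ on the added cluster) matches the paper and is fine once the existence of a valid witness for Lemma~\ref{lem:goodratio} is repaired as above.
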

\begin{proof}
First, observe that Property 3. holds by the condition of the {\bf while} loop. Property 4. holds since the all small ratio clusters $\{C_v\}_{v \in V}$ do not split atoms by Lemma \ref{lem:goodratio}.
Next, we observe that Property 1. holds:  the family $\mathcal{F}$ consists of disjoint sets. Observe that after we added a cluster $C \subseteq V$ to $\mathcal{F}$ we set the weight $\hat p_v$ to $0$ for all $v \in C$. Note that if the weight $\hat p_v$ is set to $0$ it remains $0$ throughout the execution of the algorithm. 
Moreover, when we add a cluster $C$ to $\mathcal{F}$, it contains only vertices such that $\hat{p}_v > 0$.
Thus, when we add $C$ to $\mathcal{F}$ we have $\hat p_v > 0$ for all $v \in C$ and $\hat p_v = 0$ for all $v \in S, S \in \mathcal{F}$. 
 
Now we address Property 2. Based on our assumption, we have $R$ such that $\covers(\opt) \leq R < (1+\covereps)\covers(\opt)$. It remains to prove the bound on the ratio $\frac{\covers(\mathcal{F})}{p(\mathcal{F})}$.
Let $C \in \opt$ and $v \in C$ such that if $K$ is a non-singleton atom in $C$ then $K=K(v)$.  (If $C$ contains no such atom, then $v$ can be any vertex in $C$.)
As long as, 
\begin{align}
    \frac{\covers(C) + \covereps^2 \dadm(C)}{\hat p(C)} \leq (1+3\covereps) R, \label{C-assumption}
\end{align}
we maintain the following invariant:
\begin{align}
\label{eq:inv}
    \frac{\covers(C_v)}{\hat p(C_v)} \leq (1+3\covereps) R.
\end{align}
Before starting the {\bf while} loop on Line \ref{line:loop_family-poly}, we have by Lemma \ref{lem:goodratio},
\[
    \frac{\covers(C_v)}{p(C_v)} \leq(1+3\covereps)R.
\]
since $C$ satisfies the requirements of Lemma \ref{lem:goodratio} for the vertex $v$ and ratio $(1+3\covereps)R$ by assumption.
Consider one iteration of the {\bf while} loop on Line \ref{line:loop_family-poly}. 
If we did not remove a vertex from $C_v$ then the invariant remains true. 
Again, if $C_v$ was updated, we have by Lemma \ref{lem:goodratio},
\[
    \frac{\covers(C_v)}{\hat p(C_v)} \leq (1+3\covereps) R.
\]
\newline

Let $C^\star$ be the cluster in the optimal clustering $\opt$ with the best ratio $\frac{\covers(C^\star)+\covereps^2 \dadm(C^\star)}{\hat p(C^\star)}$.
By the condition of the {\bf while} loop on Line \ref{line:loop_family-poly}, $\hat p(V) = p(V) - p(\mathcal{F}) \geq 1 - \covereps$.
Note that,
\begin{align*}
    (1+3\covereps)R &\geq \frac{1+ \covereps}{1-\covereps} R \geq \frac{(1+\covereps)\covers(\opt)}{\hat p(V)} \geq \frac{\covers(\opt) + \covereps^2 |E_{\adm}|}{\hat p(V)} \\
    &= \frac{\sum_{C \in \opt}(\covers(C) + \covereps^2 \dadm(C))}{\sum_{C \in \opt} \hat p(C)} \geq \frac{\covers(C^\star)+\covereps^2 \dadm(C^\star)}{\hat p(C^\star)}.
\end{align*}
For the second inequality, we use that $\gamma~\covers(\opt) \geq \gamma~\clcost(\opt) \geq \gamma^2 |E_{\adm}|$ because of the preclustering (Theorem \ref{thm:preclustering-proc}). 
The last inequality holds by the definition of $C^\star$. Let $v \in C^\star$ such that if $K$ is a non-singleton atom in $C^\star$ then $K=K(v)$. 
By (\ref{eq:inv}),
\[
    \frac{\covers(C_v)}{\hat p(C_v)} \leq (1+3\covereps) R.
\]
Since we choose $C_r$ as the cluster with the best ratio among clusters $\{C_u\}_{u \in V}$,
\[
    \frac{\covers(C_r)}{\hat p(C_r)} \leq \frac{\covers(C_v)}{\hat p(C_v)} \leq (1+3\covereps)R.
\]
We can conclude that,
\[
     \frac{\covers(\mathcal{F})}{p(\mathcal{F})} = \frac{\sum_{S \in \mathcal{F}}\covers(S)}{\sum_{S \in \mathcal{F}}p(S)} \leq (1+3\covereps) R \leq (1+5\covereps) \covers(\opt).
\]
\end{proof}
We can use the family $\mathcal{F}$ to construct the point $z^{(t)}$ that we will use in one iteration of the MWU algorithm. 
From Lemma \ref{lem:family-poly} we can readily derive Lemma \ref{lemma:update-point}.
\updatePoint*
\begin{proof}
    For each $S \in \mathcal{F}$, set $z_S = \frac{1}{p(\mathcal{F})} \geq 1$.
    First, we prove properties 3. and 4.    
    Observe that the support of $z$ is equal to $\mathcal{F}$ which contains only disjoint sets. Similarly, $z$ does not split atoms since $\mathcal{F}$ does not. 
    Property 1. is satisfied since,
        \[
            \sum_{S} p(S) \cdot z_S = \sum_{S \in \mathcal{F}} \frac{p(S)}{p(\mathcal{F})} = 1. 
        \]
   To finish the proof, we can bound the objective as follows,
    \[
            \covers(z) = \sum_{S \in \mathcal{F}} \covers(S) z_S = \frac{\covers(\mathcal{F})}{p(\mathcal{F})} \leq (1+5\gamma) \covers(\opt).
    \]
\end{proof}

\begin{lemma}
    The runtime of Algorithm \ref{alg:disjointfamily-poly} is $\widetilde{O}(n^4)$.
\end{lemma}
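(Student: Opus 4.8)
The plan is to bound the runtime of Algorithm~\ref{alg:disjointfamily-poly} by accounting for (a) the initial loop that computes one small ratio cluster $C_v$ for every $v \in V$, and (b) the cost of the \textbf{while} loop, which in each iteration picks a best-ratio cluster, removes its vertices, and re-runs the small-ratio-cluster subroutine for every $v$ whose current cluster $C_v$ intersected the removed set. First I would invoke Lemma~\ref{lem:goodratio} to get that a single call to the small-ratio-cluster subroutine runs in time $\widetilde O(n^2)$ (one pass over the candidate set $\Nadm(v)$ together with the marginal-cost computations, each of which touches $O(n)$ vertices/edges). Hence the initial loop over all $v \in V$ costs $\widetilde O(n^3)$.

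Next I would bound the \textbf{while} loop. The loop runs at most $O(n)$ times, since each iteration adds at least one new vertex to $\bigcup_{S \in \mathcal{F}} S$ (the chosen cluster $C$ is nonempty and disjoint from the previously chosen sets, as argued in the proof of Lemma~\ref{lem:family-poly}); in fact it runs at most $O(1/\covereps) = O(1)$ times once we observe each iteration adds $\Theta(\covereps)$-or-less probability mass, but the crude $O(n)$ bound already suffices. Within each iteration we may have to recompute $C_v$ for up to $n$ vertices $v$, and each recomputation costs $\widetilde O(n^2)$ by Lemma~\ref{lem:goodratio}, giving $\widetilde O(n^3)$ per iteration and $\widetilde O(n^4)$ in total. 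Finding the cluster of smallest ratio among $\{C_v\}_{v \in \hat V}$ and updating $\hat p$ and $\hat V$ take only $\widetilde O(n)$ per iteration, which is negligible. Adding the initial loop's $\widetilde O(n^3)$ to the loop's $\widetilde O(n^4)$ yields the claimed $\widetilde O(n^4)$ bound.

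The only mildly delicate point — and the one I would be most careful about — is pinning down the per-call cost of the small-ratio-cluster subroutine of Lemma~\ref{lem:goodratio}, since the $\widetilde O(n^4)$ total is exactly (number of \textbf{while} iterations) $\times$ (number of re-solves per iteration) $\times$ (cost per solve) $= O(n) \cdot O(n) \cdot \widetilde O(n^2)$. As long as Lemma~\ref{lem:goodratio} guarantees a $\widetilde O(n^2)$ running time (which is what the chunking/sampling analysis there provides, since each of the $O(n)$ chunks requires estimating marginals against $O(n)$ vertices), the arithmetic closes. I would therefore state the proof as: each of the $O(n)$ calls in the first \textbf{for} loop costs $\widetilde O(n^2)$; the \textbf{while} loop iterates $O(n)$ times and each iteration triggers at most $n$ re-invocations of the $\widetilde O(n^2)$-time subroutine plus $\widetilde O(n)$ bookkeeping; summing gives $\widetilde O(n^4)$.
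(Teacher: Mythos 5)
Your proposal is correct and follows essentially the same route as the paper: a per-call bound of $\widetilde O(d^2(v)) \le \widetilde O(n^2)$ from Lemma~\ref{lem:goodratio}, $\widetilde O(n^3)$ for the initial loop, at most $n$ iterations of the \textbf{while} loop because each iteration permanently zeroes the weight of at least one vertex, and up to $n$ re-invocations per iteration, giving $\widetilde O(n^4)$. Only your parenthetical claim that the loop runs $O(1/\covereps)$ times is unjustified (a chosen cluster may carry arbitrarily little probability mass, so "adds $\Theta(\covereps)$-or-less mass" gives no upper bound on iterations), but you explicitly do not rely on it, so the argument stands.
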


\begin{proof}
    Finding 
    the small ratio cluster $C_v$ for a single vertex $v$ takes time $\widetilde O(d^2(v))$. So overall, we need time $\widetilde O(n^3)$ to find all $\{C_v\}_{v\in V}$. There are at most $n$ iterations of the {\bf while} loop on Line \ref{line:loop_family-poly} since we set $\hat p_v = 0$ for at least one vertex $v$. This can happen at most $n$ times. Indeed, once a vertex has weight $\hat p_v = 0$, the weight will remain $0$. During one iteration of the {\bf while} loop on Line \ref{line:loop_family-poly}, we might have to update all $\{C_v\}_{v\in V}$. Again, this takes time at most $\widetilde O(n^3)$. 
\end{proof}

\section{Finding One Small Ratio Cluster}
\label{sec:goodratio}

In this section, we show how to find a small ratio cluster $C_r$, which is a key subroutine in Algorithm \ref{alg:disjointfamily-poly}. 
We define the candidate set of $r$, $\Ncand(r)$, to be the vertices that could possibly belong to the small ratio cluster $C_r$.
\begin{align*}
    \Ncand(r) = \begin{cases}
    \Nadm(r) \setminus{V_{\calK}} & \text{if } r \text{ is a singleton atom,} \\
    K(r) \cup \left( \bigcap_{u \in K(r)} \Nadm(u) \right) & \text{if } r \text{ belongs to a non-singleton atom}.
\end{cases}
\end{align*}
Intuitively, if $r$ belongs to some non-singleton atom, then we only need to consider all admissible neighbors plus the nodes that are in the same atom. If $r$ is a singleton atom, we do not need to consider any neighboring non-singleton atoms; we only need to consider admissible neighbors that are singleton atoms.  Notice that the definition of $\Ncand$ is not symmetric (i.e., it might be the case that $u \in \Ncand(v)$, but $v \notin \Ncand(u)$).

In this section, we want to show the following Lemma.
\begin{lemma}
\label{lem:goodratio}
Suppose we are given a graph $G = (V, E)$, vertex weights $\hatp$, a target ratio $R$, a vertex $r$ and the set of vertices $\Nadm(r)$. 
\begin{itemize}
    \item[(i)] Assume there exists a cluster $C$ be an \epslarge cluster with $K(r) \subseteq C \subseteq \Ncand(r)$ such that $\covers(C) + \covereps^2  \dadm(C) \leq R \cdot \hatp(C)$.
    \item[(ii)] Assume that $\covers(\{ v \}) > R \cdot \hatp(\{ v \})$ for all $v \in C$.
\end{itemize}
Then, with high probability, in time $\widetilde O(d^2(r))$, we can find
a cluster $C_r \subseteq \Ncand(r)$ such that,
\[
    \covers(C_r) \leq R \cdot \hatp(C_r).
\]
Moreover, $C_r$ does not split atoms and contains exactly one non-singleton atom $K(r) \subseteq C_r$ iff $K(r)$ is a non-singleton atom.
\end{lemma}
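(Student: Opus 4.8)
The plan is to follow the local-search strategy sketched in the technical overview, turning the optimality conditions of the unknown cluster $C$ into an iterative ``peel-and-correct'' procedure on the candidate set $\Ncand(r)$. First I would observe that since $C$ is \epslarge and contains $K(r)$ (or equals a singleton when $r$ is a singleton atom), we have $|C| = \Theta(|\Ncand(r)|) = \Theta(d(r))$; this is what lets us sample. I would start from the seed $C_0 := K(r)$ and maintain a current guess $\tilde C$. The key quantity is the estimated marginal $\rmEstCost(\tilde C, v)$ of adding/removing a vertex $v$, which I would define using a uniformly random sample $\Gamma$ of $\Theta(\poly(1/\eps)\log n)$ vertices from $\Ncand(r)$: intersecting $\Gamma$ with the ``correct so far'' portion of $\tilde C$ gives a uniform sample from a set close to $C$, and a standard Hoeffding/Chernoff bound shows that $\frac{|\Ncand(r)|}{|\Gamma\cap\tilde C|}$ times the sampled contribution approximates $\marginal(\tilde C,v)$ up to additive error $\eps\cdot(\text{something like } d(v)+|\tilde C|)$, provided $|\tilde C \bigoplus C|$ is small compared to $|C|$.

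Next I would split $\Ncand(r)$ into $\Theta(1/\covereps)$ chunks (ordered arbitrarily, keeping $K(r)$ intact inside the first chunk), and process them one at a time. Within a chunk, for each vertex $v$ in the chunk I decide to include $v$ in $\tilde C$ iff $\rmEstCost(\tilde C, v) - \hatp(v)\le 0$ (and exclude it otherwise), using the current sample. Because the estimation error per vertex is additive and the number of misclassified vertices in a single chunk is at most the chunk size $O(\covereps |\Ncand(r)|) = O(\covereps |C|)$, after finishing each chunk the symmetric difference $|\tilde C \bigoplus C|$ stays $O(\covereps |C|)$, which is small enough to refresh the sample for the next chunk (re-restrict $\Gamma$ to the part of $\tilde C$ already decided). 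Condition (ii), that every singleton has $\covers(\{v\}) > R\,\hatp(\{v\})$, is used to rule out the degenerate alternative that the returned cluster should be a singleton; it guarantees $\tilde C$ genuinely wants to grow beyond $K(r)$. I would also enforce throughout that $\tilde C \subseteq \Ncand(r)$ and that $\tilde C$ never splits an atom: the candidate set was defined precisely so that admissible neighbours are the only vertices outside $K(r)$ that can join, and each such vertex is its own atom, so no atom-splitting can occur.

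The accounting step then shows the returned $C_r$ has ratio at most $R$. Summing the per-vertex inclusion inequalities $\rmEstCost(\tilde C,v)-\hatp(v)\le 0$ over $v\in C_r$ and the exclusion inequalities over $v\in C\setminus C_r$, converting estimates back to true marginals at the cost of the cumulative additive error $O(\covereps)\cdot\sum_v(d(v)+|C|) = O(\covereps)(\dadm(C)+\covers(C))$ type terms, and telescoping the marginals $\marginal(\cdot,v)$ into $\covers(C_r)-\covers(C)$, I get $\covers(C_r) - \hatp(C_r) \le \covers(C) - \hatp(C) + O(\covereps^2)\dadm(C)\le R\,\hatp(C) - \hatp(C) \le (R-1)\hatp(C_r) + \text{error}$, and after absorbing the $\covereps^2\dadm(C)$ slack from hypothesis (i) this yields $\covers(C_r)\le R\,\hatp(C_r)$. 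Finally, the runtime: there are $O(1/\covereps)$ chunks, each vertex in each chunk costs one marginal estimate, a marginal estimate touches the $\Theta(\poly(1/\eps)\log n)$ sampled vertices and the adjacency of $v$, so the total is $\widetilde O(\poly(1/\eps)\cdot |\Ncand(r)| \cdot \max_v d(v)) = \widetilde O(d^2(r))$ using the \epssimilar degree bounds $d(v)\le 2\eps^{-1}d(r)$ for admissible $v$.

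The main obstacle I expect is controlling the error propagation across chunks: one must show that the per-chunk misclassification really does stay bounded by the chunk size and does not compound, i.e. that re-sampling after each chunk genuinely resets the estimator even though $\tilde C$ is not exactly $C$ — this requires carefully defining which part of $\tilde C$ is ``already decided and correct'' versus ``still pending'', and arguing the sample restricted to the decided part is a faithful proxy for $C$. The second delicate point is that hypothesis (i) only gives slack $\covereps^2\dadm(C)$, so the cumulative additive estimation error must be driven below this threshold, forcing the sample size and chunk granularity to be tuned as specific powers of $\covereps$ (and hence of $\eps$); getting these exponents to line up is the bookkeeping heart of the argument.
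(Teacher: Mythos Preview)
Your proposal has the right shape---chunk the candidate set, use sampled marginals to decide inclusion, control error per chunk---but there is a genuine gap at the very first step: how do you obtain a sample from the unknown target cluster?

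You write that ``intersecting $\Gamma$ with the `correct so far' portion of $\tilde C$ gives a uniform sample from a set close to $C$.'' But at the start $\tilde C = K(r)$, and if $r$ is a singleton atom this is just $\{r\}$; even if $K(r)$ is a non-singleton atom, the sample $\Gamma\cap K(r)$ tells you nothing about $d^+(v,C\setminus K(r))$, which is exactly what you need to estimate $\marginal(C,v)$ for the first chunk. There is no ``correct so far'' portion that covers the unprocessed part of $C$, so you cannot bootstrap. The paper's resolution is quite different: it samples $\Theta(\poly(1/\eps))$ vertices $A_i$ from $\Ncand(r)$ and then \emph{enumerates all subsets} $S^i\subseteq A_i$ of bounded size; one of these subsets is (with constant probability) a genuine uniform sample from the moving target $C^*_i$, even though the algorithm has no idea which one. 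This enumeration is the source of the $2^{\poly(1/\eps)}$ factor in the final running time, and it is not optional---without it there is no estimator for the first chunk. Similarly, the paper enumerates guesses $\tilde t_i$ for $|C^*_i|$ from a geometric grid $L(r)$, because $\rmEstCost$ needs the target size and you do not know it; your sketch omits this as well.

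A second, subtler point: the paper does not compare $\tilde C$ to the fixed $C$ across chunks. Instead it defines $C^*_i$ as the optimal completion of what has already been decided in chunks $1,\ldots,i-1$, and bounds $\Delta(C^*_{i+1})-\Delta(C^*_i)$ per chunk. This is precisely the mechanism that prevents the compounding you flag as an obstacle: the target moves with your decisions, so you never need $|\tilde C\oplus C|$ to stay small globally, only the one-chunk discrepancy $|Q_i\oplus C^*_i|\le |D^i_r|$. Your telescoping against a fixed $C$ would require a global bound on $|\tilde C\oplus C|$, which is harder to maintain. Finally, your inclusion rule should compare $\rmEstCost$ to $w(v)=R\,\hatp(v)-\dc(v)$, not to $\hatp(v)$; the reformulation $\cost(T)\le w(T)$ (equivalently $\covers(T)\le R\,\hatp(T)$) is what makes the per-vertex test additive, and assumption~(ii) is used not merely to rule out returning a singleton but to guarantee that every $C^*_i$ is large enough (a $\poly(\eps)$ fraction of $|\Ncand(r)|$) for the sampling to hit it.
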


Lemma~\ref{lem:goodratio} may return a cluster \( C_r \) that contains some nodes with \(\hat{p}_v = 0\). In such cases, we can simply remove these nodes, as the monotonicity of \(\covers\) ensures that removing nodes with \(\hat{p} = 0\) will only decrease the ratio.

The algorithm is parameterized by $\covereps, \eta, \eps$. 
Recall that $\covereps = O(\eps^{13})$
is a small enough constant and $\eta = \Omega(\covereps^{-2}\eps^{-8}) =\Omega(\eps^{-34})$ is a sufficiently large constant. We also assume that $|\Ncand(r)| = \Omega(\eta^2)$, otherwise, we can enumerate all possible subsets of $\Ncand(r)$, which takes $O(2^{\poly(1/\eps)})$ time.

\subsection{Overview of the Algorithm}

In this section, we aim to solve the following optimization problem: 
Given a preclustered \cc instance and vertex weights $w$, the goal is to find a set $T$ of vertices such that 
\begin{align*}
    \clcost(T) \leq \sum_{v \in T} w(v). 
\end{align*} 

Setting $w(v) := R \cdot p(v) - \dc(v)$, if we can find a set $T$ such that 
\begin{align*}
    \clcost(T) \leq \sum_{v \in T} \big(  R p(v) - \dc(v) \big) 
    \leq \sum_{v \in T}  R p(v) - \sum_{v \in T} \dc(v), 
\end{align*}
then $T$ is a small ratio cluster, since
\begin{align*}
\covers(T) &= \clcost(T) + \sum_{v \in T} \dc(v) \leq R \sum_{v \in T}p(v) = R \cdot p(T).
\end{align*}

Define $D(v) := \Ncand(v) \setminus{K(v)}$ for any $v \in V$, and $\Delta(T) := \clcost(T) - w(T)$ for any $T \subseteq V$. 
As is the case of $\Nadm(v)$ (i.e., see Definition \ref{def:epsgoodpreclustering}), we can also bound the size of $\Ncand(v)$, which we do by the following lemma.
\begin{lemma}
\label{lem:sizeofcandidateset}
For any $r \in V$, if $v \in \Ncand(r)$, then $|\Ncand(r)| = O(\eps^{-4}d(v))$.
\end{lemma}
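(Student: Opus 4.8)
The goal is to prove Lemma~\ref{lem:sizeofcandidateset}: if $v \in \Ncand(r)$, then $|\Ncand(r)| = O(\eps^{-4} d(v))$. I will split the argument according to whether $r$ is a singleton atom or belongs to a non-singleton atom, mirroring the case distinction in the definition of $\Ncand(r)$, and in each case first bound $|\Ncand(r)|$ in terms of $d(r)$ (or the degree of some representative of $r$'s atom), and then transfer this to a bound in terms of $d(v)$ using degree-similarity of admissible pairs.

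The plan is as follows. \textbf{Singleton case.} Here $\Ncand(r) = \Nadm(r) \setminus V_{\calK}$, so $|\Ncand(r)| \le |\Nadm(r)| = \dadm(r) \le 2\eps^{-3} d(r)$ by the first property of an \epssimilar preclustering (Definition~\ref{def:epsgoodpreclustering}). Now take any $v \in \Ncand(r) \subseteq \Nadm(r)$; since $rv \in \eadm$, the second property of Definition~\ref{def:epsgoodpreclustering} gives $d(r) \le 2\eps^{-1} d(v)$, hence $|\Ncand(r)| \le 2\eps^{-3}\cdot 2\eps^{-1} d(v) = 4\eps^{-4} d(v) = O(\eps^{-4} d(v))$. \textbf{Non-singleton case.} Here $\Ncand(r) = K(r) \cup \bigcap_{u \in K(r)} \Nadm(u) \subseteq K(r) \cup \Nadm(u_0)$ for any fixed $u_0 \in K(r)$. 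By the third property of Definition~\ref{def:epsgoodpreclustering}, every vertex of the atom $K = K(r)$ is adjacent to a $(1-O(\eps))$-fraction of $K$, so $|K| \le \frac{1}{1-O(\eps)} d(u_0) + 1 = O(d(u_0))$ (using $|K|\ge 2$ and the self-loop convention as needed); and $|\Nadm(u_0)| = \dadm(u_0) \le 2\eps^{-3} d(u_0)$. Therefore $|\Ncand(r)| = O(\eps^{-3} d(u_0))$ for this representative $u_0$.

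It remains to turn the bound $O(\eps^{-3} d(u_0))$ into $O(\eps^{-4} d(v))$ for the specific $v \in \Ncand(r)$ that we are handed. If $v \in K(r)$, then by the third property $d(v) \ge (1-O(\eps))(|K|-1) \ge \eps |K|$ (for small $\eps$), and since $|\Ncand(r)| = O(\eps^{-3}|K|)$ we get $|\Ncand(r)| = O(\eps^{-4} d(v))$; alternatively, since $d(v)$ and $d(u_0)$ are within a constant factor for two vertices of the same almost-clique atom — which again follows from the third property of Definition~\ref{def:epsgoodpreclustering} — we may just substitute $d(u_0) = O(d(v))$ into $O(\eps^{-3} d(u_0))$. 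If instead $v \in \bigcap_{u \in K(r)} \Nadm(u)$, then in particular $u_0 v \in \eadm$, so by the second property $d(u_0) \le 2\eps^{-1} d(v)$, and again $|\Ncand(r)| = O(\eps^{-3} d(u_0)) = O(\eps^{-4} d(v))$. Combining the two cases (and noting the bound already incorporates the $+1$ from $K(r) \subseteq \Ncand(r)$ when $v\in K(r)$), we obtain $|\Ncand(r)| = O(\eps^{-4} d(v))$ in all cases.

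The only mild subtlety — and the step I would be most careful about — is the chain that relates the degrees of two vertices inside a common non-singleton atom, and relating $|K|$ to those degrees, since the self-loop convention and the $(1-O(\eps))$-fraction adjacency have to be used consistently to avoid off-by-constant issues; but this is exactly what the third bullet of Definition~\ref{def:epsgoodpreclustering} is for, so no genuinely new idea is needed. Everything else is a direct application of the two size bounds $\dadm(u) \le 2\eps^{-3}d(u)$ and $d(a) \le 2\eps^{-1} d(b)$ for admissible pairs $ab$, so the proof is short once the case analysis is set up.
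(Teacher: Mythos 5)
Your proof is correct and follows essentially the same route as the paper's: bound $|\Ncand(r)| \le |K(r)| + |\Nadm(r)| = O(\eps^{-3} d(r))$ using the \epssimilar preclustering properties, then transfer from $d(r)$ (or a representative of $r$'s atom) to $d(v)$ via degree-similarity of admissible pairs. The only difference is that you handle the subcase $v \in K(r)$ explicitly (where $(r,v)$ is atomic rather than admissible), which the paper's one-line argument glosses over by asserting admissibility of $(r,v)$ for all $v \in \Ncand(r)$; your extra care is harmless and does not change the approach.
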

\begin{proof}
    We have by definition of $\Ncand$ that $|\Ncand(r)| \leq |K(r)| + |\Nadm(r)| = O(\epsilon^{-3} d(r))$. 
    The last equality is true since the preclustering is $\epsilon$-similar. Note that for any vertex $v \in \Ncand(r)$ the edge $(r,v)$ is admissible. Thus, $d(r) \leq \epsilon^{-1} d(v)$ and $|\Ncand(r)| = O(\eps^{-4}d(v))|$.  
\end{proof}

For a set of vertices $T$ and a vertex $v$, we want to compare the cost of including $v$ in $T$ to the cost of not including $v$ in $T$, define the marginal value $v$ with respect to $T$ as,
\begin{align*} 
    \marginal(T, v) &= \clcost(T \cup \{ v \}) - \clcost(T \setminus \{v \}) \nonumber\\
    &= d^-(v,T) + \frac{1}{2} d^+(v, V\setminus (T \cup \{ v \} )) - \frac{1}{2} d^+(v, T \setminus \{v \}) \nonumber\\
      &= |T|-d^+(v, T) + \frac{1}{2}(d^+(v)-1) - d^+(v, T \setminus \{v \}) \nonumber \\
    &= \frac{d^+(v) - 1}{2} + |T|  - 2d^+(v,T) + \mathbb{1}(v \in T). \label{val-v-in-T}
\end{align*}
where \( \mathbb{1} \) is the indicator function for the event \( v \in T \). Recall that we assume \( uu \) is also a \pedge in \( G \). In the above calculation, the indicator variable corresponding to whether \( v \) belongs to \( T \) is used to adjust for the fact that the calculation differs slightly depending on whether \( v \) is in \( T \) or not.
Since we do not know whether or not $v \in T$, we need this extra term.

One of the most important properties for function $\marginal$ is the following.
\begin{claim}
\label{claim:marginalchangesinsmallchunk}
    For any non-empty set $T$ and $T'$ and any vertex $v$, we have 
    \begin{align*}
        \marginal(T, v) - \marginal(T', v) \leq 2 |T \oplus T'|
    \end{align*}
    where $\oplus$ denotes the symmetric difference of two sets.
\end{claim}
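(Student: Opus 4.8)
The plan is to prove Claim~\ref{claim:marginalchangesinsmallchunk} directly from the explicit formula
\[
\marginal(T, v) = \frac{d^+(v) - 1}{2} + |T| - 2 d^+(v, T) + \mathbb{1}(v \in T)
\]
derived just above the claim statement. The term $\frac{d^+(v)-1}{2}$ does not depend on $T$, so it cancels when we subtract $\marginal(T', v)$. Hence
\[
\marginal(T, v) - \marginal(T', v) = \bigl(|T| - |T'|\bigr) - 2\bigl(d^+(v, T) - d^+(v, T')\bigr) + \bigl(\mathbb{1}(v \in T) - \mathbb{1}(v \in T')\bigr),
\]
and the task reduces to bounding each of these three differences in terms of $|T \oplus T'|$.

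The key step is to observe that each of the three quantities $|\cdot|$, $d^+(v,\cdot)$, and $\mathbb{1}(v \in \cdot)$ changes by at most $|T \oplus T'|$ when passing from $T'$ to $T$, since adding or removing a single vertex changes $|T|$ by exactly $1$, changes $d^+(v,T)$ by at most $1$ (it increases by $1$ only if the added/removed vertex is a $+$neighbor of $v$), and changes the indicator by at most $1$. Writing $T = (T' \setminus A) \cup B$ with $A = T' \setminus T$ and $B = T \setminus T'$ disjoint and $|A| + |B| = |T \oplus T'|$, we get $|T| - |T'| = |B| - |A|$, $|d^+(v,T) - d^+(v,T')| \le |A| + |B|$, and $|\mathbb{1}(v \in T) - \mathbb{1}(v \in T')| \le 1 \le |A| + |B|$ (assuming $T \ne T'$; the case $T = T'$ is trivial). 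Combining, and using that $|B| - |A| \le |A| + |B|$ and $-2(d^+(v,T) - d^+(v,T')) \le 2(|A|+|B|)$, we obtain
\[
\marginal(T, v) - \marginal(T', v) \le (|A| + |B|) + 2(|A| + |B|) + \ldots
\]
which is a constant times $|T \oplus T'|$. To land exactly on the factor $2$ claimed, one should be slightly more careful: the dominant contribution is the $-2(d^+(v,T) - d^+(v,T'))$ term, which is at most $2 d^+(v, B \setminus T') \le 2|B|$ when we only add vertices, but in general the safe bound that matches the statement is to note that the net change $(|T| - |T'|) - 2(d^+(v,T) - d^+(v,T'))$ equals $\sum_{u \in B}(1 - 2\mathbb{1}(u \in N^+(v))) - \sum_{u \in A}(1 - 2\mathbb{1}(u \in N^+(v)))$, and each summand lies in $[-1, 1]$, so this is at most $|A| + |B| = |T \oplus T'|$; adding the indicator difference of at most $1 \le |T\oplus T'|$ (when $T \ne T'$) gives the bound $2|T \oplus T'|$.

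I expect the only mild obstacle to be pinning down the exact constant: a naive triangle-inequality split of the three terms gives something like $4|T \oplus T'|$, and getting down to $2$ requires grouping the $|T|$ and $d^+(v,T)$ terms together as a single telescoping sum over the symmetric difference whose per-element contribution is bounded by $1$ in absolute value, then treating the indicator term separately. Since the claim only needs an upper bound (not a tight one) and $2$ comfortably accommodates the worst case, this bookkeeping is routine. No additional results beyond the displayed formula for $\marginal$ are needed.
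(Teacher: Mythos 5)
Your proof is correct and follows essentially the same route as the paper: both expand the explicit formula for $\marginal$, group the $|T|$ and $d^+(v,T)$ terms over the symmetric difference so the per-element contribution is bounded by $1$, bound the leftover indicator term by $1$, and absorb it via $|T \oplus T'| + 1 \leq 2|T \oplus T'|$ (with $T = T'$ trivial). Your telescoping formulation with summands $1 - 2\mathbb{1}(u \in N^+(v)) \in \{-1,1\}$ is a slightly cleaner bookkeeping of the same estimate the paper performs by dropping $d^+(v, T\setminus T')$ and bounding $d^+(v, T'\setminus T) \leq |T'\setminus T|$.
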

\begin{proof}
Using the definition of $\marginal$, we have 
\begin{align*}
    \marginal(T, v) - \marginal(T', v) &= |T \cup \{v\}| - |T' \cup \{v\}| - 2(d^+(v, T) - d^+(v, T')) \\
    &\leq |T \setminus T'| - |T' \setminus T| + 1 - 2(d^+(v, T \setminus T') - d^+(v, T' \setminus T)) \\
    &\leq |T \setminus T'| - |T' \setminus T| + 1 + 2 d^+(v, T' \setminus T)) \\
    &\leq |T \oplus T'| + 1 \leq 2|T \oplus T'|.
\end{align*}
\end{proof}

Claim \ref{claim:marginalchangesinsmallchunk} implies that we can estimate the value $\marginal(T,v)$ by the value $\marginal(T',v)$ if $T'$ is almost the same as $T$.
In each round, we try to decide whether or not a node $v$ should be in $T$ using $\marginal(T, v)$. However, we do not know $T$, so we will instead sample a small set $S$ from $T$ 
and estimate $\marginal(T, v)$. To do this estimation, given a small sample set $S$, an integer guess $t$ for the size of $T$ and a vertex $v$, define 
\[\rmEstCost(S,t,v) := 
\frac{\dpos(v) - 1}{2} + t  - 2\frac{d(v, S)}{|S|}t.\]
We will sample a constant number of nodes, so with constant probability, we can ensure $\rmEstCost$ is close to $\marginal$.

\paragraph{Description of Algorithms
\ref{alg:generateclusterBySampling} and \ref{alg:GenerateCluster}
.} Given the definitions of \( \marginal(T, v) \) and \( \rmEstCost(S, t, v) \), we can now describe Algorithm~\ref{alg:generateclusterBySampling} and Algorithm~\ref{alg:GenerateCluster}. Conceptually, we begin by guessing \( C^* \subset \Ncand(r) \), where $C^*$ is defined to be the cluster for which \( \cost(C^*) - w(C^*) \) is minimized.
Due to the optimality of \( C^* \), adding a node to \( C^* \) or removing a node from \( C^* \) will increase this value,
which implies that \( \marginal(C^*, v) - w(v) \leq 0 \) for all \( v \in C^* \) and \( \marginal(C^*, v) - w(v) \geq 0 \) for all \( v \not\in C^* \). Based on this observation, we should add any node \( v \) if and only if \( \marginal(C^*, v) - w(v) \leq 0 \).

The algorithm does not know $C^*$, so to compute the value of \( \marginal(C^*, v) \) for any node $v$, Algorithm~\ref{alg:generateclusterBySampling} attempts to sample enough nodes from \( C^* \). We will later show that \( |C^*| = \Omega(\covereps^2 \epsilon^{-8} |\Ncand(r)|) \), ensuring that we always obtain some sampled vertices from \( C^* \).

In this process, instead of sampling once, we sample \( \eta \) different sets \( A_i \). This is because we need new sampled nodes each time we add vertices to our final set \( \hatt \). The algorithm also tries to guess the size of \( C^* \). Since \( C^* \) is an \epslarge cluster, we know the size of \( C^* \) will be within \( [\epsilon d^{+}(r), |\Ncand(r)|] \). We do not need the exact size, so we will enumerate possible sizes with different granularities, choosing values from the following set:
\begin{align*}
    L(r) = \left\{ \left( 1 + \frac{1}{\eta} \right)^j \in [\epsilon d^{+}(r), \epsilon^{-4} d^{+}(r)] \mid j \text{ is an integer} \right\}.
\end{align*}

\begin{algorithm}[H]
\caption{\genclusbysam($G, \Nadm(r), w, r, R$)}
\label{alg:generateclusterBySampling}
\begin{algorithmic}[1]
\STATE \textbf{Input:} The graph $G$, $K(r)$, $\Nadm(r)$, $\Ncand(r), w$, $r$, ratio $R$.
\STATE \textbf{Output:} A small ratio cluster $\hatt$ if $r$ satisfies Assumption (i) from Lemma \ref{lem:goodratio}. 
\STATE Repeat the following steps $O(\log n)$ times.
\FOR{$i$ from $1$ to $\eta$}
\STATE Uniformly sample $\Theta(\eta^4  \covereps^{-2} \eps^{-8})$ vertices from $\Ncand(r)$ with replacement
\STATE Let the sample set be $A_i$. \COMMENT{$A_i$ may contain some element multiple times.}
\ENDFOR
\STATE $D(r) \gets \Ncand(r)\setminus{K(r)}$
\FOR{every $(S^1, S^2,...,S^{\eta}) \subset (A_1, A_2,...,A_{\eta})$ such that $|S^i| \leq \eta$, where $i \in [\eta]$} 
    \FOR{ every $(\tilt_1, \tilt_2, ..., \tilt_{\eta}) \in (L(r), L(r),...L(r))$, where $\tilt_j \in L(r)$ for $j \in [\eta]$ }
    \STATE $T \gets $ GenerateCluster($r, D(r), S^1,\ldots,S^\eta, \tilt_1,\ldots, \tilt_\eta$)
    \IF{$\cost(T) \leq w(T)$}
        \RETURN $T$
        \ENDIF
\ENDFOR
\ENDFOR
\RETURN $\emptyset$
\end{algorithmic}
\end{algorithm}

After the sampling step, we proceed to add vertices to \( \hatt \). We must be careful regarding \( C^* \) because, once we add vertices, we may introduce errors. Using the same sampled nodes repeatedly could make the estimation of \( \marginal \) inaccurate. To address this, we divide \( \Ncand(r) \) into \( \eta \) "chunks", each of size \( |\Ncand(r)| / \eta \). For each chunk, we use the estimated \( \marginal \) to decide whether to add a node to \( \hatt \). Once we finish processing a chunk, we update our sampled set to align with the choices we have already made. This process ensures that we return a good cluster with constant probability. By repeating it \( O(\log n) \) times, we obtain our final small ratio cluster.

\begin{algorithm}[H]
\caption{\genclus($r, D(r), S^1,\ldots,S^\eta, \tilt_1,\ldots, \tilt_\eta$)}
\label{alg:GenerateCluster}
\begin{algorithmic}[1]
\STATE $T \gets K(r), \hatt_1 \gets K(r)$
\STATE Let $D_r^1, \ldots, D_r^{\eta}$ be an arbitrary partition of the vertices of $D(r)$ into equal-size parts
\FORALL{$i=1,\ldots,\eta$} \label{loop:for-gc}
\FORALL{$v \in D_r^i$} 
\IF{$\rmEstCost(S^i, \tilt_i, v$)$ + 6\eta^{-1} |\Ncand(r)| \leq w(v)$}
\STATE $T \gets T \cup \{ v \}$ 
\ENDIF
\ENDFOR 
\STATE $\hatt_{i+1} \gets T$
\ENDFOR
\end{algorithmic}
\end{algorithm}

Let $C^*$ be the cluster such that $K(r) \subseteq C^* \subseteq \Ncand(r)$ and $\Delta(C^*) = \clcost(C^*) - w(C^*)$ is minimized.  For $C$, which exists by Assumption (i) in Lemma \ref{lem:goodratio}, we have $\Delta(C^*) \leq \Delta(C) \leq - \covereps^2 \dadm(C)$.  
To analyze the algorithm, recall $\hatt_{i}$
is the set of vertices in set $\hatt$ 
at the beginning of the $i$-th iteration of the {\bf for} loop on Line \ref{loop:for-gc} of Algorithm \ref{alg:GenerateCluster}.  Notice that $\hatt_{i-1} \subseteq \hatt_i$.
We define 
\[ C^*_i = \hatt_{i} \cup \mathrm{argmin} \left\{ \clcost(\hatt_{i} \cup B) - w(\hatt_{i} \cup B) ~\Big|~B \subseteq \bigcup_{j = i}^{\eta}D_r^j\right\}.\]
Assume now that $S^i$ is a uniform sample of the cluster $C^*_i$. Let $t_i = |C^*_i|$ be the size of $|C^*_i|$, and assume that $\tilt_i \in [t_i,(1+\frac{1}{\eta})t_i]$. Our main lemma regarding Algorithm~\ref{alg:GenerateCluster} is given as follows.

\begin{lemma}
\label{lem:goodratio2}
Suppose we are given a graph $G = (V, E)$, vertex weights $w$ and vertex $r$.
\begin{itemize}
    \item Assume there exists a cluster $C$ that is \epslarge cluster with $K(r) \subseteq C \subseteq \Ncand(r)$ such that $\clcost(C) + \covereps^2 \dadm(C) \leq w(C)$.
\end{itemize}
Let $C^*_i$ be the set defined above and $S^i$ be a uniform sample with size $\eta_0 = \Omega(\eta^3)$ from $C^*_i$, $\tilt_i \in [t_i,(1+\frac{1}{\eta})t_i]$ is the guess for the size of $C^*_i$. Then with probability $1 - 2\eta \exp(-2\eta)$,

\genclus($r, D(r), S^1,\ldots,S^\eta, \tilt_1,\ldots, \tilt_\eta$) produces a cluster $\hat{T}$ such that $\clcost(\hatt) \leq  w(\hatt)$.
\end{lemma}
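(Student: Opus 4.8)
The plan is to track, over the $\eta$ chunks, how much the running set $\hatt_i$ can deviate from the ideal target $C^*_i$, and to show that the errors accumulated per chunk are small enough that $\Delta(\hatt_\eta) = \clcost(\hatt_\eta) - w(\hatt_\eta)$ stays negative, using the slack $\covereps^2 \dadm(C)$ guaranteed by Assumption. First, I would set up the basic estimation guarantee: conditioned on $S^i$ being a uniform sample of size $\eta_0 = \Omega(\eta^3)$ from $C^*_i$ and on $\tilt_i \in [t_i, (1+\tfrac1\eta)t_i]$, a Hoeffding bound shows that for every single vertex $v \in D_r^i$, with probability $1 - 2\exp(-2\eta)$ (per vertex, then union bound over the chunk, or rather a single bound on $d(v,S^i)/|S^i|$), the quantity $\rmEstCost(S^i,\tilt_i,v)$ approximates $\marginal(C^*_i, v)$ up to an additive error of $O(\eta^{-1}|\Ncand(r)|)$ — here the scale $|\Ncand(r)|$ enters because $d^+(v,C^*_i) \le |C^*_i| \le |\Ncand(r)|$ and the sampling error in the ratio $d(v,S^i)/|S^i|$ is $O(1/\sqrt{\eta_0})$ times $t_i$. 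The threshold $+6\eta^{-1}|\Ncand(r)|$ in Algorithm \ref{alg:GenerateCluster} is chosen precisely so that, conditioned on this good event, the algorithm's decision is \emph{conservative}: it adds $v$ only when $\marginal(C^*_i,v) - w(v)$ is genuinely negative (so $v$ "should" be in the optimal completion), and it never wrongly includes a vertex whose true marginal is positive.

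Next I would argue the inductive step bounding $|\hatt_i \oplus C^*_i|$. By the definition of $C^*_i$ as $\hatt_i$ together with the best completion $B^*_i \subseteq \bigcup_{j\ge i} D_r^j$, and since $\hatt_{i+1} = \hatt_i \cup (\text{vertices of } D_r^i \text{ added in chunk } i)$, the only discrepancies between $\hatt_{i+1}$ and $C^*_{i+1}$ come from vertices of $D_r^i$ on which our estimated decision disagreed with membership in $B^*_i$. By the conservativeness from the previous paragraph together with a swapping/exchange argument (any vertex $v \in D_r^i \cap B^*_i$ with $\marginal(C^*_i,v)-w(v)$ close to $0$ could be dropped from the optimal completion at cost $O(\eta^{-1}|\Ncand(r)|)$ per vertex, and Claim \ref{claim:marginalchangesinsmallchunk} controls how marginals shift as we move between $\hatt_i$ and $C^*_i$), the number of disagreements in chunk $i$ is at most $|D_r^i| = |\Ncand(r)|/\eta$, and — crucially — each disagreement changes $\Delta$ by at most $O(\eta^{-1}|\Ncand(r)|)$. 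Summing the per-chunk damage over all $\eta$ chunks gives a total slippage of $O(|\Ncand(r)|^2/\eta)$ between $\Delta(\hatt_\eta)$ and $\Delta(C^*) = \min_T \Delta(T) \le \Delta(C) \le -\covereps^2\dadm(C)$.

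Finally I would close the loop using the size bounds. Since $C$ is \epslarge, $|C| \ge \eps d^+(v)$ for $v \in C$, and by Lemma \ref{lem:sizeofcandidateset} $|\Ncand(r)| = O(\eps^{-4} d^+(v))$, so $|\Ncand(r)| = O(\eps^{-5}|C|)$, hence $\dadm(C) = \Omega(\eps^{3}|C|) = \Omega(\eps^{8}|\Ncand(r)|)$ using the degree bound $\dadm(u) \le 2\eps^{-3}d(u)$ — wait, I need the reverse; rather $\covereps^2\dadm(C)$ must dominate $O(|\Ncand(r)|^2/\eta)$. Using $|\Ncand(r)| = O(\eps^{-4}d(v))$ and $\covers(C)+\covereps^2\dadm(C)\le w(C)$ with $w(C) = R\,\hatp(C) - \dc(C)$ being of order $|\Ncand(r)|\cdot(\text{stuff})$... the cleaner route is: the slippage is $O(|\Ncand(r)|^2/\eta)$ and we need this $\le \covereps^2\dadm(C)$; by the preclustering structure $\dadm(C) = \Theta(|\Ncand(r)|)$-ish up to $\eps$ powers when $C$ is large, so it suffices that $|\Ncand(r)|/\eta \le \covereps^2 \eps^{8} |\Ncand(r)| / O(1)$, i.e. $\eta = \Omega(\covereps^{-2}\eps^{-8})$, which is exactly the stated choice of $\eta$. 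The probability bound $1 - 2\eta\exp(-2\eta)$ is the union bound of the $2\exp(-2\eta)$ Hoeffding failure probabilities over the $\eta$ chunks (two-sided). The main obstacle is the exchange argument in the inductive step: carefully showing that the conservative threshold prevents the error $|\hatt_i \oplus C^*_i|$ from compounding multiplicatively across chunks — that each chunk contributes only an \emph{additive} $O(\eta^{-1}|\Ncand(r)|)$ to $\Delta$ rather than inflating the symmetric difference that feeds into the next chunk's estimation error via Claim \ref{claim:marginalchangesinsmallchunk}. Getting the constants in the threshold ($6\eta^{-1}|\Ncand(r)|$) and the sample size ($\eta_0 = \Omega(\eta^3)$) to line up with this telescoping is the delicate part.
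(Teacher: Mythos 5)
Your overall strategy matches the paper's: define the per-chunk potential $\Delta(C^*_i)$, show the damage per chunk is small, telescope over the $\eta$ chunks, take a union bound to get $1-2\eta\exp(-2\eta)$, and close using $\eta=\Omega(\covereps^{-2}\eps^{-8})$. However, there are two genuine gaps in how you execute this.

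First, the probabilistic accounting. You suggest applying a per-vertex Hoeffding bound and then a ``union bound over the chunk.'' This cannot work: $\eta$ is a constant, each per-vertex failure probability is only $2\exp(-2\eta)=\Theta(1)$, and a chunk contains $|D(r)|/\eta$ vertices, which grows with $n$. The threshold $6\eta^{-1}|\Ncand(r)|$ does \emph{not} make the algorithm conservative in your sense either: vertices whose margin $\marginal(C^*_i,v)-w(v)$ is within $O(\eta^{-1}|\Ncand(r)|)$ of zero are misclassified with constant probability, and a vertex with a very large margin $w(v)-\marginal(C^*_i,v)$ (your claim that each disagreement costs only $O(\eta^{-1}|\Ncand(r)|)$ is false for such vertices) can in principle be wrongly excluded. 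The paper handles this by parametrizing the deviation by $\ell(v)=\bigl(\text{margin}\bigr)/\bigl(\eta^{-1}|\Ncand(r)|\bigr)$, using the $\ell$-dependent tail $2\exp(-2(\ell/2\pm c)^2\eta)$ from Lemma~\ref{lem:concentration-weighted}, so that the \emph{expected} damage per chunk is $O(\exp(-2\eta)\,\eta^{-2}|\Ncand(r)||D(r)|)$, and then applies Markov's inequality once per chunk; only the $\eta$ Markov events are union-bounded, which is where $1-2\eta\exp(-2\eta)$ comes from. There is also the correction term $2|Q_i\oplus C^*_i|^2\le 2\eta^{-2}|D(r)|^2$ coming from Claim~\ref{claim:marginalchangesinsmallchunk} (marginals are evaluated at $C^*_i$ but charged along a sequence of intermediate sets), which your sketch does not account for explicitly.

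Second, the final size accounting is quantitatively wrong. The chunks partition $D(r)=\Ncand(r)\setminus K(r)$, not $\Ncand(r)$, so only vertices of $D(r)$ can be misclassified; the correct per-chunk damage is $O(\eta^{-2}|\Ncand(r)|\,|D(r)|)$ and the total slippage is $O(\eta^{-1}|\Ncand(r)|\,|D(r)|)$, which must be compared to the slack $\covereps^2\dadm(C)$ via Lemma~\ref{lem:boundtwohopsneighbor}, i.e.\ $|D(r)|\cdot|\Ncand(r)|=O(\eps^{-8}\dadm(C))$ --- this two-hop lemma is the crucial ingredient you never invoke. Your bound of $O(|\Ncand(r)|^2/\eta)$ together with the heuristic ``$\dadm(C)=\Theta(|\Ncand(r)|)$-ish'' does not close: as written you would need $\eta=\Omega(\covereps^{-2}\eps^{-8}|\Ncand(r)|)$, which is not a constant (you drop a factor $|\Ncand(r)|$ in the displayed inequality), and the stronger assumption $\dadm(C)=\Omega(\eps^{8}|\Ncand(r)|^2)$ that would rescue it holds only in the all-singleton case, not when $K(r)$ is a large atom and $\dadm(C)\ll|\Ncand(r)|^2$. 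With the chunk size correctly taken as $|D(r)|/\eta$ and Lemma~\ref{lem:boundtwohopsneighbor} in place, the constant choice $\eta=\Omega(\covereps^{-2}\eps^{-8})$ suffices exactly as in the paper.
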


\begin{proof}
We will later show the following claim,

\begin{claim}
\label{claim:invariant-4.2}
With probability at least $1 - 2\exp(-2\eta)$,

\begin{align*}
    \Delta(C^*_{i + 1}) - \Delta(C^*_i) = O( \eta^{-2} \eps^{-8} \dadm(C) ).
\end{align*}
\end{claim}

Assuming the claim is true, a union bound implies that with probability at least $1 - 2\eta \exp(-2\eta)$, 
this claim holds for all $i = 1, \dots, \eta$. 
Lemma \ref{lem:goodratio2} follows by observing that $\hatt = C^*_{\eta+1}$ and $C^*_1 = C^*$. 
Recall, by definition, $\Delta(C^*_i) = \clcost(C^*_i) - w(C^*_i)$. For any $C^*_i$, where $i \in [\eta]$, we have
\begin{align*}
\Delta(C^*_i) & = \Delta(C^*_1) + \sum_{j = 1}^{i - 1}(\Delta( C^*_{j+1}) - \Delta( C^*_j)) \\
    & \le \Delta(C^*) + \eta \cdot O( \eta^{-2} \varepsilon^{-8} \dadm(C)) \\
    & \le \Delta(C^*) + \frac{\covereps^2}{2} \dadm(C) \\
    & \leq - \frac{\covereps^2}{2} \dadm(C).
\end{align*}

The next-to-last inequality follows from the assumption $\eta = \Omega(\covereps^{-2}\eps^{-8})$. 
\end{proof}

\begin{claim}
\label{claim:lowerboundofsampleset}
Either $|C^\star_i| = s_i = \Omega(\covereps^2 \eps^8 |\Ncand(r)|)$ for all $i \in [\eta]$ or there exists a vertex $u \in \Ncand(r)$ such that $\clcost(u) \leq  w(u)$.
\end{claim}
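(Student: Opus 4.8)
The plan is to split on whether $r$ lies in a non-singleton atom, and, in the singleton case, on whether a good singleton already exists in $\Ncand(r)$.

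\textbf{Case 1: $r$ belongs to a non-singleton atom.} Since $\hatt_1 = K(r)$ and the set $\hatt$ in Algorithm~\ref{alg:GenerateCluster} only grows, $C^*_i \supseteq \hatt_i \supseteq K(r)$ for every $i$, so $|C^*_i| \ge |K(r)|$. By the third item of Definition~\ref{def:epsgoodpreclustering}, every $u \in K(r)$ has $d(u) = O(|K(r)|)$, hence $|\Nadm(u)| = \dadm(u) \le 2\eps^{-3} d(u) = O(\eps^{-3}|K(r)|)$; as $\Ncand(r) \subseteq K(r) \cup \Nadm(u)$ this yields $|\Ncand(r)| = O(\eps^{-3}|K(r)|)$, i.e. $|C^*_i| \ge |K(r)| = \Omega(\eps^3|\Ncand(r)|) = \Omega(\covereps^2\eps^8|\Ncand(r)|)$, so the first alternative holds (this case does not even use the hypothesis on $C$). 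So from now on assume $r$ is a singleton atom; then $\Ncand(r) = \Nadm(r)\setminus V_{\calK}$ consists only of singletons, so $\dc(v) = d^+(v)-1$ for all $v\in\Ncand(r)$, and $C^*_1 = C^*$.

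\textbf{Setup for the singleton case.} Let $C$ be the $\eps$-large cluster from the hypothesis of Lemma~\ref{lem:goodratio2}, so $\Delta(C) := \clcost(C) - w(C) \le -\covereps^2\dadm(C) \le 0$. If $|C|=1$ then $C=\{r\}$ and $\clcost(\{r\}) - w(r) = \Delta(C) \le 0$, so $u=r$ witnesses the second alternative; so assume $|C|\ge 2$. Then $\eps$-largeness gives $|C| \ge \eps\,d^+(v)$ for all $v\in C$, and since $C\subseteq\Ncand(r)$, Lemma~\ref{lem:sizeofcandidateset} gives $|\Ncand(r)| = O(\eps^{-4}d^+(v))$, whence $|C| = \Omega(\eps^5|\Ncand(r)|)$. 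Moreover $C$ contains no non-admissible edge (being $\eps$-large) and no atomic edge (being all-singleton), so every pair in $C$ is admissible, giving $\dadm(v)\ge|C|-1$ for $v\in C$ and $\dadm(C)\ge|C|(|C|-1)\ge|C|^2/2$.

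\textbf{Main argument.} Suppose toward the first alternative that $\Ncand(r)$ contains no good singleton. The self-contained part is a lower bound on $\Delta(C^*_i)$: for $u\in\Ncand(r)$, the absence of a good singleton reads $\tfrac{d^+(u)-1}{2} = \clcost(\{u\}) > w(u) = R\hatp(u) - (d^+(u)-1)$, which rearranges to $w(u) < \tfrac12(d^+(u)-1) = \tfrac12\dc(u)$; summing over $u\in C^*_i\subseteq\Ncand(r)$ gives $w(C^*_i) < \tfrac12\dc(C^*_i)$. Using $\dc(C^*_i) = |E^+(C^*_i,V\setminus C^*_i)| + 2|E^+(C^*_i)|$ and $\clcost(C^*_i) = \tfrac12|E^+(C^*_i,V\setminus C^*_i)| + |E^-(C^*_i)|$, we get
\[
\Delta(C^*_i) = \clcost(C^*_i) - w(C^*_i) > |E^-(C^*_i)| - |E^+(C^*_i)| \ge -\binom{|C^*_i|}{2} > -\tfrac12|C^*_i|^2 .
\]
It then suffices to show $\Delta(C^*_i) \le -\tfrac{\covereps^2}{4}|C|^2$, since this forces $|C^*_i| > \tfrac{\covereps}{\sqrt2}|C| = \Omega(\covereps\eps^5|\Ncand(r)|) = \Omega(\covereps^2\eps^8|\Ncand(r)|)$. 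For $i=1$ this is immediate: $C$ is a feasible completion of $K(r)=\{r\}$ inside $\Ncand(r)$, so $\Delta(C^*_1) = \Delta(C^*)\le\Delta(C)\le-\covereps^2\dadm(C)\le-\tfrac{\covereps^2}{2}|C|^2$. For $i\ge 2$ I would argue by induction on the chunk index, interleaved with Claim~\ref{claim:invariant-4.2}: the inductive hypothesis $|C^*_j| = \Omega(\covereps^2\eps^8|\Ncand(r)|)$ for $j<i$ is exactly what lets some enumerated $S^j$ be a large enough uniform sample of $C^*_j$, so Claim~\ref{claim:invariant-4.2} applies in chunk $j$ and $\Delta(C^*_{j+1}) - \Delta(C^*_j) = O(\eta^{-2}\eps^{-8}\dadm(C))$; telescoping from $i=1$ and using $\eta = \Omega(\covereps^{-2}\eps^{-8})$ and $\dadm(C)\ge|C|^2/2$ gives $\Delta(C^*_i)\le\Delta(C) + \eta\cdot O(\eta^{-2}\eps^{-8}\dadm(C))\le-\tfrac{\covereps^2}{2}\dadm(C)\le-\tfrac{\covereps^2}{4}|C|^2$, closing the induction.

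\textbf{Main obstacle.} The hard part is this upper bound on $\Delta(C^*_i)$ for $i\ge 2$: unlike $C^*_1 = C^*$, the set $C^*_i$ is forced to contain $\hatt_i$, which may have been corrupted by estimation errors in earlier chunks, so it cannot be compared to $C$ directly. This is precisely why Claim~\ref{claim:lowerboundofsampleset} and Claim~\ref{claim:invariant-4.2} must be established jointly by induction on the chunk index rather than as standalone facts; the lower bound $\Delta(C^*_i) > -\tfrac12|C^*_i|^2$ from the absence of good singletons, and the size bounds on $|K(r)|$, $|C|$ and $|\Ncand(r)|$ coming from the $\eps$-similar preclustering guarantees, are the routine ingredients.
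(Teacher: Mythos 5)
Your proof is correct and takes essentially the same approach as the paper: the non-singleton-atom case via $|C^*_i|\ge|K(r)|=\Omega(\eps^3|\Ncand(r)|)$, and the singleton case by playing the invariant $\Delta(C^*_i)\le-\Omega(\covereps^2)\,\dadm(C)$ (from Claim~\ref{claim:invariant-4.2}/Lemma~\ref{lem:goodratio2}) against a count of edges internal to $C^*_i$ versus $|C^*_i|^2$, only stated in contrapositive form (no good singleton forces $|C^*_i|$ large, whereas the paper assumes $|C^*_i|$ small and exhibits a vertex with $d(v)\le w(v)$). The interleaved induction you make explicit for $i\ge 2$ is the same dependence the paper uses implicitly by invoking the bound $\Delta(C^*_i)\le-\frac{\covereps^2}{2}\dadm(C)$ at the start of its proof.
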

\begin{proof}
Note that for any $i \in [\eta]$,  we have,
\begin{align*}
    \Delta(C^*_i) \leq - \frac{\covereps^2}{2} \dadm(C),
\end{align*}
where $C$ is the \epslarge cluster such that $K(r) \subseteq C \subseteq \Ncand(r)$. By Lemma \ref{lem:sizeofcandidateset}, $|\Ncand(r)| = O(\epsilon^{-3} d(r) )$.
If $r$ is in a non-singleton atom $K(r) \subseteq C^\star_i$,
then $|C^\star_i| \geq |K(r)| \geq (1 - O(\epsilon)) d(r) \geq \Omega( \epsilon^3 |\Ncand(r)|)$. $|K(r)| \geq (1 - \epsilon) d(r)$ is due to preclustering, every non-singleton atom has at most $O(\eps)$ fraction $+$neighbors outside of $K(r)$. 

Otherwise, all vertices in $\Ncand(r)$ are in singleton-atoms. Assume that $|C^\star_i| \leq \covereps^2 \epsilon^8 |\Ncand(r)| = O( \covereps^2 \epsilon^5 d(r))$. In this case,
\[    - \frac{\covereps^2}{2} \dadm(C) \geq \Delta(C^\star_i) = \cost(C^\star_i) - \sum_{v \in C^\star_i} w(v) \geq\frac{1}{2} \sum_{v\in C^\star_i} d(v) - |C^\star_i|^2 - \sum_{v \in C^\star_i} w(v). \]

The last inequality uses $\cost(C') + |C'|^2 \geq \frac{1}{2} \sum_{v \in C'} d(v)$, which holds for any cluster $C'$.  (The right side counts the \pedges incident on $C'$. Either such an edge leaves $C'$ and contributes $1/2$ to the $\cost(C')$ or it is inside $C'$. There are at most $|C'|^2/2$ edges inside $C'$.)
Since $C \subseteq \Ncand(r)$, $C$ only contains vertices in singleton atoms. Since $C$ is $\epsilon$-large, all edges in $C$ are admissible.  Thus, $\dadm(C) \geq \frac{1}{2} |C|^2 \geq \frac{\epsilon^2}{2} d^2(r)$. Hence, $|C^\star_i|^2 - \frac{\covereps^2}{2} \dadm(C) \leq \epsilon^{10}\covereps^6 d^2(r) - \frac{\epsilon^2 \covereps^2}{4} d^2(r) \leq 0$.
In particular, 
\[
    \frac{1}{2} \sum_{v\in C^\star_i} d(v) - \sum_{v \in C^\star_i} w(v) \leq |C^\star_i|^2 - \frac{\covereps^2}{2} \dadm(C) \leq 0. 
\]
There has to exist a vertex $v \in C^\star_i$ such that $\frac{1}{2} d(v) - w(v) \leq 0$.
\end{proof}

Now, we are able to show the main lemma~\ref{lem:goodratio} of this section.

\begin{proof}[Proof of Lemma~\ref{lem:goodratio}]
Lemma~\ref{lem:goodratio2} outputs \( \hatt \) such that \( \cost(\hatt) \leq w(\hatt) \). However, we still need to satisfy the input conditions for Lemma~\ref{lem:goodratio2}. 

This is provided by Claim~\ref{claim:lowerboundofsampleset}, noting that each \( |C^*_i| \) has size \( \Omega(\covereps^2 \epsilon^8 |\Ncand(r)|) \). In Algorithm~\ref{alg:generateclusterBySampling}, we uniformly sample \( \Theta(\eta^4 \covereps^{-2} \epsilon^{-8}) \) vertices from \( \Ncand(r) \). The expected number of nodes we hit in \( C^*_i \) is \( \Theta(\eta^4) \). With probability at least \( 1 - \exp(-\eta) \), we obtain \( \eta_0 = \eta^3 \) sampled nodes, allowing the algorithm to enumerate all subsets of \( A_i \); at least one run will contain all sampled nodes from \( C^*_i \). By a union bound, with probability at least \( 1 - \eta \exp(-\eta) \), a call to~\genclus~will satisfy Lemma~\ref{lem:goodratio2}. Once we make the correct call to~\genclus, with probability at least \( 1 - 2\eta \exp(-2\eta) \), ~\genclus~outputs a correct answer. The high-probability guarantee comes from the fact that Algorithm \ref{alg:generateclusterBySampling}
repeats the whole process \( \log n \) times.

To argue about the runtime for finding a small ratio cluster, 
we use the statements in Theorem \ref{thm:preclustering-proc} about deciding admissibility. 
We will compute $\Ncand(r)$ as follows. We can find $\Nadm(r)$ in time $\widetilde O (d(r))$. Then we can iterate through all vertices in $\Nadm(r)$ and check if they are in $\Ncand(r)$. This takes time $\widetilde O(d^2(r))$.
Since $\eta$ and $\epsilon$ are constants, the runtime of GenerateClusterBySampling, Algorithm~\ref{alg:generateclusterBySampling}, is $\log(n)$ times the runtime of GenerateCluster, Algorithm~\ref{alg:GenerateCluster}. Here, we compute $\rmEstCost(S, t, v)$ for all $v \in \Ncand$ for a constant number of constant sized sets $S$. 
This takes at most $\tilde O(d(r))$ since we only need to iterate over the neighbors of vertices in $S$. We can compute $\clcost(T)$ in time $O(|T| \cdot d(r)) = O(d^2(r))$. Overall, we spend at most $\widetilde O(d^2(r))$. 
\end{proof}

\subsection{Bounding $\Delta(C^*_{i + 1}) - \Delta(C^*_i)$}

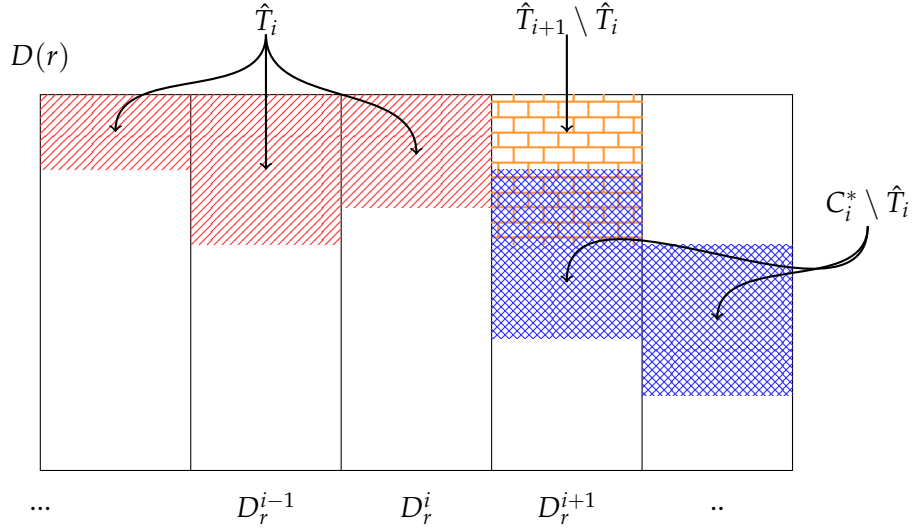
\begin{figure}[t]
    \centering
    \begin{tikzpicture}
        \def\n{5}
        \def\rectwidth{10}
        \def\rectheight{5}
        
        \draw[fill=white!10] (0,0) rectangle (\rectwidth,\rectheight);

        \foreach \i in {1,...,\n} {
            \draw (\i*\rectwidth/\n,0) -- (\i*\rectwidth/\n,\rectheight);
        }

        \node at (0, -0.5) {$...$};
        \node at (1*\rectwidth/\n + 0.5*\rectwidth/\n, -0.5) {$D^{i-1}_r$};
        \node at (2*\rectwidth/\n + 0.5*\rectwidth/\n, -0.5) {$D^i_r$};
        \node at (3*\rectwidth/\n + 0.5*\rectwidth/\n, -0.5) {$D^{i+1}_r$};
        \node at (4*\rectwidth/\n + 0.5*\rectwidth/\n, -0.5) {$..$};

        \fill[pattern=north east lines,  pattern color= red!80] (0*\rectwidth/\n, 5*\rectheight/\n) rectangle (1*\rectwidth/\n, 4*\rectheight/\n); 
        \fill[pattern=north east lines,  pattern color= red!80] (1*\rectwidth/\n, 5*\rectheight/\n) rectangle (2*\rectwidth/\n, 3*\rectheight/\n); 
        \fill[pattern=north east lines,  pattern color= red!80] (2*\rectwidth/\n, 5*\rectheight/\n) rectangle (3*\rectwidth/\n, 3.5*\rectheight/\n); 

        \fill[pattern=bricks,  pattern color= orange!80] (3*\rectwidth/\n, 5*\rectheight/\n) rectangle (4*\rectwidth/\n, 3*\rectheight/\n); 

        \fill[pattern=crosshatch,  pattern color= blue!80] (3*\rectwidth/\n, 4*\rectheight/\n) rectangle (4*\rectwidth/\n, 1.75*\rectheight/\n); 

        \fill[pattern=crosshatch,  pattern color= blue!80] (4*\rectwidth/\n, 3*\rectheight/\n) rectangle (5*\rectwidth/\n, 1*\rectheight/\n); 

        \node at (1.5*\rectwidth/\n, \rectheight + 1) {$\hatt_i$};
        \draw[->, thick, rounded corners=1pt] (1.5*\rectwidth/\n, \rectheight + 0.8) 
            to[out=-90, in=90] (0.5*\rectwidth/\n, 4.5*\rectheight/\n);
        \draw[->, thick, rounded corners=1pt] (1.5*\rectwidth/\n, \rectheight + 0.8) 
            to[out=-90, in=90] (1.5*\rectwidth/\n, 4*\rectheight/\n);
        \draw[->, thick, rounded corners=1pt] (1.5*\rectwidth/\n, \rectheight + 0.8) 
            to[out=-90, in=90] (2.5*\rectwidth/\n, 4.2*\rectheight/\n);

        \node at (3.5*\rectwidth/\n, \rectheight + 1) {$\hatt_{i+1} \setminus \hatt_{i}$};
        \draw[->, thick, rounded corners=1pt] (3.5*\rectwidth/\n, \rectheight + 0.8) 
            to[out=-90, in=90] (3.5*\rectwidth/\n, 4.5*\rectheight/\n);

        \node at (5.5*\rectwidth/\n, 3.5 \rectheight ) {$C^{*}_i \setminus \hatt_{i}$};
        \draw[->, thick, rounded corners=1pt] (5.5*\rectwidth/\n, 3.2 \rectheight) 
            to[out=-90, in=90] (3.5*\rectwidth/\n, 2.5*\rectheight/\n);
        \draw[->, thick, rounded corners=1pt] (5.5*\rectwidth/\n, 3.2 \rectheight) 
            to[out=-90, in=90] (4.5*\rectwidth/\n, 2*\rectheight/\n);

        \node at (0, \rectheight + 0.5) {$D(r)$};
    \end{tikzpicture}
    \caption{Illustration of the sets \( \hatt_i \), \( C^*_i \), and \( Q_i \). The rectangle represents \( D(r) \), divided into \(\eta\) parts. The red region denotes \( \hatt_i \), containing all vertices already added to \( \hatt \). The set \( C^*_i \) includes both the red and blue regions. In \( D^{i+1}_r \), the algorithm attempts to include as many vertices as possible in \( C^*_i \); the yellow region represents the newly added vertices in \( \hatt \). Claim~\ref{claim:invariant-4.2} states that the yellow and blue regions have significant overlap.}

\end{figure}

In this section, we will prove Claim \ref{claim:invariant-4.2}.



\begin{proof}[Proof of Claim~\ref{claim:invariant-4.2}]
Let 
\[ Q_{i} := \hatt_{i+1} \cup \left( C_i^\ast \cap \bigcup_{j=i+1}^\eta D_r^j\right).\]
Recall that 
\[ C^*_i = \hatt_{i} \cup \mathrm{argmin} \left\{ \clcost(\hatt_{i} \cup B) - w(\hatt_{i} \cup B) ~\Big|~B \subseteq \bigcup_{j = i}^{\eta}D_r^i\right\}.\]
By the optimality of $C^*_{i + 1}$, we know that
$\Delta(C^*_{i+1}) \leq \Delta(Q_i)$.
To bound $\Delta( C^*_{i+1}) - \Delta( C^*_i)$, it is sufficient to prove that 
\[\Delta( Q_{i}) - \Delta( C^*_i) = O( \eta^{-2} \eps^{-8} \dadm(C) ).\]
Note that 
\begin{align*}
 \Delta(Q_{i}) - \Delta( C^*_i) &= 
 \clcost(Q_i) - w(Q_i) - \clcost(C^*_i) + w(C^*_i) \\
 &= \clcost(Q_i) - \clcost(C^*_i) - w(Q_i \setminus C^*_i) + w(C^*_i \setminus Q_i) 
\end{align*}

Intuitively, removing one vertex from $Q_i$ will incur $\marginal(Q_i, v)$ to $\clcost(Q_i) - \clcost(C^*_i)$, so we can use $\marginal$ value to bound $\clcost(Q_i) - \clcost(C^*_i)$. More precisely, consider the process that we change $Q_i$ to $C^*_i$. we first remove nodes from $Q_i$ one by one, at the end, we get the set $Q_i \cap C^*_i$, then we try to add nodes to this set and make the final set $C^*_i$, $\clcost(Q_i) - \clcost(C^*_i)$ is bounded by the marginal in each step. 
Let's consider some moment in this process, we first consider the removal process. assume that at step $j$, our set is $Q_{i,j}$, then we choose an arbitrary element from $v \in Q_{i, j} \setminus C^*_i$, and remove $v$ from $Q_{i, j}$, the new set is $Q_{i, j + 1}$. At the beginning, we have $Q_{i, 1} = Q_i$. By Claim \ref{claim:marginalchangesinsmallchunk}, we know that 
\begin{align*}
    \marginal(Q_{i, j}, v) - \marginal(C^*_i, v)\leq 2|C^*_i \oplus Q_{i, j}| \leq 2|C^*_i \oplus Q_i|
\end{align*}
so the cost difference to change $Q_{i, j}$ to $Q_{i, j+1}$ is at most
\begin{align*}
    \clcost(Q_{i, j}) - \clcost(Q_{i, j+1}) = \marginal(Q_{i, j}, v) \leq \marginal(C^*_i, v) + 2|Q_{i} \oplus C^*_i|.
\end{align*}

Similarly, in the process of changing $C^*_i \cup Q_i$ to $C^*_i$, the cost change in each step is bounded by $\marginal(C^*_i, v) - 2|Q_{i} \oplus C^*_i|$. So, in order to change $Q_i$ to $C^*_i$, the total marginal changes 
\begin{align*}
\clcost(Q_i) - \clcost(C^*_i) &\leq  \sum_{v \in Q_{i} \setminus C^*_i}( \marginal(C^*_i, v) + 2|Q_{i} \oplus C^*_i|) - \sum_{v \in C^*_i \setminus Q_{i}}( \marginal(C^*_i, v) - 2|C^*_i \oplus Q_{i} |) \\ 
&\leq  \sum_{v \in Q_{i} \setminus C^*_i} \marginal(C^*_i, v)- \sum_{v \in C^*_i \setminus Q_{i}}\marginal(C^*_i, v) + 2 |Q_{i} \oplus C^*_i|^2  \\
\end{align*}

$\Delta(Q_{i}) - \Delta( C^*_i) $ is bounded by
\begin{align*}
 \Delta(Q_{i}) - \Delta( C^*_i) 
 &= \clcost(Q_i) - \clcost(C^*_i) - w(Q_i \setminus C^*_i) + w(C^*_i \setminus Q_i) \\
 &\leq \sum_{v \in Q_{i} \setminus C^*_i} \left(\marginal(C^*_i, v) - w(v) \right)- \sum_{v \in C^*_i \setminus Q_{i}} \left( \marginal(C^*_i, v) - w(v) \right)+ 2|Q_{i} \oplus C^*_i|^2.
\end{align*}
Now we will bound these three parts one by one. 

\paragraph{Bound for $|Q_{i} \oplus C^*_i|^2$.} We start by bounding $|Q_{i} \oplus C^*_i|^2$. The only difference between $C^*_i$ and $Q_i$ is the vertices in $D^{i+1}_r$, the algorithm replace the elements in $C^{*}_i \cap D^{i+1}_r$ with $\hatt_{i+1} \setminus \hatt_i$. So
\begin{align*}
    |Q_{i} \oplus C^*_i|^2 \leq |D^{i+1}_r| ^ 2 \leq \eta^{-2} |D(r)|^2.
\end{align*}

\paragraph{Bound for $\sum_{v \in Q_{i} \setminus C^*_i} \left(\marginal(C^*_i, v) - w(v) \right)$.}
By definition of $C^*_i$, for any 
vertex $v \in D^{i+1}_r \setminus C^*_i$, adding $v$ to $C^*_i$ does not yield 
a strictly better cluster. In particular, for any $v \in D^{i+1}_r \setminus C^*_i$, we have $\marginal(C^*_i,v) - w(v) \geq 0$. 
The difficulty of the proof is that $\marginal(C^*_i,v) - w(v)$ could be very large.  Let
\begin{align*}
    \ell(v) = \frac{\marginal(C^*_i,v) - w(v)}{\eta^{-1}|\Ncand(r)|}
\end{align*}
be the ratio of contribution, so if we add $v$ to $Q_i$, $v$ contributes $\ell(v) \eta^{-1}|\Ncand(r)|$ to $\Delta(Q_i) - \Delta(C^*_i)$.

We need to bound the estimation error between \( \rmEstCost \) and \( \marginal \). We provide the following lemma regarding this bound, and its proof is given at the end of this section.

\begin{lemma}
\label{lem:concentration-weighted}
Let $\eta_0 = \eta^3$ and $\ell \geq 1$. Consider a vertex $v$ and
an arbitrary set of vertices $T$ of size $t = \Omega(\eta)$. Let $S$ be a set consisting of $\eta_0$ random samples (with repetition) from $T$.
Let $\tilt \in [t,(1+\frac{1}{\eta})t]$
Then, with probability at least  $1 - 2\exp\left(-2\ell^2\eta\right)$. we have the following
inequality holds:
\begin{equation}
    \label{eq:coststay-concentration}
    \marginal(T,v) - \frac{(4+\ell)t}{\eta} \leq \rmEstCost(S,\tilt,v) \leq \marginal(T,v) + \frac{(4+\ell)t}{\eta}
\end{equation}
\end{lemma}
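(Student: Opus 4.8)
The plan is to observe that $\rmEstCost(S,\tilt,v)$ and $\marginal(T,v)$ share the additive term $\tfrac{\dpos(v)-1}{2}$, which therefore cancels, so the quantity to control is
\[
\rmEstCost(S,\tilt,v)-\marginal(T,v)=(\tilt-t)-\mathbb{1}(v\in T)-2\Bigl(\tfrac{d^+(v,S)}{|S|}\,\tilt-d^+(v,T)\Bigr),
\]
where $t=|T|$. Writing $\mu:=d^+(v,T)/t$, the key point is that $d^+(v,S)=\sum_{j=1}^{|S|}Y_j$ with $Y_1,\dots,Y_{|S|}$ i.i.d.\ $\mathrm{Bernoulli}(\mu)$: each element of $S$ is a uniform sample of $T$ and is a $+$neighbour of $v$ with probability exactly $\mu$. (This uses the self-loop convention consistently: if $v\in T$ then $vv\in E^+$ contributes both to $d^+(v,T)$ and, whenever $v$ is drawn into $S$, to $d^+(v,S)$, exactly as in the derivation of $\marginal$.) Hence $\EX[d^+(v,S)/|S|]=\mu$ and $d^+(v,S)/|S|$ is an empirical mean of $\eta_0=\Theta(\eta^3)$ bounded i.i.d.\ variables.

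Second, I would apply Hoeffding's inequality: for a deviation threshold $\theta:=\Theta(\ell/\eta)$,
\[
\Pr\!\left[\left|\tfrac{d^+(v,S)}{|S|}-\mu\right|\ge \theta\right]\le 2\exp\!\bigl(-2\theta^2\eta_0\bigr)\le 2\exp\!\bigl(-2\ell^2\eta\bigr),
\]
the last inequality being exactly the coupling $\theta^2\eta_0=\Theta(\ell^2\eta)$ forced by $\eta_0=\Theta(\eta^3)$; the hidden constants in $\eta_0$ and $\theta$ are fixed once and for all so that both this probability and the additive error below come out with the stated constants. Condition on the complementary good event.

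Third, on the good event I would bound the three error sources in the displayed identity and add them. We have $|\tilt-t|\le t/\eta$ from $\tilt\in[t,(1+1/\eta)t]$, and $\mathbb{1}(v\in T)\le 1=O(t/\eta)$ since $t=\Omega(\eta)$. For the last term, split
\[
\tfrac{d^+(v,S)}{|S|}\tilt-d^+(v,T)=\tilt\Bigl(\tfrac{d^+(v,S)}{|S|}-\mu\Bigr)+d^+(v,T)\Bigl(\tfrac{\tilt}{t}-1\Bigr);
\]
the first summand is at most $\tilt\,\theta\le 2t\theta=O(\ell t/\eta)$, and the second is at most $d^+(v,T)/\eta\le t/\eta$ since $d^+(v,T)\le|T|=t$. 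Summing, $|\rmEstCost(S,\tilt,v)-\marginal(T,v)|=O\bigl((\ell+1)t/\eta\bigr)$, and with the constants fixed as above this is at most $(4+\ell)t/\eta$, i.e.\ \eqref{eq:coststay-concentration}; the hypothesis $\ell\ge 1$ is precisely what lets the $O(1)$ and lower-order $O(t/\eta)$ slack be absorbed.

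The only delicate point, and hence the expected main obstacle, is the joint calibration in the second and third paragraphs: the sample size $\eta_0=\Theta(\eta^3)$, the Hoeffding deviation $\Theta(\ell/\eta)$, and the target failure probability $2\exp(-2\ell^2\eta)$ are all linked through $\eta_0\cdot(\ell/\eta)^2=\ell^2\eta$, while the same deviation feeds linearly into the additive error, so one must choose the hidden constants a single time and verify both estimates at once. Everything else — identifying the Bernoulli sum, the split of $\tfrac{d^+(v,S)}{|S|}\tilt-d^+(v,T)$, and the final triangle inequality — is routine.
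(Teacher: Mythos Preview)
Your proposal is correct and follows essentially the same approach as the paper: apply Hoeffding's inequality to the empirical mean $\tfrac{d^+(v,S)}{|S|}$ (equivalently, to $\sum_i X_i$ with $X_i=\tfrac{t}{\eta_0}\mathbb{1}(u_iv\in E^+)$, as the paper writes it), then bound $|\rmEstCost-\marginal|$ by a triangle-inequality decomposition into the $|\tilt-t|$ term, the indicator, and the sampling error. The only cosmetic difference is the split of the cross term---you write $\tilt\bigl(\tfrac{d^+(v,S)}{|S|}-\mu\bigr)+d^+(v,T)\bigl(\tfrac{\tilt}{t}-1\bigr)$ whereas the paper writes $\tfrac{d^+(v,S)}{|S|}(\tilt-t)+\bigl(\tfrac{t\,d^+(v,S)}{|S|}-d^+(v,T)\bigr)$; the paper's split is marginally tighter (using $\tfrac{d^+(v,S)}{|S|}\le 1$ instead of $\tilt\le 2t$) and gives $(4+2\ell)t/\eta$, but neither version actually reaches the $(4+\ell)t/\eta$ stated in the lemma, so your handling of the constants is no looser than the paper's own proof.
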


According to Lemma~\ref{lem:concentration-weighted}, with probability at least  
$1 - 2\exp\left(-2(\frac{\ell(v)}{2} + 1)^2\eta\right)$, we have
\[
\rmEstCost(S^i, \tilt_i, v) - \marginal(C^*,v) \geq -(4 + 2(\frac{\ell(v)}{2} + 1))\eta^{-1} |C^*_i| \geq -(6+\ell(v))\eta^{-1} |\Ncand(r)|,
\]
and 
\[
\rmEstCost(S^i, \tilt_i, v) + 6\eta^{-1} |\Ncand(r)| \geq \marginal(C^*,v) - \ell \eta^{-1} |\Ncand(r)| \ge w(v),
\]
and Algorithm~\ref{alg:GenerateCluster} will not add $v$ to $\hatt$.
Therefore $v \in Q_i$ with probability at most $2\exp\left(-2(\frac{\ell(v)}{2} + 1)^2\eta\right)$, and the expected contribution of $v$ to $\sum_{v \in Q_{i} \setminus C^*_i} \left(\marginal(C^*_i, v) - w(v) \right)$ is at most 
\begin{align*}
&2\exp\left(-2(\frac{\ell(v)}{2} + 1)^2\eta\right) \cdot \left(\marginal(C^*_i, v) - w(v) \right) \\ 
&\leq 2\ell(v) \exp\left(-2(\frac{\ell(v)}{2} + 1)^2\eta\right) \eta^{-1} |\Ncand(r)| \\
&\leq 2\exp(-2\eta) \eta^{-1} |\Ncand(r)|
\end{align*}

and 

\begin{align*}
    \EX[\sum_{v \in Q_{i} \setminus C^*_i} \left(\marginal(C^\star_i, v) - w(v) \right) ]&\leq 2 \exp(-2\eta) \eta^{-1} |\Ncand(r)| \eta^{-1}|D(r)|.
\end{align*}
By Markov inequality, with probability at most $\exp(-2\eta)$, we have
\begin{align*}
    \sum_{v \in Q_{i} \setminus C^*_i} \left(\marginal(C^\star_i, v) - w(v) \right) \geq 2 \eta^{-2} |\Ncand(r)||D(r)|.
\end{align*}

\paragraph{Bound for $\sum_{v \in C^*_i \setminus Q_i} \left(\marginal(C^\star_{i}, v) - w(v) \right)$.}
If $v\in C^*_i \setminus Q_i$, we know that $\marginal(C^*_i,v) - w(v) \geq 0$.
We will distinguish two cases.
\begin{enumerate}
    \item $\marginal(C^\star_i, v) - w(v) \leq  -12\eta^{-1} |\Ncand(r)|$ \newline
    Again, we use the same strategy of bounding $\sum_{v \in Q_{i} \setminus C^*_i} \left(\marginal(C^*_i, v) - w(v) \right)$. 
    Let
\begin{align*}
    \ell(v) = \frac{w(v) - \marginal(C^*_i,v)}{\eta^{-1}|\Ncand(r)|}
\end{align*}
be the ratio of contribution, so if we decide not to add $v$ to $Q_i$, $v$ contributes $\ell(v) \eta^{-1}|\Ncand(r)|$ to $\Delta(Q_i) - \Delta(C^*_i)$. Note that $\ell(v) \geq 12$.
According to Lemma~\ref{lem:concentration-weighted}, with probability at least  
$1 - 2\exp\left(-2(\frac{\ell(v)}{2} - 5)^2\eta\right)$, we have
\[
\rmEstCost(S^i, \tilt_i, v) - \marginal(C^*,v) \leq (4 + 2(\frac{\ell(v)}{2} - 5))\eta^{-1} |C^*_i| \leq (\ell(v) - 6)\eta^{-1} |\Ncand(r)|.
\]
and 

\[
\rmEstCost(S^i, \tilt_i, v) + 6\eta^{-1} |\Ncand(r)| \leq \marginal(C^*,v) + \ell \eta^{-1} |\Ncand(r)| \leq w(v),
\]
and Algorithm~\ref{alg:GenerateCluster} will add $v$ to $\hatt$.
Therefore $v \not\in Q_i$ with probability at most $2\exp\left(-2(\frac{\ell(v)}{2} - 5)^2\eta\right)$, and the expected contribution of $v$ to $\sum_{v \in Q_{i} \setminus C^*_i} \left(w(v) - \marginal(C^*_i, v) \right)$ is at most 
\begin{align*}
&2\exp\left(-2(\frac{\ell(v)}{2} - 5)^2\eta\right) \cdot \left(w(v) - \marginal(C^*_i, v) \right) \\ 
&\leq 2\ell(v) \exp\left(-2(\frac{\ell(v)}{2} - 5)^2\eta\right) \eta^{-1} |\Ncand(r)| \\
&\leq 24\exp(-2\eta) \eta^{-1} |\Ncand(r)|
\end{align*}

and 

\begin{align*}
    \EX[\sum_{v \in Q_{i} \setminus C^*_i} \left(w(v) - \marginal(C^*_i, v) \right)]&\leq 24 \exp(-2\eta) \eta^{-1} |\Ncand(r)| \eta^{-1}|D(r)|.
\end{align*}
By Markov inequality, with probability at most $\exp(-2\eta)$, we have
\begin{align*}
    \sum_{v \in Q_{i} \setminus C^*_i} \left(w(v) - \marginal(C^*_i, v) \right) \geq 24 \eta^{-2} |\Ncand(r)| \eta^{-1}|D(r)|.
\end{align*}
With probability at least $1 -\exp(-2\eta) $, we have 
\begin{align*}
    \sum_{v \in Q_{i} \setminus C^*_i} \left(\marginal(C^*_i, v) - w(v) \right) \geq - 24 \eta^{-2} |\Ncand(r)| \eta^{-1}|D(r)|
\end{align*}

    \item $ \marginal(C^\star_i, v) - w(v) \ge -12\eta^{-1} |\Ncand(r)|.$  Recall that $C^*_i \setminus Q_i \subseteq D^{i+1}_r$. We have,
    \[
        \sum_{v \in Q_{i} \setminus C^*_i} \left(\marginal(Q_{i}, v) - w(v) \right) \geq -12\eta^{-1} |\Ncand(r)| \cdot |D^{i+1}_r| = -12\eta^{-1} |\Ncand(r)| \cdot \eta^{-1}|D(r)|.
    \]
\end{enumerate}
Now, we are ready to give the final bound on $\Delta(Q_{i}) - \Delta( C^*_i)$.  Using the bounds from the three cases above and we have, with probability at least $1 - 2\exp(-2\eta) $, we have 
\begin{align*}
     \Delta(Q_{i}) - \Delta( C^*_i) 
     \leq&  \sum_{v \in Q_{i} \setminus C^*_i} \left(\marginal(Q_{i}, v) - w(v) \right)- \sum_{v \in C^*_i \setminus Q_{i}} \left( \marginal(C^*_i, v) - w(v) \right)+ |Q_{i} \oplus C^*_i|^2 \\
     \leq & 2 \eta^{-2} |\Ncand| \cdot |D(r)| + 24 \eta^{-2} |\Ncand| \cdot |D(r)| + 12 \eta^{-2} |\Ncand| \cdot |D(r)| + 2 \eta^{-2} |D(r)|^2 \\
     =& O(\eta^{-2} |\Ncand|\cdot |D(r)|)
\end{align*}
We still have to bound $|\Ncand|\cdot |D(r)|$, which actually is the number of two hop candidates. Fortunately, after preclustering, the number of two hop candidates $|D(v)| \cdot |\Ncand(v)|$ can be bounded by following lemma, which is also used in~\cite{cohen2024combinatorial}.
\begin{lemma}[Lemma 34 of~\cite{cohen2024combinatorial}]
\label{lem:boundtwohopsneighbor}
For any vertex $r$ and any \epslarge cluster $C$ such that $K(r) \subseteq C \subseteq \Ncand(r)$, we have  
\begin{align*}
    |D(r)| \cdot |\Ncand(r)| = O( \eps^{-8} \dadm(C)). 
\end{align*}
\end{lemma}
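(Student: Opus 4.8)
The proof bounds $\dadm(C)$ from below by counting admissible pairs inside $C$, and bounds $|\Ncand(r)|$ (hence $|D(r)|\le|\Ncand(r)|$) from above in terms of $|K(r)|$ or $|C|$, using only the $\eps$-similar and $\eps$-large structure of Definition~\ref{def:epsgoodpreclustering}. It is convenient to split on whether $r$ lies in a non-singleton atom or is a singleton atom; the full $\eps^{-8}$ loss appears only in the latter case. If $\dadm(C)=0$ the statement is trivial: then $r\in C$ has no admissible neighbor, but $D(r)\subseteq\Nadm(r)$ by the definition of $\Ncand$, so $D(r)=\emptyset$. Hence assume $\dadm(C)>0$.

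\textbf{Non-singleton $r$.} Write $K=K(r)$, so $|K|\ge 2$ and $K\subseteq C$. By the definition of $\Ncand$, every $w\in D(r)=\Ncand(r)\setminus K$ lies in $\bigcap_{u\in K}\Nadm(u)$, so $(u,w)\in\eadm$ for all $u\in K$; hence $\dadm(u)\ge|D(r)|$ for each $u\in K$, and therefore
\[
  \dadm(C)\ \ge\ \dadm(K)\ =\ \sum_{u\in K}\dadm(u)\ \ge\ |K|\cdot|D(r)|.
\]
On the other hand, the atom property of Definition~\ref{def:epsgoodpreclustering} gives that $r$ has at most $O(\eps|K|)$ neighbors outside $K$, so $d(r)=O(|K|)$, and then $|\Ncand(r)|\le|K|+\dadm(r)\le|K|+2\eps^{-3}d(r)=O(\eps^{-3}|K|)$. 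Combining the two displays, $|K|=\Omega(\eps^{3}|\Ncand(r)|)$ and $\dadm(C)\ge|K|\cdot|D(r)|=\Omega(\eps^{3}|\Ncand(r)|\cdot|D(r)|)$, which is even stronger than claimed.

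\textbf{Singleton $r$.} Here $K(r)=\{r\}$ and every vertex of $\Ncand(r)$ (hence of $C$) is a singleton atom, so every pair of distinct vertices of $C$ is non-atomic; since $C$ is $\eps$-large it contains no non-admissible edge, so every such pair is admissible and $\dadm(C)\ge|C|(|C|-1)\ge\frac12|C|^2$ as long as $|C|\ge2$. Moreover $|\Ncand(r)|\le 1+\dadm(r)=O(\eps^{-3}d(r))$ by Definition~\ref{def:epsgoodpreclustering}, while $r\in C$ together with the $\eps$-large property gives $|C|\ge\eps\,d(r)$, so $|C|=\Omega(\eps^{4}|\Ncand(r)|)$. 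Hence
\[
  \dadm(C)\ \ge\ \frac12|C|^2\ =\ \Omega\!\big(\eps^{8}|\Ncand(r)|^{2}\big)\ \ge\ \Omega\!\big(\eps^{8}|\Ncand(r)|\cdot|D(r)|\big),
\]
using $|D(r)|\le|\Ncand(r)|$, which is exactly the claimed bound.

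The one configuration not handled above is $|C|=1$ with $r$ a singleton atom, and this is the step I expect to be the real obstacle: then $\dadm(C)=\dadm(r)$ can be far smaller than $|\Ncand(r)|^2$, so the inequality holds only because such a $C$ is ruled out by the way the lemma is invoked — a single vertex forming an $\eps$-large cluster of the required kind would already be a small-ratio cluster, contradicting the hypotheses (Assumption (ii) of Lemma~\ref{lem:goodratio}) under which the lemma is used, so one may assume $|C|\ge 2$. Aside from isolating this degenerate case, the argument is just the admissible-pair count combined with the size bounds from Definition~\ref{def:epsgoodpreclustering}.
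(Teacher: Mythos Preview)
Your proof is correct and captures the essential argument. The paper does not prove Lemma~\ref{lem:boundtwohopsneighbor} itself (it is cited from~\cite{cohen2024combinatorial}), but it does prove the closely related approximate variant, Lemma~\ref{lem:boundtwohopsneighbor-approx}, and your approach matches that proof: lower-bound $\dadm(C)$ by counting admissible pairs, and upper-bound $|\Ncand(r)|$ via Definition~\ref{def:epsgoodpreclustering}. The only difference is the case split. You split on whether $r$ is a singleton atom or not; the paper splits on whether $|K(r)|\ge\eps^8|\Ncand(r)|$. For the exact $\Ncand$, your split is cleaner, because the intersection definition $D(r)=\bigcap_{u\in K}\Nadm(u)$ directly yields $|K|\cdot|D(r)|\le\dadm(K)$ without any threshold parameter. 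The paper's threshold-based split is tailored to the approximate setting, where that intersection structure is replaced by the weaker guarantee of Lemma~\ref{lem:estimateN}.

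You are also right about the $|C|=1$ corner case: the lemma as stated is literally false when $C=\{r\}$ with $r$ a singleton and $\dadm(r)$ large, and the paper's own proof of the approximate version has the same gap (its Case~2 invokes $|C|\ge\eps d(r)$, which is part of the $\eps$-large definition only when $|C|>1$). Your resolution via Assumption~(ii) of Lemma~\ref{lem:goodratio} is exactly how the paper uses the lemma in context, so this is a correct reading rather than an omission on your part.
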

Using Lemma \ref{lem:boundtwohopsneighbor}, we obtain the claimed bound,
\[
    \Delta(Q_{i}) - \Delta( C^*_i) = O(\eta^{-2} |\Ncand|\cdot |D(r)|) = O(\eta^{-2}\eps^{-8} \dadm(C)). 
\]
By the union bound, all bounds hold simultaneously with probability at least $1 - 2\exp\left(-2\eta\right)$.

\end{proof}

\begin{proof}[Proof of Lemma~\ref{lem:concentration-weighted}]
Recall that $\tilt$ is the guess for size of $T$.
\begin{align*} 
    \marginal(T, v) = \frac{d^+(v) - 1}{2} + |T|  - 2d^+(v,T) + \mathbb{1}(v \in T). 
\end{align*}

and 
\[\rmEstCost(S,\tilt,v) := 
\frac{\dpos(v) - 1}{2} + \tilt  - 2\frac{\dpos(v, S)}{|S|} \tilt\] 

Consider the $i$-th sampled node $u$ in $S$, let $X_i = \frac{t}{\eta_0} \cdot \mathbb{1}(uv \in E)$ be a random variable, so $X_i \in \{0, \frac{t}{\eta_0}\}$.  
\begin{align*}
    \EX [~\sum_{i = 1}^{\eta_0} X_i~] = \frac{\eta_0}{|T|}  \cdot \frac{t}{\eta_0} \cdot \dpos(v, T) = \dpos(v, T).
\end{align*}
By  Hoeffding’s inequality, we have,
\begin{align*}
    \Pr[\left|~\sum_{i = 1}^{\eta_0}(X_i-\EX[X_i])~\right| \geq \frac{\ell t}{\eta}] \leq 2 \exp \left( -\frac{2(\ell t / \eta)^2}{\eta_0 \cdot (t / \eta_0)^2} \right) \leq 2\exp \left( -2 \ell^2 \eta \right).
\end{align*}

Now, consider the difference of $\marginal$ and $\rmEstCost$, we have 
\begin{align*}
&\quad \left|\rmEstCost(S,\tilt,v) - \marginal(T, v)~\right| \\ 
&\leq \left|\tilt - 2\frac{\dpos(v, S)}{|S|}\tilt - |T|  + 2d^+(v,T) - \mathbb{1}(v \in T)~\right| \\
&\leq \left|\tilt - |T|\right| + 2\frac{\dpos(v, S) }{|S|}|\tilt - t| + 2\left|~\sum_{i = 1}^{\eta_0}(X_i-\EX[X_i])~\right| + 1\\
&\leq \frac{t}{\eta} + \frac{2t}{\eta} + \frac{2\ell t}{\eta}  + 1\\
&\leq \frac{(4 + 2\ell) t}{\eta}.
\end{align*}
The last inequality is because $t = \Omega(\eta)$.

\end{proof}

\section{Refinements to Reach Nearly Linear Time}
\label{sec:nearlylinear}
In this section, we present an algorithm to solve~\ref{LP:clusterlp2} in nearly linear time (i.e., \(\widetilde{O}(m)\)), where \(m\) is the number of \(\text{\pedges}\). 
In a later section, we will present a sublinear-time algorithm. The main theorem regarding the nearly linear algorithm is stated as follows.

\begin{restatable}[Nearly Linear Time \ref{LP:clusterlp}]{theorem}{thmsolveclusterLPsequentialinlineartime} \label{thm:solving-cluster-LP-sequential-nearlyLinear}
Let $\eps> 0$ be a sufficiently small constant and let $\cost(\opt)$ be the cost of the optimum solution to the given \cc instance. Then there is a small $\delta = \poly(\epsilon)$ such that the following statement holds: One can output a solution $(z_S)_{S \subseteq V}$ to the cluster LP, described using a list of non-zero coordinates, with $\obj(x) \leq (1+\eps)\cost(\opt)$ in expectation such that each coordinate of $z$ is either $0$ or at least $\delta$. The running time to compute $z$ is $\widetilde{O}(2^{\poly(1/\eps )}m)$.
\end{restatable}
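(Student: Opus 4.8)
We implement the pipeline of Algorithm~\ref{alg:solveclusterlpframework}: preclustering (Theorem~\ref{thm:preclustering-proc}, $O(n\log n)$), the MWU outer loop (Algorithm~\ref{alg:mw}, which by Lemma~\ref{lem:mw} needs only $\tmwu=\log(1/\covereps)/\covereps^{4}=\poly(1/\eps)$ rounds), and the cover-to-cluster conversion (Lemma~\ref{lem:cover-to-cluster}, $O(n)$). Computing all cross-degrees $\dc(v)$, isolating the atoms with $\dc(K)=0$, and forming $G[V']$ cost $\widetilde O(m)$ by one pass over the edge list. Hence the whole running time is, up to a $\poly(1/\eps)$ factor, the cost of one invocation of the subroutine of Lemma~\ref{lemma:update-point}, i.e.\ of finding the family $\mathcal F$ of Lemma~\ref{lem:family-poly}. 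So the task reduces to re-implementing Algorithm~\ref{alg:disjointfamily-poly} in time $\widetilde O(2^{\poly(1/\eps)}m)$ while preserving all four conclusions of Lemma~\ref{lem:family-poly}; plugging this back in proves Theorem~\ref{thm:solving-cluster-LP-sequential-nearlyLinear} exactly as Theorem~\ref{thm:solving-cluster-LP-sequential} is assembled from its pieces.

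\paragraph{Subsampling the seeds.} Rather than running Lemma~\ref{lem:goodratio} from every $r\in V$, sample a seed set $U$ by putting each $v\in V$ into $U$ independently with probability $\min\bigl(1,\tfrac{c\log n}{\eps\, d(v)}\bigr)$, and only ever track clusters $C_r$, $r\in U$, of size $|C_r|=\Omega(\poly(\eps)\,d(r))$ (this holds for $\eps$-large clusters containing $r$, and by Claim~\ref{claim:lowerboundofsampleset} the search either returns such a cluster or exhibits a singleton $\{u\}$ with $\covers(\{u\})\le R\,\hatp_u$); to be safe against genuinely singleton optimal clusters, also keep every $\{v\}$, $v\in V$, as a candidate with ratio $\covers(\{v\})/\hatp_v$, computable in $O(m)$ total time. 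Since $\EX\bigl[\sum_{r\in U}d^{2}(r)\bigr]=O\bigl(\tfrac{\log n}{\eps}\sum_v d(v)\bigr)=\widetilde O(m)$ (the vertices with $d(v)=\widetilde O(1/\eps)$, sampled with probability $1$, contribute only $\widetilde O(\poly(1/\eps)\,n)\le\widetilde O(\poly(1/\eps)\,m)$ as $m\ge n$), computing $\{C_r\}_{r\in U}$ once costs $2^{\poly(1/\eps)}\widetilde O(m)$. Moreover every $\eps$-large cluster $C$ with $|C|\ge2$ meets $U$ with probability $1-n^{-\Omega(1)}$: all pairs of $C$ are atomic or admissible, so $C$ spans at most one non-singleton atom and all $+$degrees in $C$ agree up to a $\poly(1/\eps)$ factor, whence $|C|\ge\eps d(v)$ forces a hit; a union bound over the $\le n$ clusters of $\opt$ makes this simultaneous. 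With this, at every step of the {\bf while} loop some candidate attains ratio $(1+3\covereps)R$, so the proof of Lemma~\ref{lem:family-poly} is unchanged.

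\paragraph{Lazy re-computation.} Keep the candidates in a priority queue keyed by their current ratio, maintaining for each $C_r$ the values $\covers(C_r)$ and $\hatp(C_r)$ incrementally as vertices die --- each death of $v$ costs $\widetilde O(d(v))$ per tracked set containing $v$, and by degree-similarity of admissible pairs each $v$ lies in only $\widetilde O(\poly(1/\eps))$ of the sets $\Ncand(r)$ with $r\in U$, so the incremental work totals $\widetilde O(\poly(1/\eps)\,m)$. Recompute $C_r$ from scratch (Lemma~\ref{lem:goodratio}) \emph{only} once $\hatp(C_r)$ has dropped by a factor $1+\covereps$ since its last recomputation. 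Correctness is preserved because a stale $C_r$ still does not split atoms and contains no non-admissible edge (dead vertices leave atom by atom), has smaller $\covers$, and retains $\hatp$-mass $\ge 1/(1+\covereps)$ of its value at the last recomputation, so its ratio is still $(1+O(\covereps))\covers(\opt)$ --- the slack in Lemma~\ref{lem:family-poly} absorbs this, after at most shrinking by a constant multiple of $\covereps$ the target ratios fed to Lemma~\ref{lem:goodratio}. For the time bound, the MWU weights stay close to the cross-degrees: each of the $\tmwu$ updates scales $w_v$ by $e^{-\covereps^{3}m^{(t)}_v}$ with $m^{(t)}_v\in[-1,\covereps^{-1}-1]$, so $w^{(t)}_v\in[2^{-\poly(1/\eps)},2^{\poly(1/\eps)}]\cdot\dc(v)$; since the $\dc$-values inside an $\eps$-large cluster agree up to a $\poly(1/\eps)$ factor, all nonzero $\hatp_v$ inside one $C_r$ agree up to a $2^{\poly(1/\eps)}$ factor, so a $(1+\covereps)$-fold loss of $\hatp(C_r)$ forces $C_r$ to shed $\Omega(2^{-\poly(1/\eps)}|C_r|)=\Omega(2^{-\poly(1/\eps)}d(r))$ vertices, and as $C_r$ can lose at most $|\Ncand(r)|=O(\eps^{-3}d(r))$ vertices over the whole run it is recomputed at most $2^{\poly(1/\eps)}$ times. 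The total recomputation cost is thus $\sum_{r\in U}2^{\poly(1/\eps)}\widetilde O(d^{2}(r))=2^{\poly(1/\eps)}\widetilde O(m)$; multiplied by the $\poly(1/\eps)$ MWU rounds this is $\widetilde O(2^{\poly(1/\eps)}m)$, and feeding the result through Lemmas~\ref{lemma:update-point}, \ref{lem:mw} and \ref{lem:cover-to-cluster} yields $\obj(x)\le(1+\eps)\cost(\opt)$ with every nonzero coordinate at least $\delta=\poly(\eps)$.

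\paragraph{Main obstacle.} The delicate part is the amortized running-time analysis of the lazy scheme: establishing the weight-balance property inside candidate clusters (coupling the bounded spread of $\dc$ on $\eps$-large sets with the bounded multiplicative drift of the weights), bounding the number of recomputations of each $C_r$ by tying it to the at-most-$|\Ncand(r)|$ vertices it can ever lose (which is exactly what forces the size lower bound on the tracked $C_r$ and the separate treatment of singletons), and bounding the incidence $\sum_v|\{r\in U:v\in\Ncand(r)\}|$ via degree-similarity of admissible pairs; once these three estimates are in place, everything else (the sampling probabilities, the union bound over optimal clusters, threading an extra $O(\covereps)$ through Lemma~\ref{lem:family-poly}) is routine bookkeeping with a priority queue and the inverted index $\{(v,r):v\in\Ncand(r),\ r\in U\}$.
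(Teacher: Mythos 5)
Your overall architecture (seed sampling at rate $\sim \log n/d(v)$, an inverted index plus priority queue, and lazy recomputation of $C_r$ once it loses a constant fraction of its $\hat p$-mass) is the same as the paper's Algorithm~\ref{alg:disjointfamily}, but the step you yourself identify as the delicate one --- the weight-balance property used to bound the number of recomputations --- is justified by a claim that is false. You assert that ``the $\dc$-values inside an $\eps$-large cluster agree up to a $\poly(1/\eps)$ factor,'' hence all nonzero $\hat p_v$ inside one $C_r$ agree up to $2^{\poly(1/\eps)}$, hence a $(1+\covereps)$-fold drop of $\hat p(C_r)$ sheds $\Omega(2^{-\poly(1/\eps)}|C_r|)$ vertices. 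This fails exactly when $C_r$ contains a non-singleton atom $K(r)$: for $v \in K(r)$ the weight is initialized to $\dc(v)=2\cost(K(r))/|K(r)|$, which can be smaller than $d(v)$ by a factor polynomial in $n$ (an atom that is nearly a clique with few crossing/missing edges), while the singleton candidates in $D(r)$ carry weight $\approx d(r)$ each. So the weights inside $C_r$ are not comparable up to any function of $\eps$, the per-recomputation shedding bound measured against $|C_r|$ or $|\Ncand(r)|$ does not follow, and with it the $2^{\poly(1/\eps)}$ bound on recomputations per seed --- the crux of the $\widetilde O(m)$ running time --- is unsupported. (Degree-similarity of admissible pairs only controls the singleton part of $\Ncand(r)$; it says nothing about the atom vertices, whose weight is an average cost, not a degree.)

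The paper closes exactly this hole with machinery you do not have. First, Algorithm~\ref{alg:disjointfamily} prunes at the start of each MWU round: atoms whose ratio is already $\le(1+6\covereps)R$ are added to $\mathcal F$ outright, and vertices with $p_v \le \covereps\,\dc(v)/(4\dc(V))$ are zeroed; this yields the two-sided invariant $p^{(t)}_v=\Theta_{\covereps}(\dc(v)/\dc(V))$ across all MWU rounds (Lemma~\ref{lemma:pvaluehasarange}). Second, the comparison is not made between atom-vertex weights and degrees but between atom-vertex weights and $|D(u)|$: using $\dc(K)=\Omega(\eps^{3}\dadm(K))$ (Lemma~\ref{lem:dc}) and $|D(u)|\le \dadm(K(u))/|K(u)|$ (forced by the intersection definition of $\Ncand$), every candidate vertex has $\hat p_v=\Omega(\covereps\eps^{4}|D(u)|/\dc(V))$ while every singleton has $\hat p_v=O(\eps^{-1}d(u)/\dc(V))$ (Lemma~\ref{lemma:pvaluebound}); combined with $|C_u|\ge \covereps^{3}d(u)$ (Claim~\ref{clm:size}, which is also why good-ratio singletons must be handled up front), a $\covereps$-fraction mass loss that does not kill the atom must kill $\Omega(\covereps^{5}\eps^{5}|D(u)|)$ vertices \emph{of $D(u)$}, and the atom itself dies wholesale at most once, giving $O(\covereps^{-5}\eps^{-5})$ recomputations (Lemma~\ref{lem:cudoesnotupdateoften}). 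Your accounting against $|\Ncand(r)|$ with a uniform weight spread cannot reproduce this; you would need to import the pruning steps, the $\dc$-versus-$\dadm$ relation, and the $D(u)$-based charging. A secondary, more easily repaired omission: your hitting argument only shows $U$ meets each optimal cluster $C$, but Lemma~\ref{lem:goodratio} needs a seed $r$ with $K(r)\subseteq C\subseteq\Ncand(r)$, so when $C$ contains a non-singleton atom you must hit that atom itself (as in Claim~\ref{clm:hitopt}); with your sampling rate this follows by the same calculation, but it has to be said.
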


We need to address two problems when aiming for nearly linear time. First, we cannot afford to compute all the clusters $C_r$, one for each vertex $r \in V$. Instead, we will sample a subset of vertices $U \subseteq V$ such that $U$ is not too large and we can compute a cluster $C_r$ for each vertex $r \in U$. In particular, we will include each vertex $v$ in $U$ with probability $\log(n)/d(v)$.
Since we assume that the optimal solution is \epslarge, we will hit every cluster $C \in \opt$ with a vertex (i.e., $U \cap C \neq \emptyset$) with high probability. Because of this, there will still be a cluster among $\{C_r\}_{r \in U}$ that achieves the ratio $R$. 

Second, we cannot update a cluster $C_r$ as soon as it loses one vertex. However, we can wait until a cluster $C_r$ has lost a constant fraction $\covereps \cdot p(C_r)$ of its probability mass before we update it. If a cluster $C_r$ did not lose this constant fraction, the ratio did not change by too much. Moreover, we can show that if the probability distribution $p$ is a distribution from the MWU Algorithm \ref{alg:mw}, then $\Ncand(r)$ has to lose at least a constant fraction of vertices in order for $C_r$ to lose a constant fraction of probability mass. Thus, we only need to update a cluster $C_r$ at most a constant number of times. 


\begin{algorithm}[h]
    \caption{(Nearly Linear) Algorithm to find the family $\mathcal{F}$}
	\label{alg:disjointfamily}
	\begin{algorithmic}[1]
\STATE Let $R$ be the guess for $\covers(\opt)$ such that $R \in [\covers(\opt) , (1 +\covereps) \covers(\opt))$.
    \STATE $\hat p \gets p, \mathcal{F} \gets \emptyset$
    \FOR{$t=1,\dots,\log(n)/\eps^2$}
        \STATE Add each vertex $v$ with probability $\frac{1}{d(v)}$ to $U$. 
    \ENDFOR
    \FORALL{$v \in V$}
     \STATE If $\frac{\covers(K(v))}{p(K(v))} \leq (1+6\covereps)~R$, add $K(v)$ to $\mathcal{F}$ and set $\hat p_w = 0$ for all $w \in K(v)$. \label{line:addlargepnodes}
    \STATE If $p_v \leq \frac{\covereps \dc(v)}{4\dc(V)}$, set $\hat p_w = 0$ for all $w \in K(v)$. \label{line:removesmallpnodes}
    \ENDFOR
    \FORALL{$u \in U$} \label{line:loop}
        \STATE Find a small ratio cluster $C_u$ such that $K(u) \subseteq C_u \subseteq \Nadm(u)$ with vertex weights $\hat p > 0$ and target ratio $(1+3\covereps)R$ (Lemma \ref{lem:goodratio}).
    \ENDFOR
    \WHILE{$ p(\mathcal{F}) \leq \covereps$} \label{line:loop_family}
        \STATE Choose $C$ with the smallest ratio $\frac{\covers(C)}{\hat p(C)}$ among clusters $\{C_v\}_{v \in U}$.
        \STATE Remove all $w$ such that $\hat p_w = 0$ from $C$, add the new $C$ to $\mathcal{F}$.
        \STATE Set $\hat p_v$ to $0$ for all $v \in C$.
        \FORALL{$v \in U$}
            \IF{$\hat p(C_v) \leq (1-\covereps) p(C_v)$} 
            \STATE Find a new small ratio cluster $C_v$  with vertex weights $\hat p > 0$ and target ratio $(1+3\covereps)R$ (Lemma \ref{lem:goodratio}).
            \ENDIF
        \ENDFOR
    \ENDWHILE
    \RETURN $\mathcal{F}$
	\end{algorithmic}
\end{algorithm}

\subsection{Approximate Ratio of Algorithm~\ref{alg:disjointfamily}}

\begin{lemma}
\label{lem:disjointfamilyapproximateratio}
Given vertex weights $p_v >0$, Algorithm \ref{alg:disjointfamily} finds a family $\mathcal{F} = \{S_1, S_2,..., S_l \mid S_i \subseteq V \}$ such that,
\begin{enumerate}
    \item for any distinct $S, T \in \mathcal{F}$, $S \cap T = \emptyset$,
    \item $\frac{\covers(\mathcal{F})}{p(\mathcal{F})} \leq (1+8\covereps)\covers(\opt)$,
    \item $p(\mathcal{F})$ is at least $\covereps$,
    \item no $S \in \mathcal{F}$ splits an atom, i.e. $K(v) \subseteq S$ for all vertices $v \in S$.
\end{enumerate}
\end{lemma}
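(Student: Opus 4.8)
The plan is to follow the proof of Lemma~\ref{lem:family-poly} and to handle only the three points where Algorithm~\ref{alg:disjointfamily} departs from Algorithm~\ref{alg:disjointfamily-poly}: we compute small‑ratio clusters only for the sampled set $U$; we pre‑process with Lines~\ref{line:addlargepnodes}--\ref{line:removesmallpnodes}, which commit some atoms to $\mathcal{F}$ and zero the $\hat p$‑mass of some others; and we recompute a stored cluster $C_v$ lazily, only once $\hat p(C_v)$ has dropped to $(1-\covereps)p(C_v)$. Properties~1 and~4 are immediate exactly as in Lemma~\ref{lem:family-poly}: each $C_v$ and each committed atom $K(v)$ avoids splitting atoms (Lemma~\ref{lem:goodratio}), the vector $\hat p$ is only ever zeroed on whole atoms (Lines~\ref{line:addlargepnodes},~\ref{line:removesmallpnodes} and the while loop, whose clusters are unions of atoms), so deleting the $\hat p=0$ vertices from a cluster before inserting it into $\mathcal{F}$ deletes whole atoms; and once $\hat p_v=0$ it stays $0$, so the inserted clusters are pairwise disjoint. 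Throughout we work on the remaining graph $G[V']$, where every atom $K$ has $\dc(K)>0$, so every surviving vertex has $\dc(v)>0$ (a vertex with $\dc(v)=0$ has $\covers(\{v\})=0$ and is committed by Line~\ref{line:addlargepnodes}).

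For the sampling, let $\mathcal{E}$ be the event that $U$ meets every non‑singleton $\opt$‑cluster and every non‑singleton atom of $\calK$. Using that $\opt$ is $\eps$-large and that every non‑singleton atom is $\eps$-large (so every $w$ in such a set $A$ has $d(w)\le |A|/\eps$), for each fixed such $A$ we get $\Pr[A\cap U=\emptyset]=\prod_{w\in A}(1-1/d(w))^{\log n/\eps^2}\le \prod_{w\in A} n^{-1/(\eps|A|)}=n^{-1/\eps}\le n^{-2}$; a union bound over the $O(n)$ relevant sets gives $\Pr[\mathcal{E}]\ge 1-1/n$, and we condition on $\mathcal{E}$. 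Next I bound the mass that Line~\ref{line:removesmallpnodes} discards. Since in our application $p$ is the normalized weight vector of Algorithm~\ref{alg:mw} it is constant on every atom, so whenever an atom $K$ is zeroed every $w\in K$ has $p_w\le \covereps\,\dc(w)/(4\dc(V))$; summing over singleton atoms discards at most $\tfrac{\covereps}{4\dc(V)}\sum_{v}\dc(v)=\covereps/4$, and over non‑singleton atoms at most $\tfrac{\covereps}{4\dc(V)}\sum_{K}\bigl(\sum_{v\in K}\dc(v)\bigr)=\tfrac{\covereps}{4\dc(V)}\sum_{K}2\cost(K)\le \covereps/4$ (Lemma~\ref{lem:dc}). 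Hence at every point inside the while loop, where $p(\mathcal{F})\le\covereps$, the surviving mass satisfies $\hat p(V)\ge 1-\covereps-\covereps/2\ge 1-3\covereps/2$.

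The core is the ratio bound. Fix an iteration of the while loop on Line~\ref{line:loop_family} and let $C^\star$ minimize the adjusted ratio $\rho(C):=\bigl(\covers(C)+\covereps^2\dadm(C)\bigr)/\hat p(C)$ over $\opt$‑clusters with $\hat p(C)>0$ (one exists since $\hat p(V)>0$). Averaging over $\opt$ and using $\covers(\opt)\le R$, $\covereps^2|E_{\adm}|\le\covereps R$ (Theorem~\ref{thm:preclustering-proc} and $\covereps=O(\eps^{13})$), and $\hat p(V)\ge 1-3\covereps/2$ gives $\rho(C^\star)\le\frac{(1+\covereps)R}{1-3\covereps/2}\le(1+3\covereps)R$ for $\covereps$ small. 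In particular $C^\star$ is not a surviving singleton atom: after Line~\ref{line:addlargepnodes} every surviving atom $K$ has $\covers(K)/p(K)>(1+6\covereps)R$, so a singleton $\{v\}$ would satisfy $\rho(\{v\})\ge\covers(\{v\})/p_v>(1+6\covereps)R$; moreover $C^\star$ contains at most one non‑singleton atom (two would create a non‑admissible edge inside the $\eps$-large set $C^\star$). Pick $v\in C^\star$ inside that atom if one exists, else any $v\in C^\star$; by $\mathcal{E}$ we get $v\in U$, and $C^\star\subseteq\Ncand(v)$ (any $w\in C^\star\setminus K(v)$ forms an admissible pair with every $u\in K(v)$, and in the singleton case $C^\star\cap V_{\calK}=\emptyset$). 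Since $\rho$ only increases as $\hat p$ shrinks, $C^\star$ already had $\rho(C^\star)\le(1+3\covereps)R$ at the last recomputation of $C_v$, so Lemma~\ref{lem:goodratio} applied there with witness $C=C^\star$ (its hypothesis~(ii) supplied by Line~\ref{line:addlargepnodes}, which also rules out the degenerate branch of Claim~\ref{claim:lowerboundofsampleset}) returned $C_v$ with $\covers(C_v)\le(1+3\covereps)R\cdot p(C_v)$ (recall $\hat p$ equals $p$ on a freshly computed cluster). By the lazy rule $\hat p(C_v)>(1-\covereps)p(C_v)$ at the present moment, so after deleting the $\hat p=0$ vertices and using monotonicity of $\covers$ (Lemma~\ref{lem:monotone}) we obtain a cluster of $p$‑ratio at most $\frac{1+3\covereps}{1-\covereps}R\le(1+6\covereps)R$. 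Hence the cluster actually chosen, the minimizer of $\covers(C)/\hat p(C)$ over $\{C_v\}_{v\in U}$, has $p$‑ratio at most $(1+6\covereps)R$ and positive mass; this shows each iteration inserts a non‑empty cluster, so the loop terminates with $p(\mathcal{F})>\covereps$ (Property~3), and, after boosting Lemma~\ref{lem:goodratio} by its $O(\log n)$ repetitions and a union bound over all invocations, this holds with high probability together with $\mathcal{E}$. Finally every $S\in\mathcal{F}$ — committed by Line~\ref{line:addlargepnodes} or by the loop — satisfies $\covers(S)\le(1+6\covereps)R\cdot p(S)$, so $\covers(\mathcal{F})/p(\mathcal{F})\le(1+6\covereps)R\le(1+6\covereps)(1+\covereps)\covers(\opt)\le(1+8\covereps)\covers(\opt)$, which is Property~2.

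The main obstacle is the interaction of the three modifications inside the ratio argument: I must show the lazy‑update rule cannot let a stored cluster's ratio drift past $(1+6\covereps)R$, while simultaneously using $\mathcal{E}$ together with the case analysis on the (unique) non‑singleton atom of $C^\star$ to guarantee that the best $\opt$‑cluster is anchored at a sampled vertex for which Lemma~\ref{lem:goodratio} actually applies, and controlling the extra mass discarded by Line~\ref{line:removesmallpnodes} tightly enough ($O(\covereps)$) that the averaging step still delivers an $\opt$‑cluster of adjusted ratio $\le(1+3\covereps)R$. A secondary point is bookkeeping: the ratios promised by Lemma~\ref{lem:goodratio} are with respect to the time‑varying $\hat p$, whereas Property~2 concerns the fixed $p$; this is reconciled by the fact that $\hat p$ and $p$ coincide on any cluster at the moment it is (re)computed and on any cluster at the moment it is inserted into $\mathcal{F}$.
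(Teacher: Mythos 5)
Your proposal is correct and follows essentially the same route as the paper's proof: hitting every $\opt$-cluster (and its non-singleton atom) with the sampled set $U$, bounding the mass discarded at Line~\ref{line:removesmallpnodes} to show an $\opt$-cluster of adjusted ratio at most $(1+3\covereps)R$ survives (the paper's Lemma~\ref{lem:goodratioclusterexists}), and using the lazy-recomputation rule to keep each stored $C_v$ within a $(1-\covereps)^{-1}$ factor of its ratio at creation time (the paper's Lemma~\ref{lem:clustercvratio}), concluding the $(1+8\covereps)$ bound. Your extra bookkeeping (that $p$ is constant on atoms, that hypothesis~(ii) of Lemma~\ref{lem:goodratio} is supplied by Line~\ref{line:addlargepnodes}, and the $\hat p$ versus $p$ reconciliation) only makes explicit steps the paper leaves implicit.
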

\begin{proof}
First, observe that Property 3. holds by the condition in the while loop. Property 2. holds since all small ratio clusters $\{C_v\}_{v\in U}$ do not split atoms.  Next, we will prove the family $\mathcal{F}$ consists of disjoint sets. Observe that after we added a cluster $C \subseteq V$ to $\mathcal{F}$ we set the weight $\hat p_v$ to $0$ for all $v \in C$. Note that if the weight $\hat p_v$ is set to $0$ it remains $0$ throughout the execution of the algorithm. 
 Moreover, before we add a cluster $C \subseteq V$ to $\mathcal{F}$ we remove all vertices $v \in C$ with weight $p_v = 0$ from $C$. Thus, when we add $C$ to $\mathcal{F}$ we have $\hat p_v > 0$ for all $v \in C$ and $\hat p_v = 0$ for all $v \in S, S \in \mathcal{F}$. 
Assume that the guess $R$ satisfies $\covers(\opt) \leq R \leq (1+\covereps)\covers(\opt)$. It remains to prove the bound on the ratio $\frac{\covers(\mathcal{F})}{p(\mathcal{F})}$. To do so, we will make use of the following claim.
\begin{claim}
    \label{clm:hitopt}
    For each cluster $C \in \opt$ with $|C| \geq 2$, $C \cap U \neq \emptyset$ with high probability. Moreover, if there exists a non-singleton atom $K \subseteq C$ then $K \cap U \neq \emptyset$ with high probability. 
\end{claim}
\begin{cproof}
    We can assume that the optimal clustering is \epslarge with respect to our preclustering. Thus, for any non-singleton atom $K$ we have $|K| \geq O(\epsilon) d(v)$ for each $v \in K$. 
    Otherwise, $v$ has more than $O(\epsilon |K|)$ neighbors outside of $K$.
    The probability that we don't include a vertex from $K$ in $U$ in one iteration is
    \[
        \prod_{v \in K} \left(1 - \frac{1}{d(v)} \right) \leq \exp \left (-\sum_{v \in K } \frac{1}{d(v)} \right ) \leq \exp\left(-O(\epsilon)\right).
    \]
    
    Thus, after $\log(n)/\epsilon^2$ rounds, we include a vertex from $K$ in $U$ with high probability.
    If $|C| > 1$, then $|C| \geq \epsilon d(v)$. By the same arguments as above, we will include a vertex from $C$ in $U$ with high probability. 
\end{cproof}
Let \(C \in \opt\). By Claim~\ref{clm:hitopt}, there exists a vertex \(v \in C \cap U\) such that \(K = K(v)\) if \(K\) is a non-singleton atom in \(C\). Now, we are able to analyze the approximation ratio of Algorithm~\ref{alg:disjointfamily}. We first show the existence of a cluster that satisfies the conditions of Lemma~\ref{lem:goodratio}.

\begin{lemma}
\label{lem:goodratioclusterexists}
In the algorithm~\ref{alg:disjointfamily}, if $p(\mathcal F) \leq \covereps$, then there always be a cluster $C^* \in \opt$ such that 
\begin{align*}
    \frac{\covers(C^\star)+\covereps^2 \dadm(C^\star)}{\hat p(C^\star)} \leq (1 + 3\covereps) R.
\end{align*}
\end{lemma}
\begin{proof}
Let $C^\star$ be the cluster among clusters in $\opt$ with the smallest ratio $\frac{\covers(C^\star)}{\hat p(C^\star)}$. Notice that the algorithm will set some $\hat p_v$ to $0$ without adding $u$ to $\mathcal{F}$, let 
\[
    \mathrm{Small} = \{ v \in \hat p_v \text{ is set to  0 at line~\ref{line:removesmallpnodes} in Algorithm~\ref{alg:disjointfamily}} \}
\]
We have $ p(\mathrm{Small}) \leq \sum_{v \in V} \frac{\covereps \dc(v)}{16\dc(V)} \leq \covereps / 4$. By the condition in loop \ref{line:loop_family}, we have $\hat p(V) = p(V) - p(\mathcal{F}) - p(\mathrm{Small}) \geq 1 - \covereps - \covereps/4 \geq 1 - 1.5\covereps$.
Thus,
\begin{align*}
    (1+3\covereps)R \geq \frac{1+ \covereps}{1-1.5\covereps} R \geq \frac{(1+\covereps)\covers(\opt)}{\hat p(V)} \geq \frac{\covers(\opt) + \covereps^2 |E_{\adm}|}{\hat p(V)} \\= \frac{\sum_{C \in \opt}(\covers(C) + \covereps^2 \dadm(C))}{\sum_{C \in \opt} \hat p(C)} \geq \frac{\covers(C^\star)+\covereps^2 \dadm(C^\star)}{\hat p(C^\star)}.
\end{align*}
For the second inequality, we use that $\gamma~\covers(\opt) \geq \gamma~\clcost(\opt) \geq \gamma^2 |E_{\adm}|$ because of the preclustering (Theorem \ref{thm:preclustering-proc}).
The last inequality holds by the definition of $C^\star$.
\end{proof}

Based on Lemma~\ref{lem:goodratioclusterexists}, we know that if we use \((1+3\covereps)R\) as the ratio and apply Lemma~\ref{lem:goodratio}, we can always find a cluster \(C_v\) with ratio at most \((1+3\covereps)R\). One issue remains: we might set the \(\hat{p}\) value of some nodes in \(C_v\) to 0. This is captured by the following invariant.

\begin{lemma}
\label{lem:clustercvratio}
For any \(C_v\) added to \(\mathcal{F}\) in Algorithm~\ref{alg:disjointfamily}, we have 
\begin{align*}
    \frac{\covers(C_v)}{\hat{p}(C_v)} \leq (1+5\covereps) R.
\end{align*}
\end{lemma}

\begin{proof}
When \(C_v\) is created, we have 
\begin{align*}
    \frac{\covers(C_v)}{\hat{p}(C_v)} = \frac{\covers(C_v)}{p(C_v)} \leq (1+3\covereps) R.
\end{align*}
Note that at this moment, \(C_v\) only contains nodes with \(\hat{p}\) values greater than 0. Later, the algorithm may add other clusters to \(\mathcal{F}\) and set some \(\hat{p}\) values to 0, which will increase the ratio of \(C_v\). However, whenever at least a \(\covereps\)-fraction of the \(\hat{p}\) value in \(C_v\) is decreased to 0, we renew \(C_v\). Thus, we always maintain the bound:
\begin{align*}
    \frac{\covers(C_v)}{\hat{p}(C_v)} \leq \frac{\covers(C_v)}{(1 - \covereps)p(C_v)} \leq \frac{1+3\covereps}{1-\covereps} R \leq (1+5\covereps) R.
\end{align*}
\end{proof}

Now, we only need to show that whenever \(p(\mathcal{F}) \leq \covereps\), there always exists a non-empty cluster \(C\) among the clusters \(\{ C_v \}_{v \in U}\) that we can consider. We prove this by contradiction. 

Assume that at some iteration of Line~\ref{line:loop_family}, we have \(p(\mathcal{F}) \leq \covereps\) and \(C_v = \emptyset\) for all \(v \in U\). By Lemma~\ref{lem:goodratioclusterexists}, there exists a cluster \(C^*\) such that
\[
\covers(C^*) + \covereps^2 \dadm(C^*) \leq (1 + 3\covereps) R \hat{p}(C^*).
\]
By Claim~\ref{clm:hitopt}, there exists a node \(u \in U\) such that \(u \in C^*\). Moreover, if \(C^*\) contains a non-singleton atom, then \(u\) is in the non-singleton atom. Since \(\hat{p}\) is non-increasing, we can always find a cluster \(C_u\) such that the ratio is at most \((1 + 3\covereps) R\).

Remember that we might add single vertices with ratio $(1+6\covereps)R$. We can conclude that,
\begin{align*}
    \frac{\covers(\mathcal{F})}{p(\mathcal{F})} = \frac{\sum_{S \in \mathcal{F}}\covers(S)}{\sum_{S \in \mathcal{F}}p(S)} \leq \frac{\covers(C_v)}{\hat p(C_v)} \leq (1+6\covereps) R \leq (1+8\covereps) ~\covers(\opt)
\end{align*}

\end{proof}

\subsection{Runtime of Algorithm~\ref{alg:disjointfamily}}
We now prove that Algorithm~\ref{alg:disjointfamily} runs in nearly linear time.

\begin{lemma}
\label{lem:disjointfamilylinear}
    For each vertex $v$, let $p_v$ be the normalized weight $w^{(t)}_v$ during a round $t = 1, \dots, T$ of the MWU Algorithm \ref{alg:mw} for $T = \frac{-\log(\covereps)}{\covereps^2}$.
    Then, the expected runtime of Algorithm \ref{alg:disjointfamily} is $\widetilde O(m).$
\end{lemma}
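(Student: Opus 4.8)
The plan is to charge the total running time to three sources: (a) the one-time preprocessing, (b) the initial construction of all clusters $\{C_u\}_{u \in U}$, and (c) the updates of clusters $C_v$ inside the {\bf while} loop. For (a), computing the preclustering data structure costs $\widetilde O(n)$ by Theorem~\ref{thm:preclustering-proc}, computing all $\dc(v)$ costs $\widetilde O(m)$, and computing $\covers(K(v))/p(K(v))$ and the test on Line~\ref{line:removesmallpnodes} for every vertex costs $\widetilde O(m)$ in total since $\sum_v d(v) = O(m)$. For (b), by Lemma~\ref{lem:goodratio} finding one cluster $C_u$ costs $\widetilde O(d^2(u))$, and since $u \in U$ only with probability $\min\{1,\log(n)/d(u)\}$ (summed over $\log(n)/\eps^2$ rounds), $\EX[\mathbb 1(u \in U) \cdot d^2(u)] = \widetilde O(d(u))$, so the expected total is $\widetilde O(\sum_u d(u)) = \widetilde O(m)$.

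The heart of the argument is (c): bounding how often a cluster $C_v$ is rebuilt. A rebuild happens only when $\hat p(C_v) \le (1-\covereps) p(C_v')$, where $p(C_v')$ was the mass at the last rebuild. The key claim, which I would isolate as a sub-claim, is that because $p$ is a \emph{normalized MWU weight vector} $p = w^{(t)}/\sum_v w^{(t)}_v$ with $w^{(t)}_v \in [\dc(v)e^{-\covereps^3\tmwu}, \dc(v)]$ (the weights only ever decrease and by a bounded multiplicative factor over $\tmwu = O(\covereps^{-4}\log(1/\covereps))$ rounds, via Line~\ref{line:update-rule} of Algorithm~\ref{alg:mw}), each individual vertex $v$ carries at most a $\poly(1/\eps)$-fraction of the total mass of any $\Ncand(r)$ that contains it; more precisely, the ratio $\max_{v \in S} p_v / p(S)$ is bounded below away from~$0$ only when $S$ is small, and losing a $\covereps$-fraction of $p(C_v)$ therefore forces losing at least an $\Omega(\covereps \cdot \poly(\eps))$-fraction of the \emph{vertices} of $C_v$ (equivalently of $\Ncand(v)$, since $|C_v| = \Theta(|\Ncand(v)|)$ by Claim~\ref{claim:lowerboundofsampleset}). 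Since each vertex can only be removed from $\hat V$ once, each $C_v$ can only shrink by such a fraction $O(\poly(1/\eps))$ many times before being emptied; hence $C_v$ is rebuilt at most $\poly(1/\eps)$ times. Each rebuild costs $\widetilde O(d^2(v))$ by Lemma~\ref{lem:goodratio}, and again the expectation over $v \in U$ contributes $\widetilde O(d(v))$ each, so summing over $U$ gives $\widetilde O(\poly(1/\eps)\, m) = \widetilde O(m)$. The per-iteration bookkeeping — selecting the minimum-ratio $C$ among $\{C_v\}_{v\in U}$ and scanning which $C_v$ intersect it — is subsumed: there are at most $|V|$ {\bf while}-iterations (one vertex removed per iteration), and the total work of deciding "does $C_v$ need a rebuild" across all iterations telescopes to the total number of vertex-removals from the various $C_v$, which is $\widetilde O(\sum_{v\in U} |\Ncand(v)|) = \widetilde O(\sum_{v\in U} d(v)) = \widetilde O(m)$ in expectation.

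The main obstacle is the sub-claim relating loss of \emph{probability mass} to loss of \emph{vertices}: this is exactly the place where the structure of the MWU distribution is essential (for an arbitrary distribution $p$ a single high-mass vertex could dominate $C_v$ and be removed, triggering a rebuild with only one vertex lost). I would prove it by using that $w^{(t)}_v / w^{(t)}_{v'} \in [e^{-\covereps^3\tmwu}, e^{\covereps^3\tmwu}] \cdot \dc(v)/\dc(v')$ for all $v,v'$, that $\covereps^3 \tmwu = O(\covereps^{-1}\log(1/\covereps)) = O(1)$ so this is an $O(1)$-distortion of the ratios $\dc(v)/\dc(v')$, and that for $v, v' \in \Ncand(r)$ the degrees — and hence $\dc$ values — are within $\poly(1/\eps)$ of each other by $\eps$-similarity (Definition~\ref{def:epsgoodpreclustering}) together with $|\Ncand(r)| = \Theta(|C_r|) = \Theta(\eps\cdot\poly \cdot d(v))$; this makes $p_v / p(C_v) = O(\poly(1/\eps)/|C_v|)$, which is the inequality we need. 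The remaining steps are the routine expectation computations $\EX[\mathbb 1(u\in U) d^2(u)] = \widetilde O(d(u))$ and the telescoping bookkeeping argument, which I would state but not belabor.
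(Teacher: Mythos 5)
Your overall strategy (charge the initial construction in expectation, then bound the number of rebuilds of each $C_v$ by converting lost probability mass into lost vertices) is the same as the paper's, but your key sub-claim fails at exactly the delicate point. You claim that within $C_v$ every vertex carries at most a $\poly(1/\eps)/|C_v|$ fraction of $p(C_v)$, deducing this from degree-similarity of admissible pairs together with $p_v \approx \dc(v)/\dc(V)$. For a vertex $v$ in a non-singleton atom, however, $\dc(v) = 2\cost(K(v))/|K(v)|$ is not comparable to $d(v)$: when $K(v)$ is a large near-isolated clique with a small common admissible neighborhood $D(v)$, one only has $\dc(v) = \Omega(\eps^3 |D(v)|)$ (via Lemma~\ref{lem:dc}), whereas the singleton vertices of $D(v)$ have $\dc \approx d = \Theta_{\eps}(|K(v)|)$. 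Hence the ratio between the largest and smallest $p$-values inside $\Ncand(v)$ can be of order $|K(v)|/|D(v)|$, which is not $\poly(1/\eps)$. In this regime a $\covereps$-fraction of $p(C_v)$ can be destroyed by zeroing only $\Theta\bigl(\poly(\covereps,\eps)\,|D(v)|\bigr)$ vertices, a vanishing fraction of $|C_v|$, so your conclusion ``each rebuild removes an $\Omega(\covereps\,\poly(\eps))$-fraction of the vertices of $C_v$, hence $\poly(1/\eps)$ rebuilds'' does not follow. The paper repairs this by counting against $|D(v)|$ rather than $|C_v|$: since clusters added to $\mathcal{F}$ and the zeroing on Line~\ref{line:removesmallpnodes} never split atoms, the atom part of $C_v$ is zeroed at most once and wholesale, and every other rebuild must zero $\Omega(\covereps^5\eps^5|D(v)|)$ vertices of $D(v)$ — this uses the per-vertex lower bound $\Omega(\covereps\eps^4|D(v)|/\dc(V))$ and upper bound $O(\eps^{-1}d(v)/\dc(V))$ of Lemma~\ref{lemma:pvaluebound} together with the size bound $|C_v| \geq \covereps^3 d(v)$ of Claim~\ref{clm:size} — giving $O(\covereps^{-5}\eps^{-5})$ rebuilds (Lemma~\ref{lem:cudoesnotupdateoften}).

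Two secondary inaccuracies. First, the MWU weights do not ``only ever decrease'': an uncovered vertex has $m^{(t)}_v = -1$ and its weight grows by $e^{\covereps^3}$ per round, and a covered vertex can shrink by a factor $e^{-\covereps^2}$ per round, so the naive drift over $\tmwu$ rounds is $2^{\poly(1/\eps)}$, not the $O(1)$ you assert via ``$\covereps^{-1}\log(1/\covereps) = O(1)$''. The paper instead proves the clean invariant $p^{(t)}_u \in \bigl[\covereps\dc(u)/(16\dc(V)),\, 16\dc(u)/\dc(V)\bigr]$ by induction, using the pruning and adding rules of Lines~\ref{line:addlargepnodes} and~\ref{line:removesmallpnodes} of Algorithm~\ref{alg:disjointfamily} rather than raw drift (Lemma~\ref{lemma:pvaluehasarange}). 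Second, your citation of Claim~\ref{claim:lowerboundofsampleset} for $|C_v| = \Theta(|\Ncand(v)|)$ is misplaced: that claim concerns the optimal augmented sets $C^*_i$, not the output cluster; the size bound you need is Claim~\ref{clm:size}, whose proof also handles the case where a good singleton ratio would already have caused removal. Neither of these two points by itself would prevent an $\widetilde O(m)$ bound with $2^{\poly(1/\eps)}$ constants, but the atom/mass-comparability issue above is a genuine gap in your accounting of rebuilds.
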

\begin{proof}
We add each vertex to $U$ with probability at most $\frac{1}{d(v)} \log(n)$. 
By Lemma \ref{lem:goodratio}, computing one $C_u$ for $u \in U$ takes time $\widetilde O(d^2(u))$. Computing all $\{C_u\}_{u \in U}$ takes expected time,
\[
    \sum_{v \in V} \frac{\widetilde O(d^2(v))}{d(v)} \log(n) = \widetilde O(m).
\]
We maintain a priority queue of the clusters $\{C_u\}_{u \in U}$. We can pick the cluster $C$ with the best ratio and remove it from the queue in time $\log(n)$. If we update a $C_u$, we can insert it into the queue in time $\log(n)$. When we set the weight of a vertex to zero we have to remove this vertex from each cluster $C_u$. In the following, we will show that $v$ is contained in at most $\log(n)$ clusters $C_u$ in expectation. Thus, this takes time at most $\log(n)$. Moreover, we set the weight of a vertex to zero at most once.
\begin{claim}
\label{clm:overlap}
With high probability, each vertex is contained in at most $O(\log(n))$ clusters from $\{C_u\}_{u \in U}$.
\end{claim}
\begin{cproof}
    If a vertex $v$ is included in the cluster $C_u$, the $v$ and $u$ have to be connected by an admissible edge. Thus the number of clusters $C_u$ with $v \in C_u$ is bounded by the cardinality of $|U \cap \Nadm(v)|$. Since the preclustering is \epssimilar, we have that $|\Nadm(v)| = \dadm(v) \leq 2\epsilon^{-3} d(v)$ and $d(u) \geq \frac{1}{2} \epsilon d(v)$ for each vertex $u \in \Nadm(v)$. Thus, we have for the expectation,
    \begin{align*}
        \EX[|U \cap \Nadm(v)|] = \sum_{u \in \Nadm(v)} \frac{1}{d(u)} \log(n) \leq 2\epsilon^{-1} \sum_{u \in \Nadm(v)} \frac{1}{d(v)} \log(n) \leq 4 \epsilon^{-4} \log(n).
    \end{align*}
    The high-probability argument follows from a standard Chernoff bound and the union bound.
\end{cproof}
It is left to show that we don't have to update the $C_u$'s too often. To do so, we first proof that each $C_u$ has size proportional to the degree $d(u)$.
\begin{claim}
    \label{clm:size}
    For any cluster $C_u \in \Ncand(u)$ such that 
    $\frac{\covers(C_u)}{\hat p(C_u)} \leq (1+5\covereps) R$, then 
    \begin{itemize}
        \item either $|C_u \cap \{ v \in C_u \mid \hat p_v > 0 \}| \geq \covereps^3 d(u)$,
        \item or $\frac{\covers(\{v\})}{ p_v} \leq (1+6\covereps)R$ for some $v \in C_u$.
    \end{itemize}
\end{claim}
\begin{cproof}
    Note that $\covers$ is monotonic, so we can always remove all vertices with $\hat p = 0$ at the beginning and this will only decrease the ratio. We will show that if a cluster $C_u$ has size $|C_u \cap \{ v \in C_u \mid \hat p_v > 0 \}| \leq \covereps^3 d(u)$ then there exists a vertex $v \in C_u$ with
    \[
     \frac{\covers(\{v\})}{ p_v} \leq (1+6\covereps)R
    \]
    Remember that the edge $(u,v)$ is admissible for all $v \in C_u$. If $C_u$ contains a non-singleton atom $K(u) \subseteq C_u$, note that $|K(u)| \geq d(u)/2 \geq |C_u|$, this  contradicts to $K(v) \subseteq C_u$. so we can assume that $C_u$ does not contain any non-singleton atom and
    \begin{align*}
        \covers(C_u) & \geq \sum_{v \in C_u} \left( \frac{1}{2} (d(v) - |C_u|) + \dc(v) \right)\\
        \geq & \sum_{v \in C_u} \left( \frac{1}{2}(d(v) - \covereps^2 d(v)) + \dc(v) \right) \\
        & \geq (1-\covereps^2) \sum_{v \in C_u} \covers(\{v\}).
    \end{align*}
    The second inequality holds since we assumed that $|C_u| \leq \covereps^3 d(u) \leq \covereps^3 \epsilon^{-1} d(v) \leq \covereps^2 d(v)$. Here, we used that the preclustering is \epssimilar and $(u,v)$ is admissible.
    If $C_u$ is considered in the algorithm, then the ratio of $C_u$ is bounded by $(1+5\covereps)R$. One can safely remove $u$ from $U$ otherwise.
    \[
        (1+5\covereps) R \geq \frac{\covers(C_u)}{\hat p(C_u)} \geq (1-\covereps^2) \frac{\sum_{v \in C_u} \covers(\{v\})}{\sum_{v \in C_u} \hat p_v}.
    \]
    However, than there has to exist a vertex $v \in C_u$ that achieves the ratio 
    \[
         \frac{1+5\covereps}{1-\covereps^2} R \leq (1+6\covereps)R
    \]
     This vertex would have been added to $\mathcal{F}$ before $\{C_u\}_{u \in U}$ are computed. Then, $\hat p_v = 0$, a contradiction.
\end{cproof}

The next lemma will give us an upper and lower bound of $p$ value each round, this can help us bound the number of updates for $C_u$.
\begin{lemma}
\label{lemma:pvaluehasarange}
Invariant: In Algorithm~\ref{alg:mw}, for any $t \in [1, \tmwu]$ and any node $u \in V$, we have 
\[
\frac{\covereps \dc(u)}{16\dc(V)} \leq p^{(t)}_u \leq \frac{16 \dc(u)}{\dc(V)}
\]
\end{lemma}
\begin{proof}
We prove the statement by induction. The base case \( t = 1 \) holds automatically. 

Now, assume that the inequality holds at round \( t \). First, note that if a node \( u \) is not added to \( \mathcal{F} \), then \( m^{(t)}_u = -1 \); otherwise, we have \( 1 \leq m^{(t)}_u \leq 1/\covereps \). Therefore, the value \( w^{(t)}_u \) will increase by at most \( e^{-\covereps^3 m^{(t)}_u} = e^{\covereps^3} \leq 2 \) if \( u \) is not added to \( \mathcal{F} \), and decrease by at most \( e^{-\covereps^2} \geq 1/2 \) if \( u \) is added to \( \mathcal{F} \). Consequently, we obtain:
\[
p^{(t)}_u \leq p^{(t+1)}_u \leq 4 p^{(t)}_u, \quad \text{if } u \text{ is not added to } \mathcal{F},
\]
and
\[
\frac{p^{(t)}_u}{4} \leq p^{(t+1)}_u \leq p^{(t)}_u, \quad \text{if } u \text{ is added to } \mathcal{F}.
\]

It remains to show that: \( u \) is never added to \( \mathcal{F} \) if \( p^{(t)}_u \leq \frac{\covereps \dc(u)}{4\dc(V)} \), and \( u \) is always added to \( \mathcal{F} \) if \( p^{(t)}_u \geq \frac{ 4 \dc(u)}{\dc(V)} \). For the first argument, by Line~\ref{line:removesmallpnodes}, we always set \( \hat{p}_u = 0 \) if 
\( p^{(t)}_u \leq \frac{\covereps \dc(u)}{4\dc(V)} \), and we never add a node with \( \hat{p}_u = 0 \).

The second argument is slightly more involved. A node \( u \) is added to \( \mathcal{F} \) at Line~\ref{line:addlargepnodes}. Note that \( R \geq \covers(\opt) \geq \dc(V) \). Whenever \( p^{(t)}_u \geq \frac{4 \dc(u)}{\dc(V)} \), we have 
\begin{align*}
\frac{\covers(K(u))}{p(K(u))} &\leq \frac{\dc(K(u))}{4 |K(u)| \cdot \dc(u) / \dc(V)} \\
&\leq \frac{|K(u)| \cdot \dc(u)}{4 |K(u)| \cdot \dc(u) / \dc(V)} \leq \frac{\dc(V)}{2} \leq R.
\end{align*}
Thus, we will add \( K(u) \) to \( \mathcal{F} \), which completes the proof.
\end{proof}

\nairen{check it}

Now, using Lemma~\ref{lemma:pvaluehasarange}, we can give a lower bound and upper bound of the $\hat p$ value.

\begin{lemma}
\label{lemma:pvaluebound}
Let \(\Ncand(u) \subseteq K(u) \cup \left( \bigcup_{w \in K(u)} \Nadm(w) \right) \) be the candidate set considered in Lemma~\ref{lem:goodratio}, and define \( D(u) = \Ncand(u) \setminus K(u) \). If \( |D(u)| \leq \frac{\dadm(K(u))}{|K(u)|} \), then for any $v \in \Ncand(u)$
\begin{align*}
    p_v = \Omega\left(\frac{\covereps \eps^{4} |D(u)|}{\dc(V)}\right), \quad  p_v = O\left(\frac{\eps^{-1} d(u)}{\dc(V)}\right)
\end{align*}
\end{lemma}
\begin{proof}
For any node \(v \in K(u)\), we have  
\begin{align*}
    p_v &= p_u \geq \frac{\covereps \dc(u)}{16\dc(V)} \geq \frac{\covereps \dc(K(u))}{16|K(u)|\dc(V)} = \Omega\left( \frac{\covereps \eps^{3} \dadm(K(u))}{|K(u)| \dc(V)} \right) = \Omega\left(\frac{\covereps \eps^{3} |D(u)|}{\dc(V)}\right).
\end{align*}

The first inequality follows from Lemma~\ref{lemma:pvaluehasarange}, the second inequality is based on the definition of \(\dc(u)\), and the first equality follows from Lemma~\ref{lem:dc}.  

For any node \(v \in D(u)\), we have  
\begin{align*}
    p_v &\geq \frac{\covereps \dc(v)}{16\dc(V)} \geq \frac{\covereps d(v)}{32\dc(V)} = \Omega\left(\frac{\covereps \eps d(w)}{\dc(V)}\right) (\text{for some $w \in K(u)$})\\
    &= \Omega\left(\frac{\covereps \eps \dc(w)}{\dc(V)}\right) = \Omega\left(\frac{\covereps \eps \dc(u)}{\dc(V)}\right) = \Omega\left(\frac{\covereps \eps^{4} |D(u)|}{\dc(V)}\right).
\end{align*}
The first inequality follows from Lemma~\ref{lemma:pvaluehasarange}, the second inequality is based on the definition of \(\dc(v)\). The first equality follows from \(\eps\)-similar preclustering. Note that we do not require the edge \(uv\) to be an admissible edge; instead, we only require that there exists some node \(w\) such that \(vw\) is admissible. The second equality follows from \(\eps\)-similar preclustering and the definition of \(\dc(u)\).

Next, we focus on the upper bound. The upper bound for $v \in K(u)$ holds based on the definition of $\dc(v)$. From Lemma~\ref{lemma:pvaluehasarange}, each node \(v \in D(u)\) satisfies  
\begin{align*}
    \hat{p}_v \leq \frac{16\dc(v)}{\dc(V)} = O\left(\frac{d(v)}{\dc(V)}\right) = O\left(\frac{\eps^{-1} d(w)}{\dc(V)}\right) = O\left(\frac{\eps^{-1} d(u)}{\dc(V)}\right).
\end{align*}

The first equality follows from the definition of \(\dc(v)\), the second equality follows from the fact that there exists some \(w \in K(u)\) such that \(vw\) is admissible and \(\eps\)-similar preclustering, and the last equality holds due to \(\eps\)-similar preclustering.
\end{proof}

We remark that in the sublinear model, it is impossible to compute \(\bigcap_{v \in K(u)} \Nadm(v)\) exactly. Therefore, we must relax the definition of \(\Ncand\) in the sublinear model. That's the main reason that we only require \( |D(u)| \leq \frac{\dadm(K(u))}{|K(u)|} \) in Lemma~\ref{lemma:pvaluebound}.

Now, we are able to argue that we do not need to update \(C_u\) too many times. By Lemma~\ref{lem:goodratio}, we know that for each \(u\), we only consider nodes in \(\Ncand(u)\) for inclusion in \(C_u\).

\begin{lemma}
\label{lem:cudoesnotupdateoften}
Consider Algorihtm~\ref{alg:disjointfamily}, for each $u \in U$, \(C_u\) is updated at most \(O(\covereps^{-5}\eps^{-5})\) times.
\end{lemma}

\begin{proof}
Since all nodes in \(K(u)\) are always chosen simultaneously, at most one round is required to set the \(\hat{p}\) value of nodes in \(K(u)\) to zero. Our goal is to bound the number of rounds needed to remove all nodes from \(D(u)\). To do this, we derive a lower bound on \(\hat{p}(C_u)\).  

The first thing is the condition for Lemma~\ref{lemma:pvaluebound}. Note that we set 
\[
\Ncand(u) = K(u) \cup \left( \bigcap_{w \in K(u)} \Nadm(w) \right)
\]
if \(u\) is not a singleton. The candidate set is the intersection of all admissible neighbors, so we must have 
\[
|D(u)| \leq \frac{\dadm(K(u))}{|K(u)|}.
\]

so from Lemma~\ref{lemma:pvaluebound}, we know that $\hat p_v = \Omega\left(\frac{\covereps \eps^{4} |D(u)|}{\dc(V)}\right)$. From Claim~\ref{clm:size}, we know that \( |C_u| \geq \covereps^3 d(u) \). To remove a \(\covereps\)-fraction of the \(\hat{p}\) value of \(C_u\), we must remove a total weight of at least  
\begin{align*}
    \Omega\left(\covereps \cdot \covereps^3 d(u) \cdot \frac{\covereps \eps^{4} |D(u)|}{\dc(V)}\right) = \Omega\left(\frac{\covereps^5 \eps^{4} d(u) |D(u)|}{\dc(V)}\right).
\end{align*}

On the other hand, from Lemma~\ref{lemma:pvaluebound}, we know that $\hat p_v = O\left(\frac{\eps^{-1} d(u)}{\dc(V)}\right)$. If we do not set the \(\hat{p}\) value of nodes in \(K(u)\) to zero, we must set at least  
\begin{align*}
    \Omega\left(\frac{\covereps^5 \eps^{4} d(u) |D(u)| / \dc(V)}{\eps^{-1} d(u) / \dc(V)}\right) = \Omega(\covereps^5 \eps^{5} |D(u)|)
\end{align*}
nodes from \(D(u)\) to \(\hat{p} = 0\) to reduce $\covereps$ fractional value of $\hat p(C_u)$. Therefore, after at most \(O(\covereps^{-5}\eps^{-5})\) rounds, the algorithm will have set all nodes in \(D(u)\) to zero.
\end{proof}

\subsection{Wrap-Up: Proof of Nearly Linear Time Algorithm for \ref{LP:clusterlp}}

We are now ready to prove Theorem~\ref{thm:solving-cluster-LP-sequential-nearlyLinear}.

\begin{proof}[Proof of Theorem~\ref{thm:solving-cluster-LP-sequential-nearlyLinear}]
We can obtain the preclustering $(\calK, E_{\adm})$ in time $\widetilde O(n)$ by Theorem \ref{thm:preclustering-proc}.
    By Lemma \ref{lem:mw}, we can obtain a solution $z$ to the \ref{LP:coverclusterlp} after constant rounds $\tmwu = O(\poly(1/\epsilon))$ of Algorithm \ref{alg:mw} such that $\covers(z) \leq (1+O(\covereps))~\covers(\opt)$. Moreover, $z_S$ is at least $\frac{1}{\tmwu}$ for each $S \in \supp(z)$ since each coordinate of the points $z^{(t)}$ is either $0$ or at least $1$. The solution $z$ is simply the average of all the points $z^{(t)}$. Similar, the solution $z$ does not split atoms since each $z^{(t)}$ does not split atoms by Lemma~\ref{lem:cover-to-cluster}. Furthermore, each vertex $v$ is contained in at most $T$ sets $S \in \supp(z)$ since the solutions $z^{(t)}$ have disjoint support by Lemma~\ref{lem:cover-to-cluster}. 

    By Lemma \ref{lem:cover-to-cluster}, in time $O(n)$, we can convert $z$ into a solution $\widetilde z$ to the \ref{LP:clusterlp} such that $\covers(\widetilde z) \leq \covers(z)$ and $\widetilde z_S \geq \frac{1}{c \tmwu}$ for all $S \in \supp(z)$.
    Observe that for a solution $z$ to the \ref{LP:clusterlp}, we have $\covers(z) = \cost(z) + \dc(V)$. Thus,
    \begin{align*}
        \cost(\widetilde z) & \leq \covers(\widetilde z) - \dc(V) \\
        & \leq (1+O(\gamma)) \covers(z) - \dc(V) \\
        & \leq (1+O(\gamma))\covers(\opt) - \dc(V) \\
        & = (1+O(\gamma))\clcost(\opt) + O(\gamma) \dc(V) \\
        & \leq (1+O(\gamma))\clcost(\opt) + O(\epsilon) \clcost(\opt) \\
        & \leq (1+O(\epsilon)) \clcost(\opt).
    \end{align*}
    Here, we used that $\dc(V) = O(\epsilon^{-12})\cost(\opt)$ by Lemma \ref{lem:dc}. In each round of Algorithm \ref{alg:mw}, we have to construct the point $z^{(t)}$. To do so, we will find the family $\mathcal{F}$ from Lemma \ref{lem:disjointfamilyapproximateratio} in expected time $\widetilde O(m)$ by Lemma \ref{lem:disjointfamilylinear}. 
\end{proof}

\end{proof}

\section{Finding a Partial Clustering with Small Ratio in Sublinear Time}
\label{sec:sublinear}
In order to achieve sublinear runtime, we will have to address the following problems. 
\begin{enumerate}
    \item First, it is not clear how to compute $\dc(v)$ for vertices in non-singleton atoms. However, we can estimate $\dc(v)$ in sublinear time for all vertices in atoms.
    \item We need to implement the algorithm that finds one good ratio cluster, GenerateClusterBySampling (Algorithm \ref{alg:generateclusterBySampling}) in time $\widetilde O(d(r))$ instead of $\widetilde O(d^2(r))$. More specifically, 
    \begin{itemize}
        \item We need to be able to find $\Ncand(r)$,
        \item We need to estimate the cost of the cluster $T$. 
    \end{itemize}
    \item We need to determine the best solution $z$ since we compute one for each guess $R$ of the optimal cost.
\end{enumerate}
Now, we address each problem one by one.

\subsection{Compute $\dc(v)$}
Actually, it is impossible to compute \( \dc(v) \) in sublinear time if \( \dc(v) \) is very small. Consider the following two scenarios, where we have two graphs:

\begin{itemize}
    \item The first graph consists of two cliques, each containing \( n \) nodes.
    \item The second graph also consists of two cliques, but with a slight modification: we remove one random edge from each clique and then add two edges crossing the cliques, connecting the vertices from which we removed edges.
\end{itemize}

Suppose we are given one of these two graphs with probability \( 1/2 \). If we could compute \( \dc(v) \) for these graphs in sublinear time, we would be able to distinguish which graph we were given. However, it is not difficult to show that distinguishing between these two graphs requires \( \Omega(n^2) \) queries for any (randomized) algorithm. See~\cite{DBLP:conf/innovations/Assadi022} B.1 for similar reduction.

The issue in the above scenario arises when \( \dc(v) \) is too small, making it impossible to compute in sublinear time. To address this, whenever \( \dc(v) \) is too small, we set \( K(v) \) as a cluster. For large \( \dc(v) \), instead of computing it exactly, we estimate \( \dc(v) \) within a \( (1 + \esteps) \) approximation, for any small enough constant $\esteps = O(\eps^3)$. 

This is formally captured by the following lemma.

\begin{lemma} 
Let $\esteps > 0$ be a small enough constant. In time $\widetilde O(|K| )$, we can find, with high probability, either an estimate $\hatdc(K)$ such that
\[
    (1 - \esteps) \dc(K) \leq \hatdc(K) \leq (1+\esteps) \dc(K)
\]
or a certificate that $K$ is a cluster in $\opt$.
\label{lem:estimate-dc}
\end{lemma}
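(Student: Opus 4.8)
The plan is to estimate $\dc(K) = \frac{2\cost(K)}{|K|} = \frac{1}{|K|}\sum_{u \in K}\bigl(d(u) + |K| - 2d(u,K)\bigr)$ by random sampling of vertices from $K$, exactly as in the (commented-out) sampling lemmas earlier in the excerpt. Concretely, I would sample $m = \Theta(\esteps^{-2}\log n)$ vertices $v_1,\dots,v_m$ uniformly at random (with replacement) from $K$, compute $\dc(v_i)$ for each — this costs $O(d(v_i)) = O(\esteps^{-1}|K|)$ per sample since $d(v_i) = O(|K|)$ by the $\eps$-similarity of the preclustering (atoms are near-cliques, so every $v\in K$ has $d(v) = (1\pm O(\eps))|K|$) — and output $\hatdc(K) = \frac{|K|}{m}\sum_{i=1}^m \dc(v_i)$. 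Since each $|K|\dc(v_i)$ is an unbiased estimator of $\sum_{u\in K}\dc(u) = \dc(K)$, the estimator is unbiased; the total running time is $\widetilde O(|K|)$.

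The remaining issue is concentration: Hoeffding's inequality gives a $(1\pm\esteps)$-relative error with probability $1 - O(1/n)$ only when $\dc(K)$ is not too small relative to $|K|$ — specifically when $\dc(K) = \Omega(\esteps^2 |K|)$, using that each $\dc(v_i) \in [0, O(\eps^{-1}|K|)]$. So the plan splits into two cases. If $\dc(K) \geq c\,\esteps^2 |K|$ for the appropriate constant $c$, the estimate concentrates and we are done. If $\dc(K) < c\,\esteps^2 |K|$, I would instead produce a \emph{certificate that $K$ is a cluster in $\opt$}: by Lemma~\ref{lem:dc} we have $\dc(K) = \Omega(\eps^3 \dadm(K))$, so $\dc(K) < c\,\esteps^2|K|$ forces $\dadm(K) = O(\esteps^2 \eps^{-3}|K|) < |K|$ (choosing $\esteps = O(\eps^3)$ small enough), which in turn means $\bigcap_{v\in K}\Nadm(v) = \emptyset$; since $\opt$ is $\eps$-large and cannot break the atomic edges of $K$ nor contain a non-admissible edge, no cluster of $\opt$ can properly contain $K$ together with an outside vertex, and the only clustering choice consistent with $K$ being atomic is to make $K$ itself a cluster.

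To decide which case we are in without knowing $\dc(K)$, I would use the standard median-of-estimates trick to get a crude one-sided estimate: take $\Theta(\log n)$ independent copies of the sampling estimator and let $\hatdc(K)$ be their median. By a Markov/Chernoff argument, with high probability $\hatdc(K) \leq 2\dc(K)$ always, and whenever $\dc(K) \geq c\,\esteps^2|K|$ the median also satisfies the full $(1\pm\esteps)$ bound. Then: if $\hatdc(K) \leq \tfrac{c}{2}\esteps^2|K|$ we conclude $\dc(K) < c\,\esteps^2|K|$ and output the ``$K$ is a cluster in $\opt$'' certificate; otherwise $\dc(K) = \Omega(\esteps^2|K|)$ and the same median is a valid $(1\pm\esteps)$-approximation.

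The main obstacle is pinning down the constants so the two regimes overlap correctly — i.e. ensuring that the threshold at which concentration kicks in ($\dc(K) = \Omega(\esteps^2|K|)$) is comfortably below the threshold at which the $\dadm(K) < |K|$ argument from Lemma~\ref{lem:dc} goes through ($\dc(K) = \Omega(\eps^3|K|)$ would already force $\dadm(K) < |K|$), which is exactly why the hypothesis requires $\esteps = O(\eps^3)$; everything else is a routine Hoeffding/median-of-means computation essentially already carried out in the commented-out lemmas above.
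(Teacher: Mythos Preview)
Your proposal is correct, and the overall skeleton---estimate $\dc(K)$ by sampling, use a median trick to get a one-sided bound when $\dc(K)$ is small, and invoke Lemma~\ref{lem:dc} to conclude $\dadm(K)<|K|$ (hence $\bigcap_{v\in K}\Nadm(v)=\emptyset$, hence $K$ is its own cluster in $\opt$)---matches the paper exactly. The difference is in the sampling primitive. You sample \emph{vertices} $v_i\in K$ and compute the per-vertex contribution $c(v_i)=d(v_i)+|K|-2d(v_i,K)$ in $O(d(v_i))=O(|K|)$ time each, needing only $\poly(1/\esteps)\log n$ samples; this is precisely the approach in the commented-out lemmas the authors left in the source. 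The paper's final proof instead decomposes $\dc(K)=|E^+(K,V\setminus K)|+2|E^-(K)|$ and estimates the two pieces separately by sampling $\Theta(\esteps^{-4}|K|\log n)$ \emph{edges} incident on $K$ (for the first term) and \emph{pairs} in $K\times K$ (for the second), each at $O(1)$ cost after $O(|K|)$ preprocessing. Both land at $\widetilde O(|K|)$ total time; your version is arguably more elementary (one unbiased estimator instead of two), while the paper's edge-sampling version avoids the $O(|K|)$-per-sample cost of computing $d(v_i,K)$ and makes the constants in the two-regime threshold argument slightly cleaner. One minor slip: $\Theta(\esteps^{-2}\log n)$ samples is not quite enough for Hoeffding at the threshold $\dc(K)=\Theta(\esteps^2|K|)$---you need a higher power of $1/\esteps$---but this does not affect the $\widetilde O(|K|)$ bound.
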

\begin{proof}
Note that $\dc(K) = 2 \cdot \cost(K) = |E^+(K,V\setminus K)| + 2 |E^-(K)|$ by definition. We will estimate both terms in the sum individually. 
\begin{claim}
    If $|E^+(K,V\setminus K)| \geq \esteps^2 |K|$, then we can find an estimate $X$ in time $\widetilde O( |K|)$ such that 
    \[
        |X - |E^+(K,V\setminus K)|| \leq \frac{\esteps}{4} \cdot |E^+(K,V\setminus K)|
    \]
    with high probability.
    \label{clm:est-plus-cost}
\end{claim}
\begin{cproof}
    We will sample $s = 32 \esteps^{-4} |K| \log n$ edges incident on $K$ uniformly at random. To sample one such edge we can sample a vertex proportional to his degree, then an edge adjacent to the vertex uniformly at random and if the edge is inside $K$ we will discard it with probability $1/2$. Let $m_K$ be the number of edges incident on $K$.
    We set the estimate to be 
    \[
        X = \frac{m_K}{s} \sum_{i = 1}^s X_i, 
    \]
    where $X_i$ is a random variable indicating whether the $i$-the edge leaves the atom $K$. Observe that the estimator is unbiased,
    \[
        \EX[X] = \frac{m_K}{s} \sum_{i = 1}^s \EX[X_i] = \frac{m_K}{s} \sum_{i = 1}^s \frac{|E^+(K,V\setminus K)|}{m_K} = |E^+(K,V\setminus K)|.
    \]
    By our preclusetring, there are at most $\frac{1}{2}|K|^2$ edges leaving the atom $K$. Thus, $m_K \leq |K|^2$. By Chernoff,
    \begin{align*}
        \Pr\left[\left|\frac{s}{m_K}~X - \frac{s}{m_K}~|E^+(K,V\setminus K)|\right| \geq \frac{\esteps}{4} \cdot \frac{s}{m_K}~|E^+(K,V\setminus K)|\right] & \leq 2 \exp\left(-\frac{\esteps^2}{16} \cdot \frac{s}{m_K}~|E^+(K,V\setminus K)|\right)\\ & =  O\left(\frac{1}{n^2} \right).
    \end{align*}
    The last inequality holds since $m_K \leq |K|^2$ and $|E^+(K,V\setminus K)| \geq \esteps^2 |K|$ by assumption. 
\end{cproof}
We can estimate \( E^-(K) \) in a similar manner. The only difference is that we sample \( s = 32\esteps^{-4} |K| \log n \) pairs \( uv \) and check whether \( uv \in E^+ \). 

Recall that in the sublinear model, we can determine whether \( uv \in E^+ \) in \( O(1) \) time. The following lemma states the result, and we omit the proof for brevity.

\begin{claim}
    If $|E^-(K,V\setminus K)| \geq \esteps^2 |K|$, then we can find an estimate $Y$ in time $\widetilde O(|K|)$ such that 
    \[
        |Y - |E^-(K)|| \leq \frac{\esteps}{4} \cdot |E^-(K)|
    \]
    with high probability.
    \label{clm:est-minus-cost}
\end{claim}
To deal with the case where $|E^+(K,V\setminus K)|$ is small, we will repeat the estimation process $\log(n)$ times to obtain $X^{(1)},\dots,X^{(\log n)}$. The final estimate $X$ will be the median of the repetitions $X^{(1)},\dots,X^{(\log n)}$. Now, if $|E^+(K,V\setminus K)| \leq \esteps^2 |K|$ then $X^{(i)} \leq 3 \esteps^2 |K|$ with probability at least $1/3$ by Markov's inequality. Thus, $X \leq 3 \esteps^2 |K|$ with high probability. We proceed similar for $Y$. Our estimate for $\dc(K)$ will be $\hatdc(K) = X + 2 Y$.
\begin{claim}
    If $\dc(K) \geq \esteps |K|$, the the estimate $\hatdc(K) = X + 2 Y$ satisfies
    \[
        |\hatdc(K) - \dc(K)| \leq \esteps \cdot \dc(K),
    \]
    with high probability.
\end{claim}
\begin{cproof}
    Since $\dc(K) = |E^+(K,V\setminus K)| + 2|E^-(K)| \geq \esteps |K|$, we know that $|E^+(K,V\setminus K)| \geq \frac{1}{2}\esteps |K|$ or $|E^-(K)| \geq \frac{1}{4} \esteps |K|$. We will assume that $|E^+(K,V\setminus K)| \geq \frac{1}{2}\esteps |K|$. The case where $|E^-(K)| \geq \frac{1}{4} \esteps |K|$ is symmetric. If $|E^-(K)| \geq \esteps^2 |K|$, then the bound follows from Claim \ref{clm:est-plus-cost} and Claim \ref{clm:est-minus-cost}. Otherwise, with high probability, 
    \begin{align*}
        |\hatdc(K) - \dc(K)| & \leq |X - |E^+(K,V\setminus K)|| + 2|Y - |E^-(K)|| 
        \\ & \leq \frac{\esteps}{4} |E^+(K,V\setminus K)| + 8 \esteps^2 |K| 
        \\ &\leq  \esteps \cdot \dc(K).
    \end{align*}
\end{cproof}

\begin{claim}
    If $\dc(K) \leq \esteps |K|$ then $\hatdc(K) \leq 6 \esteps |K|$ with hight probability.
\end{claim}
\begin{cproof}
    Since $\dc(K) = |E^+(K,V\setminus K)| + 2|E^-(K)| \leq \esteps |K|$, we know that $|E^+(K,V\setminus K)| \leq \esteps |K|$ and $|E^-(K)| \leq \frac{1}{2} \esteps |K|$. Thus, with high probability, $X \leq 3 \esteps |K|$ and $Y \leq \frac{3}{2} \esteps |K|$. Hence, $\hatdc(K) = X + 2Y \leq 6 \esteps |K|$
\end{cproof}
So far we have shown that we can obtain an estimate $\hatdc(K)$ such that
\begin{itemize}
    \item if $\dc(K) \geq \esteps |K|$ then $\hatdc(K)$ concentrates with high probability,
    \item and if $\dc(K) \leq \esteps |K|$ then $\hatdc(K) \leq 6\esteps |K|$ with high probability. 
\end{itemize}
To finish the proof, we need to show that if the estimate is small then $K$ is a cluster in the optimal solution and on the other hand, if the estimate is large then it concentrates. 
\begin{claim}
    If $\hatdc(K) \leq 6 \esteps |K|$, then $K$ is a cluster in the optimal solution.
\end{claim}
\begin{cproof}
    In the case where $\dc(K) \geq 12 \esteps |K|$, we know that $\hatdc(K) \geq (1-\esteps) 12 \esteps |K| > 6 \esteps |K|$ with high probability. Thus, $\dc(K) < 12 \esteps |K|$ with high probability since we assume $\hatdc(K) \leq 6 \esteps |K|$. In that case, however, by Lemma \ref{lem:dc},
\[
   12 \esteps |K| > \dc(K) \geq \Omega(\eps^3 \dadm(K)). 
\]
In particular, for a small enough $\esteps$, $\dadm(K) < |K|$. This implies that $\bigcap_{v \in K} \Nadm(v) = \emptyset$. Thus, $K$ is a cluster in $\opt$.
\end{cproof}

\begin{claim}
    If $\hatdc(K) > 6 \esteps |K|$, then with hight probability
     \[
        (1-\esteps) \dc(K) \leq \hatdc(K) \leq (1+\esteps) \dc(K). 
     \]
\end{claim}
\begin{cproof}
In the case where $\dc(K) \leq \esteps |K|$, we know that $\hatdc(K) \leq 6 \esteps |K|$ with high probability. Thus, $\dc(K) \geq \esteps |K|$ with high probability since we assume $\hatdc(K) > 6 \esteps |K|$. To finish the proof, remember that 
\[
    (1-\esteps) \dc(K) \leq \hatdc(K) \leq (1+\esteps) \dc(K)
\]
with high probability if $\dc(K) \geq \esteps |K|$.
\end{cproof}

\end{proof}

Note that, based on Lemma~\ref{lem:estimate-dc}, if \( \dc(v) \) is too small, then we make \( K(v) \) a cluster and never consider it in~\ref{LP:clusterlp}. For any \( K(v) \) that we do consider, we can assume that \( \dc(K) = \Omega(\poly(1/\eps) |K|) \).

\subsection{Approximate $\Ncand$}

The next question is the construction of \( \Ncand \). In Section~\ref{sec:goodratio}, we defined the candidate set as the intersection of all admissible neighbors. However, it is again impossible to compute such a candidate set exactly in $\tilde O(n)$ time. Instead, we must relax the definition of the candidate set in a way that does not affect the validity of other proofs.

\begin{lemma}
    Given a vertex $r$ that is part of a non-singleton atom $K(r)$. We can find a set $\hat D(r) \subseteq \Nadm(r)$ in time $\widetilde O(d(r))$ such that 
    \[
        |\hat D(r)| \leq \frac{2 \cdot \dadm(K)}{|K|}
    \]
    with hight probability.
    \label{lem:estimateN}
\end{lemma}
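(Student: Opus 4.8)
The goal is to find, for a vertex $r$ in a non-singleton atom $K = K(r)$, a small superset $\hat D(r) \subseteq \Nadm(r)$ of $D(r) = \Ncand(r)\setminus K(r) = \bigcap_{w\in K} \Nadm(w)\setminus K$, with $|\hat D(r)| \le 2\dadm(K)/|K|$, computable in time $\widetilde O(d(r))$. The exact set $\bigcap_{w\in K}\Nadm(w)$ cannot be computed in time $\widetilde O(d(r))$ because it would require touching all $|K|$ adjacency lists, of total size $\Theta(|K| d(r))$, which is too expensive. The plan is instead to \emph{subsample} the atom and take the intersection of admissible neighborhoods over only a small random subset of $K$, then argue via a concentration/union-bound argument that the resulting (larger) set is still small enough.

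First I would sample $s = \Theta(\log n)$ vertices $w_1,\dots,w_s$ uniformly at random from $K$ (possibly with replacement). For each $w_i$ we can list $\Nadm(w_i)$ in time $\widetilde O(d(w_i)) = \widetilde O(d(r))$ using the preclustering data structure from Theorem~\ref{thm:preclustering-proc} (recall that $d(w_i)$ and $d(r)$ are within constant factors since $w_i, r$ lie in the same atom of an $\eps$-similar preclustering, so all relevant degrees are $\Theta(d(r))$). Set
\[
\hat D(r) := \Nadm(r) \cap \bigcap_{i=1}^s \Nadm(w_i).
\]
Since $r \in K$, every $w \in K$ has $r \in \Nadm(w)$-type relation only through admissibility with outside vertices; more to the point, for any $v$, $v \in \bigcap_{w \in K}\Nadm(w)$ implies $v \in \Nadm(w_i)$ for all $i$, so $D(r) \subseteq \hat D(r)$, and clearly $\hat D(r) \subseteq \Nadm(r)$. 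Computing $\hat D(r)$ takes $\widetilde O(d(r))$ time: build the list $\Nadm(r)$, then for each of its $O(d(r))$ elements test membership in each $\Nadm(w_i)$ (using the $O(\log n)$-time admissibility query from Theorem~\ref{thm:preclustering-proc}), for a total of $O(d(r) \cdot s \cdot \log n) = \widetilde O(d(r))$.

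The main work is the size bound. I would partition the ``bad'' vertices $v \in \Nadm(r)$ — those with $v \notin D(r)$ — according to the fraction $\rho(v) := |\{w \in K : v \in \Nadm(w)\}| / |K|$ of the atom that is admissible-adjacent to $v$. A vertex survives into $\hat D(r)$ only if all $s$ samples land in the set witnessing its admissibility, i.e.\ with probability at most $\rho(v)^s$. The key observation is the double-counting identity
\[
\sum_{v \in \Nadm(r)} |\{w \in K : v \in \Nadm(w)\}| = \sum_{w \in K} |\Nadm(w) \cap \Nadm(r)| \le \sum_{w\in K} \dadm(w) = \dadm(K),
\]
so $\sum_v \rho(v) \le \dadm(K)/|K|$. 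Split $\Nadm(r) \setminus D(r)$ into vertices with $\rho(v) \ge 1/2$ and those with $\rho(v) < 1/2$. For the first group, by the identity there are at most $2\dadm(K)/|K|$ of them, which is already within the budget even if all survive. For the second group, each survives into $\hat D(r)$ with probability at most $2^{-s} \le 1/n^{10}$, so by a union bound over the at most $n$ such vertices, with high probability none survives; hence with high probability $\hat D(r) \subseteq D(r) \cup \{v : \rho(v) \ge 1/2\}$, giving $|\hat D(r)| \le |D(r)| + 2\dadm(K)/|K|$. Finally, since $D(r) \subseteq \bigcap_{w\in K}\Nadm(w)$, the same double-counting gives $|D(r)| \le \dadm(K)/|K|$ (every $v \in D(r)$ contributes $|K|$ to the left side of the identity), and combining yields $|\hat D(r)| \le 3\dadm(K)/|K|$; absorbing constants into the sampling count $s$ (or tightening the threshold from $1/2$ to a smaller constant) brings this down to the claimed $2\dadm(K)/|K|$.

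The step I expect to be the main obstacle is getting the constant in the bound exactly to $2$ rather than just $O(1)$: this requires being careful about the split threshold and the interplay between $|D(r)|$ itself (which already costs up to $\dadm(K)/|K|$) and the surviving high-$\rho$ vertices. One clean fix is to observe that $D(r)$ vertices have $\rho(v)=1$, so they are \emph{included} in the count of $\{v:\rho(v)\ge 1/2\}$, and the double-counting bound $\sum_v \rho(v) \le \dadm(K)/|K|$ directly gives $|\{v : \rho(v) \ge 1/2\}| \le 2\dadm(K)/|K|$; combined with the high-probability elimination of all low-$\rho$ survivors, this yields $|\hat D(r)| \le 2\dadm(K)/|K|$ exactly, with high probability. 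I would present the argument in that order to avoid the spurious extra $|D(r)|$ term.
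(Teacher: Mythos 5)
Your proof is correct, but it takes a different route from the paper. The paper's argument is more minimalist: it samples $\Theta(\log n)$ vertices from $K$, computes $\dadm(v_i)$ for each, picks the sample $u$ minimizing $\dadm(u)$, and sets $\hat D(r) = \Nadm(u)\cap\Nadm(r)$. Since $\EX_{v\sim K}[\dadm(v)] = \dadm(K)/|K|$, Markov's inequality gives $\dadm(v_i)\le 2\dadm(K)/|K|$ with probability at least $1/2$ per sample, so whp some sample achieves this, and then $|\hat D(r)|\le \dadm(u)\le 2\dadm(K)/|K|$ deterministically for that $u$ — no union bound over candidate vertices and no double counting is needed. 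Your construction instead intersects $\Nadm(r)$ with \emph{all} $\Theta(\log n)$ sampled admissible neighborhoods and bounds the survivors by splitting on the popularity $\rho(v)$: the identity $\sum_{v\in\Nadm(r)}\rho(v)\le \dadm(K)/|K|$ caps the high-$\rho$ vertices at $2\dadm(K)/|K|$ (your final-paragraph observation that vertices of the true intersection have $\rho(v)=1$ and are therefore already counted there is exactly the right fix, avoiding the spurious extra $|D(r)|$ term), and low-$\rho$ vertices each survive with probability at most $2^{-s}$, killed by a union bound. Both arguments run in $\widetilde O(d(r))$ (yours via the $O(\log n)$-time admissibility oracle of Theorem~\ref{thm:preclustering-proc}, the paper's via listing $\Nadm$ of $O(\log n)$ atom vertices, all of degree $\Theta(d(r))$), both yield a set containing $\bigcap_{w\in K}\Nadm(w)\setminus K$ as needed downstream, and both hit the stated constant $2$; yours produces a potentially smaller set at the cost of a slightly heavier probabilistic argument, while the paper's Markov-plus-repetition trick is simpler and gives the size bound as a deterministic consequence of having found one low-$\dadm$ sample.
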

\begin{proof}
    We will sample $s = \log n$ vertices $v_1, \dots, v_s$ from $K(r)$ uniformly at random. Let $u$ be the vertex that minimizes $\dadm(u)$ among the vertices $v_1, \dots v_s$. We set $\hat D(r) = \Nadm(u) \cap \Nadm(r)$.
    Observe that 
    \[
        \EX_{v \sim K} \left[\dadm(v)\right] = \frac{\dadm(K)}{K}.
    \]
    Thus, by Markov's inequality, $\dadm(v_1) \leq \frac{2 \cdot \dadm(K)}{K}$ with probability at least $1/2$. Thus, after $s = \log n$ repetitions there will be a vertex $v$ among $v_1, \dots v_s$ such that $\dadm(v) \leq \frac{2 \cdot \dadm(K)}{K}$ with high probability. In particular, 
    \[
        |\hat D(r)| \leq \dadm(v) \leq \frac{2 \cdot \dadm(K)}{|K|}. 
    \]
\end{proof}

Lemma \ref{lem:sizeofcandidateset} and Lemma \ref{lem:boundtwohopsneighbor} still hold true for the approximate $\hatNcand \coloneqq K(r) \cup \hat D(r)$ where $\hat D(r)$ is the estimate from Lemma \ref{lem:estimateN}. 

\begin{lemma}
    \label{lem:sizeofcandidateset-approx}
    For any $r \in V$, if $v \in \hatNcand(r)$, then $|\hatNcand(r)| = O(\eps^{-4}d(v))$.
\end{lemma}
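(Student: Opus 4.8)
This lemma is the direct analogue of Lemma~\ref{lem:sizeofcandidateset}, but with $\hatNcand(r) = K(r) \cup \hat D(r)$ in place of $\Ncand(r)$, where $\hat D(r)$ comes from Lemma~\ref{lem:estimateN}. So the plan is to mirror the original proof, controlling $|\hat D(r)|$ by $|\Nadm(r)|$ and $|K(r)|$ by $d(r)$, and then transfer the degree bound from $r$ to an arbitrary admissible neighbour $v$.

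First I would bound $|\hatNcand(r)|$. Since $\hat D(r) \subseteq \Nadm(r)$ by Lemma~\ref{lem:estimateN}, we have $|\hatNcand(r)| \leq |K(r)| + |\Nadm(r)|$. Because the preclustering is $\eps$-similar, $|\Nadm(r)| = \dadm(r) \leq 2\eps^{-3} d(r)$, and because $K(r)$ is a non-singleton atom, every vertex in $K(r)$ has at least a $(1-O(\eps))$-fraction of its $+$neighbours inside $K(r)$, giving $|K(r)| = O(d(r))$ (in fact $|K(r)| \le (1+O(\eps)) d(r)$). Hence $|\hatNcand(r)| = O(\eps^{-3} d(r))$. (If $r$ is a singleton atom then $\hatNcand(r)$ would just be defined via $\hat D(r) \subseteq \Nadm(r)$ and the same bound holds; but the statement of Lemma~\ref{lem:estimateN} already restricts to the non-singleton case, and the singleton case is unchanged from Lemma~\ref{lem:sizeofcandidateset}.)

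Next, take any $v \in \hatNcand(r)$. Then either $v \in K(r)$ or $v \in \hat D(r) \subseteq \Nadm(r)$; in both cases the pair $(r,v)$ is admissible (atomic edges are between degree-similar vertices as well, since within an atom $v$ is adjacent to almost all of $K(r)$ and conversely, so degrees inside an atom are comparable up to $(1\pm O(\eps))$). Using the $\eps$-similar property for admissible pairs, $d(r) \leq 2\eps^{-1} d(v)$. Combining with the bound from the previous paragraph, $|\hatNcand(r)| = O(\eps^{-3} d(r)) = O(\eps^{-4} d(v))$, which is exactly the claim.

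I do not expect any serious obstacle here; the only point requiring a little care is the atomic case, i.e.\ arguing that when $v \in K(r)$ the degrees $d(r)$ and $d(v)$ are within a constant factor — this follows from the third bullet of Definition~\ref{def:epsgoodpreclustering} (each vertex of an atom is adjacent to a $(1-O(\eps))$-fraction of the atom and has few neighbours outside), so $d(r), d(v) \in [(1-O(\eps))|K(r)|, (1+O(\eps))|K(r)|]$ and hence $d(r) = O(d(v))$. With that in hand the constant in the $O(\cdot)$ is absorbed and the proof is complete.
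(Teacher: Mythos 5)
Your proof is correct and takes essentially the same route as the paper, which simply notes that $\hat D(r) \subseteq \Nadm(r)$ so the argument for Lemma~\ref{lem:sizeofcandidateset} (bound $|\hatNcand(r)|$ by $|K(r)|+|\Nadm(r)| = O(\eps^{-3}d(r))$, then use degree similarity of the pair $(r,v)$ to get $d(r) \le 2\eps^{-1}d(v)$) carries over unchanged. Your additional care with the case $v \in K(r)$, where the edge is atomic rather than admissible and degree comparability comes from the third bullet of Definition~\ref{def:epsgoodpreclustering}, is a detail the paper's proof glosses over but does not constitute a different approach.
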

The proof for Lemma \ref{lem:sizeofcandidateset-approx} is the same as for Lemma \ref{lem:sizeofcandidateset} since $\hatNcand \subseteq \Nadm(r)$ remains true for the estimate.

\begin{lemma}
\label{lem:boundtwohopsneighbor-approx}
For any vertex $r$ and any \epslarge cluster $C$ such that $K(r) \subseteq C \subseteq \hatNcand(r)$, we have  
\begin{align*}
    |\hat D(r)| \cdot |\hatNcand(r)| = O( \eps^{-8} \dadm(C)). 
\end{align*}
\end{lemma}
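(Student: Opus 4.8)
The plan is to split on whether $r$ lies in a non-singleton atom, and in the interesting case to absorb the slack created by the fact that $\hatNcand(r)$ over-approximates the exact candidate set, using the size bound $|\hat D(r)| \le 2\dadm(K(r))/|K(r)|$ from Lemma~\ref{lem:estimateN}. If $r$ is a singleton atom, the sublinear algorithm simply takes $\hatNcand(r) = \Ncand(r) = \Nadm(r)\setminus V_{\calK}$ (no sampling is needed there, since this set can be listed in $\widetilde O(d(r))$ time by filtering $\Nadm(r)$), so $\hat D(r) = D(r)$ and the claim is literally Lemma~\ref{lem:boundtwohopsneighbor}. From now on I would assume $r$ lies in a non-singleton atom $K = K(r)$.

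First I would record the structural facts. By construction in the proof of Lemma~\ref{lem:estimateN}, $\hat D(r) = \Nadm(w)\cap\Nadm(r)$ for some $w \in K$, so $\hat D(r)\subseteq\Nadm(r)$; since $r\in V_{\calK}$, every admissible pair incident to $r$ has its other endpoint outside $V_{\calK}$, hence $\hat D(r)\cap K = \emptyset$ and $|\hatNcand(r)| = |K| + |\hat D(r)|$. Combining this with $|\hat D(r)| \le 2\dadm(K)/|K|$ (Lemma~\ref{lem:estimateN}; here $\dadm(K)=\sum_{v\in K}\dadm(v)$) gives
\[
    |\hat D(r)|\cdot|\hatNcand(r)| \;\le\; \frac{2\dadm(K)}{|K|}\Bigl(|K| + \frac{2\dadm(K)}{|K|}\Bigr) \;=\; 2\dadm(K) + \frac{4\dadm(K)^2}{|K|^2}.
\]
For the linear term I would use $K\subseteq C$ together with $\dadm(v)\ge 0$ to get $\dadm(K)\le\dadm(C)$. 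For the quadratic term I would bound $\dadm(K)$ via the preclustering: by the third bullet of Definition~\ref{def:epsgoodpreclustering} every $v\in K$ has $d(v)\le(1+O(\eps))|K|$, and by the first bullet $\dadm(v)\le 2\eps^{-3}d(v)$, so summing over $v\in K$ yields $\dadm(K) = O(\eps^{-3}|K|^2)$. Hence $\frac{4\dadm(K)^2}{|K|^2} = \frac{4\dadm(K)}{|K|^2}\cdot\dadm(K) = O(\eps^{-3})\cdot\dadm(K) \le O(\eps^{-3})\cdot\dadm(C)$, and adding the two contributions gives $|\hat D(r)|\cdot|\hatNcand(r)| = O(\eps^{-3}\dadm(C)) = O(\eps^{-8}\dadm(C))$, as claimed. (If $\dadm(K) = 0$ then $\hat D(r) = \emptyset$ and there is nothing to prove.)

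The only place that needs care is the quadratic term $\dadm(K)^2/|K|^2$, which is the price of the over-approximation: unlike the exact $\Ncand(r)$, the estimate $\hatNcand(r)$ can be larger by up to a factor $\dadm(K)/|K|^2$, and this is exactly compensated by the fact that an atom's vertices have degree $\Theta(|K|)$, which caps $\dadm(K)$ at $O(\eps^{-3}|K|^2)$. No ingredient beyond Lemma~\ref{lem:estimateN} and Definition~\ref{def:epsgoodpreclustering} is required; in particular the $\eps$-largeness of $C$ and the inclusion $C\subseteq\hatNcand(r)$ are not used in the derivation — only $K(r)\subseteq C$.
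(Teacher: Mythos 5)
Your proof is correct, but it takes a genuinely different route from the paper's. The paper argues via a threshold case split on $|K(r)|$ versus $\eps^8|\hatNcand(r)|$: when the atom is relatively large it uses the sampling guarantee $|K(r)|\cdot|\hat D(r)|\le 2\dadm(K(r))\le 2\dadm(C)$ directly, and when the atom is relatively small it invokes the $\eps$-largeness of $C$ to get $|C|=\Omega(\eps^4|\hatNcand(r)|)$ and then lower-bounds $\dadm(C)\ge \dadm(C\setminus K(r))\ge (|C|-|K(r)|)\cdot|C|=\Omega(\eps^{8}|\hatNcand(r)|^2)$. You avoid any threshold: after the correct observations that $\hat D(r)\cap K(r)=\emptyset$ (admissible pairs incident to a vertex of $V_{\calK}$ leave $V_{\calK}$) and $|\hat D(r)|\le 2\dadm(K)/|K|$, you expand the product as $2\dadm(K)+4\dadm(K)^2/|K|^2$ and absorb the quadratic term using the structural fact $\dadm(K)=O(\eps^{-3}|K|^2)$, which follows from $d(v)\le(1+O(\eps))|K|$ and $\dadm(v)\le 2\eps^{-3}d(v)$ for $v\in K$. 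This is cleaner, yields the stronger bound $O(\eps^{-3}\dadm(C))$ for non-singleton atoms, and, as you note, uses only $K(r)\subseteq C$ there rather than $\eps$-largeness or $C\subseteq\hatNcand(r)$. The only place you lean on an implicit convention is the singleton case: you assume $\hatNcand(r)=\Ncand(r)$ (exact computation, no sampling needed), so that the claim reduces to Lemma~\ref{lem:boundtwohopsneighbor}; this matches the intended construction, since Lemma~\ref{lem:estimateN} is only stated for non-singleton atoms, whereas the paper's case analysis happens to cover all vertices uniformly because $|K(r)|\cdot|\hat D(r)|\le 2\dadm(K(r))$ is trivial when $K(r)=\{r\}$ and $\hat D(r)\subseteq\Nadm(r)$. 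No gap beyond that presentational point.
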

\begin{proof}
Fix $\alpha = \eps^8$. We will distinguish two cases.
\begin{enumerate}
    \item $|K(r)| \geq \alpha |\hatNcand(r)|$. In this case, we can show the required inequality immediately because the estimate satisfies $|K(r)| \cdot |\hat D(r)| \leq 2 \dadm(K(r))$. In particular, 
    \begin{align*}
        \dadm(C) \geq \dadm(K(r)) \geq \frac{1}{2} |K(r)| \cdot |\hat D(r)| \geq \frac{\alpha}{2} |\hatNcand(r)| \cdot |\hat D(r)|.
    \end{align*}
    \item $|K(r)| \leq \alpha |\hatNcand(r)|$. We have by definition of $\hatNcand$ that $|\hatNcand(r)| \leq |K(r)| + |\Nadm(r)| = O(\epsilon^{-3} d(r))$. Since $C$ is $\eps$-large, $|C| \geq \eps d(r) \geq \Omega(\epsilon^4 |\Ncand(r)|)$
    \begin{align*}
        \dadm(C) & \geq \dadm(C \setminus K(r)) 
        \\ & \geq (|C| - |K(r)|) \cdot |C| \\ & \geq \Omega((\epsilon^4 |\hatNcand(r) - |K(r)|) \cdot \epsilon^4 \hatNcand(r))
        \\ & \geq \Omega((\epsilon^4 (\epsilon^4 - \alpha) |\hatNcand(r)|^2))
        \\ & \geq \Omega(\epsilon^8 |\hatNcand(r)|^2).
    \end{align*}
\end{enumerate}

\end{proof}

\subsection{Estimate the cost of a cluster $T$}
We are given a vertex $r$ and a cluster $T \subseteq \hatNcand(r)$ that contains at most one non-singleton atom $K \coloneqq K(r)$.
We want to find a good estimate of the $\cost(T)$. If $T$ doesn't contains a non-singleton atom $K$, we will define $K = \emptyset$. We can write the cost of the cluster $T$ as  
    \[
        \cost(T) = \cost(K) + \Delta(T).
    \]
    Here, $\Delta(T)$ is the change in cost when adding the missing vertices from $T \setminus K$ to $K$. We can write $\Delta(T)$ as the sum over the individual contributions of the vertices in $T\setminus K$, 
    \[
        \Delta(T) = \sum_{v \in T \setminus K} \Delta(T, v).
    \]
    where $\Delta(T, v)$ is given as
    \begin{align*}
        \Delta(T, v) = \frac{d^+(v) + d^-(v, T) + d^-(v, K) - d^+(v, T) - d^+(v, K)}{2}
    \end{align*}
    We already have a good estimate for $\cost(K) = \dc(K)$ by Lemma \ref{lem:estimate-dc}. We can also obtain a good estimate for $\Delta(T)$.
    \begin{lemma}
    For any small enough constant $\esteps > 0$, given a vertex $r$ and a cluster $T \subseteq \hatNcand(r)$ that contains at most one non-singleton atom $K \coloneqq K(r)$. Assume there exists an \epslarge cluster $C$ with $K(r) \subseteq C \subseteq \hatNcand(r)$. We can find an estimate $\hat \Delta(T)$ in time $\widetilde O(d(r) \eps^{20} / \esteps)$ such that,
    \[
        |\hat \Delta(T) - \Delta(T)| \leq \esteps \cdot \min\{\dadm(C), \dc(T\setminus K)\}
    \]
    with hight probability. 
    \label{lem:estimatedelta}
\end{lemma}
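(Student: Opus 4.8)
The plan is to reduce the estimation of $\Delta(T)$ to the estimation of two positive-edge counts, then estimate each by uniform sampling, using the preclustering structure to certify that the demanded additive error $\esteps\min\{\dadm(C),\dc(T\setminus K)\}$ is only a $\poly(1/\eps)$ factor below the trivial upper bound on each count. Concretely, I would first expand $\Delta(T)=\sum_{v\in T\setminus K}\Delta(T,v)$ with the given formula and collect terms. Write $T':=T\setminus K$; since $T'\subseteq\hatNcand(r)$ and $\hatNcand(r)$ meets $V_{\calK}$ only in $K$, the set $T'$ consists entirely of singleton atoms, so $\dc(v)=d^{+}(v)-1$ for $v\in T'$ and $\dc(T')=\sum_{v\in T'}(d^{+}(v)-1)$. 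Every cross-count $d^{+}(v,\cdot)+d^{-}(v,\cdot)$ appearing in the sum equals one of $|T'|$, $|T|-1$, $|K|$, so the only pieces not already computable exactly in $O(|T'|)=\widetilde O(d(r))$ time (degree and edge queries) are $\sum_{v\in T'}d^{+}(v)$ and the two counts $N_{1}:=|E^{+}(T\setminus K)|$ (positive edges inside $T'$) and $N_{2}:=|E^{+}(T',K)|$ (positive edges between $T'$ and $K$). A short calculation gives $\Delta(T)=\tfrac12\big(\sum_{v\in T'}d^{+}(v)+|T'|^{2}-2|T'|+2|T'||K|\big)-2N_{1}-2N_{2}$, so it suffices to estimate $N_{1},N_{2}$ each to additive error $\tfrac{\esteps}{8}\min\{\dadm(C),\dc(T')\}$ with high probability; the degenerate cases $K=\emptyset$ (then $N_{2}=0$) or $|T'|=\widetilde O(\poly(1/\eps))$ are handled by computing $\Delta(T)$ exactly in $\widetilde O(d(r))$ time.

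For $N_{1}$ I would draw $s$ vertices $v_{1},\dots,v_{s}$ uniformly with replacement from $T'$ and output $\widehat N_{1}=\tfrac{|T'|}{2s}\sum_{i}d^{+}(v_{i},T')$, computing each $d^{+}(v_{i},T')$ exactly by scanning $v_{i}$'s neighbour list (or edge-querying against $T'$), at cost $\widetilde O(d(r))$ per sample; this is unbiased with each summand in $[0,|T'|^{2}/2]$. Lemma~\ref{lem:sizeofcandidateset-approx} gives $d^{+}(v)=\Omega(\eps^{4}|\hatNcand(r)|)=\Omega(\eps^{4}|T'|)$ for $v\in T'$, hence $\dc(T')=\Omega(\eps^{4}|T'|^{2})$, and Lemma~\ref{lem:boundtwohopsneighbor-approx} gives $\dadm(C)=\Omega(\eps^{8}|\hat D(r)|\,|\hatNcand(r)|)=\Omega(\eps^{8}|T'|^{2})$, so the target error is $\Omega(\esteps\eps^{8}|T'|^{2})$, only a $\poly(1/\eps)/\esteps$ factor below the range of the summands; Hoeffding then gives the bound with $s=\widetilde O(\poly(1/\eps)/\esteps^{2})$ samples. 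For $N_{2}$ I would instead sample $u_{1},\dots,u_{s}$ uniformly from $K$ and output $\widehat N_{2}=\tfrac{|K|}{s}\sum_{i}d^{+}(u_{i},T')$, unbiased with each scaled summand in $[0,|K||T'|]$: here each $v\in T'$ is admissible to $r\in K$, so $d^{+}(v)=\Omega(\eps\,d^{+}(r))$, while $\eps$-similar preclustering forces $|K|=\Theta(d^{+}(r))$, hence $\dc(T')=\Omega(\eps|T'||K|)$, the trivial bound $N_{2}\le|T'||K|$ is within a $\poly(1/\eps)/\esteps$ factor of the permitted error, and again $s=\widetilde O(\poly(1/\eps)/\esteps^{2})$ samples suffice by Hoeffding. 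Combining $\widehat N_{1},\widehat N_{2}$ with the exact terms yields $\widehat\Delta(T)$ with the claimed error, in total time $s\cdot\widetilde O(d(r))=\widetilde O\!\big(d(r)\,\poly(1/\eps)/\esteps^{2}\big)$, matching the stated bound up to the precise exponent of $1/\eps$.

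I expect the main obstacle to be precisely the estimation of $N_{2}$: the naive approach of sampling $T'$ and measuring $d^{+}(v,K)$ fails when $K$ dwarfs $T'$, both in per-sample cost and in variance, and the resolution — observing that $\dc(T')=\Omega(\eps|T'||K|)$, so a coarse additive estimate of $N_{2}$ already meets the tolerance — is the crux. A secondary point requiring care is that the permitted error $\min\{\dadm(C),\dc(T\setminus K)\}$ can be far below $\cost(T)$ itself, so one must track which preclustering inequality (Lemma~\ref{lem:boundtwohopsneighbor-approx} versus the degree bounds of $\eps$-similarity) lower-bounds which of the two terms, and in which regime.
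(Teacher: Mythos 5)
Your proposal is correct in substance, but it takes a genuinely different route from the paper. The paper does not decompose $\Delta(T)$ into exact terms plus the two edge counts $N_1,N_2$; it simply samples $s=\widetilde O(\eps^{-20}\esteps^{-2})$ vertices $v_1,\dots,v_s$ uniformly from $T\setminus K$, forms the single unbiased estimator $\hat\Delta(T)=\frac{|T\setminus K|}{s}\sum_i\Delta(T,v_i)$, bounds the per-sample range by $|\Delta(T,v)|\le d(v)+|T|=O(\eps^{-2}|\hatNcand(r)|)$ (using $|\hatNcand(r)|\ge|C|\ge\eps d(r)$ and, implicitly, $|K|\le|\hatNcand(r)|$), and then applies Hoeffding twice: once against the target $\eps^{8}\esteps\,|\hatNcand(r)|\,|\hat D(r)|=O(\esteps\,\dadm(C))$ via Lemma~\ref{lem:boundtwohopsneighbor-approx}, and once against $\esteps\,\dc(T\setminus K)$ via $\dc(v)=d(v)=\Omega(\eps d(r))$ for the singleton vertices of $T\setminus K$. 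Your split into a deterministic part plus the estimators $\widehat N_1$ (sampled from $T'$) and $\widehat N_2$ (sampled from $K$) is a valid alternative and the concentration bookkeeping you sketch goes through; its advantage is that each estimator has a cleanly interpretable range, while the paper's one-shot estimator is shorter and needs no case analysis between $N_1$ and $N_2$. Two small remarks. First, your claimed crux — that "sampling $T'$ and measuring $d^+(v,K)$ fails when $K$ dwarfs $T'$" — is not actually an obstacle: the $K$-dependent terms in $\Delta(T,v)$ are bounded in magnitude by $|K|\le|\hatNcand(r)|$, so the paper's direct range bound already absorbs them, and the variance issue you anticipate does not arise. Second, for $\widehat N_2$ you verify the error budget only against $\dc(T\setminus K)=\Omega(\eps|T'||K|)$; since the lemma demands error at most $\esteps\cdot\min\{\dadm(C),\dc(T\setminus K)\}$ you also need $\dadm(C)=\Omega(\eps^{8}|T'||K|)$, which follows immediately from Lemma~\ref{lem:boundtwohopsneighbor-approx} because $|T'|\le|\hat D(r)|$ and $|K|\le|\hatNcand(r)|$ — worth stating explicitly, but it closes with one line.
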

\begin{proof}
     To estimate $\Delta(T)$ we sample $s = \epsilon^{-20}\esteps^{-2} \log n$ vertices $v_1, \dots, v_s$ from $T \setminus K$ uniformly at random. We set the estimate to be
    \[
        \hat \Delta(T) = \frac{|T \setminus K|}{s} \sum_{i=1}^s \Delta(T, v_i).
    \]
     Note that the estimator is unbiased,
    \[
        \EX [\hat \Delta(T)] =  \frac{|T\setminus K|}{s} \sum_{i=1}^s \EX [\Delta(T, v_i)] = \sum_{v \in T \setminus K} \Delta(T, v) = \Delta(T).
    \]
    Observe that $|\Delta(T, v)| \leq d(v) + |T| \leq d(v) + |\hatNcand(r)| = O(\epsilon^{-2} |\hatNcand(r)|)$. Here, we used that the \epslarge cluster $C$ is a subset of $\hatNcand(r)$, in particular, $|\hatNcand(r)| \geq |C| \geq \epsilon d(r)$. By Hoeffdings inequality,
    \begin{align*}
        \Pr\left[\left|s \cdot \hat \Delta(T)  - s \cdot \Delta(T) \right | \geq s \cdot \epsilon^{8}\esteps \cdot |\hatNcand(r)| \cdot |\hat D(r)| \right] & \leq 2 \exp\left(- \Omega\left(\frac{s^2 \epsilon^{16} \esteps^2 \cdot |\hatNcand(r)|^2 \cdot |\hat D(r)|^2}{s \cdot \epsilon^{-4} |\hatNcand(r)|^2 \cdot |C\setminus K|^2}\right) \right)
        \\ & \leq 2 \exp\left(-\Omega(s \cdot \epsilon^{20}\esteps^2)\right)
        \\ & = O\left(\frac{1}{n^2}\right)
    \end{align*}
    The last inequality holds since $T \setminus K \subseteq \hat D(r)$. Remember that $|\hatNcand(r)| \cdot |\hat D(r)| = O( \eps^{-8} \dadm(C))$ by Lemma \ref{lem:boundtwohopsneighbor-approx}.
    \newline Similar, $|\Delta(C, v)| \leq d(v) + |C| \leq d(v) + |\hatNcand(r)| = O(\epsilon^{-3} d(r))$ by Lemma \ref{lem:sizeofcandidateset-approx}. Note that $\dc(v) = d(v) = \Omega(\epsilon d(r))$ for all $v \in T \setminus K$. Thus, $\dc(T\setminus K) = \Omega(|T\setminus K| \cdot \epsilon d(r)) $ By Hoeffdings inequality,
    \begin{align*}
        \Pr\left[\left|s \cdot \hat \Delta(T)  - s \cdot \Delta(T) \right | \geq s \cdot \esteps \cdot \dc(T \setminus K)\right] & \leq 2 \exp\left(-\frac{s^2 \esteps^2 \dc^2(T \setminus K)}{s \cdot \epsilon^{-6} \cdot d^2(r) \cdot |T\setminus K|^2} \right)
        \\ & \leq 2 \exp\left(- \Omega\left(\frac{s\cdot |T\setminus K|^2 \cdot \esteps^2 \cdot \epsilon^2 \cdot d^2(r)}{|T\setminus K|^2 \cdot \epsilon^{-6} d^2(r)}\right)\right)
        \\ & = O\left(\frac{1}{n^2}\right).
    \end{align*}
\end{proof}
Now, we are ready to show that we can combine $\hatdc(K)$ and $\hat \Delta(T)$ to get a good estimate for $\hatcosts(T)$.
\begin{lemma}
\label{lem:estimate-cost}
For any small enough constant $\esteps > 0$, given a vertex $r$ and a cluster $T \subseteq \hatNcand(r)$ that contains at most one non-singleton atom $K \coloneqq K(r)$, we can find an estimate $\hatcosts(T)$ in time $\widetilde O(d(r))$ such that,
    \[
        |\hatcosts(T) - \cost(T)| \leq \esteps \cdot \dc(T)
    \]
    with hight probability. 
\end{lemma}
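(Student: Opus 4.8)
The plan is to combine the two estimators already built: $\hatdc(K)$ from Lemma~\ref{lem:estimate-dc} and $\hat\Delta(T)$ from Lemma~\ref{lem:estimatedelta}, setting $\hatcosts(T) := \tfrac12 \hatdc(K) + \hat\Delta(T)$ (using $\cost(K) = \tfrac12\dc(K)$). First I would handle the degenerate case where $K = \emptyset$ (i.e.\ $T$ contains no non-singleton atom): then $\cost(T) = \Delta(T)$, we only invoke Lemma~\ref{lem:estimatedelta}, and the error is at most $\esteps \cdot \dc(T\setminus K) = \esteps \cdot \dc(T)$, which is already the claimed bound. Also, if $\dc(K)$ is so small that Lemma~\ref{lem:estimate-dc} returns the certificate that $K$ is a cluster in $\opt$, we should not be in this subroutine at all (we would have made $K$ its own cluster and removed it), so we may assume Lemma~\ref{lem:estimate-dc} gives a genuine $(1\pm\esteps)$ estimate.

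The main case is $K = K(r) \neq \emptyset$. Here I would run Lemma~\ref{lem:estimate-dc} to get $\hatdc(K)$ with $|\hatdc(K) - \dc(K)| \le \esteps\,\dc(K)$ w.h.p., and run Lemma~\ref{lem:estimatedelta} to get $\hat\Delta(T)$ with $|\hat\Delta(T) - \Delta(T)| \le \esteps \cdot \min\{\dadm(C), \dc(T\setminus K)\}$ w.h.p.; both run in $\widetilde O(d(r))$ time since $\eps,\esteps$ are constants. By a union bound both hold simultaneously w.h.p. Then
\[
    |\hatcosts(T) - \cost(T)| \le \tfrac12|\hatdc(K) - \dc(K)| + |\hat\Delta(T) - \Delta(T)| \le \tfrac{\esteps}{2}\dc(K) + \esteps\,\dc(T\setminus K).
\]
Since $K$ and $T\setminus K$ are disjoint and $\dc$ is additive over vertices, $\dc(K) + \dc(T\setminus K) = \dc(T)$, so the right-hand side is at most $\esteps\,\dc(T)$. (If one wants the constant to come out exactly $\esteps$ rather than needing a $\tfrac12$-vs-$1$ bookkeeping, simply run the two sub-estimators with parameter $\esteps$ and note $\tfrac{\esteps}{2}\dc(K) + \esteps\,\dc(T\setminus K) \le \esteps\,\dc(T)$ directly.) This gives the claimed error bound.

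The step I expect to need the most care is not the concentration — that is packaged in the two prior lemmas — but rather making sure the \emph{hypotheses} of Lemma~\ref{lem:estimatedelta} are met here: that lemma assumes the existence of an \epslarge cluster $C$ with $K(r)\subseteq C\subseteq\hatNcand(r)$, which is exactly the standing assumption in this part of the algorithm (Assumption~(i) of Lemma~\ref{lem:goodratio}), and it also implicitly uses $\dc(v) = d(v) = \Omega(\eps\, d(r))$ for singleton vertices $v \in T\setminus K$, which holds because every such $v$ is admissible to $r$ and the preclustering is \epssimilar. The only genuinely new observation needed beyond plugging things in is the additivity $\dc(T) = \dc(K) + \dc(T\setminus K)$, which is immediate from $\dc(S) = \sum_{v\in S}\dc(v)$ and the disjointness of $K$ and $T\setminus K$. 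The runtime bound $\widetilde O(d(r))$ follows since Lemma~\ref{lem:estimate-dc} costs $\widetilde O(|K|) = \widetilde O(d(r))$ (as $K \subseteq \hatNcand(r)$ and $|\hatNcand(r)| = O(\eps^{-4} d(r))$ when nonempty, by Lemma~\ref{lem:sizeofcandidateset-approx}, and more directly $|K| = O(d(r))$ by \epssimilar preclustering) and Lemma~\ref{lem:estimatedelta} costs $\widetilde O(d(r)\,\eps^{20}/\esteps) = \widetilde O(d(r))$ for constant $\eps,\esteps$.
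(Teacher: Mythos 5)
Your proposal is correct and follows essentially the same route as the paper's proof: combine the estimates from Lemma~\ref{lem:estimate-dc} and Lemma~\ref{lem:estimatedelta} via the decomposition $\cost(T)=\cost(K)+\Delta(T)$, apply the triangle inequality, and use additivity of $\dc$ to bound the error by $\esteps\,(\dc(K)+\dc(T\setminus K))=\esteps\,\dc(T)$. If anything you are more careful than the paper on the constant-factor bookkeeping: the paper sets $\hatcosts(T)=\hatdc(K)+\hat\Delta(T)$ while elsewhere (Lemma~\ref{lem:dc}) stating $\dc(K)=2\cost(K)$, whereas your choice $\hatcosts(T)=\tfrac12\hatdc(K)+\hat\Delta(T)$ resolves that inconsistency, and your explicit treatment of the $K=\emptyset$ case and of the hypotheses of Lemma~\ref{lem:estimatedelta} only adds rigor.
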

\begin{proof}
    By Lemma \ref{lem:estimate-dc} and Lemma \ref{lem:estimatedelta}, we can obtain estimates $\hatdc(K)$ and $\hat \Delta(T)$ respectively. We will set $\hat \cost(T) = \hatdc(K) + \hat \Delta(T)$. We have by the guarantees provided, 
    \begin{align*}
        |\hatcosts(T) - \cost(T)| & \leq |\hatdc(K) - \dc(K)| + |\hat \Delta(T) - \Delta(T)| \\ & \leq \esteps \cdot (\dc(K) + \dc(T \setminus K)) \\ & = \esteps \cdot \dc(T).
    \end{align*}
\end{proof}

\subsection{Finding one small ratio cluster in sublinear time}
\begin{lemma}
\label{lem:goodratio-sublinear}
Suppose we are given a graph $G = (V, E)$, vertex weights $\hatp$, a target ratio $R$, a vertex $r$ and the set of vertices $\Nadm(r)$. 
\begin{itemize}
    \item[(i)] Assume there exists a cluster $C$ be an \epslarge cluster with $K(r) \subseteq C \subseteq \Ncand(r)$ such that $\covers(C) + \covereps^2  \dadm(C) \leq R \cdot \hatp(C)$.
    \item[(ii)] Assume that $\covers(\{ v \}) > R \cdot \hatp(\{ v \})$ for all $v \in C$.
\end{itemize}
Then, with high probability, in time $\widetilde O(d(r))$, we can find
a cluster $C_r \subseteq \Ncand(r)$ such that,
\[
    \covers(C_r) \leq R \cdot \hatp(C_r).
\]
Moreover, $C_r$ does not split atoms and contains exactly one non-singleton atom $K(r) \subseteq C_r$ iff $K(r)$ is a non-singleton atom.
\end{lemma}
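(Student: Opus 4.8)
The plan is to run exactly the algorithm from the proof of Lemma~\ref{lem:goodratio} --- GenerateClusterBySampling (Algorithm~\ref{alg:generateclusterBySampling}) invoking GenerateCluster (Algorithm~\ref{alg:GenerateCluster}) --- and to replace each subroutine that cost $\widetilde O(d^2(r))$ there by a $\widetilde O(d(r))$-time surrogate built earlier in this section, so that a single call runs in $\widetilde O(d(r))$ and the $2^{\poly(1/\eps)}\log n$ many calls cost $\widetilde O(d(r))$ in total. Three substitutions are needed. First, for the non-singleton atom $K = K(r)$ we use the estimate $\hatdc(K)$ of Lemma~\ref{lem:estimate-dc} in place of the exact $\dc(K)$ (equivalently, in the weights $w(v) = R\,\hatp(v) - \dc(v)$); the certificate branch of that lemma flags the atoms that are optimal clusters in their own right --- those whose $\dc$ is below threshold --- and these are isolated before this subroutine is ever invoked, so henceforth a $(1\pm\esteps)$-accurate $\hatdc(K)$ is in hand and $\dc(K) = \Omega(\poly(1/\eps)|K|)$. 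This is the only inexactness in $w$: for the singleton vertices $v \in \hat D(r)$ we have $\dc(v) = d(v) - 1$ exactly, so the chunk-by-chunk inclusion decisions in GenerateCluster use exact weights and introduce no new error. Second, we replace $\Ncand(r)$ by the approximate candidate set $\hatNcand(r) = K(r) \cup \hat D(r)$ of Lemma~\ref{lem:estimateN}, built in $\widetilde O(d(r))$ time, which satisfies $\Ncand(r) \subseteq \hatNcand(r)$ with $\hat D(r) \subseteq \Nadm(r)$. Third, the final acceptance test $\cost(T) \le w(T)$ --- the last $\widetilde O(d^2(r))$ operation --- is replaced by a test on the estimate $\hatcosts(T)$ of Lemma~\ref{lem:estimate-cost}, computed in $\widetilde O(d(r))$.

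With these substitutions, the combinatorial heart of Section~\ref{sec:goodratio} --- Claim~\ref{claim:marginalchangesinsmallchunk}, Claim~\ref{claim:invariant-4.2}, Lemma~\ref{lem:goodratio2} and Claim~\ref{claim:lowerboundofsampleset} --- goes through verbatim with $\hatNcand$ in place of $\Ncand$, since Lemma~\ref{lem:sizeofcandidateset-approx} and Lemma~\ref{lem:boundtwohopsneighbor-approx} re-prove precisely the two facts that analysis used, namely $|\hatNcand(r)| = O(\eps^{-4} d(v))$ for $v \in \hatNcand(r)$ and $|\hat D(r)| \cdot |\hatNcand(r)| = O(\eps^{-8}\dadm(C))$. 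Hence, after the $O(\log n)$ outer repetitions boost the success probability exactly as before, the ``correct'' GenerateCluster call produces a set $\hatt$ with $\cost(\hatt) \le w(\hatt) - \tfrac{\covereps^2}{2}\dadm(C)$, equivalently $\covers(\hatt) \le R\,\hatp(\hatt) - \tfrac{\covereps^2}{2}\dadm(C)$ --- a genuine additive slack. I will then take the acceptance test to be $\hatcosts(T) + \hatdc(K) + \dc(T\setminus K) + c\,\esteps\,\hatdc(T) \le R\,\hatp(T)$, where $\hatdc(T) := \hatdc(K) + \dc(T\setminus K)$ is computable in $\widetilde O(d(r))$ and $c$ is an absolute constant: combining Lemma~\ref{lem:estimate-cost}, Lemma~\ref{lem:estimatedelta} and Lemma~\ref{lem:estimate-dc} shows this quantity is within $O(\esteps \dc(T))$ of $\covers(T)$, so for $c$ large enough every accepted $T$ genuinely satisfies $\covers(T) \le R\,\hatp(T)$ (soundness), and for $\esteps$ small enough $\hatt$ itself passes the test provided the error $O(\esteps\dc(\hatt))$ is absorbed by the slack $\tfrac{\covereps^2}{2}\dadm(C)$ (completeness).

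The crux is that last inequality, i.e.\ bounding $\dc(\hatt)$ by $O(\poly(1/\eps)\dadm(C))$. The contribution of $\hatt \setminus K$ is harmless: $\dc(\hatt \setminus K) \le \sum_{v \in \hatt \setminus K} d(v) = O(\eps^{-1}|\hatNcand(r)|^2)$, and by Lemma~\ref{lem:boundtwohopsneighbor-approx} this is $O(\poly(1/\eps)\dadm(C))$; moreover the $\hat\Delta$-part of the cost estimate is already controlled directly, since Lemma~\ref{lem:estimatedelta} bounds its error by $\esteps\min\{\dadm(C),\dc(T\setminus K)\} \le \esteps\dadm(C)$. The delicate term is $\dc(K) = 2\cost(K)$: when $C$ is essentially the bare atom $K$ and $\dadm(K)$ is much smaller than $\cost(K)$, the error $\esteps\dc(K)$ can exceed $\tfrac{\covereps^2}{2}\dadm(C)$ and no slack can hide it. This regime, however, is handled elsewhere --- a cluster that is essentially one atom $K(v)$ is added to $\mathcal{F}$ directly by the atom test on Line~\ref{line:addlargepnodes} of Algorithm~\ref{alg:disjointfamily} (run with $\hatdc(K(v))$, the induced $(1\pm\esteps)$ multiplicative error being absorbed into the $O(\covereps)$ slack of the ratio bound, since such an atom contributes its true cost to the final clustering). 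Hence when GenerateClusterBySampling is invoked we may assume $|C \setminus K| = \Omega(\poly(\eps)|\hatNcand(r)|)$, which forces $\dadm(C) = \Omega(\poly(\eps)|\hatNcand(r)|^2)$; together with $\cost(K) \le \cost(C) + \tfrac12|E^+(K, C\setminus K)| = O(\eps^{-1}|\hatNcand(r)|^2)$ this gives $\dc(\hatt) = O(\poly(1/\eps)\dadm(C))$, so taking $\esteps$ to be a sufficiently small polynomial in $\eps$ (allowed, since $\esteps = O(\eps^3)$ is only an upper bound) makes every estimation error at most $\tfrac{\covereps^2}{4}\dadm(C)$ and both directions close. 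Finally, the running time of one GenerateCluster call is $\widetilde O(d(r))$ --- the $\rmEstCost$ evaluations only read the $O(\eta^2)$ sampled vertices' adjacency lists, each of length $O(\eps^{-1}d(r))$, and the acceptance test is $\widetilde O(d(r))$ by Lemma~\ref{lem:estimate-cost} --- and there are $2^{\poly(1/\eps)}$ such calls across the enumeration of sampled subsets, the guesses in $L(r)$, and the $O(\log n)$ repetitions, so $\widetilde O(d(r))$ total; the probabilistic guarantees and union bounds carry over unchanged from Lemma~\ref{lem:goodratio2}.
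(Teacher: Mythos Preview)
Your plan is mostly right --- the substitutions for $\Ncand(r)$ and the carry-over of the GenerateCluster analysis via Lemmas~\ref{lem:sizeofcandidateset-approx} and~\ref{lem:boundtwohopsneighbor-approx} are sound --- but the acceptance-test strategy has a real gap. Testing each candidate $T$ via an estimate of $\covers(T)$ incurs error $O(\esteps\,\dc(T))$, and for completeness you then need $\dc(K)=O(\poly(1/\eps)\,\dadm(C))$. This can genuinely fail: take an atom $K$ with many internal $-$edges (so $\cost(K)$ is large) but only $O(1)$ admissible neighbors, and $|C\setminus K|=O(1)$; then $\dc(K)=\Theta(|K|^2)$ while $\dadm(C)=O(|K|)$. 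Your appeal to the atom test on Line~\ref{line:addlargepnodes} does not close this --- that test fires when $\covers(K)/p(K)\le(1+6\covereps)R$, a ratio condition on $K$, and the implication ``$|C\setminus K|$ small $\Rightarrow$ atom test fires'' is neither among the lemma's hypotheses nor established at the constants you would need. As written, the argument reaches outside the lemma and still does not cover all cases.

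The paper sidesteps the issue by dropping the acceptance test altogether. Since every candidate $T\in\mathcal T$ contains $K$, the term $\cost(K)-w(K)$ is common to $\cost(T)-w(T)$ for all $T$; hence it suffices to return the $T$ minimizing $\hat\Delta(T)-w(T\setminus K)$ (each summand computable in $\widetilde O(d(r))$). The comparison error is then purely the error in $\hat\Delta$, which Lemma~\ref{lem:estimatedelta} already bounds by $\esteps\,\dadm(C)$: the correct $\hatt$ has $\cost(\hatt)-w(\hatt)\le-\tfrac{\covereps^2}{2}\dadm(C)$, the minimizer $T_{\min}$ satisfies $\Delta(T_{\min})-w(T_{\min}\setminus K)\le \Delta(\hatt)-w(\hatt\setminus K)+2\esteps\,\dadm(C)$, and adding back the common term gives $\cost(T_{\min})-w(T_{\min})\le 0$ for $\esteps=O(\covereps^2)$. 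The quantity $\dc(K)$ never enters. You noted the $\dadm(C)$ bound from Lemma~\ref{lem:estimatedelta} yourself, so you were one step away --- the fix is to compare candidates rather than test each one.
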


\begin{algorithm}[H]
\caption{\genclusbysam($G, \Nadm(r),\hatNcand(r), w, r, R$)}
\label{alg:generateclusterBySamplingsublinear}
\begin{algorithmic}[1]
\STATE \textbf{Input:} The graph $G$, $K(r)$, $\Nadm(r)$, $\hatNcand(r), w$, $r$, ratio $R$.
\STATE \textbf{Output:} A small ratio cluster $\hatt$ if $r$ satisfies Assumption (i) from Lemma \ref{lem:goodratio}. 
\STATE Repeat the following steps $O(\log n)$ times.
\FOR{$i$ from $1$ to $\eta$}
\STATE Uniformly sample $\Theta(\eta^4  \covereps^{-2} \eps^{-8})$ vertices from $\hatNcand(r)$ with replacement.
\STATE Let the sample set be $A_i$. \COMMENT{$A_i$ may contain some element multiple times.}
\ENDFOR
\STATE $\hat D(r) \gets \hatNcand(r)\setminus{K(r)}$
\STATE $\mathcal{T} \gets \emptyset$ 
\FOR{every $(S^1, S^2,...,S^{\eta}) \subset (A_1, A_2,...,A_{\eta})$ such that $|S^i| \leq \eta$, where $i \in [\eta]$} 
    \FOR{ every $(\tilt_1, \tilt_2, ..., \tilt_{\eta}) \in (L(r), L(r),...L(r))$, where $\tilt_j \in L(r)$ for $j \in [\eta]$ }
    \STATE Add $T \gets$ GenerateCluster($r, D(r), S^1,\ldots,S^\eta, \tilt_1,\ldots, \tilt_\eta$) to $\mathcal{T}$.
\ENDFOR
\ENDFOR
\STATE Compute the estimate $\hat \Delta(T)$ for each $T \in \mathcal{T}$ according to Lemma \ref{lem:estimatedelta}.
\RETURN $T \in \mathcal{T}$ that minimizes $\hat \Delta(T)$.
\end{algorithmic}
\end{algorithm}

\begin{algorithm}[H]
\caption{\genclus($r, \hat D(r), S^1,\ldots,S^\eta, \tilt_1,\ldots, \tilt_\eta$)}
\label{alg:GenerateClustersublinear}
\begin{algorithmic}[1]
\STATE $T \gets K(r), \hatt_1 \gets K(r)$
\STATE Let $D_r^1, \ldots, D_r^{\eta}$ be an arbitrary partition of the vertices of $\hat D(r)$ into equal-size parts
\FORALL{$i=1,\ldots,\eta$} \label{loop:for-gc}
\FORALL{$v \in D_r^i$} 
\IF{$\rmEstCost(S^i, \tilt_i, v$)$ + 6\eta^{-1} |\hatNcand(r)| \leq w(v)$}
\STATE $T \gets T \cup \{ v \}$ 
\ENDIF
\ENDFOR 
\STATE $\hatt_{i+1} \gets T$
\ENDFOR
\end{algorithmic}
\end{algorithm}

\begin{proof}
The proof is almost identical to that of Lemma~\ref{lem:goodratio}. The only difference is that we use \( \hat{\Delta}(T) \) to choose \( T \). From Lemma~\ref{lem:estimatedelta}, we know that the estimation introduces at most an error of \( O(\beta \cdot \dadm(C)) \). 

The first requirement of the lemma provides a slackness of \( \covereps^2 \dadm(C) \). As long as we set \( \beta = O(\covereps^2) \), the estimation error remains within the allowed slackness.

\end{proof}

\nairen{check it}
\subsection{Determine the correct guess \( R \) of the optimal cost }
Our final challenge is to determine the best solution $z$ among the solutions we computed for each guess $R$.
Since we iterate over different granularities of \( R \) in the range \( [1, n^2] \), we know that one of these values satisfies \( R \in [\covers(\opt), (1 + \covereps) \covers(\opt)] \). Thus, we know that for this guess $R$, we will compute a solution $z$ with small cost. To determine the this solution (or any other with a smaller cost), we will use Lemma~\ref{lem:estimate-cost} to estimate the cost. After applying Algorithm~\ref{alg:solveclusterlpframework}, we obtain a solution \( \{ z_S \} \), where each nonzero \( z_S \) is at least some constant. We can estimate \( \cost(S) \) for each nonzero \( z_S \), and since each node appears in only a constant number of sets \( S \), the total running time for estimating all \( \cost(S) \) values is \( \tilde{O}(n) \). 

On the other hand, each node contributes to the estimation error of \(\beta \cdot \dc(v) \), leading to a total error of \( \beta \cdot \dc(V) \). As long as \( \beta \) is a sufficiently small constant, we can determine a solution $z$ that is a good approximation.

\section{MPC Implementation}
\label{sec:mpcimplementation}
In this section, we present an algorithm to solve~\ref{LP:clusterlp} in the MPC model. Our goal is to establish the following theorem for the MPC model.

\thmsolveclusterLPsequential*

While Algorithm~\ref{alg:mw} is already well-parallelized and runs in \(O\left(\frac{\log (1/\covereps)}{\covereps^4}\right)\) rounds, the algorithm for finding a disjoint family of clusters is not well-parallelized. In Algorithm~\ref{alg:disjointfamily}, clusters are identified sequentially and added one by one to the final output set \(\mathcal{F}\). While the running time is efficient, the approach lacks parallelism. To address this, we need a method that selects multiple good ratio clusters \(C_u\) simultaneously. The MPC algorithm is presented in Algorithm~\ref{alg:disjointfamilympc}.

\textbf{Algorithm Description}  
Algorithm~\ref{alg:disjointfamilympc} first removes all nodes whose ratio is either too large or too small. For nodes with a large ratio, we add them to \(\mathcal{F}\) (Line~\ref{line:addlargepnodesmpc}), and for nodes with a small ratio, we remove them by setting their \(\hat{q}\) values to 0 (Line~\ref{line:removesmallpnodesmpc}). 

Next, we attempt to find a disjoint set of clusters \(C_u\) such that, in each round, the sum of their \( p \)-values is sufficiently large. This set can be found with constant probability, so we repeat this process for \(\Theta(\log n)\) rounds (Line~\ref{line:oneroundendmpc}). 

In each round, we add a cluster center to \( U \) with probability proportional to its degree. Thus, for each cluster in the optimal solution, at least one node is chosen into \( U \). We then consider the candidate sets of these chosen nodes: if a node appears in the candidate sets of two chosen nodes, we remove it in this round by setting its \(\hat{p}\) value to 0. For each chosen node, as long as it retains enough fractional modes that have not been removed, we apply Lemma~\ref{lem:goodratio} to find a good ratio cluster. We set the parameter as \(\covereps' = \Theta(\covereps^4 \eps^5)\).

\begin{algorithm}[h]
	\caption{MPC Algorithm to find the family $\mathcal{F}$}
	\label{alg:disjointfamilympc}
	\begin{algorithmic}[1]
    \STATE Let $R$ be the guess for $\covers(\opt)$ such that $R \in [\covers(\opt) , (1 + \frac{\covereps}{2}) \covers(\opt))$.
    \STATE $\hat q \gets p, \mathcal{F} \gets \emptyset$
    \FORALL{$v \in V$}
        \label{line:addlargepnodesmpc} \STATE If $\frac{\covers(K(v))}{p(K(v))} \leq (1+6\covereps)~R$, add $K(v)$ to $\mathcal{F}$ and set $\hat q_w = 0$ for all $w \in K(v)$.
    \label{line:removesmallpnodesmpc} \STATE If $p_v \leq \frac{\covereps \dc(v)}{4\dc(V)}$, set $\hat q_w = 0$ for all $w \in K(v)$.
    \ENDFOR
    \label{line:whileloopmpc}\WHILE{$p(\mathcal{F}) \leq \covereps$}
        \label{line:oneroundstartmpc}\FOR{$t=1,\dots, \Theta(\log n / (\eps^5 \covereps'))$}
        \STATE $\hat p \gets \hat q, \mathcal{F}_t \gets \emptyset$
        \STATE $\hat p_v \gets 0$, for all nodes $v$ in $\mathcal{F}$
        \STATE \label{line:addvertextou} Add each vertex $v$ with probability $\frac{\eps^4\covereps'}{24d(v)}$ to $U$. 
        \STATE Mark all nodes in $D(u)$, for every $u \in U$.
        \FORALL{$u \in U$} \label{line:loop}
        \STATE Let $Remove(u) = \{ w \in D(u) \mid w \text{ gets more than one mark} \}$
        \label{line:removevertexfromu}\STATE If $\hat q(Remove(u)) \geq \gamma' \hat q(D(u))$, then remove $u$ from $U$.
        \STATE set $\hat p_w \gets 0$ for all $w \in Remove(u)$.
    \ENDFOR
    \STATE Let $\widetilde U$ be the $U$ set.
    \FOR{$u \in \widetilde U$}
        \STATE Find a good ratio cluster $C_u$ such that $K(u) \subseteq C_u \subseteq \Nadm(u)$ with vertex weights $\hat p > 0$ and target ratio $(1+5\covereps)R$ (Lemma \ref{lem:goodratio}).
        \STATE add $C_u$ to $\mathcal{F}_t$
    \ENDFOR
    \ENDFOR
    \STATE $\tmpc \gets \mathrm{argmax}_{t} p(\mathcal F_t)$
    \label{line:oneroundendmpc} \STATE add $\mathcal F_{\tmpc}$ to $\mathcal F$ and sets $\hat q_v = 0$ for all $v \in \mathcal F_{\tmpc}$
    \ENDWHILE
    \RETURN $\mathcal{F}$
	\end{algorithmic}
\end{algorithm}

\subsection{Approximate Ratio of Algorithm~\ref{alg:disjointfamilympc}}

Our main lemma regarding approximate ratio is given as follows,

\begin{lemma}
\label{lem:familympc}
Given vertex weights $p_v >0$, Algorithm \ref{alg:disjointfamilympc} finds a family $\mathcal{F} = \{S_1, S_2,..., S_l \mid S_i \subseteq V \}$ such that,
\begin{enumerate}
    \item for any distinct $S, T \in \mathcal{F}$, $S \cap T = \emptyset$,
    \item $\frac{\covers(\mathcal{F})}{p(\mathcal{F})} \leq (1+8\covereps)\covers(\opt)$,
    \item $p(\mathcal{F})$ is at least $\covereps$,
    \item no $S \in \mathcal{F}$ splits an atom, i.e. $K(v) \subseteq S$ for all vertices $v \in S$.
\end{enumerate}
\end{lemma}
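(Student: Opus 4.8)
## Proof Proposal for Lemma~\ref{lem:familympc}

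The plan is to follow the blueprint of Lemma~\ref{lem:disjointfamilyapproximateratio}, isolating the genuinely new content: the simultaneous selection of clusters within each iteration of the {\bf while} loop and the bookkeeping needed to keep the selected clusters disjoint. As before, fix the guess $R$ with $\covers(\opt) \le R < (1+\tfrac{\covereps}{2})\covers(\opt)$. Properties~3 and~4 are immediate: Property~3 holds by the termination condition of the {\bf while} loop on Line~\ref{line:whileloopmpc}, and Property~4 holds because every $C_u$ returned by Lemma~\ref{lem:goodratio} does not split atoms (and the $K(v)$ added on Line~\ref{line:addlargepnodesmpc} is an atom). For Property~1 I would argue disjointness in two layers. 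Across {\bf while}-iterations, after committing $\mathcal{F}_{\tmpc}$ we zero out $\hat q_v$ for all $v$ it covers, and $\hat q$ never increases; since a cluster $C_u$ only contains vertices with $\hat p_v>0$ (hence $\hat q_v>0$), later clusters are disjoint from earlier ones, exactly as in Lemma~\ref{lem:disjointfamilyapproximateratio}. Within a single round $t$, two distinct centres $u,u'\in\widetilde U$ can only share vertices in two ways: their atoms $K(u),K(u')$ are disjoint (distinct atoms of $\calK$, and $|K(u)|\ge d(u)/2$ forces $K(u)\subseteq C_u$ with no room for a second atom), while any common vertex $w\in D(u)\cap D(u')$ receives at least two marks, so $w\in Remove(u)\cap Remove(u')$, hence $\hat p_w=0$ and $w\notin C_u\cup C_{u'}$.

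The core of the proof is Property~2. Here I would first re-derive the existence of a well-placed optimal cluster, i.e.\ the analogue of Lemma~\ref{lem:goodratioclusterexists}: whenever $p(\mathcal{F})\le\covereps$, summing over $\opt$ and using $\hat q(V)\ge 1-1.5\covereps$ (the loss $p(\mathrm{Small})\le\covereps/4$ from Line~\ref{line:removesmallpnodesmpc} plus the at-most-$\covereps$ loss to $\mathcal{F}$) together with the preclustering bound $\covereps^2|E_\adm|\le\covereps\,\clcost(\opt)$, one obtains a cluster $C^\star\in\opt$ with $\frac{\covers(C^\star)+\covereps^2\dadm(C^\star)}{\hat q(C^\star)}\le(1+3\covereps)R$. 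Then, repeating the argument of Lemma~\ref{lem:clustercvratio}, any $C_u$ produced by Lemma~\ref{lem:goodratio} in round $t$ has ratio at most $(1+3\covereps)R$ with respect to $\hat p$ at the moment of creation; the only new loss is the mass of $Remove(u)$ deleted from $C_u\setminus K(u)\subseteq D(u)$. Since Line~\ref{line:removevertexfromu} retains $u$ only when $\hat q(Remove(u))<\covereps'\hat q(D(u))$, and using Claim~\ref{clm:size} ($|C_u|\ge\covereps^3 d(u)$, so $D(u)$ is a constant fraction of $C_u$) together with the per-vertex $\hat p$-bounds of Lemma~\ref{lemma:pvaluebound} to compare $\hat q(D(u))$ with $\hat q(C_u)$, the deletion costs at most an $O(\covereps'/\covereps^3)=O(\covereps)$ fraction of $\hat q(C_u)$ — small enough that the ratio stays below $(1+5\covereps)R$, matching the target passed to Lemma~\ref{lem:goodratio}. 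Combining with the singleton atoms added with ratio $(1+6\covereps)R$ on Line~\ref{line:addlargepnodesmpc} gives $\frac{\covers(\mathcal{F})}{p(\mathcal{F})}\le(1+6\covereps)R\le(1+8\covereps)\covers(\opt)$.

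The step I expect to be the main obstacle — and the reason for the inner $\Theta(\log n/(\eps^5\covereps'))$ repetitions and the tiny inclusion probability $\tfrac{\eps^4\covereps'}{24 d(v)}$ on Line~\ref{line:addvertextou} — is proving that some round $t$ makes a \emph{constant} amount of progress, i.e.\ $p(\mathcal{F}_{\tmpc})=\Omega(\covereps)$, so that the {\bf while} loop is well-defined and, with Property~3, terminates with $\mathcal{F}$ nonempty. The plan is a second-moment/overlap argument: by reasoning as in Claim~\ref{clm:overlap}, the number of centres $u$ whose candidate set contains a fixed vertex $w$ is $O(\poly(1/\eps))$, so with the chosen inclusion probability the expected $\hat q$-mass of doubly-marked vertices inside any $D(u)$ is at most a $\tfrac{\covereps'}{2}$ fraction of $\hat q(D(u))$; by Markov, a constant fraction of the sampled centres survive Line~\ref{line:removevertexfromu}. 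By Claim~\ref{clm:hitopt} every $\opt$-cluster (and every non-singleton atom) is hit by $U$, so in expectation the surviving centres lie in $\opt$-clusters carrying a constant fraction of the total $p$-mass; each surviving centre $u$ in such a cluster still satisfies the hypotheses of Lemma~\ref{lem:goodratio} (the optimal cluster through $u$ has small ratio because $\hat p$ only decreased), so it contributes a cluster $C_u$ of positive $p$-mass to $\mathcal{F}_t$, and $\sum_{u\in\widetilde U}p(C_u)$ — the clusters being disjoint — is $\Omega(\covereps)$ with constant probability. Taking the best of $\Theta(\log n/(\eps^5\covereps'))$ independent rounds boosts this to high probability via a Chernoff bound, and a union bound over the $\widetilde O(1/\covereps)$ {\bf while}-iterations and the guesses $R$ keeps the overall failure probability $o(1)$, which completes the proof.
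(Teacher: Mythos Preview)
Your treatment of Properties~1,~3 and~4 matches the paper almost exactly, and your handling of disjointness (marking vertices that land in two candidate sets and zeroing their $\hat p$) is precisely the paper's one-line argument. For Property~2 you do more work than the paper, which simply says the ratio bound ``follows directly from Lemma~\ref{lem:goodratio}'': since the algorithm calls Lemma~\ref{lem:goodratio} with target $(1+5\covereps)R$, any returned $C_u$ automatically has $\covers(C_u)\le(1+5\covereps)R\cdot\hat p(C_u)$, and combined with the $(1+6\covereps)R$ atoms from Line~\ref{line:addlargepnodesmpc} this already gives Property~2. Your extra paragraph about the $Remove(u)$ deletion is really about whether Lemma~\ref{lem:goodratio}'s \emph{hypothesis} is satisfied, i.e.\ about progress, which the paper defers entirely to Lemma~\ref{lem:mpcnumberofrounds}.

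That termination sketch is where your proposal has a genuine gap. You invoke Claim~\ref{clm:hitopt} to assert that every $\opt$-cluster is hit by $U$, but that claim belongs to the nearly-linear algorithm, which samples each $v$ with probability $\log(n)/d(v)$. The MPC algorithm deliberately samples with the tiny probability $\frac{\eps^4\covereps'}{24\,d(v)}$, and with that rate a fixed $\opt$-cluster $C$ is hit only with probability $\Theta(\eps^5\covereps')$ (this is the paper's Claim~\ref{clm:hitoptmpc}), so in any single round most $\opt$-clusters are \emph{not} hit. This is not an accident: the low sampling rate is exactly what keeps the expected doubly-marked mass inside $D(u)$ below $\covereps'\hat q(D(u))$; if you raised the rate enough to hit every cluster, your overlap argument would collapse. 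The paper resolves this tension by accepting only an $\Omega(\eps^5\covereps')$ hit probability per cluster, using linearity of expectation plus reverse Markov (Lemma~\ref{lem:includeenoughweightsmpc}) to show that with probability $\Omega(\eps^5\covereps')$ the hit small-ratio clusters carry $\hat p$-mass $\Omega(\eps^5\covereps'\covereps)$, and then separately arguing (Lemma~\ref{lemma:eachcuislargeenoughmpc}) that each returned $C_u$ captures an $\Omega(\covereps^4\eps^5)$ fraction of its $\opt$-cluster's mass. The upshot is $p(\mathcal{F}_{\tmpc})=\Omega(\covereps^{10}\eps^{14})$, not the $\Omega(\covereps)$ you claim; the $\Theta(\log n/(\eps^5\covereps'))$ inner repetitions boost the per-round success probability to high probability, and the while loop then needs $O(\covereps^{-10}\eps^{-14})$ iterations rather than $O(1/\covereps)$.
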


\begin{proof}
First, observe that Property 3 holds due to the condition in the while loop. Property 2 holds because all small-ratio clusters \(\{C_v\}_{v\in U}\) do not split atoms. 

Next, we prove that the family \(\mathcal{F}\) consists of disjoint sets. Note that in each round, when we consider the candidate set, if a node appears in two candidate sets, we set its \(\hat{p}\) value to zero, ensuring that it is not added to the final cluster. 

It remains to bound the ratio \(\frac{\covers(\mathcal{F})}{p(\mathcal{F})}\), which follows directly from Lemma~\ref{lem:goodratio}.
\end{proof}

Now, we only need to show that Algorithm~\ref{alg:disjointfamilympc} terminates.

\subsection{Number of Iterations of Algorithm~\ref{alg:disjointfamilympc}}

In this section, we show that the for loop (Line~\ref{line:oneroundstartmpc} to Line~\ref{line:oneroundendmpc}) terminates in \( O(\poly(1/\epsilon)) \) rounds. More precisely, we prove the following:

\begin{lemma}
\label{lem:mpcnumberofrounds}
Consider one round of execution of Algorithm~\ref{alg:disjointfamilympc} from Line~\ref{line:oneroundstartmpc} to Line~\ref{line:oneroundendmpc}. Then, with high probability, we have 
\[
p(\mathcal{F}_{\tmpc}) = \Omega(\covereps^{10} \eps^{14}).
\]
Consequently, the while loop (Line~\ref{line:whileloopmpc}) will be executed at most \( O(1/(\covereps^{10} \eps^{14})) \) rounds with high probability.
\end{lemma}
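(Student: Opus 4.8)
\textbf{Proof plan for Lemma~\ref{lem:mpcnumberofrounds}.}
The plan is to show that a single execution of the inner \textbf{for} loop (Lines~\ref{line:oneroundstartmpc}--\ref{line:oneroundendmpc}) produces, with high probability, a family $\mathcal F_{\tmpc}$ that covers an $\Omega(\covereps^{10}\eps^{14})$ fraction of the total vertex probability mass $p$. Once this is established, the ``Consequently'' part is immediate: each time we add $\mathcal F_{\tmpc}$ to $\mathcal F$ we increase $p(\mathcal F)$ by $\Omega(\covereps^{10}\eps^{14})$, and since the \textbf{while} loop on Line~\ref{line:whileloopmpc} stops once $p(\mathcal F) > \covereps$, there can be at most $O(\covereps/(\covereps^{10}\eps^{14})) = O(1/(\covereps^{10}\eps^{14}))$ iterations (the constant $\covereps$ in the numerator only helps). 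So everything reduces to analyzing one round.

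The main work, then, is to lower bound $\max_t p(\mathcal F_t)$ over the $\Theta(\log n/(\eps^5\covereps'))$ trials inside the round. I would proceed in the following steps. \emph{Step 1 (a good cluster survives the ratio filter).} Since we are inside the \textbf{while} loop, $p(\mathcal F)\le\covereps$, so by Lemma~\ref{lem:goodratioclusterexists} (whose proof only uses $\hat q(V)\ge 1-1.5\covereps$, which still holds here because the small-ratio removals cost at most $\covereps/4$) there is a cluster $C^\star\in\opt$ with $\covers(C^\star)+\covereps^2\dadm(C^\star)\le(1+3\covereps)R\,\hat q(C^\star)$; moreover every $C\in\opt$ is $\eps$-large. \emph{Step 2 (hitting and isolating $C^\star$).} In a fixed trial $t$, each vertex $v$ enters $U$ with probability $\tfrac{\eps^4\covereps'}{24\,d(v)}$. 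By the $\eps$-largeness of $C^\star$ (so $|C^\star|\ge\eps d(v)$ for all $v\in C^\star$, or $K(v)\subseteq C^\star$ with $|K(v)|\ge(1-O(\eps))d(v)$), the expected number of $U$-vertices landing in $C^\star$ is $\Omega(\eps^5\covereps')$, so with probability $\Omega(\eps^5\covereps')$ at least one vertex $u^\star\in C^\star\cap U$ is chosen, and the chosen representative lies in the non-singleton atom of $C^\star$ if there is one. \emph{Step 3 (the candidate set of $u^\star$ is not over-pruned).} Here I would use the key structural fact: by Lemma~\ref{lemma:pvaluebound} every $v\in\Ncand(u^\star)$ has $\hat p_v=\Omega(\covereps\eps^4|D(u^\star)|/\dc(V))$, so for $Remove(u^\star)$ to carry a $\covereps'$-fraction of $\hat q(D(u^\star))$ one needs $\Omega(\covereps'\cdot\text{(something)}/\text{(max }\hat p\text{)})$ many doubly-marked vertices; on the other hand the expected mass of doubly-marked vertices in $D(u^\star)$ is small because two independent $U$-vertices both reaching a given $w$ is a second-order event — I would bound $\EX[\hat q(Remove(u^\star))]$ by (selection probability)$^2$ summed over $w$ and the second $U$-vertex, using $|\Ncand|\cdot|D|=O(\eps^{-8}\dadm(C))$ from Lemma~\ref{lem:boundtwohopsneighbor}, and conclude via Markov that with constant probability $u^\star$ is \emph{not} removed from $U$. \emph{Step 4 (a good cluster is produced and accounted for).} Given that $u^\star$ survives, Lemma~\ref{lem:goodratio} applied with target ratio $(1+5\covereps)R$ and weights $\hat p>0$ yields $C_{u^\star}$ with $\covers(C_{u^\star})\le(1+5\covereps)R\,\hat p(C_{u^\star})$, added to $\mathcal F_t$; and by Claim~\ref{clm:size} together with Lemma~\ref{lemma:pvaluebound}, $\hat p(C_{u^\star})=\Omega(\covereps^3 d(u^\star)\cdot\covereps\eps^4|D(u^\star)|/\dc(V))$, which after using again the two-hop bound gives $p(\mathcal F_t)\ge p(C_{u^\star})=\Omega(\covereps^{10}\eps^{14})$ — note one only needs \emph{one} such surviving $u^\star$ to give this bound since clusters in $\mathcal F_t$ are disjoint and contribute non-negatively.

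\textbf{Putting it together.} Combining Steps 2--4, in each trial $t$ the event ``$p(\mathcal F_t)=\Omega(\covereps^{10}\eps^{14})$'' occurs with probability $\Omega(\eps^5\covereps')=\Omega(\eps^{10}\covereps^4)$, independently across trials up to the shared randomness of the current $\hat q$ (which is fixed at the start of the round). Running $\Theta(\log n/(\eps^5\covereps'))$ trials and taking $\tmpc=\mathrm{argmax}_t p(\mathcal F_t)$, a standard argument (probability of all trials failing is $(1-\Omega(\eps^5\covereps'))^{\Theta(\log n/(\eps^5\covereps'))}=n^{-\Omega(1)}$) shows $p(\mathcal F_{\tmpc})=\Omega(\covereps^{10}\eps^{14})$ with high probability. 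The ``Consequently'' statement follows as described in the first paragraph.

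\textbf{Main obstacle.} The delicate step is Step~3: we must argue that the overlap-pruning (Line~\ref{line:removevertexfromu}), which is essential to keep the produced clusters disjoint across $\widetilde U$, does \emph{not} destroy the particular representative $u^\star$ that we need. This is exactly why the selection probability carries the small factor $\covereps'=\Theta(\covereps^4\eps^5)$: it has to be small enough that the expected doubly-marked mass in any fixed $D(u^\star)$ is a $o(\covereps')$-fraction of $\hat q(D(u^\star))$, yet large enough (via Step~2) that $C^\star$ is hit in a single trial with probability polynomially bounded below. Balancing these two requirements — and verifying that the per-vertex $\hat p$ lower/upper bounds of Lemma~\ref{lemma:pvaluebound} apply (which needs the relaxed $\Ncand$ condition $|D(u)|\le\dadm(K(u))/|K(u)|$, guaranteed by the intersection definition of $\Ncand$ for non-singletons and trivially for singletons) — is where the bulk of the careful bookkeeping lies; the final exponents $\covereps^{10}\eps^{14}$ are just the product of the losses accumulated in Steps~2--4.
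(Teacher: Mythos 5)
Your reduction of the lemma to a per-trial bound and the final amplification over the $\Theta(\log n/(\eps^5\covereps'))$ trials are fine, and your Steps 2--3 (hitting a good cluster and arguing the overlap-pruning does not delete its representative) correspond to what the paper proves in Claim~\ref{clm:hitoptmpc} and Lemma~\ref{lem:includeenoughweightsmpc} (note the pruning event is first-order, not second-order: once $u^\star\in U$, every $w\in D(u^\star)$ already carries one mark, so removal only needs one additional mark from another $U$-vertex; the paper bounds $\EX[\hat q(Remove(u^\star))\mid u^\star\in U]\le\covereps'\hat q(D(u^\star))/3$ and applies Markov). However, Step~4 contains a genuine gap that breaks the argument.

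You claim that a \emph{single} surviving representative $u^\star$ of a \emph{single} good-ratio cluster $C^\star$ already gives $p(\mathcal F_t)\ge p(C_{u^\star})=\Omega(\covereps^{10}\eps^{14})$, invoking Claim~\ref{clm:size}, Lemma~\ref{lemma:pvaluebound} and the two-hop bound. This cannot work: those tools only give the \emph{relative} bound $\hat p(C_{u^\star})=\Omega(\covereps^4\eps^5\,\hat p(C^\star))$ (this is exactly Lemma~\ref{lemma:eachcuislargeenoughmpc}), and there is no absolute lower bound on the mass $\hat p(C^\star)$ of an individual optimal cluster --- by Lemma~\ref{lemma:pvaluehasarange} its mass scales like $\dc(C^\star)/\dc(V)$, which can be $o(1)$ (e.g.\ $1/n$) when $\opt$ has many clusters. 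Moreover Lemma~\ref{lem:boundtwohopsneighbor} is an \emph{upper} bound on $|D(u)|\cdot|\Ncand(u)|$ in terms of $\dadm(C)$, so it cannot be "used again" to turn $d(u^\star)|D(u^\star)|/\dc(V)$ into an absolute constant. The paper closes this hole by aggregating over the whole sub-family of good-ratio clusters: Lemma~\ref{lem:boundgoodclusterratiompc} shows the clusters of $\opt$ with ratio at most $(1+4\covereps)R$ carry total $\hat q$-mass at least $\covereps$ (a strictly stronger input than your Step~1 "there exists one good cluster"); Claim~\ref{clm:hitoptmpc} plus linearity of expectation and reverse Markov (Lemma~\ref{lem:includeenoughweightsmpc}) show that with probability $\Omega(\eps^5\covereps')$ the \emph{hit} portion of that family retains mass $\Omega(\eps^5\covereps'\covereps)$; and only then does the per-cluster conversion rate $\Omega(\covereps^4\eps^5)$ of Lemma~\ref{lemma:eachcuislargeenoughmpc}, summed over all hit clusters, yield $p(\mathcal F_t)=\Omega(\covereps^{10}\eps^{14})$. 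To repair your proof you must replace the single-cluster argument of Steps~1 and~4 by this mass-aggregation argument; the "Consequently" part and the amplification are then as you describe.
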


We start our proof by showing that each round for any cluster $C \in \opt$, with constant probability, we will add at least one node to $U$ and if $C$ contains a non-singleton atom, we will include at least one node from the non-singleton atom with constant probability.

\begin{claim}
    \label{clm:hitoptmpc}
    For each cluster $C \in \opt$ with $|C| \geq 2$, we have $C \cap \widetilde U \neq \emptyset$ with probability at least $\Omega(\epsilon^5 \covereps')$. Moreover, if there exists a non-singleton atom $K \subseteq C$ then $K \cap \widetilde U \neq \emptyset$ with probability at least $\Omega(\epsilon^5 \covereps' )$. 
\end{claim}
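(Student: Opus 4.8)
The plan is to prove Claim~\ref{clm:hitoptmpc} by a direct computation of the probability that the random sampling on Line~\ref{line:addvertextou} hits a given optimal cluster $C$, followed by an argument that the surviving set $\widetilde U$ still retains such a hit with constant probability after the removal step on Line~\ref{line:removevertexfromu}. First I would recall that we may assume $\opt$ is \epslarge with respect to the preclustering (by the Assumptions paragraph following Theorem~\ref{thm:preclustering-proc}), so that for any $C \in \opt$ with $|C| \geq 2$ and any $v \in C$ we have $|C| \geq \eps\, d(v)$; moreover, if $C$ contains a non-singleton atom $K \subseteq C$, then by \epssimilar preclustering each $v \in K$ satisfies $|K| \geq (1 - O(\eps)) d(v)$. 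In either case, $\sum_{v \in C} \frac{1}{d(v)} \geq \sum_{v \in C} \frac{\eps}{|C|} = \eps$ (and likewise $\sum_{v \in K} \frac{1}{d(v)} = \Omega(1)$ for the atom), which is the ``\epslarge'' fact already used in Claim~\ref{clm:hitopt}.

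Next I would bound the probability that at least one vertex of $C$ is placed in $U$. On Line~\ref{line:addvertextou} each vertex $v$ is added independently with probability $\frac{\eps^4 \covereps'}{24 d(v)}$, which is a small quantity, so by inclusion–exclusion (or the bound $1 - \prod(1-p_v) \geq \frac12 \sum p_v$ valid when $\sum p_v \leq 1$),
\[
\Pr[C \cap U \neq \emptyset] \;\geq\; \tfrac12 \sum_{v \in C} \frac{\eps^4 \covereps'}{24 d(v)} \;\geq\; \tfrac12 \cdot \frac{\eps^4 \covereps'}{24} \cdot \eps \;=\; \Omega(\eps^5 \covereps'),
\]
and the analogous bound for a non-singleton atom $K \subseteq C$ gives $\Pr[K \cap U \neq \emptyset] = \Omega(\eps^4 \covereps') = \Omega(\eps^5 \covereps')$ (using $\sum_{v \in K} 1/d(v) = \Omega(1)$). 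The remaining point is that membership in $U$ is not enough: a vertex $u \in U$ is discarded (removed from $U$ so that it never makes it into $\widetilde U$) on Line~\ref{line:removevertexfromu} if $\hat q(Remove(u)) \geq \covereps' \hat q(D(u))$, i.e. if too much of its candidate mass is claimed by other sampled centers. Here I would argue that conditioned on $u \in U$, the expected $\hat q$-mass of $Remove(u)$ is small: a vertex $w \in D(u)$ gets a second mark only if some other sampled center $u'$ also has $w \in D(u')$, and $E[\,\hat q(Remove(u)) \mid u \in U\,] \le \sum_{w \in D(u)} \hat q_w \Pr[w \in D(u') \text{ for some sampled } u' \ne u]$. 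Using the bound on the number of admissible neighbours of $w$ (at most $2\eps^{-3} d(w)$, from \epssimilar preclustering) and the degree-similarity $d(u') = \Theta(\eps^{\pm 1} d(w))$ for admissible pairs, this probability is $O(\eps^{-3} d(w) \cdot \frac{\eps^4\covereps'}{24 \eps^{?} d(w)}) = O(\eps \covereps')$ — crucially $O(\covereps')$ times a constant — which can be driven below $\covereps'/2$ times a constant of our choosing by the slack built into $\covereps' = \Theta(\covereps^4 \eps^5)$. Then Markov's inequality gives $\Pr[\hat q(Remove(u)) \geq \covereps' \hat q(D(u)) \mid u \in U] \leq \frac12$, so $u$ survives into $\widetilde U$ with probability at least $\frac12$ conditioned on $u \in U$; combining with the previous paragraph yields $\Pr[C \cap \widetilde U \neq \emptyset] = \Omega(\eps^5 \covereps')$, and similarly for the atom.

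The main obstacle I anticipate is making the ``second-mark'' calculation clean: one must be careful that $D(u)$ and $D(u')$ can only overlap when $u, u'$ lie within two admissible hops of a common vertex $w$, control the number of such potential $u'$ in terms of $d(w)$ (this is essentially the two-hop neighbourhood bound, cf.\ Lemma~\ref{lem:boundtwohopsneighbor}), and verify that the sampling probability $\frac{\eps^4\covereps'}{24 d(u')}$ is small enough relative to $1/|D(w)|$-type quantities that the expected overlap mass is a genuinely small fraction of $\hat q(D(u))$. Since $\covereps'$ is chosen polynomially small in $\eps$ precisely to absorb these polynomial-in-$\eps$ losses, the bookkeeping should go through; the statement only asks for the qualitative $\Omega(\eps^5\covereps')$ lower bound, so I would not track constants beyond what is needed to confirm the exponent of $\eps$ and the single factor of $\covereps'$.
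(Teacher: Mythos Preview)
Your proposal is essentially correct and follows the same two-step strategy as the paper: (i) use the \epslarge property to lower-bound the probability that the sampling on Line~\ref{line:addvertextou} hits $C$ (or $K$), and (ii) bound $\EX[\hat q(Remove(u))]$ via the admissible-neighbour count and degree similarity, then apply Markov to get a constant survival probability. The Remove calculation in the paper is exactly what you sketch: for $w \in D(u)$ any second marker $u'$ must be admissible to $w$, so there are at most $2\eps^{-3}d(w)$ candidates each with $d(u') \geq \eps d(w)/2$, giving $\Pr[w\text{ double-marked}\mid u\in U] \leq \covereps'/6$; no additional slack in $\covereps'$ is needed here, contrary to your worry.

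One small point on the combination step: knowing $\Pr[C\cap U\neq\emptyset]=\Omega(\eps^5\covereps')$ and $\Pr[u\text{ survives}\mid u\in U]\geq\tfrac12$ for each $u$ does not by itself yield $\Pr[C\cap\widetilde U\neq\emptyset]=\Omega(\eps^5\covereps')$; you would still need a Bonferroni-type argument on the events $\{u\in\widetilde U\}_{u\in C}$. The paper sidesteps this cleanly by conditioning instead on the event $A_u=\{u$ is the \emph{unique} vertex of $C$ sampled into $U\}$: these events are pairwise disjoint across $u\in C$, one computes $\Pr[A_u]\geq \eps^5\covereps'/(48|C|)$ and $\Pr[\text{survives}\mid A_u]\geq\tfrac12$, and then simply sums $\sum_{u\in C}\Pr[X_u]\geq \Omega(\eps^5\covereps')$. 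Adopting this conditioning would close the small gap in your outline.
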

\begin{cproof}
    Recall the $(\mathcal K, \eadm)$ is \epssimilar preclustering. Thus, for any non-singleton atom $K$ we have $|K| \geq d(v) / 2$ for each $v \in K$. On the other hand, recall that $\opt$ is \epslarge cluster, so $|C| \geq \epsilon d(v)$, and $|C| \leq \dadm(v) + |K(v)| \leq 3\epsilon^{-3}d(v)$.

    At line~\ref{line:addvertextou}, we will add node $u$ to $U$ with probability $\frac{\eps^4\covereps'}{24d(u)}$. Then, at line~\ref{line:removevertexfromu}, we will remove the node from $U$ if $\covereps'$ fractional of $\hat q(D(u))$ can be considered in other candidate sets. For a given node $u \in C$, let $X_u$ be the indicator variable that $u$ is the only node in $C$ that is added to $U$ at Line~\ref{line:addvertextou} and is still in $\widetilde U$ after line~\ref{line:removevertexfromu}. Let $A_u$ be the indicator variable that $u$ is the only node in $C$ that is added to $U$ at Line~\ref{line:addvertextou} and $B_u$ be the total $\hat p$ value that are in the $Remove(u)$ set. so
    \begin{align*}
        \Pr[X_u = 1] &= \Pr[A_u = 1 \cap B_u < \covereps' \hat p(D(u)) ] \\
        &= \Pr[B_u < \covereps' \hat p(D(u)) \mid A_u = 1] \cdot \Pr[A_u = 1]
    \end{align*}

    For $A_u$, we know that only $u$ is added to $U$, so 
    \begin{align*}
        \Pr[A_u = 1] &= \frac{\eps^4\covereps'}{24d(u)} \prod_{v \in C \setminus \{ u \}} \left(1 - \frac{\eps^4\covereps'}{24d(v)} \right) \\
        &\geq \frac{\eps^4\covereps'}{24 \epsilon^{-1}|C|} \left(1 - \frac{\eps^4\covereps'}{8|C|\epsilon^{3}} \right)^{|C|} \geq \frac{\eps^5\covereps'}{48 |C|}
    \end{align*}

    The first inequality is because $\epsilon^{3}|C| / 3 \leq d(v) \leq \epsilon^{-1}|C|$. Now condition on $A_u = 1$, we calculate the expected value of $B_u$, 
    \begin{align*}
        \EX[B_u \mid A_u = 1] &\leq \sum_{v \in D(u)} \hat p_v \sum_{w \in \Nadm(v)} \frac{\eps^4\covereps'}{24d(w)} \leq \sum_{v \in D(u)}\hat p_v \sum_{w \in \Nadm(v)} \frac{\eps^4\covereps'}{12 \eps d(v)} \\
        &\leq \sum_{v \in D(u)} \hat p_v \cdot \dadm(v) \frac{\eps^4\covereps'}{12 \eps d(v)} \leq \sum_{v \in D(u)} \hat p_v \cdot 2\eps^{-3} d(v) \frac{\eps^4\covereps'}{12 \eps d(v)} \\
        &\leq \sum_{v \in D(u)} \frac{\covereps' \hat p_v }{6} \leq \covereps' \hat p(D(u)) / 3
    \end{align*}

    Then by markov's inequality, we have 
    \begin{align*}
        \Pr[B_u \geq \covereps' \hat p(D(u)) \mid A_u = 1] \leq \frac{1}{2}
    \end{align*}

    Combining these two points together, we have 
    \begin{align*}
        \Pr[X_u = 1] \geq \frac{\eps^5\covereps'}{96 |C|}
    \end{align*}

    and $C \cap U \not= \emptyset$ with probability at least 
    \begin{align*}
        \sum_{u \in C} \Pr[X_u = 1] \geq |C| \cdot \frac{\eps^5\covereps'}{96 |C|} = \Omega(\eps^5\covereps')
    \end{align*}

The proof for \( K \cap \widetilde{U} \neq \emptyset \) is almost identical, so we omit the details here. However, we note that whenever \( K \cap \widetilde{U} \neq \emptyset \), we always have \( |K \cap \widetilde{U}| = 1 \). 

This follows from the fact that if \( |K \cap U| \geq 2 \), then all nodes in \( D(K) \) will receive two marks, causing \( D(K) \) to be added to the \( Remove \) set, which in turn removes \( K \cap U \) from \( U \). 

\end{cproof}

Consider one iteration Line~\ref{line:oneroundstartmpc} of Algorithm~\ref{alg:disjointfamilympc}, let 
\begin{align*}
\mathcal C^*_{\leq (1+4\covereps)R} = \{C^* \in \opt \mid     \frac{\covers(C^*)+\covereps^2 \dadm(C^*)}{\hat q(C^*)} \leq (1 + 4\covereps) R \}
\end{align*}

be the set of clusters from optimal clustering such that the ratio is at most $(1 + 4\covereps) R$. We will first show that $\mathcal C^*_{\leq (1+4\covereps)R}$ has a large $\hat q$ value.

\begin{lemma} 
\label{lem:boundgoodclusterratiompc}
Consider one iteration of Algorithm~\ref{alg:disjointfamilympc}, if $p (\mathcal{ F})\leq \covereps$, then we have 
\begin{align*}
    \hat q(\mathcal C^*_{\leq (1+4\covereps)R} ) \geq \covereps
\end{align*}
\end{lemma}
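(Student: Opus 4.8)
The plan is to show that the clusters of $\opt$ with ratio \emph{above} $(1+4\covereps)R$ can carry only a tiny amount of the $\hat q$-mass, so that the rest --- which by definition is exactly $\hat q(\mathcal C^*_{\leq(1+4\covereps)R})$ --- must exceed $\covereps$. This mirrors the argument already used in Lemma~\ref{lem:goodratioclusterexists} for the sequential algorithm, but here we need an averaging argument rather than just the existence of one good cluster. First I would record the analogues of the facts used there for the modified weights $\hat q$: the nodes set to $0$ at Line~\ref{line:removesmallpnodesmpc} form a set $\mathrm{Small}$ with $p(\mathrm{Small}) \le \sum_{v\in V}\tfrac{\covereps\,\dc(v)}{4\dc(V)} = \covereps/4$, and by the {\bf while}-loop condition $p(\mathcal F)\le\covereps$. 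Moreover every node in a cluster $K(v)$ added at Line~\ref{line:addlargepnodesmpc} has its $\hat q$-value zeroed. Hence
\[
\hat q(V) \;=\; p(V) - p(\mathcal F) - p(\mathrm{Small}) \;\ge\; 1 - \covereps - \covereps/4 \;\ge\; 1 - 1.5\covereps .
\]

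Next I would partition $\opt = \mathcal C^*_{\le(1+4\covereps)R} \,\uplus\, \mathcal C^*_{>(1+4\covereps)R}$ and bound the total covering cost. Using $R \le (1+\tfrac{\covereps}{2})\covers(\opt)$ and the preclustering guarantee $\covereps\,\covers(\opt)\ge\covereps\,\clcost(\opt)\ge\covereps^2|E_\adm| = \covereps^2\sum_{C\in\opt}\dadm(C)$ (Theorem~\ref{thm:preclustering-proc}), we get
\[
\sum_{C\in\opt}\bigl(\covers(C) + \covereps^2\dadm(C)\bigr) \;\le\; (1+\covereps)\covers(\opt).
\]
On the other hand, for every $C\in\mathcal C^*_{>(1+4\covereps)R}$ we have $\covers(C)+\covereps^2\dadm(C) > (1+4\covereps)R\cdot\hat q(C) \ge (1+4\covereps)\covers(\opt)\cdot\hat q(C)$, so
\[
(1+4\covereps)\covers(\opt)\cdot\hat q\bigl(\mathcal C^*_{>(1+4\covereps)R}\bigr) \;<\; (1+\covereps)\covers(\opt),
\]
which gives $\hat q(\mathcal C^*_{>(1+4\covereps)R}) < \tfrac{1+\covereps}{1+4\covereps} \le 1 - 2\covereps$ (for small enough $\covereps$). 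Combining with $\hat q(V)\ge 1-1.5\covereps$ and $\hat q(V) = \hat q(\mathcal C^*_{\le(1+4\covereps)R}) + \hat q(\mathcal C^*_{>(1+4\covereps)R})$ plus the $\hat q$-mass of singleton atoms not in any $C$ --- wait, here I need to be careful: $\opt$ is a partition of $V$, so $\sum_{C\in\opt}\hat q(C) = \hat q(V)$ exactly. Thus $\hat q(\mathcal C^*_{\le(1+4\covereps)R}) \ge (1-1.5\covereps) - (1-2\covereps) = 0.5\covereps \ge \covereps$ after adjusting the constant in the granularity of $R$ (or more simply, rerunning the arithmetic with the slack one has from $R\le(1+\tfrac\covereps2)\covers(\opt)$ rather than $(1+\covereps)$, which is exactly why the MPC algorithm uses the tighter guess).

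The main obstacle I anticipate is pinning down the constants so that the final bound is genuinely $\ge\covereps$ rather than $\ge c\,\covereps$ for some $c<1$; this is the reason Line~\ref{line:oneroundstartmpc} uses $R\in[\covers(\opt),(1+\tfrac{\covereps}{2})\covers(\opt))$ instead of the looser $(1+\covereps)$ factor in the nearly-linear version, and the threshold $(1+4\covereps)R$ (rather than $(1+3\covereps)R$) gives the extra room. I would therefore do the arithmetic symbolically, keeping track of the $\tfrac{\covereps}{2}$ and the factor-$4$ slack, and verify that $(1 - 1.5\covereps) - \tfrac{1+\covereps/2}{1+4\covereps} \ge \covereps$ for all sufficiently small $\covereps>0$; this holds because the left side is $1.5\covereps + O(\covereps^2)$ against $\covereps$. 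One further subtlety to double-check: the quantity $\hat q$ used in the averaging must be the value \emph{at the current {\bf while}-loop iteration}, after the Line~\ref{line:removesmallpnodesmpc} pruning and after previous $\mathcal F_{\tmpc}$'s have zeroed their members; since $\hat q$ is monotonically non-increasing over iterations and the only decreases come from adding disjoint clusters to $\mathcal F$ (mass accounted in $p(\mathcal F)$) or pruning small nodes (mass in $\mathrm{Small}$), the bound $\hat q(V)\ge 1-1.5\covereps$ is preserved throughout, so the argument goes through at every iteration.
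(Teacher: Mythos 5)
Your proposal is correct and follows essentially the same route as the paper: the same facts ($\hat q(V)\ge 1-1.5\covereps$ from the while-loop condition plus the $\mathrm{Small}$ set, and $\sum_{C\in\opt}(\covers(C)+\covereps^2\dadm(C))\le(1+\covereps)\covers(\opt)$ via Theorem~\ref{thm:preclustering-proc}), the only difference being that you bound $\hat q$ of the heavy-ratio clusters directly while the paper argues by contradiction with the mediant inequality. Your worry about the constants is unfounded once $\frac{1+\covereps}{1+4\covereps}$ is expanded sharply (it is $\le 1-3\covereps+O(\covereps^2)$, giving roughly $1.5\covereps$ of slack, so the extra $R\le(1+\covereps/2)\covers(\opt)$ guess is not even needed for this step), matching the paper's arithmetic.
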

\begin{proof}
Notice that the algorithm will set some $\hat q_v$ to $0$ without adding $u$ to $\mathcal{F}$, let 
\[
\mathrm{Small} = \{ v \in \hat q_v \text{ is set to  0 at line~\ref{line:removesmallpnodesmpc} in Algorithm~\ref{alg:disjointfamilympc}} \}
\]
We have $ p(\mathrm{Small}) \leq \sum_{v \in V} \frac{\covereps \dc(v)}{16\dc(V)} \leq \covereps / 4$. Based on the assumption of $p(\mathcal F) \leq \covereps$, we have $\hat q(V) = p(V) - p(\mathcal{F}) - p(\mathrm{Small}) \geq 1 - 1.25\covereps $.
Thus,
\begin{align*}
    (1+2.5\covereps)R &\geq \frac{1+ \covereps}{1-1.25\covereps} R \geq \frac{(1+\covereps)\covers(\opt)}{\hat q(V)} \\
    &\geq \frac{\covers(\opt) + \covereps^2 |E_{\adm}|}{\hat q(V)} = \frac{\sum_{C \in \opt}(\covers(C) + \covereps^2 \dadm(C))}{\sum_{C \in \opt} \hat q(C)}.
\end{align*}
For the second inequality, we use that $\gamma~\covers(\opt) \geq \gamma~\clcost(\opt) \geq \gamma^2 |E_{\adm}|$ because of the preclustering (Theorem \ref{thm:preclustering-proc}). On the other hand, if $\hat q(\mathcal C_{\leq (1+4\covereps)R} ) < \covereps$, then consider $C \in \opt$ whose ratio is at least $(1+4\covereps)R$, we have 
\begin{align*}
&\quad \frac{\sum_{C \in \opt}(\covers(C) + \covereps^2 \dadm(C))}{\sum_{C \in \opt} \hat q(C)} \\ &\geq \frac{\sum_{C \in \opt \setminus C_{\leq (1+4\covereps)R}}(\covers(C) + \covereps^2 \dadm(C))}{\sum_{C \in \opt} \hat q(C)} \\
&\geq \frac{\sum_{C \in \opt \setminus C_{\leq (1+4\covereps)R}}(\covers(C) + \covereps^2 \dadm(C))}{\sum_{C \in \opt} \hat q(C)} \\
&\geq \frac{\sum_{C \in \opt \setminus C_{\leq (1+4\covereps)R}}(\covers(C) + \covereps^2 \dadm(C))}{\sum_{C \in \opt \setminus C_{\leq (1+4\covereps)R}} \hat q(C)} \cdot \frac{\sum_{C \in \opt \setminus C_{\leq (1+4\covereps)R}} \hat q(C)}{\sum_{C \in \opt } \hat q(C)} \\
&\geq (1 + 4\covereps ) R \cdot \big(1 - \frac{\covereps}{1 - 1.5\covereps} \big) > (1 + 2.5 \covereps) R
\end{align*}
which contradicts to the fact that the total ratio is at most $(1 + 2.5 \covereps) R$. So, we conclude that $\hat q(\mathcal C^*_{\leq (1+4\covereps)R} ) \geq \covereps$.
\end{proof}

Consider one iteration Line~\ref{line:oneroundstartmpc} of Algorithm~\ref{alg:disjointfamilympc}, let 
\begin{align*}
\mathcal C_{\leq (1+4\covereps)R} = \{C \in \mathcal C^*_{\leq (1+4\covereps)R} \mid C \cap \widetilde U \not= \emptyset \}
\end{align*}
be the set of good ratio cluster such that at least one of its node is chosen into $\widetilde U$, we can show that

\begin{lemma} 
\label{lem:includeenoughweightsmpc}
For every $C^* \in C_{\leq (1+4\covereps)R}$, we have 
\begin{align*}
    \hat p(C^*) \geq (1 - \frac{\covereps}{2}) \hat q(C^*)
\end{align*}
Moreover, we have 
\begin{align*}
    \Pr[\hat p(\mathcal C_{\leq (1+4\covereps)R}) = \Omega(\eps^5\covereps'\covereps)  ] = \Omega(\eps^5\covereps')
\end{align*}
\end{lemma}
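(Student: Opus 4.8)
The plan is to prove the two parts separately. For the first part, I would fix some $C^* \in \mathcal C_{\leq (1+4\covereps)R}$. By definition there is a node $u \in C^* \cap \widetilde U$, and since $C^*$ survived into $\widetilde U$ we know from Line~\ref{line:removevertexfromu} that $\hat q(Remove(u)) < \covereps' \hat q(D(u))$. The only nodes of $C^*$ whose $\hat p$-value got zeroed out in this round (beyond those already zero in $\hat q$) are nodes in $Remove(u)$ — i.e., nodes that received two or more marks — together with nodes added to $\mathcal F$, which are already handled in $\hat q$. Since $C^* \subseteq \Ncand(u)$ (here I would use that $C^*$ is $\epslarge$ with a non-singleton atom equal to $K(u)$ when applicable, or that all of $C^*$ lies in $\Nadm(u)$ otherwise — the same reasoning used in Claim~\ref{clm:hitoptmpc}), the lost mass inside $C^*$ is at most $\hat q(D(u) \cap C^*) $'s portion that got double-marked, which is bounded by $\hat q(Remove(u)) < \covereps' \hat q(D(u)) \le \covereps' \hat q(C^*)$ after accounting for $K(u)$ never being double-marked (the remark at the end of Claim~\ref{clm:hitoptmpc}). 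Since $\covereps' = \Theta(\covereps^4\eps^5) \le \covereps/2$ for small enough constants, this gives $\hat p(C^*) \ge (1-\covereps/2)\hat q(C^*)$.

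For the second part, I would combine Lemma~\ref{lem:boundgoodclusterratiompc} with Claim~\ref{clm:hitoptmpc}. Lemma~\ref{lem:boundgoodclusterratiompc} gives $\hat q(\mathcal C^*_{\leq (1+4\covereps)R}) \ge \covereps$. By Claim~\ref{clm:hitoptmpc}, each cluster $C^* \in \mathcal C^*_{\leq(1+4\covereps)R}$ independently has $C^* \cap \widetilde U \ne \emptyset$ with probability $\Omega(\eps^5\covereps')$; hence
\[
\EX\big[\hat q(\mathcal C_{\leq(1+4\covereps)R})\big] \;=\; \sum_{C^* \in \mathcal C^*_{\leq(1+4\covereps)R}} \Pr[C^* \cap \widetilde U \ne \emptyset]\cdot \hat q(C^*) \;=\; \Omega(\eps^5\covereps')\cdot \hat q(\mathcal C^*_{\leq(1+4\covereps)R}) \;=\; \Omega(\eps^5\covereps'\covereps).
\]
Since $\hat q(\mathcal C_{\leq(1+4\covereps)R}) \le \hat q(\mathcal C^*_{\leq(1+4\covereps)R}) \le 1$ is bounded, a reverse-Markov (Paley–Zygmund-type) argument shows $\hat q(\mathcal C_{\leq(1+4\covereps)R}) = \Omega(\eps^5\covereps'\covereps)$ with probability $\Omega(\eps^5\covereps')$. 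Finally, applying the first part of the lemma termwise (every $C^* \in \mathcal C_{\leq(1+4\covereps)R}$ satisfies $\hat p(C^*) \ge (1-\covereps/2)\hat q(C^*)$) and using disjointness of the clusters in $\opt$, we get $\hat p(\mathcal C_{\leq(1+4\covereps)R}) \ge (1-\covereps/2)\hat q(\mathcal C_{\leq(1+4\covereps)R}) = \Omega(\eps^5\covereps'\covereps)$ on that same event.

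The main obstacle I anticipate is making the first part fully rigorous: one must carefully argue that the \emph{only} way a node $v \in C^*$ loses its $\hat p$-value in a round is by double-marking, that $K(u)$ is never double-marked (so the atomic part of $C^*$ is fully preserved), and that the bound $\hat q(Remove(u)) < \covereps'\hat q(D(u))$ translates into a bound relative to $\hat q(C^*)$ rather than $\hat q(D(u))$ — this needs $|C^*|$ and $|D(u)|$ to be comparable up to $\poly(1/\eps)$ factors, which should follow from the $\epslarge$ property and the size bounds on $\Ncand$ (Lemma~\ref{lem:sizeofcandidateset}), but the constants have to line up with the choice $\covereps' = \Theta(\covereps^4\eps^5)$. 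The probabilistic second part is comparatively routine once Claim~\ref{clm:hitoptmpc} and Lemma~\ref{lem:boundgoodclusterratiompc} are in hand.
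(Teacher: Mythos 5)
The second half of your argument is essentially the paper's: linearity of expectation over the clusters of $\mathcal C^*_{\leq(1+4\covereps)R}$ using Claim~\ref{clm:hitoptmpc} (independence is not needed and the paper does not use it), reverse Markov, the lower bound $\hat q(\mathcal C^*_{\leq(1+4\covereps)R})\geq\covereps$ from Lemma~\ref{lem:boundgoodclusterratiompc}, and then applying the first part termwise. The genuine gap is in your first part. The chain $\hat q(Remove(u)) < \covereps'\hat q(D(u)) \leq \covereps'\hat q(C^*)$ is not justified and is false in general: $C^*$ is just some subset of $\Ncand(u)$, while $D(u)=\Ncand(u)\setminus K(u)$ can carry far more $\hat q$-mass than $C^*$, so the removal bound is relative to the wrong quantity. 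Moreover, the repair you sketch --- that $|C^*|$ and $|D(u)|$ are comparable up to $\poly(1/\eps)$ factors via the \epslarge property and Lemma~\ref{lem:sizeofcandidateset} --- would not suffice even if the cardinalities were comparable, because the comparison that is needed is between \emph{weighted} masses, and the MWU weights $\hat q_v$ of different vertices of $\Ncand(u)$ are far from uniform.

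The paper closes exactly this gap with two ingredients you do not invoke. First, Claim~\ref{clm:size} gives the dichotomy $|C^*|=1$ (excluded, since such a vertex would already have had its $\hat p$-value zeroed and could not lie in a cluster of $\mathcal C_{\leq(1+4\covereps)R}$) or $C^*$ contains $\Omega(\covereps^3 d(u))$ vertices with positive $\hat p$-value. Second, Lemma~\ref{lemma:pvaluebound} bounds every individual weight in $\Ncand(u)$: $\hat q_v = O(\eps^{-1}d(u)/\dc(V))$ from above and $\hat q_v = \Omega(\covereps\eps^{4}|D(u)|/\dc(V))$ from below. Combining, $\hat q(D(u)) = O\bigl(\eps^{-1}d(u)\,|D(u)|/\dc(V)\bigr)$ while $\hat q(C^*) = \Omega\bigl(\covereps^{4}\eps^{4}d(u)\,|D(u)|/\dc(V)\bigr)$, so $\hat q(D(u)) = O(\covereps^{-4}\eps^{-5})\,\hat q(C^*)$, and since the mass removed from $C^*\setminus K(u)\subseteq D(u)$ is at most $\covereps'\hat q(D(u))$, choosing $\covereps'$ sufficiently small compared with $\covereps^{5}\eps^{5}$ yields $\hat p(C^*)\geq(1-\covereps/2)\hat q(C^*)$; this quantitative comparison is precisely where the choice of $\covereps'$ enters, and it is the step missing from your proposal. (Your observation that vertices of $K(u)$ are never zeroed is correct and is implicitly used by the paper as well, since atom vertices cannot lie in any other center's candidate set.)
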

\begin{proof}
when we set up $\hat p$, we first copy $\hat q$ to $\hat p$ value. Then, we might set some $\hat p$ value to $0$ if the nodes have two chosen neighbor. So, to show that $ \hat p(C^*) \geq (1 - \covereps) \hat q(C^*)$, we need to show that the removed nodes contributes to at most $\covereps$ fractional weight to $C^*$.

For each $C^* \in \mathcal C_{\leq (1+4\covereps)R}$, let $u \in C^* \cap \widetilde U$ be the node in $\widetilde U$. Based on Claim~\ref{clm:size}, we know that $|C^*| = 1$ or $|C^*| \geq \covereps^3 d(u)$. The first case is impossible since we will set its $\hat p$ value to 0 and we will never consider it in $C_{\leq (1+4\covereps)R}$. For the second case, note that we remove at most $\covereps'$ fractional $\hat q$ from $D(u)$, we will give an upper bound of this value.

Recall that based on lemma~\ref{lemma:pvaluebound}, we know that for any node $v \in C^* \subset \Ncand(u)$, we have 
\begin{align*}
     \hat q_v = O\left(\frac{\eps^{-1} d(u)}{\dc(V)}\right)
\end{align*}
so the upper bound $\hat q( D(u))$ is at most 
\begin{align*}
    \hat q(D(u)) = O\left(\frac{\eps^{-1} d(u) \cdot |D(u)|}{\dc(V)}\right)
\end{align*}
based on Claim~\ref{clm:size}, we know that $C^*$ contains at least $\covereps^3d(u)$ vertices with positive $\hat p$ value, for each vertex in $C^*$, we have 
\begin{align*}
     \hat q_v = \Omega\left(\frac{\covereps \eps^{4} |D(u)|}{\dc(V)}\right) 
\end{align*}
so, the lower bound $\hat p$ value for $C^*$ is at least
\begin{align*}
    \hat q(C^*) = \Omega\left(\frac{\covereps^4 \eps^{4} |D(u)| \cdot d(u)}{\dc(V)}\right) 
\end{align*}
For any node $u \in \widetilde U$, we remove at most $\covereps'$ fractional of its $\hat q(D(u))$ value, as long as $\covereps' = \Omega(\covereps^5\eps^4)$, we have 
\begin{align*}
    \hat p(C^*) \geq (1 - \frac{\covereps}{2}) \hat q(C^*)
\end{align*}
Now for each $C^* \in \mathcal C^*_{\leq (1+4\covereps)R}$, based on Claim~\ref{clm:hitoptmpc}, we know that with probability at least $\Omega(\eps^5\covereps')$, $C^* \cap \widetilde U \not= \emptyset$ and $C^* \in \mathcal C_{\leq (1+4\covereps)R}$, so we have 
\begin{align*}
    \EX [\hat q(\mathcal C_{\leq (1+4\covereps)R})] = \Omega(\eps^5\covereps' \hat q(\mathcal C^*_{\leq (1+4\covereps)R}))
\end{align*}
Note that $\hat q(\mathcal C_{\leq (1+4\covereps)R}) \leq \hat q(\mathcal C^*_{\leq (1+4\covereps)R})$, by the reverse markov inequality, we have 
\begin{align*}
    \Pr[\hat q(\mathcal C_{\leq (1+4\covereps)R}) = \Omega(\eps^5\covereps' q(\mathcal C^*_{\leq (1+4\covereps)R})) ] = \Omega(\eps^5\covereps')
\end{align*}
By Lemma~\ref{lem:boundtwohopsneighbor}, we know that $\hat q(C^*_{\leq (1+4\covereps)R}) \geq \covereps$, so we have 
\begin{align*}
    \Pr[\hat q(\mathcal C_{\leq (1+4\covereps)R}) = \Omega(\eps^5\covereps'\covereps) ] = \Omega(\eps^5\covereps')
\end{align*}
For each $C^* \in \mathcal C_{\leq (1+4\covereps)R}$, we have $\hat p(C^*) \geq (1 - \covereps) \hat q(C^*)$, so 
\begin{align*}
    \Pr[\hat p(\mathcal C_{\leq (1+4\covereps)R}) = \Omega(\eps^5\covereps'\covereps)  ] = \Omega(\eps^5\covereps')
\end{align*}
\end{proof}

From the above lemma, we know that for any \( C^* \in C_{\leq (1+4\covereps)R} \), at least one node \( u \in C^* \) will be fed into Lemma~\ref{lem:goodratio}. Our final goal is to show that in each round, we cover a set of good clusters with sufficient \( p \)-value. 

Next, we show that the cluster returned by Lemma~\ref{lem:goodratio} covers a constant fraction of the \(\hat{p}\)-value. This is formally stated in the following lemma.

\begin{lemma}
\label{lemma:eachcuislargeenoughmpc}
For every $C^* \in C_{\leq (1+4\covereps)R}$, let $u \in C^* \cap \widetilde U$ be the node in $\widetilde U$ and $C_u$ be the cluster returned by Lemma~\ref{lem:goodratio}, then we have 
\begin{align*}
    \hat p(C_u) = \Omega( \covereps^4 \eps^5 \hat p(C^*)) 
\end{align*}
\end{lemma}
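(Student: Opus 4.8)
The plan is to reduce the statement to the size lower bound already available from Claim~\ref{clm:size} together with the $\hat p$-value bounds from Lemma~\ref{lemma:pvaluebound}, and to observe that $C_u$ contains the large "core" of $C^*$. First I would recall that by Lemma~\ref{lem:goodratio} (applied with target ratio $(1+5\covereps)R$) the returned cluster $C_u$ satisfies $K(u)\subseteq C_u\subseteq\Ncand(u)$ and $\covers(C_u)\le (1+5\covereps)R\cdot\hat p(C_u)$; in particular $C_u$ is a genuine small-ratio cluster. Since $u\in C^\ast\cap\widetilde U$ and $C^\ast\in\mathcal C_{\le(1+4\covereps)R}$, the hypothesis of Lemma~\ref{lem:goodratio} is met with the witness cluster $C=C^\ast$ (we have $\covers(C^\ast)+\covereps^2\dadm(C^\ast)\le(1+4\covereps)R\cdot\hat p(C^\ast)\le(1+4\covereps)R\cdot\hat q(C^\ast)$, and by Lemma~\ref{lem:includeenoughweightsmpc} $\hat p(C^\ast)\ge(1-\covereps/2)\hat q(C^\ast)$, so the $\covereps^2\dadm(C^\ast)$ slack still absorbs the small loss). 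Also $\covers(\{v\})>R\cdot\hat p(\{v\})$ for all $v\in C^\ast$ by the removal of large-ratio and small-ratio vertices in Lines~\ref{line:addlargepnodesmpc}--\ref{line:removesmallpnodesmpc} of Algorithm~\ref{alg:disjointfamilympc} (the single-vertex clusters with small ratio were already extracted or zeroed out), verifying Assumption~(ii).

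Next I would lower bound $\hat p(C_u)$ directly. By Claim~\ref{clm:size}, since $C_u$ is considered with ratio at most $(1+5\covereps)R$, either $|C_u\cap\{v:\hat p_v>0\}|\ge\covereps^3 d(u)$ or some vertex $v\in C_u$ has $\covers(\{v\})/p_v\le(1+6\covereps)R$; the latter is impossible because such a vertex would already have $\hat p_v=0$ (it would have been added to $\mathcal F$ at Line~\ref{line:addlargepnodesmpc}), so $C_u$ retains at least $\covereps^3 d(u)$ vertices with positive $\hat p$. Each such vertex $v\in C_u\subseteq\Ncand(u)$ has, by Lemma~\ref{lemma:pvaluebound} (whose hypothesis $|D(u)|\le\dadm(K(u))/|K(u)|$ holds because $\Ncand$ is an intersection over the atom in the MPC/nearly-linear setting, or follows from the approximate $\hatNcand$ construction of Lemma~\ref{lem:estimateN}), weight $\hat p_v=\Omega\!\big(\covereps\,\eps^4 |D(u)|/\dc(V)\big)$. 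Hence
\[
\hat p(C_u)=\Omega\!\left(\covereps^3 d(u)\cdot\frac{\covereps\,\eps^4 |D(u)|}{\dc(V)}\right)=\Omega\!\left(\frac{\covereps^4\eps^4 d(u)\,|D(u)|}{\dc(V)}\right).
\]
On the other hand, $C^\ast\subseteq\Ncand(u)$ as well, so every $v\in C^\ast$ has $\hat p_v=O\!\big(\eps^{-1} d(u)/\dc(V)\big)$ by the upper bound in Lemma~\ref{lemma:pvaluebound}, and $|C^\ast|\le|\Ncand(u)|=O(\eps^{-4}d(u))$ by Lemma~\ref{lem:sizeofcandidateset} (or Lemma~\ref{lem:sizeofcandidateset-approx}), giving
\[
\hat p(C^\ast)=O\!\left(\frac{\eps^{-5} d(u)^2}{\dc(V)}\right).
\]
Comparing, $\hat p(C_u)/\hat p(C^\ast)=\Omega\!\big(\covereps^4\eps^4|D(u)|/(\eps^{-5}d(u))\big)=\Omega\!\big(\covereps^4\eps^9 |D(u)|/d(u)\big)$; since $C^\ast$ is $\eps$-large we have $|C^\ast|\ge\eps d(u)$, so in the non-singleton-atom case $|D(u)|\ge|C^\ast\setminus K(u)|$ is comparable to $d(u)$ up to $\eps$ factors (and in the singleton case $|D(u)|=\Omega(\eps d(u))$ likewise), which yields the claimed $\hat p(C_u)=\Omega(\covereps^4\eps^5\hat p(C^\ast))$ after collecting the $\eps$-powers.

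The main obstacle I anticipate is bookkeeping the relationship between $|D(u)|$, $d(u)$, and $|C^\ast|$ cleanly enough that the exponent of $\eps$ in the final bound is exactly $5$: one has to handle the singleton-atom and non-singleton-atom cases for $u$ separately (in the non-singleton case one uses $|K(u)|\ge(1-O(\eps))d(u)$ so the "core" of $C_u$ already accounts for most of the mass, whereas in the singleton case all of $C_u$ lives in $D(u)$ and one leans on $\eps$-largeness $|C^\ast|\ge\eps d(u)$), and to make sure the $\hat p$ versus $\hat q$ discrepancy introduced by the two-mark removal (bounded by $\covereps'\hat q(D(u))$ with $\covereps'=\Theta(\covereps^4\eps^5)$) is dominated by the $\Omega(\covereps^4\eps^5)$ target rather than eating into it — this is exactly why the parameter $\covereps'$ was chosen at that scale, so the constants are designed to close, but the verification requires care. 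Everything else is a direct substitution of the already-proved lemmas.
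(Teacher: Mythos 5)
There is a genuine gap in your argument: your upper bound on $\hat p(C^*)$ is too lossy, and the final chain of inequalities does not reach the claimed $\Omega(\covereps^4\eps^5)$ factor. You bound $\hat p(C^*)\le |\Ncand(u)|\cdot O(\eps^{-1}d(u)/\dc(V)) = O(\eps^{-5}d(u)^2/\dc(V))$, which destroys the $|D(u)|$ factor that must cancel against the lower bound $\hat p(C_u)=\Omega(\covereps^4\eps^{4}d(u)|D(u)|/\dc(V))$. Your resulting ratio $\Omega(\covereps^4\eps^{9}|D(u)|/d(u))$ only matches the target when $|D(u)|=\Theta(\eps^{-4}d(u))$, i.e.\ at the extreme end of Lemma~\ref{lem:sizeofcandidateset}; with your own fallback $|D(u)|=\Omega(\eps\, d(u))$ you only get $\Omega(\covereps^4\eps^{10})$, and in the non-singleton case $D(u)$ can be essentially empty (e.g.\ $C^*=K(u)$), making the bound vacuous. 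The parenthetical remark that the atom "core" carries most of the mass gestures at the fix but is never incorporated into the displayed computation, and splitting by atom type (singleton vs.\ non-singleton) does not control which of $K(u)$ or $D(u)$ actually carries the $\hat p$-mass.

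The paper's proof closes this by a case distinction on mass rather than on atom type. If $\hat p(D(u))\le \hat p(\Ncand(u))/2$, then since Lemma~\ref{lem:goodratio} guarantees $K(u)\subseteq C_u$, one has $\hat p(C_u)\ge \hat p(K(u))\ge \hat p(\Ncand(u))/2\ge \hat p(C^*)/2$, with no $\eps$ loss at all. Otherwise $\hat p(D(u))\ge \hat p(\Ncand(u))/2$, and then one upper bounds $\hat p(C^*)\le \hat p(\Ncand(u))\le 2\hat p(D(u)) = O(\eps^{-1}d(u)\,|D(u)|/\dc(V))$ using the per-vertex upper bound of Lemma~\ref{lemma:pvaluebound} applied only to $D(u)$; comparing with the lower bound $\hat p(C_u)=\Omega(\covereps^4\eps^{4}d(u)|D(u)|/\dc(V))$ (which you derived correctly from Claim~\ref{clm:size} and Lemma~\ref{lemma:pvaluebound}), the $d(u)|D(u)|$ factors cancel and the ratio is exactly $\Omega(\covereps^4\eps^{5})$. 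Your verification of the hypotheses of Lemma~\ref{lem:goodratio} (absorbing the $\hat p$ versus $\hat q$ loss into the $\covereps^2\dadm(C^*)$ slack) is fine and matches the paper; it is the comparison step that needs to be replaced by this mass-based case analysis.
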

\begin{proof}
    Note that for any $C^* \in C_{\leq (1+4\covereps)R}$, we have 
\begin{align*}
    \frac{\covers(C^*)+\covereps^2 \dadm(C^*)}{\hat p(C^*)} &\leq \frac{\covers(C^*)+\covereps^2 \dadm(C^*)}{(1 -\covereps) \hat q(C^*)} \\
    &\leq \frac{1 + 4\covereps}{1-\covereps /2} R \leq (1 + 5\covereps) R
\end{align*}
This satisfy the input of Lemma~\ref{lem:goodratio}. So we can find a cluster $C_u$ with ratio at most $(1+5\covereps) R$.

If $\hat p(D(u)) \leq \hat p(\Ncand(u)) / 2$, based on Lemma~\ref{lem:goodratio}, we know that $K(u) \subset C_u$ is alwasy added to $C_u$, so $\hat p(C_u) \geq \hat p(K(u)) \geq \hat p(\Ncand(u)) / 2 \geq \hat p(C^*) / 2$.

On the other hand, if $\hat p(D(u)) \geq \hat p(\Ncand(u)) / 2$. Based on Lemma~\ref{lemma:pvaluebound}, the upper bound $\hat p( D(u))$ is at most 
\begin{align*}
    \hat p(D(u)) = O\left(\frac{\eps^{-1} d(u) \cdot |D(u)|}{\dc(V)}\right)
\end{align*}
and the upper bound for $\hat p(C^*)$ is 
\begin{align*}
    \hat p(C^*) \leq  p(\Ncand(u)) \leq 2 \hat p(D(u)) = O\left(\frac{\eps^{-1} d(u) \cdot |D(u)|}{\dc(V)}\right)
\end{align*}
By Claim~\ref{clm:size}, we know that $C_u$ contains at least $\covereps^3 d(u)$ vertices with $\hat p > 0$. Based on Lemma~\ref{lemma:pvaluebound}, for each vertex in $C_u$, we have 
\begin{align*}
     \hat p_v = \Omega\left(\frac{\covereps \eps^{4} |D(u)|}{\dc(V)}\right) 
\end{align*}
so, the lower bound $\hat p$ value for $C_u$ is at least
\begin{align*}
    \hat p(C_u) = \Omega\left(\frac{\covereps^4 \eps^{4} |D(u)| \cdot d(u)}{\dc(V)}\right) 
\end{align*}
so we have $\hat p(C_u) = \Omega( \covereps^4 \eps^5 \hat p(C^*)) $
\end{proof}

\begin{proof}[Proof of Lemma~\ref{lem:mpcnumberofrounds}]
Note that, based on Lemma~\ref{lem:includeenoughweightsmpc}, for each iteration of the for loop at Line~\ref{line:oneroundstartmpc}, we include the set \(\mathcal{C}_{\leq (1+4\covereps)R}\) with total $\hat p$ value at least \(\Omega(\eps^5 \covereps' \covereps) = \Omega(\eps^9 \covereps^6)\) with probability at least \(\Omega(\eps^5 \covereps')\).

We repeat the for loop for \(\Theta(\log n/(\eps^5 \covereps'))\) iterations. With high probability, we will find a set \(\mathcal{C}_{\leq (1+4\covereps)R}\) with total $p$ value at least \(\Omega(\eps^5 \covereps' \covereps) = \Omega(\eps^9 \covereps^6)\). 

Now, from Lemma~\ref{lemma:eachcuislargeenoughmpc}, for each \(C^* \in \mathcal{C}_{\leq (1+4\covereps)R}\), we can find a sufficiently large set \(C_u\). Thus, \(\mathcal{F}_{\tmpc}\) can cover at least \(\Omega(\covereps^{10} \eps^{14})\) of the \(p\) value.

\end{proof}

\subsection{Wrap-Up: Proof of MPC Algorithm for Theorem~\ref{thm:solving-cluster-LP-sequential}}
Now we are ready to show the key theorem for this section.

\begin{proof}[Proof of Theorem~\ref{thm:solving-cluster-LP-sequential}]

We can obtain the preclustering \( (\calK, E_{\adm}) \) in time \( \widetilde{O}(n) \) by Theorem~\ref{thm:preclustering-proc}. We now argue that this algorithm can be implemented in the MPC model.

In Algorithm~\ref{alg:solveclusterlpframework}, we first compute \( \dc \). This can be done given \( \mathcal{K} \) by appending the cluster label to each edge to indicate whether the endpoints belong to the same cluster in the preclustering. Then, computing \( \dc \) only requires counting the edges that cross different clusters in \( \mathcal{K} \). This step can be implemented using sorting. We will discuss how to solve~\ref{LP:coverclusterlp} later. 

Finally, we need to apply Lemma~\ref{lem:cover-to-cluster} to compute the solution for~\ref{LP:clusterlp}. Implementing Lemma~\ref{lem:cover-to-cluster} in the MPC model is somewhat tricky, but the key observation is that each \( z_S \geq \frac{1}{\tmwu} \). Therefore, for each node \( u \), at most \( O(\tmwu) \) clusters with positive \( z_S \) values will cover \( u \). For each node, we collect all clusters with positive \( z_S \) values. If the final sum \( \sum_{S \ni u} z_S > 1 \), we sort \( z_S \) for \( u \) in any order and mark those \( z_S \) as "greater than 1 for \( u \)" for later clusters. Each \( S \) then collects these marked clusters and removes them from itself. In this process, for each non-singleton atom, we perform this operation only once for a representative, allowing \( S \) to remove the non-singleton atoms. Since at most \( O(\tmwu) \) clusters cover \( u \), this process requires at most \( O(\tmwu \cdot n) \) total space.

Algorithm~\ref{alg:mw} and Algorithm~\ref{alg:disjointfamilympc} are naturally parallelized, requiring only \( O(\poly(1/\eps)) \) rounds for execution. From Claim~\ref{clm:overlap}, we know that with high probability, each vertex is contained in at most \( O(\log n) \) clusters from \( \{C_u\}_{u \in U} \), so storing the candidate set \( \Ncand \) for \( U \) requires at most \( O(n \log n) \) space. We can then distribute all edge information to execute Lemma~\ref{lem:goodratio} for each \( u \in U \).

The more challenging part is applying Lemma~\ref{lem:goodratio}. For each \( \Ncand(u) \), we need to sample \( O(\poly(1/\eps)) \) nodes and enumerate all possible subsets of the sampled nodes. There are two methods for performing this enumeration:

\begin{enumerate}
    \item Store all subsets of the sampled nodes. For each \( \Ncand(u) \), this requires \( \tilde{O}(2^{\poly(1/\eps)}|\Ncand(u)|) \) space to store the edge relations between the sampled nodes and all other nodes. Once these relations are recorded, Algorithm~\ref{alg:GenerateCluster} can determine whether to add a node to \( T \). This method requires only \( O(1) \) rounds but consumes a total space of \( \tilde{O}(2^{\poly(1/\eps)} m) \).
    
    \item Enumerate all possible subsets one by one. This avoids the need to record edge information for the sampled nodes, but it requires an additional \( O(2^{\poly(1/\eps)}) \) rounds for each iteration.
\end{enumerate}

Combining all these points, we conclude that we can either:
- Spend \( 2^{\poly (1/\eps)} \) rounds with \( O(n^{\delta}) \) memory per machine and total memory \( \tilde{O}(\poly (1/\eps) m) \), or
- Spend \( \poly (1/\eps) \) rounds with \( O(n^{\delta}) \) memory per machine and total memory \( \tilde{O}(2^{\poly (1/\eps)} m) \) 
to solve~\ref{LP:clusterlp}.

\end{proof}

\section{Rounding Algorithms}
In this section, we present fast rounding algorithms for the cluster
LP. We begin by providing intuition behind the rounding algorithms and
by explaining the rounding techniques used
in~\cite{cao2024understanding}.  In Section \ref{sec:roundingnearlin},
we show how to implement the rounding in nearly linear time. Then, in
Section~\ref{sec:roundingsublinear}, we demonstrate how to perform
rounding in the sublinear model.

Note that we have already obtained a solution
to~\ref{LP:clusterlp}. In~\cite{cao2024understanding}, the authors
showed how to round a~\ref{LP:clusterlp} solution to an integral
solution. The approximation ratio is given by the following theorem.

\begin{theorem} \cite{cao2024understanding}
\label{thm:rounding}
There exists an algorithm that, given a feasible solution for the cluster LP, produces a clustering whose objective value is at most $\pureclusterlpratio$ times that of the cluster LP solution. 
\label{thm:main:algo:computer}
\end{theorem}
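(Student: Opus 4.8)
The plan is to invoke the rounding procedure of \cite{cao2024understanding} essentially as a black box: Theorem~\ref{thm:rounding} is precisely their main rounding result, and our contribution in the following subsections is only to implement it within the claimed time bounds, not to re-derive the $1.437$ factor. So the ``proof'' here amounts to recalling which structural features of a cluster LP solution the rounding relies on, since we will have to reproduce exactly those features in the (sub)linear-time regime.

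First I would recall the shape of their algorithm. Given a feasible $(z_S)_S$ with its sparse support and induced marginals $x_{uv}$, one draws an ordered list of candidate clusters in which each set $S$ is selected with probability controlled by $z_S$, carves those sets out of $V$ in order, and resolves the leftover vertices with a pivot-style step in the spirit of the Ailon--Charikar--Newman PIVOT algorithm. The procedure is governed by a threshold parameter, and \cite{cao2024understanding} take the better of a small family of such parameterized roundings (equivalently, a convex combination of two complementary variants).

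Second, the analysis is a local charging argument: the expected rounding cost attributed to each pair $uv$ (a $+$edge that is cut, or a $-$edge that is merged) is bounded in terms of the LP ``local configuration'' around $uv$ --- the triple of marginals $x_{uv}, x_{uw}, x_{vw}$ ranging over third vertices $w$ together with the $\pm$ pattern of those edges --- and one then solves a factor-revealing optimization over all fractional local configurations consistent with the cluster-LP constraints, whose optimum is the numerical constant $1.437$. The main obstacle, and the reason the constant is neither $2$ nor $3$, is exactly this step: the bound on each pair's cost must treat the correlated set-sampling and the pivot fallback jointly, and the $1.437$ value only emerges after the right family of roundings is fixed and the resulting worst-case configuration program is solved numerically. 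I would simply cite this analysis rather than reproduce it.

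Finally, I would emphasize the only property of $z$ it uses: $z$ is given as a polynomial-size list of sets with their $z$-values and induced marginals, satisfying the two cluster-LP equalities; the argument never needs the full $2^{|V|}$-dimensional vector. Since Theorem~\ref{thm:solving-cluster-LP-sequential} produces $z$ in exactly this form, Theorem~\ref{thm:rounding} applies verbatim, and all that remains for the subsequent sections is to show that the sampling, pivoting, and any cost estimation can each be carried out within $\widetilde O(n/\Delta^2)$ time (and analogously in the nearly-linear and MPC settings).
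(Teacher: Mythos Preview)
Your approach matches the paper's: Theorem~\ref{thm:rounding} is quoted from \cite{cao2024understanding} as a black box, and the paper likewise does not re-derive the $1.437$ constant but only describes the two constituent rounding procedures (cluster-based and pivot-based, run with probabilities $p=1.437/2$ and $1-p$) so that their running times can be analyzed in the sequel. One small clarification: the paper presents the algorithm as a probabilistic mixture of two \emph{separate} roundings rather than a carve-then-pivot pipeline, but you already note the ``convex combination of two complementary variants'' reading, which is the correct one.
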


To be more precise, the algorithm from Theorem \ref{thm:rounding} consists of two different rounding algorithms, one is executed with probability $p = \frac{\pureclusterlpratio}{2}$ the other with probability $1-p$. The two algorithms in question are the cluster-based rounding (Algorithm \ref{alg:cluster-based}) and the pivot-based rounding (Algorithm \ref{alg:correlated-rounding}).  We need to show that we can implement both algorithms in nearly linear time.

\paragraph{Algorithm Description.} 
The cluster-based rounding algorithm (Algorithm~\ref{alg:cluster-based}) selects clusters \( S \) based on their \( z_S \) values. In each round, we randomly choose a cluster \( S \) with probability proportional to \( z_S \). We then add all nodes from \( S \) that are still in the graph to a new cluster and remove those nodes from the graph. This process is repeated until the graph is empty.

The pivot-based algorithm (Algorithm~\ref{alg:correlated-rounding})
follows a different approach. Instead of selecting a set directly, at
each round, we randomly choose a node \( u \) that is still in the
graph, referred to as the pivot. We then form a cluster using this
pivot as follows.  For each small \pedge \( (u, v) \), we include \( v \) in the cluster. For the remaining \pedges, we apply correlated rounding, where we randomly choose a set \( S \) containing \( u \) and include all remaining \pedges if they are in \( S \). For \medges, we use independent rounding, where the probability of adding a node to the cluster is determined by the distance metric:
\[
1 - x_{uv} = \sum_{S \supseteq \{u, v\}} z_S.
\]
The algorithm then removes all nodes that have been clustered in the current round and repeats the process until the graph is empty.

\begin{algorithm}[h]
    \caption{Cluster-Based Rounding}
        \label{alg:cluster-based}
    \begin{algorithmic}
        \STATE $\calC \gets \emptyset, V' \gets V$
        \WHILE{$V' \neq \emptyset$}
            \STATE randomly choose a cluster $S\subseteq V$, with probabilities $\frac{z_S}{\sum_{S'} z_{S'}}$
            \IF{$V' \cap S \neq \emptyset$} 
            \STATE $\calC \gets \calC \cup \{V' \cap S\}$, $V' \gets V' \setminus S$ \ENDIF
        \ENDWHILE
        \STATE return $\calC$
    \end{algorithmic}    
\end{algorithm}

\begin{algorithm}
    \caption{Pivot-Based Rounding (with correlated mid-range for \(+\) edges)}
    \label{alg:correlated-rounding}
    \begin{algorithmic}[1]
        \STATE \(\mathcal{C} \gets \emptyset\), \(V' \gets V\)
        \WHILE{\(V' \neq \emptyset\)}
            \STATE choose a pivot \(u \in V'\) uniformly at random
            \STATE \(C \gets \{u\}\)
            \FOR{each \(v \in V' \cap N^-(u)\)}
                \STATE independently add \(v\) to \(C\) with probability \(1 - x_{uv}^2\)
            \ENDFOR
            \STATE \(C_{\ell} \gets \{v \in V' \cap N^+(u): x_{uv} \le 0.40\}\)
            \STATE \(C_{m} \gets \{v \in V' \cap N^+(u): 0.40 < x_{uv} \le 0.57\}\)
            \STATE \(C_{h} \gets \{v \in V' \cap N^+(u): x_{uv} > 0.57\}\)
            \FOR{each \(v \in C_h\)}
                \STATE independently add \(v\) to \(C\) with probability \(1 - x_{uv}\)
            \ENDFOR
            \STATE sample a random set \(S \ni u\) according to the set distribution \(\{z_S\}_{S \ni u}\) 
            \STATE \(C \gets C \cup C_\ell \cup \bigl(C_m \cap S\bigr)\)
            \STATE \(\mathcal{C} \gets \mathcal{C} \cup \{C\}\), \(V' \gets V' \setminus C\)
        \ENDWHILE
        \STATE \textbf{return} \(\mathcal{C}\)
    \end{algorithmic}
\end{algorithm}


\subsection{Nearly Linear Time Rounding
    Algorithm}\label{sec:roundingnearlin}

\paragraph{Implementation of Cluster-Based Rounding.}
We are given a solution $z$ to the cluster LP with $\cost(z) \leq
    (1+\epsilon) \optc$ and $|\supp(z)| = O(n)$.  Moreover, by
    Lemma \ref{lem:cover-to-cluster}, for each $S \in \supp(z)$, we have
    $z_S \geq \Delta$ for some constant $\Delta$.

We will implement Algorithm \ref{alg:cluster-based} as follows.  For
each vertex we sample a random variable $k_v$ from an exponential
distribution with rate $z_S$.  To do this, we sample a value $k_S$ for each set with $z_S > 0$ by sampling a value uniformly at random from $p_S \in (0,1)$ and then setting 
$$k_S = \lfloor \frac{n^c}{z_S} \log{\frac{1}{p_S}}\rfloor,$$
where $c$ is a fixed constant.
Then, for each vertex $v$, let $k_v$ be the smallest $k_S$ among sets $S$ that contain the vertex $v$. All
vertices with the same $k_v$ are assigned to the same cluster.
Algorithm \ref{alg:sublinear-cluster-based}
is equivalent to Algorithm \ref{alg:cluster-based}. Indeed, let
$X_1, \dots, X_l$ be exponentially distributed random variables each
with rate $r_i$. The probability that $X_i$ is the minimum among
$X_1, \dots, X_1$ is equal to
\[
    Pr[X_i = \min\{X_1,\dots,X_l\}] = \frac{r_i}{\sum_{j} r_j}.
\]

We emphasize that, given the solution from Theorem~\ref{thm:solving-cluster-LP-sequential}, Algorithm~\ref{alg:sublinear-cluster-based} runs in \(\widetilde{O}(n)\) time in the sublinear model. We conclude with the following lemma for the cluster-based rounding algorithm:

\begin{lemma}
\label{lem:clusterrounding}
Given a solution \( \{ z_S \} \) to~\ref{LP:clusterlp} such that for each \( S \in \supp(z) \), we have \( z_S \geq \Delta \) for some constant \( \Delta \). Then, Algorithm~\ref{alg:cluster-based} runs in \( \widetilde{O}(n/\Delta) \) time in the sublinear model.
\end{lemma}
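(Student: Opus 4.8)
The plan is to realize Algorithm~\ref{alg:cluster-based} by the exponential-clock procedure Algorithm~\ref{alg:sublinear-cluster-based} sketched above, and then bound the running time of that procedure. There are two things to establish: (i) the clustering produced by Algorithm~\ref{alg:sublinear-cluster-based} has the same distribution as the one produced by Algorithm~\ref{alg:cluster-based}, up to a failure event of probability $n^{-\Omega(1)}$ coming from the discretization; and (ii) the procedure can be implemented in time $\widetilde O(n/\Delta)$ in the sublinear model.

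For (i) I would first pass to the continuous version: draw independent $X_S \sim \mathrm{Exp}(z_S)$ for every $S \in \supp(z)$ and assign each vertex $v$ to the set $\sigma(v) := \arg\min_{S \ni v} X_S$. Sorting the support in increasing order of $X_S$ and revealing sets one by one, when a set $S$ is revealed the vertices of $S$ not yet claimed are exactly those $v$ with $\sigma(v) = S$, since $v$ is claimed earlier iff some $S' \ni v$ has $X_{S'} < X_S$. Hence grouping vertices by $\sigma(v)$ coincides with the greedy peeling of Algorithm~\ref{alg:cluster-based}, and by memorylessness of the exponential distribution the order in which the sets are revealed is, at every step, an independent draw proportional to the $z_S$ of the as-yet-unrevealed sets, which is precisely the sampling rule of Algorithm~\ref{alg:cluster-based} (using $\Pr[X_i = \min_j X_j] = r_i/\sum_j r_j$ as recalled). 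It then remains to note that $\frac1{z_S}\log\frac1{p_S}$ with $p_S \sim U(0,1)$ is distributed as $\mathrm{Exp}(z_S)$, and that scaling by $n^c$ and flooring changes the relative order of the $X_S$'s only when two of them fall within $n^{-c}$ of each other; since $|\supp(z)| = O(n)$ (each $z^{(t)}$ in Lemma~\ref{lem:cover-to-cluster} has disjoint support and there are $O(\poly(1/\eps))$ rounds), a union bound shows this, and the event $p_S < n^{-\Theta(1)}$, both occur with probability $n^{-\Omega(1)}$ for a large enough constant $c$, and off this event $k_v = k_{\sigma(v)}$ and the outputs are identical.

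For (ii) the key observation is that the support is cheap to process. Since $\sum_{S \ni v} z_S = 1$ for every $v$, summing over $v$ gives $\sum_{S \in \supp(z)} |S|\, z_S = n$, and because $z_S \ge \Delta$ for all $S \in \supp(z)$ this yields $\sum_{S \in \supp(z)} |S| \le n/\Delta$ and in particular $|\supp(z)| \le n/\Delta$. The implementation then: draws $p_S$ and computes $k_S$ for each $S \in \supp(z)$ in $O(|\supp(z)|)$ time; scans the list of set–vertex incidences, of total length $\sum_{S \in \supp(z)} |S| = O(n/\Delta)$, maintaining for each $v$ the running minimum $k_v = \min_{S \ni v} k_S$; and finally buckets the $n$ vertices by the value $k_v$ in $\widetilde O(n)$ time by sorting (the $k_S$ are integers bounded by $\poly(n)$ with high probability, so radix sort is also available). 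Since $\Delta \le 1$, every step is $\widetilde O(n/\Delta)$, which gives the claimed bound.

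The part that needs the most care is the equivalence in (i), specifically the discretization: one must argue that rounding the continuous clocks to integers (with the $n^c$ blow-up) preserves their relative order except on a negligible-probability event, and that when rounded ties do occur they only merge clusters and can be folded into the $n^{-\Omega(1)}$ failure probability without affecting the distributional guarantee used by the $1.437$-approximation analysis. The running-time bound is routine once $\sum_{S \in \supp(z)} |S| \le n/\Delta$ is in hand.
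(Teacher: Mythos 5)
Your proposal is correct and follows essentially the same route as the paper: it implements Algorithm~\ref{alg:cluster-based} via the discretized exponential-clock procedure of Algorithm~\ref{alg:sublinear-cluster-based} and justifies the equivalence through $\Pr[X_i=\min_j X_j]=r_i/\sum_j r_j$. You merely spell out two points the paper leaves implicit — the discretization/tie-breaking analysis and the runtime accounting via $\sum_{S\in\supp(z)}|S|\,z_S=n$, hence $\sum_{S\in\supp(z)}|S|\le n/\Delta$ — both of which are correct.
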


\begin{algorithm}
    \caption{Cluster-Based Rounding}
    \label{alg:sublinear-cluster-based}
    \begin{algorithmic}
        \FOR{$S \in \mathcal{S}$}
         \STATE $k_S
    =  \lfloor\frac{n^c}{z_s}\log \frac{1}{p_s} \rfloor$, where $p_s$
    is uniformly chosen from $(0, 1)$ 
      \ENDFOR
        \FOR{$v \in V$}
        \STATE $k_v = \min \{ k_S \mid S \ni v\}$
        \ENDFOR
        \STATE Put all nodes with same $k_v$ value into same cluster
    \end{algorithmic}    
\end{algorithm}

\paragraph{Implementation of Pivot-Based Rounding.} Now we discuss how to implement Algorithm \ref{alg:correlated-rounding} in nearly linear time. 
Note that each vertex is contained in at most $\frac{1}{\Delta}$ clusters $S \in \supp(z)$.  Thus a fixed pivot $u$ belongs to at most $1/\Delta$ sets $S$ such that $z_S > 0$.  We therefore only need to consider $1/\Delta$ sets to choose one of these sets according to the respective probability distribution.
Visiting the \pedges incident to each pivot requires at most $O(m)$ time over the course of the algorithm.   

For the \medges, the main challenge is listing all \medges from \(
V \cap N^-(u) \) and then performing independent rounding. Our key observation is that \( x_{uv} < 1 \) if and only if there exists some \( S \) such that \( S \) contains both \( u \) and \( v \) and \( z_S > 0 \). Thus, we only need to iterate over all vertices in $\cup_{S \ni u, z_S > 0}~S$. 

Since each \( z_S \geq \Delta \) for \( S \in \supp(z) \), each node \( v \) with \( x_{uv} < 1 \) contributes at least \( 1/\Delta \) to the LP value. Therefore, there are at most \( O(m / \Delta) \) \medges across all \( S \). Since each \medge is visited at most once, listing all \medges takes at most \( O(m / \Delta) \) time.

Once we process the pivot, we need to update \( S \) to remove nodes that have been clustered. Since each nonzero \( z_S > \Delta \), the total update time is at most \( O(n / \Delta) \). 

Combining everything, we obtain the following lemma for pivot-based rounding.





\begin{lemma}
\label{lem:pivotbasedlinear}
Given a solution \( \{ z_S \} \) to~\ref{LP:clusterlp} such that for each \( S \in \supp(z) \), we have \( z_S \geq \Delta \) for some constant \( \Delta \). Then, Algorithm~\ref{alg:correlated-rounding} runs in \( \widetilde{O}(m / \Delta) \) time.
\end{lemma}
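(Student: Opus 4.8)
The final statement to prove is Lemma~\ref{lem:pivotbasedlinear}: given a cluster LP solution $\{z_S\}$ with $z_S \geq \Delta$ for all $S \in \supp(z)$, Algorithm~\ref{alg:correlated-rounding} (pivot-based rounding) runs in time $\widetilde O(m/\Delta)$.

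The plan is to exhibit a data-structure implementation of Algorithm~\ref{alg:correlated-rounding} that is step-for-step faithful to the pseudocode — it realizes exactly the same random process — so correctness is inherited verbatim from Theorem~\ref{thm:rounding} and only the running time needs argument. First I would preprocess: for each vertex $v$ build the list of sets $S\in\supp(z)$ with $v\in S$ (together with $z_S$); build hash sets of the $+$neighborhoods so that ``$uv\in E^+$?'' and ``$v\in V'$?'' are $O(1)$ queries; and keep $V'$ in a swap-array so a uniform pivot can be drawn in $O(1)$. Since $\{z_S\}$ satisfies the cluster LP, $1=\sum_{S\ni v}z_S\ge\Delta\cdot|\{S\in\supp(z):v\in S\}|$, so every vertex lies in at most $1/\Delta$ support sets; consequently $\sum_{S\in\supp(z)}|S|=\sum_v|\{S\in\supp(z):v\in S\}|\le n/\Delta$, and preprocessing costs $O(m+n/\Delta)$.

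Next I would describe one pivot round. When pivot $u$ is drawn, I walk over the $\le 1/\Delta$ support sets containing $u$, and over the element list of each (maintained so it equals $V'\cap S$ at all times), accumulating $z_S$ into a sparse array $a[\cdot]$; afterwards $a[v]=\sum_{S\supseteq\{u,v\}}z_S=1-x_{uv}$ for every $v$ in the touched region and $a[v]=0$ elsewhere, using the second cluster-LP constraint. With $a$ in hand the three clustering substeps are cheap: scanning $V'\cap N^+(u)$ and testing $a[v]\ge 2/3$ gives $\{v:x_{uv}\le 1/3\}$; the touched $-$neighbors of $u$ are \emph{exactly} the $v\in V'\cap N^-(u)$ with $x_{uv}<1$ (the rest contribute probability $0$ to the independent rounding), and for each I flip a coin with bias $a[v]$; and sampling $S\ni u$ with probability $z_S$ is a linear scan over the $\le 1/\Delta$ candidates, after which $S\cap V'\cap N^+(u)$ is read off its element list with $O(1)$ neighbor checks. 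Forming $C$, clearing the touched entries of $a$, deleting $C$ from $V'$, and deleting the vertices of $C$ from every support set containing them together cost $O\big(|C|/\Delta+d^+(u)+W_u\big)$, where $W_u:=\sum_{S\in\supp(z),\,S\ni u}|V'\cap S|$ is the total touched multiplicity.

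The accounting is the crux. Pivots are distinct vertices and the clusters they emit partition $V$, so $\sum_u|C_u|=n$, $\sum_u d^+(u)=O(m)$, and $\sum_u(1/\Delta)=O(n/\Delta)$; the only nontrivial term is $\sum_u W_u$, and the subtlety I expect to be the main obstacle is that one must \emph{not} bound it by $(1/\Delta)\cdot|\bigcup_{S\ni u}S|$ (which would lose an extra $1/\Delta$), but instead charge against the LP mass. Writing $W_u=\sum_{w\in V'}\#\{S\in\supp(z):u,w\in S\}$, the $w=u$ term is $\le 1/\Delta$, and for $w\neq u$ we have $\Delta\cdot\#\{S\supseteq\{u,w\}\}\le\sum_{S\supseteq\{u,w\}}z_S=1-x_{uw}$, which is $0$ unless $u,w$ co-occur in a support set, i.e.\ unless $x_{uw}<1$. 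Hence $\sum_u W_u\le n/\Delta+\tfrac1\Delta\sum_u\sum_{w\neq u}(1-x_{uw})\,\mathbb{1}[x_{uw}<1]=n/\Delta+\tfrac2\Delta\sum_{uw\in\binom{V}{2}}(1-x_{uw})\,\mathbb{1}[x_{uw}<1]$, and splitting by edge label the last sum is at most $\sum_{uw\in E^+}1+\sum_{uw\in E^-}(1-x_{uw})\le m+\obj(x)\le m+(1+\eps)\optc\le(2+\eps)m$, the final step because the all-singletons clustering already costs $m$.

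Putting the pieces together, the whole run costs $O\big((m+n)/\Delta\big)=\widetilde O(m/\Delta)$ since $m\ge n$ ($E$ contains the $n$ self-loops), and correctness follows from Theorem~\ref{thm:rounding} because the implementation reproduces precisely the distribution of Algorithm~\ref{alg:correlated-rounding}. The only care needed beyond the charging argument above is the routine bookkeeping that keeps each support set's stored element list synchronized with $V'$, so that ``iterate over $V'\cap S$'' is a legal $O(|V'\cap S|)$ operation throughout.
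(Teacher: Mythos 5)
Your implementation and accounting follow essentially the same route as the paper's proof: each vertex lies in at most $1/\Delta$ support sets (so set-sampling and set-updates cost $O(n/\Delta)$ overall), \pedges incident to pivots cost $O(m)$ in total, and the work on \medges / multiplicities $W_u$ is charged to the LP mass via the key observation that $x_{uv}<1$ exactly when $u,v$ co-occur in a support set, whence $\#\{S\supseteq\{u,v\}\}\le (1-x_{uv})/\Delta$. The proposal is correct, just more explicit about data structures and the charging; note that, exactly like the paper's own argument, it implicitly uses $\obj(x)=O(m)$ (near-optimality of the given solution), which is not stated in the lemma's hypothesis but holds for the solutions to which it is applied.
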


\subsection{Rounding in Sublinear Model}
\label{sec:roundingsublinear}

To argue that Algorithm~\ref{alg:correlated-rounding} runs in sublinear time, we make the following observation. For a fixed pivot \( r \), let 
\[
U_r = \bigcup_{S \ni r: z_S > 0} S
\]
i.e., \( U_r \) is the set of all vertices that occur together with \( r \) in some set in the support of \( z \). If \( v \not\in U_r \), then \( x_{uv} = 1 \), meaning Algorithm~\ref{alg:correlated-rounding} never considers \( v \) when \( r \) is chosen as the pivot. Thus, Algorithm~\ref{alg:correlated-rounding} only considers \( U_r \) for one step of iteration.

Moreover, we can iterate over all \( S \ni r \) with \( z_S > 0 \) and all \( v \in S \) in time \( \frac{1}{\Delta} |U_r| \) since each vertex is contained in at most \( \frac{1}{\Delta} \) sets \( S \in \supp(z) \). Hence, we can implement one step of the algorithm in time at most \( \widetilde{O}(|U_r| / \Delta) \).

On the other hand, for each vertex \( v \in U_r \), the probability that it is included in the cluster of \( r \) is at least \( 1 - x_{rv} \geq \Delta \) for $+$edges and \( 1 - x_{rv} \cdot x_{rv} \geq \Delta \) for $-$edges. Thus, in expectation, we remove at least \( \Delta |U_r| \) vertices from the current graph. 

Combining these points, in Algorithm~\ref{alg:correlated-rounding}, we pay \( O(1/\Delta^2) \) for each node in expectation. Therefore, the final running time is given by the following lemma.

\begin{lemma}
\label{lem:pivotbasedsublinear}
Given a solution \( \{ z_S \} \) to~\ref{LP:clusterlp} such that for each \( S \in \supp(z) \), we have \( z_S \geq \Delta \) for some constant \( \Delta \). Then, Algorithm~\ref{alg:correlated-rounding} runs in \( \widetilde{O}(n / \Delta^2) \) time.
\end{lemma}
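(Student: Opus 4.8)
The plan is to analyze Algorithm~\ref{alg:correlated-rounding} by amortizing the cost of each iteration of the {\bf while} loop against the number of vertices that the iteration removes from $V'$. Fix an iteration in which pivot $r$ is chosen. As observed above, the only vertices that can ever enter the cluster $C$ built around $r$ are those in $U_r := \bigcup_{S \ni r : z_S > 0} S$: every vertex $v \notin U_r$ has $x_{rv}=1$ (since $1 - x_{rv} = \sum_{S \supseteq \{r,v\}} z_S = 0$), so $v$ is not a small $+$edge neighbor of $r$ (Line~4 excludes it because it is not even a $+$edge neighbor in the relevant sense, or $x_{rv}>1/3$), it is not added on Line~\ref{line:medge} (probability $1-x_{rv}=0$), and it cannot be in $S \cap V' \cap N^+(r)$ for the chosen $S \ni r$. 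Hence one iteration only needs to touch $U_r$, plus the chosen set $S$ for the correlated-rounding step.

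Next I would bound the running time of a single iteration by $\widetilde O(|U_r|/\Delta)$. Computing $U_r$ requires iterating over all $S \in \supp(z)$ with $r \in S$ and listing their elements; by Lemma~\ref{lem:cover-to-cluster} each vertex (in particular $r$) lies in at most $1/\Delta$ sets of the support, so this is $O(|U_r|/\Delta)$ work (using the list-of-nonzero-coordinates representation from Theorem~\ref{thm:solving-cluster-LP-sequential}, with an index mapping each vertex to the sets containing it). Forming the small-$+$edge part $\{v : x_{rv}\le 1/3\}$, running the independent rounding on $-$edge neighbors inside $U_r$, sampling $S \ni r$ from the distribution $(z_S)_{S \ni r}$ (again at most $1/\Delta$ choices), and adding $S \cap V' \cap N^+(r)$ all cost $\widetilde O(|U_r|/\Delta)$; each of these quantities $x_{rv}$ for $v \in U_r$ is itself computable in time $O(1/\Delta)$ by summing $z_S$ over the $\le 1/\Delta$ sets containing $r$ and $v$. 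Finally, the removal step $V' \gets V' \setminus C$ and any bookkeeping to keep the per-vertex set-lists consistent costs $\widetilde O(|C|/\Delta) \le \widetilde O(|U_r|/\Delta)$.

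Then comes the charging argument, which is the crux. For each $v \in U_r \cap V'$ at the start of the iteration, $v$ is included in $C$ with probability at least $1 - x_{rv} \ge \Delta$ (on Line~\ref{line:medge} if $rv \in E^-$; for $+$edge neighbors it is included with probability at least $\sum_{S \supseteq \{r,v\}} z_S = 1 - x_{rv}$ via the correlated step, or deterministically if $x_{rv}\le 1/3$). Therefore the expected number of vertices removed in this iteration is $\ge \Delta \cdot |U_r \cap V'|$. I would make this per-vertex: condition on the pivot $r$; each $v \in U_r \cap V'$ is removed with probability $\ge \Delta$, so the expected cost $\widetilde O(|U_r \cap V'|/\Delta)$ of the iteration is, in expectation over the randomness of the iteration, at most $\widetilde O(1/\Delta^2)$ times the expected number of vertices removed. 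Summing over all iterations and using linearity of expectation (each of the $n$ vertices is removed exactly once over the whole run), the total expected running time is $\widetilde O(n/\Delta^2)$.

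The main obstacle I anticipate is the amortization bookkeeping rather than the probability estimate: I must be careful that after removing vertices, the per-vertex lists $\{S \ni v : z_S>0\}$ used to compute $x_{uv}$ and $U_u$ in later iterations remain accessible in $O(1/\Delta)$-per-entry time, and that ``touching $U_r$'' when $r$ is the pivot is genuinely the only work — in particular that we never iterate over a set $S$ with $z_S>0$ and $r \notin S$. Both follow from $|\supp(z)|=O(n)$, each $z_S \ge \Delta$, and each vertex being in $\le 1/\Delta$ sets, but the accounting needs to be stated so that the $\widetilde O$ absorbs only logarithmic factors (e.g., from sampling $S$ or from a priority-queue / hash structure), not an extra factor of $1/\Delta$. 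A secondary point is to confirm that the per-iteration cost bound $\widetilde O(|U_r|/\Delta)$ uses $|U_r|$ and not $|U_r \cap V'|$ for the set-listing part, while the charging uses $|U_r \cap V'|$ for removals; these differ, but since $|U_r| \le (1/\Delta)\cdot(\text{number of support sets through } r)$ and every such set still contains at least one live vertex is \emph{not} guaranteed, I would instead simply note $|U_r| = O(|U_r \cap V'| \cdot (1/\Delta))$ is also false in general, so the cleanest route is: charge the $\widetilde O(|U_r|/\Delta)$ set-listing cost of the pivot iteration to the pivot vertex $r$ itself (each vertex is a pivot at most once), giving $\widetilde O(\sum_r |U_r|/\Delta) = \widetilde O(n/\Delta^2)$ deterministically for that part, and charge only the removal/rounding cost via the probabilistic argument above.
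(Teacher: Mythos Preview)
Your overall approach is exactly the paper's: restrict attention to $U_r$, bound one iteration by $\widetilde O(|U_r|/\Delta)$, and observe that each $v\in U_r\cap V'$ is clustered with probability at least $1-x_{rv}\ge \Delta$, so in expectation the algorithm pays $O(1/\Delta^2)$ per vertex removed. The paper presents precisely this amortization (in fact it states it more loosely than you do, writing $|U_r|$ in both the cost and the removal estimates without distinguishing it from $|U_r\cap V'|$).

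However, the ``cleanest route'' you settle on at the end has a real gap. The claim
\[
\sum_{r\in V} |U_r|/\Delta \;=\; \widetilde O(n/\Delta^2),
\quad\text{i.e.,}\quad
\sum_{r\in V}|U_r| \;=\; O(n/\Delta),
\]
is false. Take the trivial instance $\supp(z)=\{V\}$ with $z_V=1$; then $|U_r|=n$ for every $r$, so $\sum_{r\in V}|U_r|=n^2$, which is not $O(n/\Delta)$ for any constant $\Delta$. More generally $\sum_{r}|U_r|\le \sum_{S\in\supp(z)}|S|^2$, and while $\sum_S |S|\le n/\Delta$, nothing controls $\sum_S|S|^2$. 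So charging the full $|U_r|/\Delta$ listing cost to the pivot deterministically does not yield the stated bound.

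The good news is that you already wrote down the fix and then talked yourself out of it. Your bookkeeping step (``keep the per-vertex set-lists consistent'') means you \emph{delete} a vertex from every $S\in\supp(z)$ containing it when it leaves $V'$; total maintenance cost $O(n/\Delta)$. Once you do that, listing $\bigcup_{S\ni r}S$ at pivot time only visits live vertices, so the per-iteration listing cost is $\sum_{S\ni r}|S\cap V'|\le (1/\Delta)\,|U_r\cap V'|$, not $(1/\Delta)\,|U_r|$. Now your probabilistic charging from the third paragraph applies verbatim: the iteration cost is $\widetilde O(|U_r\cap V'|/\Delta)$, the expected number removed is $\ge \Delta\,|U_r\cap V'|$, hence $\EX[\text{cost}_i]\le (1/\Delta^2)\,\EX[R_i]$ and summing gives $\widetilde O(n/\Delta^2)$. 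Drop the deterministic fallback and keep this version.
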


Now, we can prove the main theorem regarding the rounding algorithm.

\begin{proof}[Proof of Theorem~\ref{thm:rounding-cluster-LP-main}]
The approximation ratio follows from Theorem~\ref{thm:rounding}. The runtime of cluster-based rounding is given in Lemma~\ref{lem:clusterrounding}, and the runtime of pivot-based rounding is given in Lemma~\ref{lem:pivotbasedsublinear}. Combining these results establishes the main theorem.
\end{proof}

\paragraph{Expected Guarantee} 
We highlight one final point regarding general rounding algorithms in the sublinear model. Typically, a rounding algorithm provides an expected guarantee. To achieve a high-probability guarantee, we can run the rounding algorithm multiple times and select the best outcome. However, in the sublinear model, it is unclear how to determine which output is the best, as computing the clustering cost sublinear time is challenging.

For our rounding algorithm, we achieve a high-probability guarantee. This is because our rounding algorithm sets an atom as a cluster if its \( \dc \) value is very small and never splits an atom. We can apply Lemma~\ref{lem:estimate-dc} to estimate the cost of the clustering. The estimation cost is \( \esteps \cdot \dc(V) \). 

Since \( \dc(V) = O(\frac{1}{\epsilon^{12}}) \opt \), the estimation cost is always small enough compared to the optimal solution.





\bibliographystyle{alpha}
\bibliography{main}

\end{document}